\PassOptionsToPackage{unicode}{hyperref}
\PassOptionsToPackage{hyphens}{url}
\PassOptionsToPackage{dvipsnames,svgnames,x11names}{xcolor}
\documentclass[
  12pt]{article}

\usepackage{amsmath,amssymb}
\usepackage{iftex}
\ifPDFTeX
  \usepackage[T1]{fontenc}
  \usepackage[utf8]{inputenc}
  \usepackage{textcomp} 
\else 
  \usepackage{unicode-math}
  \defaultfontfeatures{Scale=MatchLowercase}
  \defaultfontfeatures[\rmfamily]{Ligatures=TeX,Scale=1}
\fi
\usepackage{lmodern}
\ifPDFTeX\else  
\fi
\IfFileExists{upquote.sty}{\usepackage{upquote}}{}
\IfFileExists{microtype.sty}{
  \usepackage[]{microtype}
  \UseMicrotypeSet[protrusion]{basicmath} 
}{}
\makeatletter
\@ifundefined{KOMAClassName}{
  \IfFileExists{parskip.sty}{%
    \usepackage{parskip}
  }{
    \setlength{\parindent}{0pt}
    \setlength{\parskip}{6pt plus 2pt minus 1pt}}
}{
  \KOMAoptions{parskip=half}}
\makeatother
\usepackage{xcolor}
\makeatletter
\ifx\paragraph\undefined\else
  \let\oldparagraph\paragraph
  \renewcommand{\paragraph}{
    \@ifstar
      \xxxParagraphStar
      \xxxParagraphNoStar
  }
  \newcommand{\xxxParagraphStar}[1]{\oldparagraph*{#1}\mbox{}}
  \newcommand{\xxxParagraphNoStar}[1]{\oldparagraph{#1}\mbox{}}
\fi
\ifx\subparagraph\undefined\else
  \let\oldsubparagraph\subparagraph
  \renewcommand{\subparagraph}{
    \@ifstar
      \xxxSubParagraphStar
      \xxxSubParagraphNoStar
  }
  \newcommand{\xxxSubParagraphStar}[1]{\oldsubparagraph*{#1}\mbox{}}
  \newcommand{\xxxSubParagraphNoStar}[1]{\oldsubparagraph{#1}\mbox{}}
\fi
\makeatother

\usepackage{longtable,booktabs,array}
\usepackage{calc} 
\usepackage{etoolbox}
\makeatletter
\patchcmd\longtable{\par}{\if@noskipsec\mbox{}\fi\par}{}{}
\makeatother
\IfFileExists{footnotehyper.sty}{\usepackage{footnotehyper}}{\usepackage{footnote}}
\makesavenoteenv{longtable}
\usepackage{graphicx}
\makeatletter
\def\maxwidth{\ifdim\Gin@nat@width>\linewidth\linewidth\else\Gin@nat@width\fi}
\def\maxheight{\ifdim\Gin@nat@height>\textheight\textheight\else\Gin@nat@height\fi}
\makeatother
\setkeys{Gin}{width=\maxwidth,height=\maxheight,keepaspectratio}
\makeatletter
\def\fps@figure{htbp}
\makeatother

\makeatletter
\@ifpackageloaded{caption}{}{\usepackage{caption}}
\AtBeginDocument{%
\ifdefined\contentsname
  \renewcommand*\contentsname{Table of contents}
\else
  \newcommand\contentsname{Table of contents}
\fi
\ifdefined\listfigurename
  \renewcommand*\listfigurename{List of Figures}
\else
  \newcommand\listfigurename{List of Figures}
\fi
\ifdefined\listtablename
  \renewcommand*\listtablename{List of Tables}
\else
  \newcommand\listtablename{List of Tables}
\fi
\ifdefined\figurename
  \renewcommand*\figurename{Figure}
\else
  \newcommand\figurename{Figure}
\fi
\ifdefined\tablename
  \renewcommand*\tablename{Table}
\else
  \newcommand\tablename{Table}
\fi
}
\@ifpackageloaded{float}{}{\usepackage{float}}
\floatstyle{ruled}
\@ifundefined{c@chapter}{\newfloat{codelisting}{h}{lop}}{\newfloat{codelisting}{h}{lop}[chapter]}
\floatname{codelisting}{Listing}

\makeatother
\makeatletter
\@ifpackageloaded{caption}{}{\usepackage{caption}}
\@ifpackageloaded{subcaption}{}{\usepackage{subcaption}}
\makeatother

\ifLuaTeX
  \usepackage{selnolig}  
\fi
\usepackage[]{natbib}
\usepackage{bookmark}

\IfFileExists{xurl.sty}{\usepackage{xurl}}{} 
\hypersetup{
  pdftitle={Title},
  pdfauthor={Author 1; Author 2},
  pdfkeywords={3 to 6 keywords, that do not appear in the title},
  colorlinks=true,
  linkcolor={blue},
  filecolor={Maroon},
  citecolor={Blue},
  urlcolor={Blue},
  pdfcreator={LaTeX via pandoc}}

\RequirePackage{bm}
\usepackage{amsthm}
\usepackage{ulem}
\usepackage[table]{xcolor}
\usepackage{multirow}
\usepackage{arydshln}
\usepackage{makecell}

\theoremstyle{plain}
\newtheorem{theorem}{Theorem}
\newtheorem{corollary}{Corollary}
\newtheorem{lemma}{Lemma}
\newtheorem{proposition}{Proposition}

\theoremstyle{remark}

\newtheorem{definition}{Definition}
\newtheorem{assumption}{Assumption}

\newcommand{\bbE}{{\mathbb{E}}}
\newcommand{\bbV}{{\text{var}}}
\newcommand{\cov}{{\text{cov}}}

\newcommand{\F}{{\text{F}}}
\newcommand{\FE}{{\tilde{\F}}}
\newcommand{\C}{{\text{C}}}
\newcommand{\CF}{{\text{CF}}}
\newcommand{\diag}{{\text{diag}}}

\newcommand{\PRIASV}{{\text{PRIASV}}}
\newcommand{\M}{{\text{M}}}
\newcommand{\TSCRFE}{{\text{S-CRFE}}}

\newcommand{\bomega}{{\bm \omega}}
\newcommand{\bx}{{\bf x}}

\newcommand{\bg}{{\bm g}}
\newcommand{\bv}{{\bf v}}
\newcommand{\bu}{{\bf u}}
\newcommand{\bw}{{\bf w}}
\newcommand{\bc}{{\bm c}}
\newcommand{\bq}{{\bm q}}
\newcommand{\be}{{\bm e}}
\newcommand{\bC}{{\bf C}}

\newcommand{\bZero}{{\boldsymbol 0}}
\newcommand{\bOne}{{\boldsymbol 1}}

\newcommand{\bA}{{\boldsymbol A}}
\newcommand{\bB}{{\boldsymbol B}}
\newcommand{\bD}{{\boldsymbol D}}

\newcommand{\bG}{{\boldsymbol G}}
\newcommand{\bM}{{\boldsymbol M}}
\newcommand{\bI}{{\boldsymbol I}}
\newcommand{\bQ}{{\boldsymbol Q}}

\newcommand{\bms}{{\boldsymbol s}}
\newcommand{\bS}{{\boldsymbol S}}

\newcommand{\bX}{{\boldsymbol X}}
\newcommand{\bY}{{\boldsymbol Y}}
\newcommand{\bV}{{\boldsymbol V}}
\newcommand{\bW}{{\boldsymbol W}}
\newcommand{\bU}{{\boldsymbol U}}
\newcommand{\bz}{{\boldsymbol z}}
\newcommand{\bZ}{{\boldsymbol Z}}

\newcommand{\bbeta}{{\boldsymbol \beta}}
\newcommand{\bmu}{{\boldsymbol \mu}}
\newcommand{\bepsilon}{{\boldsymbol \epsilon}}
\newcommand{\bvarepsilon}{{\boldsymbol \varepsilon}}

\newcommand{\bLambda}{{\boldsymbol \Lambda}}
\newcommand{\bGamma}{{\boldsymbol \Gamma}}
\newcommand{\btau}{{\boldsymbol \tau}}

\newcommand{\bzeta}{{\boldsymbol \zeta}}
\newcommand{\bEta}{{\boldsymbol \eta}}
\newcommand{\bxi}{{\boldsymbol \xi}}
\newcommand{\bvarphi}{{\boldsymbol \varphi}}
\newcommand{\bDelta}{{\boldsymbol \Delta}}

\newcommand{\cA}{{\mathcal A}}

\newcommand{\cD}{{\mathcal D}}

\newcommand{\cI}{{\mathcal I}}

\newcommand{\cQ}{{\mathcal Q}}

\newcommand{\cS}{{\mathcal S}}

\newcommand{\cN}{{\mathcal N}}

\newcommand{\bbP}{{\mathbb P}}
\newcommand{\bbR}{{\mathbb R}}

\newcommand{\anon}{1}

\begin{document}

\def\spacingset#1{\renewcommand{\baselinestretch}%
{#1}\small\normalsize} \spacingset{1}


\if1\anon
{
  \title{\bf Data-Adaptive Rerandomization for $2^K$ Factorial Designs}
  \author{Tingxuan Han\hspace{.2cm}\\
    Department of Statistics and Data Science, Tsinghua University\\
    and \\
    Ke Deng\thanks{Corresponding author: kdeng@tsinghua.edu.cn}\\
    Department of Statistics and Data Science, Tsinghua University}
  \maketitle
} \fi

\if0\anon
{
  \bigskip
  \bigskip
  \bigskip
  \begin{center}
    {\LARGE\bf Data-Adaptive  Rerandomization for $2^K$ Factorial Designs}
\end{center}
  \medskip
} \fi

\bigskip
\begin{abstract}
Factorial designs allow simultaneous estimation of multiple main effects and interactions, but covariate imbalance can substantially reduce estimation precision.
Existing rerandomization methods improve covariate balance yet do not fully exploit heterogeneous priorities across factorial effects or effect-specific covariate importance.
To address these limitations, this paper proposes a data-adaptive rerandomization framework for $2^K$ factorial designs. 
We first develop an oracle criterion that jointly incorporates researchers' priorities over factorial effects and effect-specific covariate importance, enabling precision gains with guaranteed lower bounds.
To make the oracle criterion implementable, we develop a data-adaptive procedure that learns effect-specific covariate importance from a random subset of units and applies an estimated oracle criterion to the remaining units. 
Unlike existing two-stage rerandomization methods for treatment-control experiments, our procedure accommodates multiple factorial effects and requires no auxiliary dataset.
Under a finite-population framework, we establish design-based asymptotic theory and show that the proposed procedure preserves the oracle design's precision-prioritization property and, under suitable conditions, achieves the same asymptotic precision as the oracle design.
Numerical studies demonstrate substantial efficiency gains over existing rerandomization methods.

\end{abstract}

\noindent%
{\it Keywords:} 
Causal inference;
Covariate adjustment;
Randomization inference; Adaptive design
\vfill

\newpage
\spacingset{1.8} 

\section{Introduction}\label{sec:intro}

Randomized experiments are widely regarded as the gold standard for causal inference \citep{Imbens2015}. 
When multiple treatment factors are of interest, factorial designs provide an efficient framework for simultaneously estimating main effects and interactions within a single experiment \citep{fisher1935,yates1937factorial,cochran1950experimental,box2005statistics,dasgupta2015causal,Branson2016,montgomery2017design}.
Although complete randomization balances covariates in expectation, substantial imbalance may arise in a realized assignment, reducing the estimation precision \citep{fisher1925,Holschuh1980,Wu1981,Urbach1985,Imai2008,cox1982randomization,Cox2009,Imbens2015,morgan2012rerandomization,li2018asymptotic}. 
This challenge is particularly acute in completely randomized factorial experiments (CRFE), where covariate balance must be achieved across many treatment combinations \citep{Branson2016,li2020rerandomization}.

A natural remedy is \textit{rerandomization}, which repeatedly randomizes until a prespecified balance criterion is satisfied. 
Formalized by \citet{morgan2012rerandomization} using the Mahalanobis distance and later extended to $2^K$ factorial designs as ReFM \citep{Branson2016,li2020rerandomization}, rerandomization improves precision by balancing covariates across treatment combinations.
For a recent and comprehensive review of rerandomization methods and their theoretical and practical developments, see \cite{junior2025does}.
However, ReFM treats all factorial effects and covariates equally, despite the fact that researchers may prioritize some effects over others and that covariates may vary in their relevance to different effects. 
Ignoring such heterogeneity can lead to substantial efficiency loss, particularly when the numbers of effects and covariates grow.

To accommodate heterogeneous priorities, \citet{li2020rerandomization} proposed tiered extensions of ReFM that organize both covariates and factorial effects into prespecified priority tiers.
However, these methods require the priority structure to be specified a priori and do not account for effect-specific covariate importance.
In the simpler two-arm setting, \citet{liu2025bayesian} moved beyond prespecified covariate priorities by developing the oracle rerandomization criterion (ReO), which quantifies covariate importance and achieves optimal asymptotic efficiency, together with a two-stage procedure that learns the required information from an auxiliary dataset.
Extending this idea to factorial designs is nontrivial because multiple factorial effects and their associated covariate importance must be accommodated simultaneously, while suitable auxiliary data may be unavailable in practice. 

To address these challenges, this paper makes three contributions.
First, we develop an oracle rerandomization criterion for $2^K$ factorial designs that jointly incorporates the importance of factorial effects and covariates, enabling precision gains to be allocated according to researchers' priorities.
Second, because the required covariate importance information is typically unknown, we propose a data-adaptive procedure that learns it from a random subset of units and constructs an estimated oracle criterion for the remaining units.
The procedure requires no auxiliary dataset and is therefore more broadly applicable in practice.
Third, we establish design-based asymptotic theory for both the oracle criterion and the data-adaptive procedure.
In particular, we characterize the asymptotic behavior of the data-adaptive procedure, establish its efficiency gains over existing methods, and show that, under suitable conditions, it achieves the same asymptotic precision as the infeasible oracle design.
Simulations further demonstrate its superior finite-sample performance.

The remainder of this paper is organized as follows. 
Section \ref{sec:notation} introduces the notation and reviews related methods.
Section \ref{sec:ReOMA} develops the oracle rerandomization criterion for $2^K$ factorial designs, and
Section \ref{sec:twostage} presents a data-adaptive procedure for approximating the oracle criterion using a single dataset. 
Numerical simulations and real-data applications are provided in Section \ref{sec:simu}, and Section \ref{sec:summary} concludes the paper.

\section{Notations and preliminaries}\label{sec:notation}

\subsection{Framework of potential outcomes}
Consider a finite population with $n$ units.
In a $2^K$ factorial experiment, there are $K$ binary treatment factors with two levels $-1$ and $+1$, resulting in a set of $Q=2^K$ treatment combinations $\cQ=\{-1,+1\}^K$.
For the $q$th treatment combination in $\cQ$, let $\bm\iota(q)=(\iota_1(q),\ldots,\iota_K(q))'\in\cQ$ denote its treatment combination vector, where $\iota_k(q)\in\{-1,+1\}$ is the level of treatment factor $k$.
Under the \emph{stable unit treatment value assumption} (SUTVA) \citep{Rubin1980},
let $Y_i(q)$ denote the potential outcome of unit $i$ under treatment combination $q$, and let $\bY_i=(Y_i(1),\ldots,Y_i(Q))'$ collect all its potential outcomes.
The finite-population means of potential outcomes under treatment combinations in $\cQ$ are defined as
$$\Bar{Y}(q) = \frac{1}{n}\sum_{i=1}^n Y_i(q),\ 1\leq q\leq Q,$$
which compose a mean vector $\Bar{\bY} = (\Bar{Y}(1),\ldots,\Bar{Y}(Q))'$.

{Following \citet{dasgupta2015causal}, we define the factorial effects as linear contrasts of the treatment-combination means.
To construct these contrasts, we first define the generating vector for the main effect of treatment factor $k$ as $\bg_k = (g_{k1},\ldots,g_{kQ})' = (\iota_k(1),\ldots,\iota_k(Q))'$.
The generating vector for an interaction effect is obtained by element-wise multiplication of the generating vectors for the corresponding main effects. 
For example, the generating vector for the interaction between factors $k$ and $k'$ is $\bg_k\circ\bg_{k'}$, where $\circ$ denotes element-wise multiplication.
This yields $F=2^K-1$ generating vectors, denoted by
$\bg_1,\ldots,\bg_F$.
Collecting these generating vectors as the columns of a matrix, we obtain a $Q\times F$ generating matrix $\bG=(\bg_1,\ldots,\bg_F)$, with $\bm b_q=(g_{1q},\ldots,g_{Fq})'$ as its $q$th row corresponding to the treatment combination $q$.
Thus, the columns of $\bG$ correspond to factorial effects, while the rows correspond to treatment combinations.
Table~\ref{tab:generate} illustrates this structure for $K=3$.

Given the generating matrix $\bG$,
the $f$th factorial effect  in the finite population is defined as
\[\tau_f=\frac{1}{2^{K-1}}\bg_f'\bar{\bY}=\frac{1}{2^{K-1}}\sum_{q=1}^Q g_{fq}\bar{Y}(q),\qquad f=1,\ldots,F,\]
which compose the vector of factorial effects $\btau=(\tau_1,\ldots,\tau_F)'$.
Similarly, for unit $i$, we define the individual factorial effect corresponding to the $f$th factorial effect as
$$\tau_{if}=\frac{1}{2^{K-1}}\bg_f'\bY_i=\frac{1}{2^{K-1}}\sum_{q=1}^Q g_{fq}Y_i(q),\qquad f=1,\ldots,F,$$
and the vector of individual factorial effects as
$\btau_i=(\tau_{i1},\ldots,\tau_{iF})'$. 
Moreover, define
$\bA_q=\frac{1}{2^{K-1}}\bm b_q=(a_{1q},\ldots,a_{Fq})'$, we have
\[\btau=\frac{1}{2^{K-1}}\bG'\bar{\bY}
=\frac{1}{2^{K-1}}\sum_{q=1}^Q \bm b_q\bar{Y}(q)
=\sum_{q=1}^Q\bA_q\bar{Y}(q),\]
\[\btau_i=\frac{1}{2^{K-1}}\bG'\bY_i
=\frac{1}{2^{K-1}}\sum_{q=1}^Q \bm b_qY_i(q)
=\sum_{q=1}^Q\bA_qY_i(q).\]
}

{
\spacingset{1}
\small
\begin{longtable}[!t]{@{}ccccc ccccccc@{}}
\caption{The $Q\times F$ generating matrix $\bG$ when $K=3$ (with $Q=8$ and $F=7$).
}\label{tab:generate}\tabularnewline
\toprule\noalign{}
  & \multicolumn{3}{c}{Main effects} && \multicolumn{4}{c}{Interaction effects} & Treatment & Average\\ \cmidrule(lr){2-4}\cmidrule(lr){6-9}

Order in $\cQ$ & 1 & 2 & 3 && 12 & 13 & 23 & 123 & combinations& outcomes\\
\midrule\noalign{}
\endfirsthead
\toprule\noalign{}
Effects & 1 & 3 & 2 && 13 & 12 & 23 & 123 & \\
\midrule\noalign{}
\endhead
\bottomrule\noalign{}
\endlastfoot

1 & $+1$ & $+1$ & $+1$ && $+1$ & $+1$ & $+1$ & $+1$ & $\bm b_1'$ & $\bar{Y}(1)$\\
2 & $+1$ & $+1$ & $-1$ && $+1$ & $-1$ & $-1$ & $-1$ & $\bm b_2'$ & $\bar{Y}(2)$\\
3 & $+1$ & $-1$ & $+1$ && $-1$ & $+1$ & $-1$ & $-1$ & $\bm b_3'$& $\bar{Y}(3)$\\
4 & $+1$ & $-1$ & $-1$ && $-1$ & $-1$ & $+1$ & $+1$ & $\bm b_4'$ & $\bar{Y}(4)$\\
5 & $-1$ & $+1$ & $+1$ && $-1$ & $-1$ & $+1$ & $-1$ & $\bm b_5'$ & $\bar{Y}(5)$\\
6 & $-1$ & $+1$ & $-1$ && $-1$ & $+1$ & $-1$ & $+1$ & $\bm b_6'$ & $\bar{Y}(6)$\\
7 & $-1$ & $-1$ & $+1$ && $+1$ & $-1$ & $-1$ & $+1$ & $\bm b_7'$ & $\bar{Y}(7)$\\
8 & $-1$ & $-1$ & $-1$ && $+1$ & $+1$ & $+1$ & $-1$ & $\bm b_8'$ & $\bar{Y}(8)$\\
\addlinespace[2pt]
\midrule
Generating vectors
& \raisebox{0.2ex}{$\bg_1$} & \raisebox{0.2ex}{$\bg_2$} & \raisebox{0.2ex}{$\bg_3$}
&& \raisebox{0.2ex}{$\bg_4$} & \raisebox{0.2ex}{$\bg_5$} & \raisebox{0.2ex}{$\bg_6$} & \raisebox{0.2ex}{$\bg_7$} & \raisebox{-0.3ex}{$\bG$} & \raisebox{-0.3ex}{$\bar{\bY}$}\\
\end{longtable}
}

\subsection{Factorial effect estimator and covariate imbalance}  

Let $\bZ=(Z_1,\ldots,Z_n)'\in \{1,\ldots,Q\}^n$ denote the treatment assignment vector, where $Z_i=q$ indicates that unit $i$ is assigned to treatment combination $q$. 
Under a \emph{completely randomized factorial experiment} (CRFE), exactly $n_q$ units are assigned to treatment combination $q$, with $\sum_{q=1}^Q n_q=n$, and we write $r_q=n_q/n$.
The assignment vector is drawn uniformly from
$$\mathcal{Z}=\left\{\bz\in \{1,\ldots,Q\}^n: \sum_{i=1}^n 1\{z_i=q\} = n_q, q=1,\ldots,Q\right\},$$
so that $\bbP(\bZ = \bz) = n_1!\ldots n_Q!/n!$ for $\bz\in\mathcal{Z}$.
The difference-in-means estimator for $\btau$ is
\begin{equation}\label{eq:estimator_fac}
\hat{\btau} = \sum_{q=1}^Q \bA_q \hat{\Bar{Y}}(q) = (\hat{\tau}_1,\ldots,\hat{\tau}_F)',
\end{equation}
where $\hat{\Bar{Y}}(q) = n_q^{-1}\sum_{i:Z_i=q} Y_i(q)$ is the observed mean outcome under  combination $q$.

Suppose that each unit $i$ is associated with a $p$-dimensional covariate vector $\bX_i\in\mathbb{R}^p$. 
For $q=1,\ldots,Q$, let $\hat{\Bar{\bX}}(q) = n_q^{-1}\sum_{i:Z_i=q} \bX_i$ denote the sample mean of the covariates among units assigned to treatment combination $q$.
Define the covariate mean-difference vector as
\begin{equation}\label{eq:CovariateMeanDifference}
\hat{\btau}_{\bx} = \sum_{q=1}^Q \bA_q \otimes \hat{\Bar{\bX}}(q) = (\hat{\btau}_{\bx,1}',\ldots,\hat{\btau}_{\bx,F}')' \in \bbR^{Fp},
\end{equation}
where $\otimes$ denotes the Kronecker product, and
$\hat{\btau}_{\bx,f} = \sum_{q=1}^Q a_{fq}\hat{\bar{\bX}}(q) \in \bbR^{p}$
is the covariate mean-difference vector associated with the $f$th factorial effect.
Let $\bar{\bX} = n^{-1}\sum_{i=1}^n \bX_i$ denote the finite-population mean of the covariates. 
It follows that
$\sum_{q=1}^Q \bA_q \otimes \bar{\bX} =\bZero$.

\subsection{Additional notations}

For the finite population $\{\bX_i,\bY_i\}_{i=1}^n$ and $q \in \{1,\ldots,Q\}$, define 
$S_{qq} = (n-1)^{-1}\sum_{i=1}^n (Y_i(q) - \bar{Y}(q))^2$,
$\bS_{\btau\btau} = (n-1)^{-1}\sum_{i=1}^n (\btau_i - \btau)(\btau_i - \btau)'$, 
$\bS_{\bx\bx} = (n-1)^{-1}\sum_{i=1}^n (\bX_i - \bar\bX)(\bX_i - \bar\bX)'$, and 
$\bS_{q,\bx} = (n-1)^{-1}\sum_{i=1}^n (Y_i(q) - \bar{Y}(q))(\bX_i - \bar\bX)'$.
The factorial effects are said to be additive if all individual factorial effect vectors are identical, i.e., $\btau_i \equiv \btau$, or equivalently,
$\bS_{\btau\btau}=\bZero$ \citep{li2020rerandomization}. 
Under additivity,
$S_{qq} \equiv S_{11}$, and $\bS_{q,\bx} \equiv \bS_{1,\bx}$.

In what follows, $\bA_n \overset{\cdot}{\sim} \bB_n$ denotes equality in asymptotic distribution, and $\bbV_a$ denotes the variance of an asymptotic distribution. 
For matrices $\bM_1$ and $\bM_2$, $\bM_1\le \bM_2$ means that $\bM_2-\bM_1$ is positive semidefinite. For a positive semidefinite matrix $\bM\in\mathbb{R}^{m\times m}$ of rank $l_0$, let $\bM_l^{1/2}\in\mathbb{R}^{m\times l}$ $(l \ge l_0)$ denote any matrix satisfying $(\bM^{1/2}_l)(\bM^{1/2}_l)' = \bM$.
When $l=m$, $\bM^{1/2}$ denotes the unique symmetric positive semidefinite square root of $\bM$.

\subsection{Asymptotic properties of CRFE}
Under the completely randomized factorial experiment (CRFE), by Theorem 3 of \citet{li2017general}, the random vector $\sqrt{n}(\hat{\btau}'-\btau',\hat{\btau}_{\bx}')'$ has mean $\bZero$ and covariance matrix given by 
{\setlength{\arraycolsep}{3pt}
\begin{equation}\label{eq:V_fac}
\begin{split}
\bV &\equiv \left(\begin{array}{cc}
\bV_{\btau\btau} & \bV_{\btau\bx} \\
\bV_{\bx\btau} & \bV_{\bx\bx}
\end{array}\right)
= \sum_{q=1}^Q \frac{1}{r_q}\left(\begin{array}{cc}
\bA_q \bA_q' S_{qq} & (\bA_q \bA_q')\otimes \bS_{q,\bx} \\
(\bA_q \bA_q')\otimes \bS_{\bx,q} & (\bA_q \bA_q')\otimes \bS_{\bx\bx}
\end{array}\right) - \left(\begin{array}{cc}
\bS_{\btau\btau} & \bZero_{F \times Fp} \\
\bZero_{Fp \times F} & 
\bZero_{Fp \times Fp}
\end{array}\right).
\end{split}
\end{equation} }
When the effects are additive, then 
\[\begin{split}
\bV &\equiv \left(\begin{array}{cc}
\bV_{\btau\btau} & \bV_{\btau\bx} \\
\bV_{\bx\btau} & \bV_{\bx\bx}
\end{array}\right)
= \left(\begin{array}{cc}
\left(\sum_{q=1}^Q r_q^{-1} \bA_q \bA_q'\right) S_{11} & \left(\sum_{q=1}^Q r_q^{-1} \bA_q \bA_q'\right) \otimes \bS_{1,\bx} \\
\left(\sum_{q=1}^Q r_q^{-1} \bA_q \bA_q'\right) \otimes \bS_{\bx,1} & \left(\sum_{q=1}^Q r_q^{-1} \bA_q \bA_q'\right) \otimes \bS_{\bx\bx}
\end{array}\right).
\end{split}\] 
Theorem 5 of \citet{li2017general} shows that, under CRFE, when Assumption~\ref{assump:regularity_fac} below holds, $\sqrt{n}(\hat{\btau}'-\btau',\hat{\btau}_{\bx}')'$ has the following asymptotic sampling distribution:
\begin{equation}\label{eq:asymp_fac}
\sqrt{n}\begin{pmatrix}
\hat{\btau}-\btau \\
\hat{\btau}_\bx
\end{pmatrix}
\overset{\cdot}{\sim}
\cN(\bZero,\bV).
\end{equation}

\begin{assumption}\label{assump:regularity_fac}
As $n\to\infty$, for each $1\le q\le Q$, the following conditions hold:
(1) the sample proportion $r_q=n_q/n$ converges to a positive limit;
(2) the finite-population variances and covariance matrices $S_{qq}$, $\bS_{\btau\btau}$, $\bS_{\bx\bx}$, and $\bS_{q,\bx}$ have finite limits, $\bS_{\bx\bx}$ and its limit $\bS_{\bx\bx,\infty}$ are nonsingular;
(3) $\max_{1\leq i\leq n} |Y_i(q)-\Bar{Y}(q)|^2/n \rightarrow 0$ and $\max_{1\leq i\leq n}||\bX_i - \Bar{\bX}||^2_2/n\rightarrow0$.
\end{assumption}

Define $\bV_{\btau\btau}^{||} = \bV_{\btau\bx}\bV_{\bx\bx}^{-1}\bV_{\bx\btau}$, and $\bV_{\btau\btau}^{\perp} = \bV_{\btau\btau} - \bV_{\btau\btau}^{||}$.
Theorem 1 in \cite{li2020rerandomization} implies that,
under CRFE, the linear projection of $\hat{\btau}-\btau$ onto $\hat{\btau}_{\bx}$ is
\begin{equation}\label{eq:project}
\sqrt{n}(\hat{\btau}-\btau) = \sqrt{n}\bV_{\btau\bx}\bV_{\bx\bx}^{-1} \hat{\btau}_\bx + \bm r,
\end{equation}
where $\bm r = (r_1,\ldots,r_F)'$ is the residual vector,
$\cov(\sqrt{n}\bV_{\btau\bx}\bV_{\bx\bx}^{-1} \hat{\btau}_\bx) = \bV_{\btau\btau}^{||}$,
$\cov(\bm r) = \bV_{\btau\btau}^{\perp}$, and 
$\cov(\hat{\btau}_\bx,\bm r) = \bZero$.
Let $V_{\tau_f\tau_f}$ and $V_{\tau_f\tau_f}^{\parallel}$ denote the $f$th diagonal elements of $\bV_{\btau\btau}$ and $\bV_{\btau\btau}^{\parallel}$. 
By Corollary 1 of \citet{li2020rerandomization}, 
the squared multiple correlation between $\hat{\tau}_f$ and $\hat{\btau}_{\bx}$ is
\begin{equation}\label{eq:R2f}
R^2_f = \text{corr}^2(\hat{\tau}_f,\hat{\btau}_\bx) = \frac{V_{\tau_f\tau_f}^{||}}{V_{\tau_f\tau_f}}.
\end{equation}

\subsection{Rerandomization based on Mahalanobis distance}

To improve covariate balance,
\citet{li2020rerandomization} proposed ReFM, which accepts a treatment assignment if
\begin{equation}\label{eq:ReFM}
\phi_M(\sqrt{n}\hat{\btau}_{\bx},\bV_{\bx\bx}) = 
1\{(\sqrt{n}\hat{\btau}_{\bx})'\bV_{\bx\bx}^{-1}(\sqrt{n}\hat{\btau}_{\bx}) \leq \xi_{Fp,\alpha}\}=1,
\end{equation}
where $\xi_{Fp,\alpha}$ denotes the $\alpha$-quantile of $\chi^2_{Fp}$ distribution.
Under ReFM, 
\begin{equation}\label{eq:DisReFM}
\sqrt{n}(\hat{\btau}-\btau) \overset{\cdot}{\sim}
\big(\bV_{\btau\btau}^{\perp}\big)^{1/2}\bvarepsilon + \big(\bV_{\btau\btau}^{||}\big)_{Fp}^{1/2}\bzeta_{Fp,\alpha},
\end{equation}
where $\bvarepsilon \sim \mathcal{N}(\bm 0, \bm I_{F})$, $\bzeta_{Fp,\alpha} \sim \bm\eta|\bm\eta'\bm\eta \leq \xi_{Fp,\alpha}$, and $\bm\eta \sim \mathcal{N}(\bm 0,\bm I_{Fp})$ is independent of $\bvarepsilon$. 
Consequently, the asymptotic sampling covariance matrix of $\sqrt{n}\hat{\btau}$ is
$\lim_{n\to\infty} 
\left(\bV_{\btau\btau}^{\perp} + v_{Fp,\alpha}\bV_{\btau\btau}^{||}\right),$
where $v_{Fp,\alpha} = \bbP(\chi^2_{p+2} \leq \xi_{p,\alpha})/\bbP(\chi^2_{p} \leq \xi_{p,\alpha})$.
For the $f$th factorial effect,
\begin{equation}\label{eq:DisReFM_f}
\sqrt{n}(\hat{\tau}_f-\tau_f)
\overset{\cdot}{\sim}
\sqrt{V_{\tau_f\tau_f}}
\left(\sqrt{1-R^2_f}\cdot \varepsilon_0 + \sqrt{R^2_f}\cdot L_{Fp,\alpha}\right),
\end{equation}
where $\varepsilon_0\sim\mathcal{N}(0,1)$,
$L_{Fp,\alpha} \sim \eta_1 \mid \bm\eta'\bm\eta \le \xi_{Fp,\alpha}$, and $\bm\eta\sim\mathcal{N}(\bm 0,\bm I_{Fp})$ is independent of $\varepsilon_0$.
The percentage reduction in asymptotic sampling variance (PRIASV) of $\hat{\tau}_f$ is
\[\begin{split}
\PRIASV_{\M,f} = 100\times \left[1-\frac{\bbV_a(\sqrt{n}(\hat{\tau}_f - \tau_f) \mid  \phi_M = 1)}{\bbV_a(\sqrt{n}(\hat{\tau}_f - \tau_f))}\right]
= \lim_{n\to\infty} 100\times (1-v_{Fp,\alpha})R^2_f.
\end{split}\]
Because the Mahalanobis distance weights all coordinates equally after standardization, ReFM implicitly treats all factorial effects and covariates as equally important.

To accommodate heterogeneous effect importance, \citet{li2020rerandomization} proposed ReFMT$_\F$, which partitions factorial effects into importance tiers and constructs the rerandomization criterion from the Mahalanobis distances of the corresponding subvectors of $\hat{\btau}_{\bx}$.
ReFMT$_\CF$ further partitions covariates into tiers to account for heterogeneous covariate importance.
However, ReFMT$_\CF$ assumes a common covariate-tier structure across all factorial effects, although covariate relevance may vary by effect.
Moreover, the precision gain for a given effect depends on its correlations with effects in other tiers, making effect-specific improvements difficult to control. 
Finally, the covariate tiers must be specified a priori, which may be difficult without reliable prior information on covariate importance.

These limitations motivate a rerandomization framework that accommodates heterogeneous effect priority and effect-specific covariate importance without requiring prior knowledge of the latter. 
We first develop an oracle criterion that fully exploits both sources of information and then propose a practical data-adaptive procedure that approximates it by learning effect-specific covariate importance from a random subset of units.


\section{Oracle rerandomization for $2^K$ factorial designs}\label{sec:ReOMA}

\subsection{Effect priorities and covariate importance}

We begin by formalizing the two sources of information: researchers' priorities across factorial effects, and the importance of covariates for estimating each effect.

In practice, different factorial effects often have different scientific or practical importance.
To reflect researchers' priorities, let $\omega_f \geq 0$ denote the prespecified importance weight assigned to $\tau_f$, with larger values indicating a higher priority for precision improvement.
Without loss of generality, we relabel the effects so that $\omega_1 \geq \omega_2 \geq \cdots \geq \omega_F$, and hence $\btau=(\tau_1,\ldots,\tau_F)'$ is ordered by decreasing importance.

To characterize the importance of covariates for each factorial effect, consider the projection decomposition in \eqref{eq:project}. 
Under CRFE,
$\bB = \bV_{\bx\bx}^{-1}\bV_{\bx\btau} = (\bB_1,\ldots,\bB_F) \in \bbR^{Fp \times F}$ is the projection coefficient matrix of $\hat{\btau}-\btau$ onto $\hat{\btau}_{\bx}$.
Accordingly,
$\sqrt{n}(\hat{\btau}-\btau) = \sqrt{n}\bB'\hat{\btau}_\bx + \bm r$,
i.e.,
\begin{equation}\label{eq:projection}
\sqrt{n}(\hat{\tau}_f-\tau_f) = \sqrt{n}\bB_f'\hat{\btau}_\bx + r_f,\ f=1,\ldots,F.
\end{equation}
The vector $\bB_f$ characterizes the relevance of the covariates to the estimation of $\tau_f$, while $\bB$ summarizes the overall effect-specific covariate importance structure.
Under the oracle setting, $\bB$ is assumed to be known and is used to construct the rerandomization criterion.

\subsection{Admissible rerandomization criteria}

To search for an optimal rerandomization criterion, we define a family of admissible covariate balance criteria, denoted by $\Phi_\alpha$. 
Each criterion $\phi(\sqrt{n}\hat{\btau}_{\bx},\bV_{\bx\bx},\bB)$
is a function of the covariate mean-difference vector $\sqrt{n}\hat{\btau}_{\bx}$, its covariance matrix $\bV_{\bx\bx}$, and the importance matrix $\bB$, and satisfies the regularity conditions in Assumption~\ref{assump:criterion}.
For any $\phi\in\Phi_\alpha$, let
$\mathcal{B}_{\phi}
=
\{\bm\mu:\phi(\bm\mu,\bV_{\bx\bx},\bB)=1\}$
denote the acceptance region induced by the criterion. 
By Assumption~\ref{assump:criterion} (2), under CRFE and Assumption~\ref{assump:regularity_fac},
$\lim_{n\to\infty}
\bbP
(
\sqrt n\hat{\btau}_{\bx}
\in
\mathcal B_\phi
)
=
\alpha,$
so every criterion in $\Phi_\alpha$ has the same asymptotic acceptance probability, allowing us to select among them based on estimation efficiency.

\begin{assumption}\label{assump:criterion}
Criterion $\phi(\cdot,\cdot,\cdot)$ satisfies:
(1) $\phi(\cdot,\cdot,\cdot)$ is almost surely continuous.
(2) For $\bD\sim \mathcal{N}(\bm 0,\bV_{\bx\bx})$, $\mathbb{P}\{\phi(\bD,\bV_{\bx\bx},\bB)=1\}=\alpha>0$, and $\cov(\bD\mid \phi(\bD,\bV_{\bx\bx},\bB)=1)$ is a continuous function of $\bV_{\bx\bx}$ and $\bB$.
(3) For any $\bm\mu$, $\phi(\bm\mu,\bV_{\bx\bx},\bB)
=
\phi(-\bm\mu,\bV_{\bx\bx},\bB)$.
\end{assumption}

\subsection{A weighted variance criterion}
We now seek an optimal criterion within $\Phi_\alpha$ based on estimation efficiency.
When $K=1$, the $2^K$ factorial experiment reduces to a treatment--control experiment. In this setting, \citet{liu2025bayesian} derived the oracle rerandomization criterion ReO by minimizing the asymptotic variance of $\sqrt n(\hat\tau-\tau)$.
For $K>1$, multiple factorial effects must be estimated simultaneously. A natural extension is therefore to choose a rerandomization criterion that minimizes a weighted sum of their asymptotic sampling variances:
\[
\arg\min_{\phi\in\Phi_\alpha}
\sum_{f=1}^F
\omega_f\,\bbV_a(\hat{\tau}_f\mid \phi=1),
\]
where $\omega_f$ reflects the prespecified priority of the $f$th factorial effect.
Combining the projection decomposition in Eq.~\eqref{eq:projection} with the asymptotic independence of $\hat{\btau}_{\bx}$ and $\bm r$ from Eq.~\eqref{eq:asymp_fac}, the above objective is equivalent to
\begin{equation}\label{eq:weighted_sum}
\arg\min_{\phi\in\Phi_\alpha}
\sum_{f=1}^{F}
\omega_f\,
\bbV_a(\bB_f'\hat{\btau}_{\bx}\mid \phi=1).
\end{equation}
By Lemma A1 of \citet{liu2025bayesian}, the solution to Eq. \eqref{eq:weighted_sum} is
$1\left\{(\sqrt{n}\hat{\btau}_\bx)'\bB\bomega\bB'(\sqrt{n}\hat{\btau}_\bx) \leq a\right\}$,
where $\bm\omega=\operatorname{diag}\{\omega_1,\ldots,\omega_F\}$ is the diagonal matrix of effect-priority weights, and $a$ is the $\alpha$ quantile of $\bD'\bB\bomega\bB'\bD$,  $\bD\sim\mathcal{N}(\bZero,\bV_{\bx\bx})$. 

Though straightforward, this objective has an undesirable feature. 
Because $\bB_1'\hat{\btau}_{\bx},\ldots,\bB_F'\hat{\btau}_{\bx}$ are generally correlated, minimizing their weighted sum of variances may overweight directions shared across multiple factorial effects.
Moreover, effects with larger baseline variances tend to receive greater variance reduction. 
As a result, the allocation of precision gains is influenced not only by the prespecified weights $\omega_f$, but also by the covariance structure of the estimators.
When the goal is to improve the precision of individual factorial effect estimators while placing greater emphasis on more important effects, it is desirable to disentangle these two sources of influence.


\subsection{A refined optimization criterion}

To better align precision gains with researchers' priorities, we refine the objective in two steps. First, we orthogonalize
$\bB_1'\hat{\btau}_{\bx},
\ldots,
\bB_F'\hat{\btau}_{\bx}$
to remove the influence of their correlation structure. Second, we replace variances by relative variance reductions, thereby eliminating the implicit weighting induced by their original scales. Consequently, the contribution of each component to the optimization is determined solely by the prespecified weights \(\omega_f\).

We begin with the orthogonalization.
Let $\tilde{\tau}_{\bx,1} = \bB_1'\hat{\btau}_\bx$. 
For $f=2,\ldots,F$, define $\tilde{\tau}_{\bx,f}$ as the residual from the linear projection of $\bB_f'\hat{\btau}_{\bx}$ onto
$\left(\bB_1'\hat{\btau}_\bx,\ldots,\bB_{f-1}'\hat{\btau}_\bx\right)'$.
Thus, $\tilde{\tau}_{\bx,f}$ captures the information in $\bB_f'\hat{\btau}_{\bx}$ that is not contained in the preceding components.
Equivalently, 
\[\begin{split}
\tilde{\btau}_\bx = (\tilde{\tau}_{\bx,1},\ldots,\tilde{\tau}_{\bx,F})' = \bQ_\bB \bB'(\hat{\btau}_\bx-\btau_\bx),
\end{split}\]
where $\bQ_{\bB}$ denotes the transformation matrix from the Gram--Schmidt orthogonalization:
\begin{equation}\label{eq:QB}
\bQ_\bB = \left(\begin{array}{ccccc}
    1 & 0 & \cdots & 0 & 0 \\
    \Tilde{V}_{2,1} & 1 &  \cdots & 0 & 0\\
    \vdots & \vdots & \ddots & \vdots & \vdots \\
    \Tilde{V}_{F-1,1} & \Tilde{V}_{F-1,2} & \cdots & 1 & 0\\
    \Tilde{V}_{F,1} & \Tilde{V}_{F,2} & \cdots & \Tilde{V}_{F,F-1} & 1
\end{array}\right).
\end{equation}
For $f = 2,\ldots,F$, 
$(\tilde{V}_{f,1},\ldots,\tilde{V}_{f,f-1}) = - \bV_{\btau\btau}^{||}[f,\overline{f-1}]\left\{\bV_{\btau\btau}^{||}[\overline{f-1},\overline{f-1}]\right\}^{-1}$, where
$\bV_{\btau\btau}^{\parallel}[f,\overline{f-1}]$ is the submatrix of $\bV_{\btau\btau}^{\parallel}$ formed by the $f$th row and the first $f-1$ columns, and
$\bV_{\btau\btau}^{\parallel}[\overline{f-1},\overline{f-1}]$ denotes the leading $(f-1)\times(f-1)$ principal submatrix of $\bV_{\btau\btau}^{\parallel}$.
The above construction requires 
the following assumption.

\begin{assumption}\label{assump:V_tautau}
The matrices $\bV_{\btau\btau}^{\parallel}$ and its limit $\bV_{\btau\btau,\infty}^{\parallel}$ are nonsingular.
\end{assumption}

Under Assumption~\ref{assump:regularity_fac}, since $\bV_{\bx\bx}$ and its limit are positive definite, and
$\bV_{\btau\btau}^{\parallel}
=
\bB'\bV_{\bx\bx}\bB$,
Assumption~\ref{assump:V_tautau} is equivalent to requiring \(\bB\) and its limit \(\bB_\infty\) to have full column rank.
More generally, it requires the projected components
$\bB_1'\hat{\btau}_{\bx},
\ldots,
\bB_F'\hat{\btau}_{\bx}$
to be linearly independent in the sense that no nontrivial linear combination of them has zero variance.
This condition is mild
and is assumed throughout.


By construction, the components of
$\sqrt{n}\tilde{\btau}_{\bx}$ are mutually uncorrelated, and therefore,
\begin{equation}\label{eq:Lambda_fac}
\cov(\sqrt{n}\tilde{\btau}_\bx) = \diag\{\lambda_1,\ldots,\lambda_F\} = \bLambda
= \bQ_\bB\bV_{\btau\btau}^{||}\bQ_\bB',
\end{equation} 
where 
$\lambda_f = \bbV(\sqrt{n}\tilde{\tau}_{\bx,f})
= V_{\tau_f\tau_f}^{\parallel}-\bV_{\btau\btau}^{||}[f,\overline{f-1}]\left\{\bV_{\btau\btau}^{||}[\overline{f-1},\overline{f-1}]\right\}^{-1}\bV_{\btau\btau}^{||}[\overline{f-1},f]$.



Based on the orthogonalized components, we redefine the objective as maximizing the weighted sum of their percentage reductions in asymptotic sampling variance (PRIASV):
\begin{equation}\label{eq:new_target}
\arg\max_{\phi\in\Phi_\alpha}
\sum_{f=1}^F
\omega_f
\left(
1-
\frac{\bbV_a(\tilde{\tau}_{\bx,f}\mid \phi=1)}
{\bbV_a(\tilde{\tau}_{\bx,f})}
\right),
\end{equation}
where $\omega_1\geq\cdots\geq \omega_F$ are the prespecified effect-priority weights. 
The following theorem provides an analytic solution to this optimization problem.

\begin{theorem}\label{thm:ReOMA}
Under Assumptions~\ref{assump:regularity_fac} and~\ref{assump:V_tautau}, and assuming that the importance matrix $\bB$ is known, the rerandomization criterion 
\begin{equation}\label{eq:ReOMA}
\begin{split}
\phi_{\bomega}(\sqrt{n}\hat\btau_\bx,\bV_{\bx\bx},\bB) &= 1\{(\sqrt{n}\tilde\btau_\bx)'\bomega\bLambda^{-1}(\sqrt{n}\tilde\btau_\bx) \leq \xi_{\bomega,\alpha}\}\\
&= 1\{(\sqrt{n}\hat\btau_\bx)'\bB\bQ_\bB'\bomega\bLambda^{-1}\bQ_\bB\bB'(\sqrt{n}\hat\btau_\bx) \leq \xi_{\bomega,\alpha}\}
\end{split}
\end{equation}
belongs to $\Phi_\alpha$ and solves the optimization problem in Eq.~\eqref{eq:new_target}, where $\bomega=\diag\{\omega_1,\ldots,\omega_F\}$, $\bQ_{\bB}$ is given in Eq.~\eqref{eq:QB}, and $\bLambda$ is defined in Eq.~\eqref{eq:Lambda_fac}. 
The threshold $\xi_{\bomega,\alpha}$ is the $\alpha$ quantile of $\sum_{f=1}^F \omega_f\eta_f^2$, where $\eta_1,\ldots,\eta_F$ are independent standard normal random variables.
\end{theorem}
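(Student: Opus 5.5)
The plan is to reduce the problem to a statement about a Gaussian vector with identity covariance in $\bbR^F$, and then invoke the same trace-type optimality argument as Lemma A1 of \citet{liu2025bayesian}.

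\emph{Step 1: standardize the orthogonalized components.} Let $\bD\sim\cN(\bZero,\bV_{\bx\bx})$ be the Gaussian limit of $\sqrt n\hat\btau_\bx$. Put $\bm W=\bLambda^{-1/2}\bQ_\bB\bB'\bD$. By \eqref{eq:Lambda_fac}, $\bm W\sim\cN(\bZero,\bI_F)$. Under Assumption~\ref{assump:criterion}(1)--(2) and the CRFE asymptotics \eqref{eq:asymp_fac}, the asymptotic conditional variance satisfies $\bbV_a(\sqrt n\tilde\tau_{\bx,f}\mid\phi=1)=\lambda_f\,\bbV(W_f\mid \bD\in\mathcal B_\phi)$. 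This is the usual continuous-mapping/conditioning argument of \citet{li2018asymptotic}. Hence the objective in \eqref{eq:new_target} equals
\[
\sum_{f}\omega_f-\sum_f\omega_f\,\bbE\{W_f^2\mid \bD\in\mathcal B_\phi\},
\]
since by Assumption~\ref{assump:criterion}(3) the conditional mean is zero. The task is therefore to minimize $\bbE\{\bm W'\bomega\bm W\mid \bD\in\mathcal B_\phi\}$ over acceptance regions with $\bbP(\bD\in\mathcal B_\phi)=\alpha$.

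\emph{Step 2: optimality.} Note that $\bbE\{\bm W'\bomega\bm W\,1(\bD\in\mathcal B)\}/\alpha$ is an integral of the fixed function $h(\bD)=\bD'\bB\bQ_\bB'\bomega\bLambda^{-1}\bQ_\bB\bB'\bD$ over a set of fixed Gaussian mass $\alpha$. By the bathtub (Neyman--Pearson type) principle, it is minimized by the sublevel set $\{h\le c\}$ with $\bbP(h(\bD)\le c)=\alpha$. This is exactly the content of Lemma A1 of \citet{liu2025bayesian} applied with quadratic form matrix $\bB\bQ_\bB'\bomega\bLambda^{-1}\bQ_\bB\bB'$, and it gives \eqref{eq:ReOMA}. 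The second form of the criterion follows by substituting $\tilde\btau_\bx=\bQ_\bB\bB'\hat\btau_\bx$; here $\btau_\bx=\bZero$, because $\sum_q\bA_q\otimes\bar\bX=\bZero$. Finally, $h(\bD)=\sum_f\omega_fW_f^2$ with $W_f$ i.i.d.\ standard normal, which identifies $\xi_{\bomega,\alpha}$ as the stated quantile.

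\emph{Step 3: membership in $\Phi_\alpha$.} Assumption~\ref{assump:criterion}(3) is immediate from the quadratic form. For (2), the acceptance probability is $\alpha$ by the quantile choice. The continuity of $\cov(\bD\mid\phi=1)$ in $(\bV_{\bx\bx},\bB)$ follows because $\bQ_\bB$, $\bLambda$ and the conditional law of $\bm W$ depend continuously on these quantities; this uses the nonsingularity in Assumption~\ref{assump:V_tautau}, which keeps $\bLambda^{-1}$ and the Gram--Schmidt coefficients continuous. For (1), the indicator is discontinuous only on the boundary $\{h=\xi_{\bomega,\alpha}\}$. This boundary has Gaussian measure zero whenever some $\omega_f>0$.

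\emph{Main obstacle.} I expect the main difficulty at the edge cases of Step 2. When some $\omega_f=0$, the matrix $\bB\bQ_\bB'\bomega\bLambda^{-1}\bQ_\bB\bB'$ is singular, and the distribution of $h$ still has to be continuous at the quantile so that the acceptance probability is exactly $\alpha$. The bathtub argument also needs to handle ties on the boundary. I would resolve both by checking that $h$ has a continuous distribution when $\omega_1>0$, since it is then a nondegenerate weighted chi-square. Given that, the Neyman--Pearson comparison $\int_{\mathcal B}h-\int_{\{h\le c\}}h\ge c\{\bbP(\mathcal B)-\alpha\}=0$ goes through directly.
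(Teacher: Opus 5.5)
Your proposal is correct and follows essentially the same route as the paper: both use the conditional Gaussian limit and the zero conditional mean from symmetry to reduce the objective in \eqref{eq:new_target} to minimizing $\bbE(\bD'\bB\bQ_\bB'\bomega\bLambda^{-1}\bQ_\bB\bB'\bD\mid \bD\in\mathcal B_\phi)$ over acceptance regions of Gaussian mass $\alpha$. Both then conclude via Lemma A1 of \citet{liu2025bayesian}; your explicit bathtub inequality and your check of Assumption~\ref{assump:criterion} (which the paper dismisses as ``easy to check'') only fill in details the paper leaves implicit.
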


The criterion in Theorem~\ref{thm:ReOMA} constrains a weighted Mahalanobis distance of the orthogonalized components $\sqrt{n}\tilde{\btau}_{\bx}$. 
Because it relies on the oracle knowledge of \(\bB\), we refer to it as the oracle rerandomization criterion for \(2^K\) factorial designs (ReO$_\F$).
When all factorial effects are assigned equal priority, $\omega_1=\cdots=\omega_F$, the criterion $\phi_{\bomega}$ reduces to
\begin{equation}\label{eq:ReOMAE}
\begin{split}
\tilde\phi_{\bomega}(\sqrt{n}\hat\btau_\bx,\bV_{\bx\bx},\bB) 
&= 1\left\{(\sqrt{n}\hat\btau_\bx)'\bB\big(\bV_{\btau\btau}^{||}\big)^{-1}\bB'(\sqrt{n}\hat\btau_\bx) \leq \xi_{F,\alpha}\right\},
\end{split}
\end{equation}
which constrains the Mahalanobis distance of the projected covariate imbalance
$\bB'(\sqrt{n}\hat{\btau}_{\bx})$ and therefore treats all factorial effects equally. 
We refer to this special case as ReO$_{\FE}$.

The next subsection investigates the theoretical properties of ReO$_\F$ and ReO$_\FE$, showing how the prespecified effect priorities are translated into effect-specific precision gains and comparing their performance with existing rerandomization methods.

\subsection{Asymptotic properties of the oracle criteria}

The asymptotic analysis is conducted under the finite-population framework, where the covariates and potential outcomes are fixed and randomness arises solely from treatment assignment.
Similar to ReFM,
the asymptotic distribution of $\hat{\btau}$ under $\phi_{\bomega}$ consists of a Gaussian component and a rerandomization-induced truncated Gaussian component.
The key difference is that the truncation is imposed on weighted orthogonalized components, reflecting the prespecified effect priorities.
When all weights are equal, ReO$_\FE$ reduces the dimension of the truncated part from $Fp$ to $F$.
Theorem~\ref{thm:AsymptoticDistri} formalizes these results.

\begin{theorem}\label{thm:AsymptoticDistri}
Under Assumptions~\ref{assump:regularity_fac} and~\ref{assump:V_tautau}, under ReO$_\F$, 
\begin{equation}\label{eq:DisReOMA}
\sqrt{n}(\hat{\btau}-\btau)
\overset{\cdot}{\sim}(\bV_{\btau\btau}^{\perp})^{1/2}\bvarepsilon +  \bQ_\bB^{-1}\bLambda^{1/2}\bm\eta \mid \bm\eta'\bomega\bm\eta \leq \xi_{\bomega,\alpha},
\end{equation}
where $\bLambda$ is defined in Eq.~\eqref{eq:Lambda_fac}, $\bomega=\diag\{\omega_1,\ldots,\omega_F\}$, $\bm\eta=(\eta_1,\ldots,\eta_F)'\sim\mathcal{N}(\bZero,\bI_F)$, 
$\bvarepsilon\sim\mathcal{N}(\bZero,\bI_F)$, 
with $\bm\eta$ and $\bvarepsilon$ independent,
and $\xi_{\bomega,\alpha}$ is the $\alpha$-quantile of $\sum_{f=1}^F \omega_f\eta_f^2$. 
Moreover, when all weights are equal,
$\omega_1=\cdots=\omega_F$ (ReO$_\FE$),
the distribution reduces to
\begin{equation}\label{eq:DisReOMAE}
\sqrt{n}(\hat{\btau}-\btau)\overset{\cdot}{\sim}
\big(\bV_{\btau\btau}^{\perp}\big)^{1/2}\bvarepsilon + \big(\bV_{\btau\btau}^{||}\big)^{1/2}\bzeta_{F,\alpha},
\end{equation}
where 
$\bzeta_{F,\alpha}\sim \bm\eta\mid \bm\eta'\bm\eta\leq \xi_{F,\alpha}$, and $\xi_{F,\alpha}$ is the $\alpha$-quantile of $\chi^2_F$ distribution.
\end{theorem}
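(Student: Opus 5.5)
The plan is to begin with the joint CLT in Eq.~\eqref{eq:asymp_fac} and reduce the statement to a Gaussian computation through a conditioning argument. The projection decomposition Eq.~\eqref{eq:projection} gives
\[
\sqrt{n}(\hat{\btau}-\btau)=\sqrt{n}\bB'\hat{\btau}_\bx+\bm r .
\]
Asymptotically, $\bm r$ is uncorrelated with $\hat{\btau}_\bx$, so under joint normality it is independent of it, with covariance $\bV_{\btau\btau}^{\perp}$.

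The acceptance event depends on $\hat\btau_\bx$ only through $\bB'\sqrt n\hat\btau_\bx$. Indeed, it is $\{(\sqrt n\tilde\btau_\bx)'\bomega\bLambda^{-1}(\sqrt n\tilde\btau_\bx)\le\xi_{\bomega,\alpha}\}$, where $\tilde\btau_\bx=\bQ_\bB\bB'\hat\btau_\bx$ (with $\btau_\bx=\bZero$). Under Assumption~\ref{assump:V_tautau}, $\bQ_\bB$ is unit lower triangular, hence invertible, and $\bLambda$ is positive definite. Setting $\bm\eta_n=\bLambda^{-1/2}\sqrt n\tilde\btau_\bx$, we may therefore write
\[
\sqrt n\bB'\hat\btau_\bx=\bQ_\bB^{-1}\bLambda^{1/2}\bm\eta_n,
\]
and the event becomes $\{\bm\eta_n'\bomega\bm\eta_n\le\xi_{\bomega,\alpha}\}$. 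Here I use that $\bomega$ and $\bLambda^{-1}$ are diagonal and so commute, which gives $\bLambda^{-1/2}\bomega\bLambda^{-1/2}=\bomega\bLambda^{-1}$. By Eq.~\eqref{eq:Lambda_fac}, $\cov(\bm\eta_n)=\bI_F$.

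The core step transfers the unconditional CLT to the conditional law given acceptance. It uses the standard argument of \citet{li2018asymptotic} and \citet{li2020rerandomization} (Lemma-type results on rerandomization with an almost surely continuous acceptance region). The jointly asymptotically Gaussian vector $(\bm r,\bm\eta_n)$ is sent to $\cN(\bZero,\diag(\bV^\perp_{\btau\btau},\bI_F))$. The acceptance set $\{\bm\eta:\bm\eta'\bomega\bm\eta\le\xi\}$ has a boundary of Lebesgue measure zero and limiting probability $\alpha>0$; by Theorem~\ref{thm:ReOMA} the criterion lies in $\Phi_\alpha$. Hence, for any continuity set $\mathcal C$,
\[
\bbP\big((\bm r,\bm\eta_n)\in\mathcal C\mid\text{accept}\big)=\frac{\bbP\big((\bm r,\bm\eta_n)\in\mathcal C,\ \text{accept}\big)}{\bbP(\text{accept})}.
\]
This ratio converges to the corresponding ratio for the Gaussian limit. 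A subtlety is that $\bV$, $\bB$, $\bQ_\bB$, and $\bLambda$ depend on $n$. I handle this as in \citet{li2020rerandomization}: first pass to their limits, which exist by Assumption~\ref{assump:regularity_fac}, and then use continuity of $\bQ_\bB$ and $\bLambda$ in $\bV$. This continuity holds because $\bV^{\parallel}_{\btau\btau,\infty}$ is nonsingular (Assumption~\ref{assump:V_tautau}), so the inverses of the leading principal submatrices are continuous there. I expect this uniformity/continuity bookkeeping to be the main technical obstacle, and I would reuse Proposition A1-type results from \citet{li2020rerandomization} rather than reprove them.

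Independence of $\bm r$ from the acceptance event in the limit then gives the representation Eq.~\eqref{eq:DisReOMA}:
\[
(\bV^\perp_{\btau\btau})^{1/2}\bvarepsilon+\bQ_\bB^{-1}\bLambda^{1/2}\bm\eta\mid\bm\eta'\bomega\bm\eta\le\xi_{\bomega,\alpha}.
\]
For the equal-weights case, the condition becomes $\bm\eta'\bm\eta\le\xi_{F,\alpha}$ after rescaling the threshold. The matrix $\bM=\bQ_\bB^{-1}\bLambda^{1/2}$ satisfies $\bM\bM'=\bQ_\bB^{-1}\bLambda\bQ_\bB'^{-1}=\bV_{\btau\btau}^{\parallel}$, so $\bM=(\bV_{\btau\btau}^{\parallel})^{1/2}\bm O$ for an orthogonal $\bm O$. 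Since the truncated spherical law $\bzeta_{F,\alpha}$ is rotation invariant, $\bm O\bzeta_{F,\alpha}\overset{d}{=}\bzeta_{F,\alpha}$, which yields Eq.~\eqref{eq:DisReOMAE}.
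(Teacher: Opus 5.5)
Your proposal is correct and follows essentially the same route as the paper. Both arguments pass to the Gaussian limit of the conditional law via the Proposition A1-type rerandomization CLT, split off the residual (in the paper, $\bC-\bB'\bD\sim\mathcal{N}(\bZero,\bV_{\btau\btau}^{\perp})$, independent of $\bD$), and substitute $\bm\eta=\bLambda^{-1/2}\bQ_\bB\bB'\bD$, which turns the acceptance region into $\{\bm\eta'\bomega\bm\eta\le\xi_{\bomega,\alpha}\}$. The one difference is cosmetic and concerns the ReO$_\FE$ case: the paper substitutes $\bm\eta=(\bV_{\btau\btau}^{\parallel})^{-1/2}\bB'\bD$ directly, whereas you deduce it from the general formula using $\bQ_\bB^{-1}\bLambda^{1/2}=(\bV_{\btau\btau}^{\parallel})^{1/2}\bm O$ and the rotation invariance of $\bzeta_{F,\alpha}$; both are valid.
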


Theorem~\ref{thm:AsymptoticDistri} implies that rerandomization preserves the consistency of $\hat{\btau}$. We next quantify the resulting precision gains.
To this end, we decompose the rerandomization-affected covariance component $\bV_{\btau\btau}^{||}$ according to the orthogonalized components $\tilde{\tau}_{\bx,1},\ldots,\tilde{\tau}_{\bx,f}$.
For $f=1,\ldots,F$, define the covariance component of $\sqrt{n}(\hat{\btau}-\btau)$ associated with $\sqrt{n}\tilde{\tau}_{\bx,f}$ as
\[\begin{split}
\bV_{\btau\btau}^{||}[f] &= 
\cov\left(\cov(\hat{\btau}-\btau,\tilde{\tau}_{\bx,f})(\bbV(\tilde{\tau}_{\bx,f}))^{-1}\sqrt{n}\tilde{\tau}_{\bx,f}\right)
= \bV_{\btau\btau}^{||}\bQ_\bB'\left(\frac{\be_f\be_f'}{\lambda_f}\right)\bQ_\bB\bV_{\btau\btau}^{||},
\end{split}\]
where $\be_f$ is the $f$th canonical basis vector in $\bbR^F$.
Summing over all orthogonalized components yields
$\sum_{f=1}^F \bV_{\btau\btau}^{||}[f]= \bV_{\btau\btau}^{||}\bQ_\bB'\bLambda^{-1}\bQ_\bB\bV_{\btau\btau}^{||}
= \bV_{\btau\btau}^{||}.$
In addition, for any $1\leq f,j\leq F$, define the squared correlation between $\hat{\tau}_f$ and $\tilde{\tau}_{\bx,j}$ as
\[\begin{split}
R^2_f[j] = 
\frac{(\cov(\hat{\tau}_f,\tilde{\tau}_{\bx,j}))^2}{\bbV(\hat{\tau}_f)\bbV(\tilde{\tau}_{\bx,j})}
= \frac{(\be_f'\bV_{\btau\btau}^{||}\bQ_\bB'\be_j)^2}{ V_{\tau_f\tau_f}\lambda_j}.
\end{split}\]
The following lemma characterizes the structure of $R_f^2[j]$.

\begin{lemma}\label{lem:R2}
For $f=1,\ldots,F-1$, 
$R^2_f[j]=0$ for all $j > f$.
Moreover,
$\sum_{j=1}^f R^2_f[j] = R^2_f$.
\end{lemma}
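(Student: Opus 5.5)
Since $\bQ_\bB$ is unit lower triangular, I will prove both claims through the covariance $\cov(\sqrt{n}\hat\btau,\sqrt{n}\tilde{\btau}_\bx)$ and the Gram--Schmidt structure.

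\textbf{Step 1: the covariance.} By the projection decomposition \eqref{eq:projection}, $\sqrt n(\hat\btau-\btau)=\sqrt n\bB'\hat\btau_\bx+\bm r$ with $\bm r$ uncorrelated with $\hat\btau_\bx$. Since $\sqrt n\tilde\btau_\bx=\bQ_\bB\bB'\sqrt n\hat\btau_\bx$, this gives
\[
\cov(\sqrt n\hat\btau,\sqrt n\tilde\btau_\bx)=\bB'\bV_{\bx\bx}\bB\bQ_\bB'=\bV_{\btau\btau}^{||}\bQ_\bB',
\]
which matches the formula for $R_f^2[j]$.

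\textbf{Step 2: vanishing for $j>f$.} From Eq.~\eqref{eq:Lambda_fac}, $\bQ_\bB\bV_{\btau\btau}^{||}\bQ_\bB'=\bLambda$. Under Assumption~\ref{assump:V_tautau}, $\bQ_\bB$ is invertible (unit lower triangular), so
\[
\bV_{\btau\btau}^{||}\bQ_\bB'=\bQ_\bB^{-1}\bLambda .
\]
The inverse $\bQ_\bB^{-1}$ is again lower triangular and $\bLambda$ is diagonal, so $\bQ_\bB^{-1}\bLambda$ is lower triangular. Hence $\be_f'\bV_{\btau\btau}^{||}\bQ_\bB'\be_j=0$ for $j>f$, and therefore $R_f^2[j]=0$.

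The same fact has a direct interpretation. $\hat\tau_f$ is correlated with $\hat\btau_\bx$ only through $\bB_f'\hat\btau_\bx$, and $\bB_f'\hat\btau_\bx$ lies in the span of $\tilde\tau_{\bx,1},\ldots,\tilde\tau_{\bx,f}$. For $j>f$, $\tilde\tau_{\bx,j}$ is orthogonal to that span.

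\textbf{Step 3: the sum identity.} Write $\bM=\bV_{\btau\btau}^{||}\bQ_\bB'$. Summing the definitions over $j$,
\[
\sum_{j=1}^F R_f^2[j]=\frac{\be_f'\,\bM\bLambda^{-1}\bM'\,\be_f}{V_{\tau_f\tau_f}} .
\]
The numerator is
\[
\be_f'\bV_{\btau\btau}^{||}\bQ_\bB'\bLambda^{-1}\bQ_\bB\bV_{\btau\btau}^{||}\be_f .
\]
Now $\bQ_\bB'\bLambda^{-1}\bQ_\bB=(\bV_{\btau\btau}^{||})^{-1}$, because $\bLambda=\bQ_\bB\bV_{\btau\btau}^{||}\bQ_\bB'$ with $\bQ_\bB$ invertible; this is the identity the paper already uses to show $\sum_f\bV_{\btau\btau}^{||}[f]=\bV_{\btau\btau}^{||}$. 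The numerator therefore collapses to $V^{||}_{\tau_f\tau_f}$, and the sum equals $R_f^2$ by Eq.~\eqref{eq:R2f}. By Step 2 the terms with $j>f$ vanish, so $\sum_{j=1}^f R_f^2[j]=R_f^2$.

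\textbf{Main obstacle.} The only delicate point is justifying $\cov(\hat\btau,\tilde\btau_\bx)=\bV_{\btau\btau}^{||}\bQ_\bB'$ (up to scaling). This requires the centering $\btau_\bx=\bZero$ and the uncorrelatedness $\cov(\hat\btau_\bx,\bm r)=\bZero$ from \eqref{eq:project}, both already available. Everything after that is triangular-matrix algebra.
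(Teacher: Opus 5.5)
Your proof is correct and is essentially the paper's argument written in matrix form. The paper expands $\bB_f'\hat\btau_\bx=\sum_{i\le f}\be_f'\bQ_\bB^{-1}\be_i\,\tilde\tau_{\bx,i}$ and uses lower-triangularity of $\bQ_\bB^{-1}$ together with the uncorrelatedness of the $\tilde\tau_{\bx,i}$; this is your identity $\bV_{\btau\btau}^{||}\bQ_\bB'=\bQ_\bB^{-1}\bLambda$. It then gets the sum identity from $\bbV(\sqrt n\bB_f'\hat\btau_\bx)=\sum_{i\le f}(\be_f'\bQ_\bB^{-1}\be_i)^2\lambda_i$, which is the $(f,f)$ entry of $\bV_{\btau\btau}^{||}=\bQ_\bB^{-1}\bLambda(\bQ_\bB^{-1})'$ and hence equivalent to your $\bQ_\bB'\bLambda^{-1}\bQ_\bB=(\bV_{\btau\btau}^{||})^{-1}$.
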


Lemma~\ref{lem:R2} shows that $\hat{\tau}_f$ is correlated only with the first $f$ orthogonalized components. 
Consequently, any precision gain for $\hat{\tau}_f$ must arise from these components.
This property is a direct consequence of the sequential orthogonalization: the components are constructed according to the prespecified ordering of the factorial effects, so that each later component is orthogonal to all preceding effect estimators.

To quantify the impact of the rerandomization constraint on each orthogonalized component, define
\begin{equation}\label{eq:cf}
c_f = \bbE\left[\eta_f^2 \mid \sum_{i=1}^F \omega_i \cdot \eta_i^2 \leq \xi_{\bomega,\alpha}\right],\ f=1,\ldots,F,
\end{equation}
where $\eta_1,\ldots,\eta_F$ are independent standard normal random variables. 
The constants $c_1,\ldots,c_F$ have the following properties.
\begin{lemma}\label{lem:cf}
For $\omega_1\geq \ldots \geq \omega_F \geq 0$,
we have
$c_1 \leq \ldots\leq c_F \leq 1$.
Moreover, for $\omega_1\geq \ldots \geq \omega_F > 0$, as $\alpha \rightarrow 0$, we have $c_f/v_{F,\alpha} \rightarrow
\omega_f^{-1} \left(\prod_{j=1}^F \omega_j\right)^{1/F}.$
\end{lemma}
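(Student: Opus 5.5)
The plan has two parts: a pairwise exchange argument for the ordering $c_1\le\cdots\le c_F$ together with a conditioning argument for $c_F\le 1$, and a small-ball (uniform-on-ellipsoid) approximation for the limit as $\alpha\to 0$. Throughout, write $A=\{\sum_{i=1}^F\omega_i\eta_i^2\le \xi_{\bomega,\alpha}\}$, so that $c_f=\bbE[\eta_f^2 1_A]/\bbP(A)$ with $\bbP(A)=\alpha$.

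\emph{Monotonicity.} Fix $f<g$, so that $\omega_f\ge\omega_g$. Set $u=\eta_f^2$, $v=\eta_g^2$ and $R=\sum_{i\ne f,g}\omega_i\eta_i^2$. Since $(u,v)$ is exchangeable and independent of $R$, swapping the roles of $u$ and $v$ gives
\[
\bbE[(v-u)1\{\omega_f u+\omega_g v+R\le \xi_{\bomega,\alpha}\}]=\bbE[(u-v)1\{\omega_f v+\omega_g u+R\le \xi_{\bomega,\alpha}\}].
\]
Averaging the two expressions yields
\[
\alpha(c_g-c_f)=\tfrac12\bbE\bigl[(v-u)\bigl(1\{\omega_f u+\omega_g v+R\le\xi_{\bomega,\alpha}\}-1\{\omega_f v+\omega_g u+R\le\xi_{\bomega,\alpha}\}\bigr)\bigr].
\]
The two linear forms differ by $(\omega_f-\omega_g)(u-v)$. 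This is $\le 0$ when $v>u$, so the first indicator dominates the second; it is $\ge 0$ when $v<u$, so the reverse holds. In both cases the integrand is nonnegative, hence $c_f\le c_g$.

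\emph{Bound $c_F\le 1$.} Condition on $(\eta_1,\ldots,\eta_{F-1})$. If $\omega_F>0$, the event $A$ becomes $\{\eta_F^2\le S\}$ for a random threshold $S$ depending only on the other coordinates; if $\omega_F=0$, then $c_F=1$ trivially. For any fixed $s$, $\eta^2$ and $1\{\eta^2\le s\}$ are negatively correlated, which gives $\bbE[\eta_F^2 1\{\eta_F^2\le s\}]\le \bbP(\eta_F^2\le s)$. Integrating over $S$ gives $\bbE[\eta_F^2 1_A]\le\bbP(A)$, that is, $c_F\le1$.

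\emph{Limit as $\alpha\to0$.} Now assume all $\omega_f>0$, so $A$ is a bounded ellipsoid and $\xi_{\bomega,\alpha}\to0$. The steps are as follows.
\begin{enumerate}
\item On $A$, the Gaussian density is sandwiched between $(2\pi)^{-F/2}e^{-\xi_{\bomega,\alpha}/(2\omega_F)}$ and $(2\pi)^{-F/2}$. Hence both $\bbP(A)$ and $\bbE[\eta_f^21_A]$ equal their Lebesgue-uniform counterparts times a factor tending to $1$.
\item Change variables $x_i=\sqrt{\omega_i}\,\eta_i$. The ellipsoid maps to a ball of radius $\sqrt{\xi_{\bomega,\alpha}}$, and the uniform distribution on a ball has $\bbE x_i^2=\xi/(F+2)$. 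This gives $c_f\sim \xi_{\bomega,\alpha}/\{(F+2)\omega_f\}$.
\item The same argument with all weights equal to one gives $v_{F,\alpha}\sim \xi_{F,\alpha}/(F+2)$.
\item Matching the volumes, $\alpha\sim(2\pi)^{-F/2}\mathrm{Vol}_F\,\xi_{\bomega,\alpha}^{F/2}\big/\prod_j\omega_j^{1/2}$ and $\alpha\sim(2\pi)^{-F/2}\mathrm{Vol}_F\,\xi_{F,\alpha}^{F/2}$, where $\mathrm{Vol}_F$ is the volume of the unit ball in $\bbR^F$. Therefore $\xi_{\bomega,\alpha}/\xi_{F,\alpha}\to(\prod_j\omega_j)^{1/F}$.
\end{enumerate}
Dividing the expression for $c_f$ by that for $v_{F,\alpha}$ gives $c_f/v_{F,\alpha}\to\omega_f^{-1}(\prod_j\omega_j)^{1/F}$.

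The main obstacle is making the uniform approximation rigorous. Since all quantities are ratios of integrals over shrinking sets, I would bound each integral between its uniform version times $e^{-\xi/(2\omega_F)}$ and times $1$. Because both $\xi_{\bomega,\alpha}\to 0$ and $\xi_{F,\alpha}\to0$ as $\alpha\to 0$, all the correction factors tend to $1$.
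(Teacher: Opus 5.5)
Your proof is correct. It differs from the paper's mainly in the small-$\alpha$ limit.

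\textbf{Ordering.} The paper conditions on $\sum_{j\neq i,k}\omega_j\eta_j^2$ to reduce to two coordinates. It then proves the stronger fact that $\eta_i^2$ is stochastically smaller under $\{\omega_i\eta_i^2+\omega_k\eta_k^2\le c\}$ than under the swapped constraint, using tail integrals and a set inclusion on $\{\eta_i^2\le t<\eta_k^2\}$. Your one-line symmetrization identity uses the same two ingredients: exchangeability of $(\eta_f^2,\eta_g^2)$ and the sign of $(\omega_f-\omega_g)(\eta_f^2-\eta_g^2)$. It avoids conditioning on events that may be null and any appeal to a density for the remaining sum, so it is shorter and cleaner. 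It compares only means, which is all the lemma requires.

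\textbf{The bound $c_F\le 1$.} This is essentially the paper's conditioning argument.

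\textbf{The limit.} The paper invokes the proof of Theorem~3 of Lu et al.\ (2023), which gives $c_f=p_F\,\omega_f^{-1}(\prod_j\omega_j)^{1/F}\alpha^{2/F}(1+o(1))$, and specializes to equal weights to obtain $v_{F,\alpha}$. You derive this from first principles in three steps:
\begin{enumerate}
\item sandwich the Gaussian density on the shrinking ellipsoid;
\item use the second moment of the uniform distribution on a ball;
\item match volumes to get $\xi_{\bomega,\alpha}/\xi_{F,\alpha}\to(\prod_j\omega_j)^{1/F}$.
\end{enumerate}
Your route is self-contained. It also identifies the intermediate rate $c_f\sim\xi_{\bomega,\alpha}/\{(F+2)\omega_f\}$ and the constant $p_F=2\pi\,\mathrm{Vol}_F^{-2/F}/(F+2)$. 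The paper's route is shorter but depends on an external result.

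One step you use implicitly is worth stating: $v_{F,\alpha}=\mathbb{P}(\chi^2_{F+2}\le\xi_{F,\alpha})/\alpha=\bbE[\eta_1^2\mid\bm\eta'\bm\eta\le\xi_{F,\alpha}]$, i.e., $v_{F,\alpha}$ is $c_f$ at equal weights. The paper relies on the same identification.
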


We are now ready to characterize the resulting covariance reduction and the corresponding percentage reduction in asymptotic sampling variance (PRIASV).

\begin{theorem}\label{thm:PRIASV}
Suppose that Assumptions~\ref{assump:regularity_fac} and~\ref{assump:V_tautau} hold. 
Under ReO$_\F$, the reduction in the asymptotic covariance of $\sqrt{n}(\hat{\btau}-\btau)$ relative to CRFE is
$\lim_{n\to\infty}\sum_{f=1}^F (1-c_f) \cdot\bV_{\btau\btau}^{||}[f]$, and the PRIASV of $\hat{\tau}_f$ is
$\lim_{n\to\infty} 100 \times \sum_{j=1}^f (1-c_j)R^2_f[j]$, $f=1,\ldots,F$.
When all effect weights are equal,
$\omega_1=\ldots=\omega_F$ (ReO$_\FE$), then $c_1=\ldots=c_F=v_{F,\alpha}$,
the covariance reduction simplifies to
$\lim_{n\to\infty} (1-v_{F,\alpha})\bV_{\btau\btau}^{||}$,
and the PRIASV of each $\hat{\tau}_f$ is 
$\lim_{n\to\infty} 100 \times (1-v_{F,\alpha})R^2_f$.
\end{theorem}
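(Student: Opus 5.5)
The plan is to start from the asymptotic representation of Theorem~\ref{thm:AsymptoticDistri}: under ReO$_\F$,
$\sqrt{n}(\hat{\btau}-\btau)\overset{\cdot}{\sim}(\bV_{\btau\btau}^{\perp})^{1/2}\bvarepsilon + \bQ_\bB^{-1}\bLambda^{1/2}\bm\eta \mid \bm\eta'\bomega\bm\eta\le\xi_{\bomega,\alpha}$,
with the two pieces independent. We then compute the covariance of each piece separately.

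The first piece contributes $\bV_{\btau\btau}^{\perp}$. For the truncated piece, the acceptance region $\{\bm\eta:\sum_f\omega_f\eta_f^2\le\xi_{\bomega,\alpha}\}$ is invariant under the sign flip $\eta_f\mapsto-\eta_f$ of each single coordinate. Hence the conditional mean is zero and the conditional covariance is diagonal, namely $\diag\{c_1,\ldots,c_F\}$ with $c_f$ as in Eq.~\eqref{eq:cf}. The asymptotic covariance of the truncated piece is therefore
$\bQ_\bB^{-1}\bLambda^{1/2}\diag\{c_f\}\bLambda^{1/2}(\bQ_\bB^{-1})'=\sum_f c_f\lambda_f\bQ_\bB^{-1}\be_f\be_f'(\bQ_\bB^{-1})'$.

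The key algebraic identity is $\lambda_f\bQ_\bB^{-1}\be_f\be_f'(\bQ_\bB^{-1})'=\bV_{\btau\btau}^{||}[f]$. To see it, use Eq.~\eqref{eq:Lambda_fac} in the form $\bQ_\bB\bV_{\btau\btau}^{||}\bQ_\bB'=\bLambda$. This gives $\bQ_\bB^{-1}=\bV_{\btau\btau}^{||}\bQ_\bB'\bLambda^{-1}$. Hence $\bQ_\bB^{-1}\be_f=\bV_{\btau\btau}^{||}\bQ_\bB'\be_f/\lambda_f$, and substituting yields exactly the definition of $\bV_{\btau\btau}^{||}[f]$. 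With $c_f=1$ for all $f$ the sum collapses to $\bV_{\btau\btau}^{||}$, which is the CRFE covariance. So the covariance under ReO$_\F$ is $\bV_{\btau\btau}-\sum_f(1-c_f)\bV_{\btau\btau}^{||}[f]$, and taking limits (all quantities converge by Assumptions~\ref{assump:regularity_fac} and~\ref{assump:V_tautau}) gives the stated reduction.

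For the PRIASV of $\hat\tau_f$, take the $(f,f)$ entry of the reduction and divide by $V_{\tau_f\tau_f}$. The entry is $(\be_f'\bV_{\btau\btau}^{||}\bQ_\bB'\be_j)^2/\lambda_j$, so the ratio equals $R^2_f[j]$. Summing against $(1-c_j)$ and applying Lemma~\ref{lem:R2} (so $R_f^2[j]=0$ for $j>f$) truncates the sum at $j=f$.

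In the equal-weight case, $\bm\eta'\bomega\bm\eta\le\xi_{\bomega,\alpha}$ is equivalent to $\bm\eta'\bm\eta\le\xi_{F,\alpha}$. By spherical symmetry each $c_f=\bbE[\chi^2_F\mid\chi^2_F\le\xi]/F=v_{F,\alpha}$, using the standard identity $\bbE[\chi^2_F 1\{\chi^2_F\le\xi\}]=F\,\bbP(\chi^2_{F+2}\le\xi)$. The reduction is then $(1-v_{F,\alpha})\sum_f\bV_{\btau\btau}^{||}[f]=(1-v_{F,\alpha})\bV_{\btau\btau}^{||}$, and the PRIASV becomes $(1-v_{F,\alpha})\sum_{j\le f}R_f^2[j]=(1-v_{F,\alpha})R^2_f$ by Lemma~\ref{lem:R2}.

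The main obstacle is justifying that the asymptotic-distribution variance $\bbV_a$ equals the covariance of the limiting law, rather than the limit of finite-sample variances. Following the paper's convention (as in the ReFM discussion), $\bbV_a$ is defined as the variance of the asymptotic distribution, so this is immediate. A secondary point is the diagonal structure of the conditional covariance, which rests on the sign-flip symmetry above.
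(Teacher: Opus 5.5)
Your proof is correct and follows essentially the same route as the paper. Both start from the limiting law in Theorem~\ref{thm:AsymptoticDistri} and rewrite $\bQ_\bB^{-1}\diag\{c_f\}\bLambda(\bQ_\bB^{-1})'$ as $\sum_f c_f\bV_{\btau\btau}^{||}[f]$ via $\bQ_\bB^{-1}=\bV_{\btau\btau}^{||}\bQ_\bB'\bLambda^{-1}$; the paper reaches the same identity through an equivalent sandwich with $(\bB'\bV_{\bx\bx}\bB)^{-1}$. Both then read off the PRIASV from the diagonal entries using Lemma~\ref{lem:R2}. You also make explicit two points the paper leaves implicit: the diagonal form of the truncated covariance, via sign-flip symmetry, and $c_f=v_{F,\alpha}$, via the $\chi^2$ identity.
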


Lemma~\ref{lem:cf} and Theorem~\ref{thm:PRIASV} jointly imply that ReO$_\F$ improves the asymptotic precision of $\hat{\btau}$ relative to CRFE, and the rerandomization constraint acts more strongly on components with higher priorities.
Moreover, the following corollary shows that each factorial effect is guaranteed a minimum PRIASV determined by its prescribed priority.

\begin{corollary}\label{cor:LowerBound}
Suppose that Assumptions~\ref{assump:regularity_fac} and~\ref{assump:V_tautau} hold. 
If $\omega_1\geq \cdots \geq \omega_F$, then, under ReO$_\F$, the PRIASV of $\hat{\tau}_f$ has the lower bound
$\lim_{n\to\infty} 100 \times (1-c_f)R_f^2.$
\end{corollary}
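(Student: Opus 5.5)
The plan is to combine the PRIASV formula of Theorem~\ref{thm:PRIASV} with the monotonicity of the constants $c_j$ from Lemma~\ref{lem:cf} and the decomposition of $R_f^2$ in Lemma~\ref{lem:R2}.

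First, by Theorem~\ref{thm:PRIASV}, under ReO$_\F$ the PRIASV of $\hat\tau_f$ equals
\[
\lim_{n\to\infty} 100\times \sum_{j=1}^f (1-c_j)\,R_f^2[j].
\]
Only the indices $j\le f$ appear. This is consistent with Lemma~\ref{lem:R2}, which gives $R_f^2[j]=0$ for $j>f$.

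Second, since $\omega_1\ge\cdots\ge\omega_F\ge 0$, Lemma~\ref{lem:cf} gives $c_1\le\cdots\le c_F\le 1$. Hence for every $j\le f$ we have $1-c_j\ge 1-c_f\ge 0$. Each $R_f^2[j]$ is a squared correlation and is therefore nonnegative. Bounding termwise,
\[
\sum_{j=1}^f (1-c_j)R_f^2[j]\;\ge\;(1-c_f)\sum_{j=1}^f R_f^2[j]\;=\;(1-c_f)R_f^2,
\]
where the last equality is the second statement of Lemma~\ref{lem:R2}.

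Third, the inequality holds for each $n$. Note that $c_f$ depends only on $\bomega$ and $\alpha$, not on $n$. Taking limits, which exist under Assumptions~\ref{assump:regularity_fac} and~\ref{assump:V_tautau} because the $R_f^2[j]$ are continuous functions of the converging finite-population covariances, then gives the claimed lower bound $\lim_{n\to\infty}100\times(1-c_f)R_f^2$.

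There is no real obstacle here. The only point needing care is confirming that the ordering used for the Gram--Schmidt orthogonalization is the same as the weight ordering. This is what makes the monotonicity of $c_j$ in Lemma~\ref{lem:cf} line up with the triangular support $j\le f$ in Lemma~\ref{lem:R2}, and the paper's relabeling convention guarantees it.
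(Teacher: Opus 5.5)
Your proof is correct and is essentially the paper's argument. The paper likewise starts from the PRIASV formula $\sum_{j=1}^f (1-c_j)R_f^2[j]$ in Theorem~\ref{thm:PRIASV}, bounds it termwise using $c_1\le\cdots\le c_F$ from Lemma~\ref{lem:cf}, and concludes with $\sum_{j=1}^f R_f^2[j]=R_f^2$ from Lemma~\ref{lem:R2}.
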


Corollary~\ref{cor:LowerBound} and  Lemma~\ref{lem:cf} show how prespecified effect priorities translate into effect-specific guarantees on precision improvement.
For small $\alpha$, the lower bound on the PRIASV of $\hat{\tau}_f$ is determined by $R_f^2$ and $c_f
\approx
v_{F,\alpha}
\omega_f^{-1}\left(\prod_{j=1}^F \omega_j\right)^{1/F}.$
Here, $R_f^2$ measures the variance in $\hat{\tau}_f$ explainable by covariates, while a larger $\omega_f$ yields a smaller $c_f$ and hence a larger lower bound.
When the projected components $\bB_1'\hat{\btau}_\bx,\ldots,\bB_F'\hat{\btau}_\bx$ are uncorrelated, this lower bound is attained exactly. 
Otherwise, correlation among the projected components may generate additional precision gains through higher-priority effects, so the actual PRIASV can exceed the lower bound.
Thus, ReO$_\F$ provides a direct mechanism for allocating precision gains according to the prespecified priorities.
By contrast, ReFMT$_\F$ and ReFMT$_\CF$ do not guarantee effect-specific lower bounds, as their gains depend on the complex correlation structure among effects and covariates.



We now consider the equal-priority case, where ReO$_\F$ reduces to ReO$_\FE$.
Recall that under ReFM \citep{li2020rerandomization}, the reduction in the asymptotic sampling covariance matrix of $\sqrt{n}(\hat{\btau}-\btau)$ relative to CRFE is $\lim_{n\rightarrow\infty}(1-v_{Fp,\alpha})\bV_{\btau\btau}^{||}$.
Since $v_{Fp,\alpha}\geq v_{F,\alpha}$, ReO$_\FE$ achieves a larger covariance reduction and correspondingly a larger PRIASV for every factorial effect.
Therefore, ReO$_\FE$ uniformly improves upon ReFM in terms of estimation precision.

The advantage of ReO$_\FE$ can also be characterized through peakedness, which measures the concentration of the sampling distribution around $\bZero$.
Intuitively, a more peaked distribution is more concentrated near the origin and thus corresponds to a more precise estimator.

\begin{definition}
For symmetric random vectors $\bm\phi,\bm\psi \in \mathbb{R}^m$,
we say that $\bm\phi$ is more peaked than $\bm\psi$, denoted by $\bm\phi\succ \bm\psi$, if $\bbP(\bm\phi \in \mathcal{K}) \geq \bbP(\bm\psi \in \mathcal{K})$ for any symmetric convex set $\mathcal{K}\subset\mathbb{R}^m$.
\end{definition}

As shown by \citet{li2020rerandomization}, $\bm\phi\succ\bm\psi$ implies $\cov(\bm\phi)\leq \cov(\bm\psi)$, so peakedness is a stronger measure of concentration than covariance reduction.
The following theorem shows that the asymptotic distribution of $\sqrt{n}(\hat{\btau}-\btau)$ under ReO$_\FE$ is more peaked than that under ReFM.

\begin{theorem}\label{thm:peakedness}
Suppose that Assumptions~\ref{assump:regularity_fac} and~\ref{assump:V_tautau} hold. 
Then the asymptotic sampling distribution of $\sqrt{n}(\hat{\btau}-\btau)$ under ReO$_\FE$ 
is more peaked than its counterpart under ReFM.
\end{theorem}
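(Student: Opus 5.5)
We compare the two limiting distributions, which share the Gaussian part $(\bV_{\btau\btau}^{\perp})^{1/2}\bvarepsilon$. By Theorem~\ref{thm:AsymptoticDistri}, under ReO$_\FE$ the truncated part is $(\bV_{\btau\btau}^{||})^{1/2}\bzeta_{F,\alpha}$. By Eq.~\eqref{eq:DisReFM}, under ReFM it is $(\bV_{\btau\btau}^{||})^{1/2}_{Fp}\bzeta_{Fp,\alpha}$. In both cases the truncated part is independent of $\bvarepsilon$.

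The plan has three steps. First, rewrite the ReFM term as an $F$-dimensional object. Write $\bV_{\btau\btau}^{||}=\bB'\bV_{\bx\bx}\bB$. Take $(\bV_{\btau\btau}^{||})^{1/2}_{Fp}=\bB'\bV_{\bx\bx}^{1/2}$, and write it as $(\bV_{\btau\btau}^{||})^{1/2}\bm\Gamma$, where $\bm\Gamma=(\bV_{\btau\btau}^{||})^{-1/2}\bB'\bV_{\bx\bx}^{1/2}\in\bbR^{F\times Fp}$ satisfies $\bm\Gamma\bm\Gamma'=\bI_F$; here Assumption~\ref{assump:V_tautau} gives invertibility. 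Since $\bzeta_{Fp,\alpha}$ is spherically symmetric, the choice of square root does not matter in distribution. Complete $\bm\Gamma$ to an orthogonal $Fp\times Fp$ matrix. By rotation invariance, $\bm\Gamma\bzeta_{Fp,\alpha}$ has the same law as the first $F$ coordinates of $\bzeta_{Fp,\alpha}$. This law is $\bm\eta_{(F)}\mid \bm\eta_{(F)}'\bm\eta_{(F)}+\|\bm\eta_{(-F)}\|^2\le\xi_{Fp,\alpha}$, where $\bm\eta_{(F)}\sim\cN(\bZero,\bI_F)$.

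Second, reduce to a comparison in $\bbR^F$. Both distributions have the form $\bM\bvarepsilon+\bm N\bm u$, with $\bm u$ equal to $\bzeta_{F,\alpha}$ or $\bm\Gamma\bzeta_{Fp,\alpha}$, respectively, and $\bm u$ independent of $\bvarepsilon$. Peakedness is preserved under linear maps, since the preimage of a symmetric convex set is symmetric convex. Peakedness is also preserved under adding an independent symmetric unimodal (log-concave) component; this is the convolution result of Dharmadhikari--Joag-Dev used in \citet{li2020rerandomization}, with $\bvarepsilon$ Gaussian. So it suffices to show $\bzeta_{F,\alpha}\succ\bm\Gamma\bzeta_{Fp,\alpha}$ in $\bbR^F$.

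Third, prove this comparison. Both vectors are spherically symmetric. For spherically symmetric distributions, $\bbP(\bm u\in\mathcal K)$ is a mixture over the radius $\|\bm u\|$ of uniform-on-sphere probabilities, and those probabilities are decreasing in the radius for symmetric convex $\mathcal K$ (as in Lemma/Theorem of \citet{li2020rerandomization} for truncated Gaussians). It therefore suffices to show that $\|\bzeta_{F,\alpha}\|^2$ is stochastically smaller than $\|\bm\Gamma\bzeta_{Fp,\alpha}\|^2$.

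\emph{Main obstacle.} The key fact is that $\chi^2_F\mid\chi^2_F\le\xi_{F,\alpha}$ is stochastically smaller than the law of $\chi^2_F$ when the total $\chi^2_F+\chi^2_{F(p-1)}$ is conditioned to be at most $\xi_{Fp,\alpha}$, with both acceptance probabilities equal to $\alpha$. I would try to prove this through densities. The first law has density proportional to $f_F(t)1\{t\le\xi_{F,\alpha}\}$. The second has density proportional to $f_F(t)G(\xi_{Fp,\alpha}-t)$, where $G$ is the $\chi^2_{F(p-1)}$ CDF; it is decreasing in $t$ relative to $f_F$, with full support on $[0,\xi_{Fp,\alpha}]$. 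Since both normalize to the same total mass $\alpha$, their ratio $G(\xi_{Fp,\alpha}-t)/1\{t\le\xi_{F,\alpha}\}$ crosses the level 1 exactly once. This single-crossing gives the stochastic ordering of the CDFs.

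Alternatively, the second and third steps follow from the known result of \citet{li2020rerandomization}, Theorem~3, comparing truncation dimensions. If $p=1$ the two criteria coincide and the claim is trivial.
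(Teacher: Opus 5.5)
Your proposal is correct and follows essentially the paper's route. You reduce the ReFM truncated term to the first $F$ coordinates of $\bzeta_{Fp,\alpha}$, show $\bzeta_{F,\alpha}$ is more peaked using that the uniform-on-sphere probability $G(r)$ is nonincreasing in $r$, and transfer the ordering via linear maps and convolution with the independent Gaussian part. The differences are cosmetic: you use a row-orthonormal $\bm\Gamma$ where the paper uses an eigendecomposition, and you get the radius comparison from the sign pattern of the density difference (nonpositive on $[0,\xi_{F,\alpha}]$, nonnegative beyond) where the paper splits events using $\bbP(\bm\eta'\bm\eta\le\xi_{F,\alpha})=\bbP(\bm\eta'\bm\eta+\bxi'\bxi\le\xi_{Fp,\alpha})=\alpha$.
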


Finally, we compare with a tiered covariate scheme.
Recall that ReFMT$_\CF$ \citep{li2020rerandomization} incorporates tiers of both covariates and factorial effects. 
When all factorial effects are assigned equal priority, it reduces to a version in which only the covariates are tiered; we denote this special case by ReFMT$_\C$.
The following theorem shows that ReO$_{\FE}$ dominates ReFMT$_{\C}$ in terms of asymptotic covariance reduction.

\begin{theorem}\label{thm:comparison}
Suppose that Assumptions~\ref{assump:regularity_fac} and~\ref{assump:V_tautau} hold. 
Then the asymptotic sampling covariance matrix of $\sqrt{n}(\hat{\btau}-\btau)$ under ReO$_\FE$ is no larger than that under ReFMT$_\C$.
\end{theorem}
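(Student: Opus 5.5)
Both covariance matrices will be written as $\bV_{\btau\btau}^{\perp}$ plus a reduced version of the projection part. For ReO$_\FE$, Theorem~\ref{thm:PRIASV} gives the limit $\bV_{\btau\btau}^{\perp}+v_{F,\alpha}\bV_{\btau\btau}^{||}$. For ReFMT$_\C$ I would recall the result of \citet{li2020rerandomization}. Partition the covariates into tiers $\bX=(\bX_{(1)}',\ldots,\bX_{(T)}')'$ of sizes $p_1,\ldots,p_T$ and orthogonalize them sequentially. Each effect's covariate imbalance then decomposes into uncorrelated blocks $\hat{\btau}_{\bx,(t)}^{\perp}$ of dimension $Fp_t$. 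Each block is constrained by its own Mahalanobis distance with threshold $a_t$, where $\prod_t\bbP(\chi^2_{Fp_t}\le a_t)=\alpha$. The resulting covariance is $\bV_{\btau\btau}^{\perp}+\sum_{t}v_t\bV_{\btau\btau}^{||,(t)}$. Here $\bV_{\btau\btau}^{||,(t)}$ is the part of $\bV_{\btau\btau}^{||}$ explained by the $t$th orthogonalized tier, so $\sum_t\bV_{\btau\btau}^{||,(t)}=\bV_{\btau\btau}^{||}$, and $v_t=\bbP(\chi^2_{Fp_t+2}\le a_t)/\bbP(\chi^2_{Fp_t}\le a_t)$.

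The claim then reduces to showing $v_{F,\alpha}\bV_{\btau\btau}^{||}\le\sum_t v_t\bV_{\btau\btau}^{||,(t)}$. Since the $\bV_{\btau\btau}^{||,(t)}$ are positive semidefinite and sum to $\bV_{\btau\btau}^{||}$, it suffices to prove $v_{F,\alpha}\le v_t$ for every $t$.

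This scalar inequality is the main step, and there are two natural routes.

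\emph{Route 1 (direct comparison).} Each $v_t$ equals $\bbE[\eta_1^2\mid\bm\eta'\bm\eta\le a_t]$ with $\bm\eta\in\bbR^{Fp_t}$. Since $\bbP(\chi^2_{Fp_t}\le a_t)\ge\alpha$, one has $v_t\ge v_{Fp_t,\alpha}\ge v_{F,\alpha}$. The first inequality holds because $v_{d,\cdot}$ is increasing in the acceptance probability; the second holds because $v_{d,\alpha}$ is increasing in $d$, a fact used in the paper when comparing with ReFM (with $p_t\ge1$). Both monotonicity facts come from \citet{morgan2012rerandomization,li2018asymptotic}.

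\emph{Route 2 (optimality of Theorem~\ref{thm:ReOMA}).} ReFMT$_\C$ is itself a criterion in $\Phi_\alpha$. ReO$_\FE$ minimizes the conditional covariance of $\bB'\sqrt n\hat\btau_\bx$ in the Loewner sense, by the Lemma A1 argument of \citet{liu2025bayesian} applied to the whitened vector $(\bV_{\btau\btau}^{||})^{-1/2}\bB'\sqrt n\hat\btau_\bx$.

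I would present Route 1 as the main argument and mention Route 2 as a cross-check. The delicate point in Route 1 is to verify that the tiered covariance decomposition applies with the orthogonalized tiers, so that the cross terms vanish. This holds because the conditional covariance of each spherically truncated block is a scalar multiple of the identity, and the blocks are independent by the product-form acceptance region. Finally, I would pass to limits using Assumptions~\ref{assump:regularity_fac} and \ref{assump:V_tautau}.
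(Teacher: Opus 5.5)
Your Route 1 is essentially the paper's proof. The paper writes the ReFMT$_\C$ covariance as $\bV_{\btau\btau}^{\perp}+\sum_t v_{Fp_t,\alpha_t}\bV_t$ with $\sum_t\bV_t=\bV_{\btau\btau}^{||}$ and $\prod_t\alpha_t=\alpha$, and concludes from $\alpha_t\ge\alpha$ that $v_{Fp_t,\alpha_t}\ge v_{F,\alpha}$; your two monotonicity steps (in acceptance probability, then in dimension) just make that last inequality explicit.

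Drop Route 2, even as a cross-check, because its key claim is false. Lemma A1 of \citet{liu2025bayesian} only minimizes the scalar $\bbE(\bD'\bM\bD\mid\bD\in\mathcal{B}_\phi)$, not the conditional covariance in the Loewner order, and ReO$_\FE$ is not Loewner-minimal over $\Phi_\alpha$. For example, when $F\ge 2$, the criterion $1\{|\be_1'(\bV_{\btau\btau}^{||})^{-1/2}\bB'\sqrt n\hat\btau_\bx|\le z\}$, with $z$ chosen to give acceptance probability $\alpha$, lies in $\Phi_\alpha$. In that direction it shrinks the variance by the factor $v_{1,\alpha}$, whereas ReO$_\FE$ only achieves $v_{F,\alpha}>v_{1,\alpha}$. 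So the comparison with ReFMT$_\C$ really does depend on its specific tier structure, as in Route 1.
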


\section{Data-adaptive procedure }\label{sec:twostage}

The results in the previous section establish the theoretical advantages of the oracle criteria. 
When the covariate importance matrix \(\bB\) is known, ReO$_\F$ provides effect-specific guarantees on precision improvement, while its equal-priority special case ReO$_\FE$ uniformly improves upon Mahalanobis-distance-based criteria.
In practice, however, the importance matrix \(\bB\) depends on the unknown relationship between covariates and potential outcomes and is therefore unavailable, motivating a data-adaptive implementation of the proposed criteria.

Our approach is inspired by the two-stage rerandomization framework developed for treatment-control experiments \citep{liu2025bayesian}. 
Unlike that procedure, which relies on an auxiliary data set to estimate the covariate importance structure, we learn $\bB$ from a random subset of units in the experiment itself.
This avoids a strong practical requirement but introduces a new theoretical challenge: the rerandomization criterion is learned from one subset and then applied to the other, so the resulting estimators are no longer independent.
The following subsections develop the procedure and establish its theoretical properties.

\subsection{CRFE with sample splitting}

Before introducing the data-adaptive rerandomization procedure, we first consider a sample-splitting implementation of complete randomization for $2^K$ factorial designs. 

Specifically, the dataset is randomly split into two subsamples, and a completely randomized factorial experiment (CRFE) is conducted separately within each subsample. 
We refer to this procedure as CRFE with sample splitting (S-CRFE). 
Its theoretical analysis serves as a useful benchmark for the subsequent development of the data-adaptive rerandomization procedure. The procedure can be summarized as follows:
\begin{enumerate}
\item \textbf{Sample splitting.} Randomly split the finite population into two subsets of sizes $n^{(1)}$ and $n^{(2)}=n-n^{(1)}$, with proportions $\rho_n=n^{(1)}/n\in(0,1)$ and $1-\rho_n$, respectively. Let $\bS=(S_1,\ldots,S_n)'\in\{0,1\}^n$ denote the splitting indicator, where $S_i=1$ indicates assignment to the first subset. 
Thus, $\sum_{i=1}^n S_i=n^{(1)}$.
Given $\bS=\bms$, define $\mathcal{I}_1(\bms)=\{i:s_i=1\}$ and
$\mathcal{I}_2(\bms)=\{i:s_i=0\}$
as the index sets of the two subsets, ordered increasingly and with cardinalities $n^{(1)}$ and $n^{(2)}$.

\item \textbf{CRFE in the first subset.} Conduct a CRFE within $\mathcal{I}_1(\bms)$.
Let $\bZ^{(1)}=(Z_1^{(1)},\ldots,Z_{n^{(1)}}^{(1)})'\in \{1,\ldots,Q\}^{n^{(1)}}$ denote the treatment assignment vector, where $Z_i^{(1)}$ corresponds to the $i$th unit in  $\mathcal{I}_1(\bms)$. 
For $q=1,\ldots,Q$, denote the number of units assigned to treatment combination $q$ as $n_{1q}=\sum_{i=1}^{n^{(1)}}1\{Z_i^{(1)}=q\}=n^{(1)}r_q$.

\item \textbf{CRFE in the second subset.} Conduct a CRFE within $\mathcal{I}_2(\bms)$. 
Let $\bZ^{(2)}=(Z_1^{(2)},\ldots,Z_{n^{(2)}}^{(2)})'\in \{1,\ldots,Q\}^{n^{(2)}}$ denote the treatment assignment vector, where $Z_i^{(2)}$ corresponds to the $i$th unit in  $\mathcal{I}_2(\bms)$.  
For $q=1,\ldots,Q$, denote the number of units assigned to treatment combination $q$ as $n_{2q}=\sum_{i=1}^{n^{(2)}}1\{Z_i^{(2)}=q\}=n^{(2)}r_q$.
\end{enumerate}

Following the above procedure, the overall treatment assignment vector $\bZ$ is uniquely determined by
$(\bS,\bZ^{(1)},\bZ^{(2)})$ and is denoted by
$\bZ(\bS,\bZ^{(1)},\bZ^{(2)})$.
For any index set $\mathcal{S}$, let $\bZ_{\mathcal{S}}$ denote the subvector of $\bZ$ indexed by $\mathcal{S}$. 
Then $\bZ_{\cI_1(\bS)} = \bZ^{(1)}$, $\bZ_{\cI_2(\bS)} = \bZ^{(2)}$.
Define the extended assignment vectors
$\tilde{\bZ}^{(1)},\tilde{\bZ}^{(2)}\in\{0,1,\ldots,Q\}^n$ by
$\tilde{\bZ}^{(1)}_{\cI_1(\bS)} = \bZ^{(1)}$, $\tilde{\bZ}^{(1)}_{\cI_2(\bS)} = \bZero$; and 
$\tilde{\bZ}^{(2)}_{\cI_2(\bS)} = \bZ^{(2)}$, $\tilde{\bZ}^{(2)}_{\cI_1(\bS)} = \bZero$.
The resulting estimators of $\btau$ based on the two subsets are
\[\begin{split}
\hat{\btau}_1(\bS,\bZ^{(1)}) &= 
\sum_{q=1}^Q \bA_q \cdot n_{1q}^{-1}\sum_{i:\tilde{Z}_i^{(1)}=q} S_i Y_i(q),\\
\hat{\btau}_2(\bS,\bZ^{(2)}) &= 
\sum_{q=1}^Q \bA_q \cdot n_{2q}^{-1}\sum_{i:\tilde{Z}_i^{(2)}=q} (1-S_i) Y_i(q).
\end{split}\]
Then the estimator $\hat{\btau}$ in Eq.~\eqref{eq:estimator_fac} is a weighted average of the estimators from the two subsets:
\begin{equation}\label{eq:estimator_TSCRFE}
\hat{\btau} = \hat{\btau}(\bS,\bZ^{(1)},\bZ^{(2)}) = 
\rho_n \cdot \hat{\btau}_1(\bS,\bZ^{(1)}) + (1-\rho_n) \cdot \hat{\btau}_2(\bS,\bZ^{(2)}),
\end{equation}
where $\rho_n=n^{(1)}/n$. 
Similarly, define the corresponding covariate mean-difference vectors as
\[\begin{split}
\hat{\btau}_{\bx,1}(\bS,\bZ^{(1)}) &= \sum_{q=1}^Q \bA_q \otimes \left(n_{1q}^{-1}\sum_{i:\tilde{Z}_i^{(1)}=q} S_i \bX_i\right),
\\
\hat{\btau}_{\bx,2}(\bS,\bZ^{(2)}) &= \sum_{q=1}^Q \bA_q \otimes \left(n_{2q}^{-1}\sum_{i:\tilde{Z}_i^{(2)}=q} (1-S_i) \bX_i\right).
\end{split}\]

\subsection{Asymptotic properties of CRFE with sample splitting}

Under CRFE with sample splitting (S-CRFE), the treatment assignment mechanism is jointly determined by
$\bS$, $\bZ^{(1)}$, and $\bZ^{(2)}$.
For any $\bms\in\{0,1\}^n$, $\bz^{(1)}\in\{1,\ldots,Q\}^{n^{(1)}}$, and $\bz^{(2)}\in\{1,\ldots,Q\}^{n^{(2)}}$ satisfying
$\sum_{i=1}^n s_i = n^{(1)}$,
$\sum_{i=1}^{n^{(1)}} 1\{z_i^{(1)}=q\} = n_{1q}$, and 
$\sum_{i=1}^{n^{(2)}} 1\{z_i^{(2)}=q\} = n_{2q}$ for $q\in \{1,\ldots,Q\}$,  
\[\begin{split}
\bbP(\bS=\bms,\bZ^{(1)}= \bz^{(1)},\bZ^{(2)} = \bz^{(2)})
&= \binom{n}{n^{(1)}}^{-1}\frac{n_{11}!\ldots n_{1Q}!}{n^{(1)}!}
\frac{n_{21}!\ldots n_{2Q}!}{n^{(2)}!}\\
&= \frac{n_{11}!\ldots n_{1Q}!n_{21}!\ldots n_{2Q}!}{n!}.
\end{split}\]
Therefore, CRFE with sample splitting is equivalent to randomly partitioning the \(n\) units into \(2Q\) groups of sizes
\(n_{11},\ldots,n_{1Q},n_{21},\ldots,n_{2Q}\).
Applying Theorem~3 of \citet{li2017general} then yields Theorem~\ref{thm:DistTSMCRE}, which characterizes the expectations and covariance structures of the mean-difference statistics for the potential outcomes and covariates.

\begin{theorem}\label{thm:DistTSMCRE}
Under S-CRFE, the random vector
\[\begin{split}
\big(\sqrt{n^{(1)}}(\hat{\btau}_1(\bS,\bZ^{(1)})-\btau)',
\sqrt{n^{(1)}}\hat{\btau}_{\bx,1}'(\bS,\bZ^{(1)}),
\sqrt{n^{(2)}}(\hat{\btau}_2(\bS,\bZ^{(2)})-\btau)',
\sqrt{n^{(2)}}\hat{\btau}_{\bx,2}'(\bS,\bZ^{(2)})\big)'
\end{split}\]
has mean zero and covariance matrix
{\setlength{\arraycolsep}{3pt}
\[\begin{split}
&\bV_\TSCRFE  
=
\begin{pmatrix}
\bV_{\btau\btau} + (1-\rho_n)\bS_{\btau\btau} & \bV_{\btau\bx} & -\sqrt{\rho_n(1-\rho_n)} \bS_{\btau\btau} & \bZero_{F\times Fp}\\
\bV_{\bx\btau} & \bV_{\bx\bx} & \bZero_{Fp\times F} & \bZero_{Fp \times Fp}\\
-\sqrt{\rho_n(1-\rho_n)} \bS_{\btau\btau} & \bZero_{F\times Fp} & \bV_{\btau\btau} + \rho_n\bS_{\btau\btau} & \bV_{\btau\bx}\\
\bZero_{Fp\times F} & \bZero_{Fp\times Fp} & \bV_{\bx\btau} & \bV_{\bx\bx}
\end{pmatrix}.
\end{split}\]}
\end{theorem}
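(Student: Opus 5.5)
The plan is to realize S-CRFE as a single completely randomized experiment with $2Q$ groups and then read every block of $\bV_\TSCRFE$ off Theorem~3 of \citet{li2017general}. By the probability computation preceding the theorem, $(\bS,\bZ^{(1)},\bZ^{(2)})$ is equivalent to a uniformly random partition of the $n$ units into $2Q$ groups $(s,q)$, $s\in\{1,2\}$, of sizes $n_{sq}$. Define the $2Q$ ``potential outcomes'' $Y_i(s,q)=Y_i(q)$ and covariates $\bX_i$ for each group. Every statistic in the vector is a linear contrast of group means:
\begin{equation*}
\hat{\btau}_s=\sum_q \bA_q\hat{\bar Y}(s,q),\qquad
\hat{\btau}_{\bx,s}=\sum_q \bA_q\otimes\hat{\bar\bX}(s,q).
\end{equation*}
Unbiasedness follows immediately. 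Each group mean is unbiased for $\bar Y(q)$, so $\bbE\hat{\btau}_s=\btau$. Likewise $\bbE\hat{\btau}_{\bx,s}=\sum_q\bA_q\otimes\bar\bX=\bZero$.

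For the covariance I will use the finite-population formula for a completely randomized experiment with many groups. For any two groups $g\neq h$ and any unit-level vectors $\bu_i,\bw_i$ attached to the groups, it gives
\begin{equation*}
\cov(\hat{\bar\bu}(g),\hat{\bar\bu}(g))=\frac{n-n_g}{n n_g}\bS_{\bu\bu},\qquad
\cov(\hat{\bar\bu}(g),\hat{\bar\bw}(h))=-\frac1n\bS_{\bu\bw}.
\end{equation*}
Write $n_{sq}=n^{(s)}r_q$, and set $\rho=\rho_n$ for $s=1$ and $\rho=1-\rho_n$ for $s=2$. The steps are then as follows.

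\textbf{Within-subset blocks.} Scale by $n^{(s)}=\rho n$. The diagonal group terms contribute $\sum_q \bA_q\bA_q'S_{qq}/r_q$, exactly as in \eqref{eq:V_fac}. The cross-group terms, summed over all $(s',q')$, contribute $-\rho\sum_{q,q'}\bA_q\bA_{q'}'S_{qq'}$. I would verify that this equals $-\rho\bS_{\btau\btau}$ plus the diagonal corrections $-\rho\bA_q\bA_q'S_{qq}$, using $\sum_q\bA_q=\bZero$ to cancel the cross-subset pieces.

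The cleanest way to organize this step is to write
\begin{equation*}
\sqrt{n^{(s)}}(\hat{\btau}_s-\btau)=\sqrt{n^{(s)}}\sum_q\bA_q\bigl(\hat{\bar Y}(s,q)-\bar Y(q)\bigr)
\end{equation*}
and apply Theorem~3 of \citet{li2017general} with contrast matrix $(\bA_1,\ldots,\bA_Q,\bZero,\ldots,\bZero)$ over the $2Q$ groups. That theorem gives the covariance $\sum_q \bA_q\bA_q'S_{qq}/r_q-\rho\,\bS_{\btau\btau}$.

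This has to be compared with \eqref{eq:V_fac}, which subtracts $\bS_{\btau\btau}$. So the within-subset outcome block is $\bV_{\btau\btau}+(1-\rho)\bS_{\btau\btau}$. This gives $(1-\rho_n)\bS_{\btau\btau}$ for $s=1$ and $\rho_n\bS_{\btau\btau}$ for $s=2$.

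The same calculation applies to the covariate and outcome-covariate blocks, but there the correction term involves $\sum_q\bA_q\otimes\bS_{\bx\bx}$ or $\sum_q \bA_q\bS_{\bx,q}$ combined with $\sum_q\bA_q\otimes(\cdot)$. Because the covariate is the same across groups, this term vanishes. Hence these blocks are $\bV_{\bx\bx}$ and $\bV_{\btau\bx}$ with no extra term.

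\textbf{Cross-subset blocks.} Only the cross-group terms enter here, giving
\begin{equation*}
\sqrt{n^{(1)}n^{(2)}}\cdot(-1/n)\sum_{q,q'}\bA_q\bA_{q'}'S_{qq'}=-\sqrt{\rho_n(1-\rho_n)}\,\bS_{\btau\btau},
\end{equation*}
since $\sum_{q,q'}\bA_q\bA_{q'}'S_{qq'}=\bS_{\btau\btau}$ by the definition of $\btau_i$. Every block that involves a covariate contrast carries a factor $\sum_{q'}\bA_{q'}\otimes\bS_{\bx\cdot}=\bZero$, and is therefore zero.

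The main obstacle is the bookkeeping in the within-subset outcome block. One must correctly account for the cross-covariances with the other subset's groups and confirm that the ``$-1/n$'' terms aggregate to $-\rho\bS_{\btau\btau}$ rather than $-\bS_{\btau\btau}$. I would double-check this by computing the total covariance of the combined estimator $\hat{\btau}=\rho_n\hat{\btau}_1+(1-\rho_n)\hat{\btau}_2$ and confirming that it reproduces \eqref{eq:V_fac}, as it must by the equivalence with CRFE.
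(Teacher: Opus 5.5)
Your proposal is correct and is essentially the paper's argument: the paper also recasts S-CRFE as a completely randomized experiment with $2Q$ groups and pseudo potential outcomes $(Y_i(q),\bX_i')'$. It then reads off all blocks in a single application of Theorem~3 of \citet{li2017general} using stacked contrast matrices $\tilde\bA_q$, whereas you work block by block. Your aside about ``cross-subset pieces'' in the within-subset outcome block is spurious: only subset-$s$ groups enter there, and $\sum_q\bA_q=\bZero$ is needed only for the covariate blocks. Your padded-contrast computation $\sum_q\bA_q\bA_q'S_{qq}/r_q-\rho\,\bS_{\btau\btau}$ and the cross-subset value $-\sqrt{\rho_n(1-\rho_n)}\,\bS_{\btau\btau}$ are nonetheless right.
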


Under Assumption~\ref{assump:regularity_fac_full} below, the random vector in Theorem \ref{thm:DistTSMCRE} converges in distribution to a multivariate normal distribution, as formalized in Theorem~\ref{thm:TSCRFE_asymp}.

\begin{assumption}\label{assump:regularity_fac_full}
As $n\to\infty$:
(1) The first-subset proportion $\rho_n=n^{(1)}/n$ converges to a limit $\rho\in[0,1)$.
(2) For each $q\in\{1,\ldots,Q\}$, $r_q=n_q/n$ converges to a positive limit.
(3) The finite-population quantities $S_{qq}$, $\bS_{q,\bx}$, $\bS_{\btau\btau}$, and $\bS_{\bx\bx}$ converge to finite limits. Moreover, $\bS_{\bx\bx}$ and its limit $\bS_{\bx\bx,\infty}$ are nonsingular.
(4) For each $q\in\{1,\ldots,Q\}$, $\max_{1\leq i\leq n} |Y_i(q)-\bar{Y}(q)|^2/n^{(1)} \rightarrow 0$, and $\max_{1\leq i\leq n} ||\bX_i-\bar{\bX}||_2^2/n^{(1)} \rightarrow 0$.
\end{assumption}

\begin{theorem}\label{thm:TSCRFE_asymp}
Under S-CRFE, if Assumption~\ref{assump:regularity_fac_full} holds, then
$\big(\sqrt{n^{(1)}}(\hat{\btau}_1(\bS,\bZ^{(1)})-\btau)',
\sqrt{n^{(1)}}\hat{\btau}_{\bx,1}'(\bS,\bZ^{(1)}),
\sqrt{n^{(2)}}(\hat{\btau}_2(\bS,\bZ^{(2)})-\btau)',
\sqrt{n^{(2)}}\hat{\btau}_{\bx,2}'(\bS,\bZ^{(2)})\big)'
\overset{\cdot}{\sim}
\cN(\bZero, \bV_{\TSCRFE}).$
\end{theorem}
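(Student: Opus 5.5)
The plan is to reduce Theorem~\ref{thm:TSCRFE_asymp} to the finite-population central limit theorem for completely randomized experiments with many arms (Theorem~5 of \citet{li2017general}). The key tool is the observation made just before Theorem~\ref{thm:DistTSMCRE}: S-CRFE is equivalent to a single completely randomized experiment. That experiment partitions the $n$ units uniformly at random into $2Q$ groups of sizes $n_{11},\ldots,n_{1Q},n_{21},\ldots,n_{2Q}$.

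First, I will relabel the $2Q$ groups as treatment arms $(s,q)$, $s\in\{1,2\}$. For each unit $i$ I define the augmented potential-outcome vector
\[
\bm W_i(s,q)=(Y_i(q),\bX_i')'.
\]
It depends on $q$ only, not on $s$. Each of the four blocks in the theorem's random vector is a linear combination of group means of $\bm W_i$. For example,
\[
\sqrt{n^{(1)}}\,(\hat\btau_1-\btau)=\sqrt{n^{(1)}}\sum_q \bA_q\bigl(\hat{\bar Y}(1,q)-\bar Y(q)\bigr),
\]
and the other three blocks have the same form. Stacking these, the whole vector equals $\sqrt n\,\bC_n(\hat{\bar{\bm W}}-\bar{\bm W})$. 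Here $\bC_n$ is a fixed contrast matrix whose entries involve $\bA_q$, $\bA_q\otimes\bI_p$, and the factors $\rho_n^{-1/2}$ or $(1-\rho_n)^{-1/2}$. Since $\sum_q\bA_q=\bZero$, the $\bX$-blocks are indeed centered contrasts.

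Second, I will verify the conditions of Theorem~5 of \citet{li2017general} for this $2Q$-arm design.
\begin{enumerate}
\item The group proportions $n_{sq}/n$ must converge to positive limits. By Assumption~\ref{assump:regularity_fac_full}(1)--(2), $n_{1q}/n=\rho_n r_q\to\rho r_q$ and $n_{2q}/n\to(1-\rho)r_q$.
\item The finite-population covariances of $\bm W_i(s,q)$ must have limits. These are built from $S_{qq}$, $\bS_{q,\bx}$, $\bS_{\bx\bx}$ and $\bS_{\btau\btau}$, which converge by part~(3).
\item The maximal-deviation condition must hold relative to the smallest group size. The smallest group is of order $n^{(1)}$ when $\rho$ is small, and part~(4) gives exactly $\max_i\|\cdot\|^2/n^{(1)}\to0$.
\end{enumerate}
The theorem then gives joint asymptotic normality of the centered group means. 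The covariance of our vector is $\bV_{\TSCRFE}$, which is already identified exactly in Theorem~\ref{thm:DistTSMCRE}, and this covariance converges by Assumption~\ref{assump:regularity_fac_full}(3).

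The main obstacle is the boundary case $\rho=0$. There the first-subset groups have vanishing proportions, so the scaling $\sqrt n$ is inappropriate for those blocks. I would handle this by using the version of the finite-population CLT that is stated through the normalized covariance rather than through fixed proportions: \citet{li2017general}'s condition $\max_{s,q}\max_i\|\bm W_i-\bar{\bm W}\|^2/(n_{sq}^2\cdot\text{min-scaled variance})\to 0$. This is ensured by part~(4) with denominator $n^{(1)}$.

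Alternatively, I would argue via the Cramér--Wold device. Conditional on $\bS$, the two subsets undergo independent CRFEs. I would apply Theorem~5 of \citet{li2017general} separately within each subset, using that the within-subset finite-population moments converge to the full-population ones (a sampling-without-replacement law of large numbers for the split). The randomness of the split contributes only through $\sqrt{n^{(s)}}(\bar{\bm W}^{(s)}-\bar{\bm W})$, which is itself asymptotically normal by Hájek's CLT. That split term produces the $\bS_{\btau\btau}$ cross terms. Combining the conditional and split pieces via characteristic functions yields $\cN(\bZero,\bV_{\TSCRFE})$.
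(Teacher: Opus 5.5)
Your first route is the paper's. The paper also recasts S-CRFE as a $2Q$-arm completely randomized experiment with pseudo-outcomes $\bU_i(q)=\bU_i(q+Q)=(Y_i(q),\bX_i')'$ and block contrast matrices $\tilde{\bA}_q$. It then applies the standardized finite-population CLT (Theorem~4 of \citet{li2017general}), which does not require positive limiting proportions and therefore covers $\rho=0$. (A small slip: $\sqrt{n^{(1)}}=\sqrt{n}\,\rho_n^{1/2}$, so the factors in $\bC_n$ are $\rho_n^{1/2}$ and $(1-\rho_n)^{1/2}$, not their reciprocals.)

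However, the step you single out as the main obstacle is not correct as written. The Lindeberg-type condition is coordinatewise, $m_q(k)/\{\tilde n_q^2\,\bbV(\hat\tau_k)\}\to0$, with $m_q(k)=\max_i[\tilde{\bA}_q(\bU_i(q)-\bar{\bU}(q))]_{(k)}^2$. Part~(4) suffices only because of a block-zero structure: $\tilde{\bA}_q$ for a first-subset group vanishes on the second-subset coordinates, and vice versa. So a group of size $n^{(1)}r_q$ is only ever paired with coordinates of variance $\bV_{\TSCRFE,kk}/n^{(1)}$, which gives the ratio $m_q(k)/(n^{(1)}r_q^2\bV_{\TSCRFE,kk})$. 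Second-subset groups give the same ratio with $n^{(2)}$, and $\max_i|Y_i(q)-\bar Y(q)|^2/n^{(2)}=\{\rho_n/(1-\rho_n)\}\max_i|Y_i(q)-\bar Y(q)|^2/n^{(1)}\to0$ because $\rho<1$. Your formula instead uses the uncontrasted $\|\bm W_i-\bar{\bm W}\|^2$ with a single (minimal) variance normalization. Then first-subset groups get paired with variances of order $1/n^{(2)}$, and the requirement becomes $\max_i\|\bm W_i-\bar{\bm W}\|^2\,n^{(2)}/(n^{(1)})^2\to0$. Part~(4) does not imply this when $n^{(2)}/n^{(1)}\to\infty$, which is exactly the case $\rho=0$ you were trying to handle.

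Second, the standardized theorem requires each $\bV_{\TSCRFE,kk}$ to stay bounded away from zero. This can fail in the outcome blocks: if $S_{qq}\to0$ for all $q$, then $\bS_{\btau\btau}\to\bZero$ too, and $\bV_{\btau\btau}+(1-\rho_n)\bS_{\btau\btau}\to\bZero$. The paper treats this as a separate case:
\begin{enumerate}
\item By Cauchy--Schwarz, all limiting covariances involving such coordinates are zero.
\item By Chebyshev's inequality, those coordinates are $o_p(1)$.
\item The nondegenerate subvector is handled as above, and joint convergence to $\cN(\bZero,\bV_{\TSCRFE,\infty})$ follows.
\end{enumerate}

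Your alternative route conditions on $\bS$, applies within-subset CLTs, and adds H\'ajek's CLT for $\sqrt{n^{(1)}}(\btau^{(1)}-\btau)$, where $\btau^{(s)}$ denotes the finite-population effect vector of subset $s$. This is viable. It correctly produces the $-\sqrt{\rho_n(1-\rho_n)}\,\bS_{\btau\btau}$ cross term through the identity $n^{(1)}(\btau^{(1)}-\btau)=-n^{(2)}(\btau^{(2)}-\btau)$. But it needs a conditional CLT that holds in probability over splits, and the same care with degenerate coordinates, so it is more work than the single $2Q$-arm argument.
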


Moreover, under S-CRFE, for
$\bz\in\{1,\ldots,Q\}^n$ satisfying
$\sum_{i=1}^n 1\{z_i=q\} = n_q$, $q=1,\ldots,Q$,
\[\begin{split}
\bbP(\bZ = \bz) &= \sum_{\bZ(\bms,\bz^{(1)},\bz^{(2)}) = \bz} \bbP(\bS = \bms, \bZ^{(1)} = \bz^{(1)}, \bZ^{(2)} = \bz^{(2)})\\
&= \frac{n_{11}!\ldots n_{1Q}!n_{21}!\ldots n_{2Q}!}{n!} \times 
\prod_{q=1}^Q \binom{n_q}{n_{1q}}
= \frac{n_1!\ldots n_Q!}{n!}.
\end{split}\]
Hence, S-CRFE induces exactly the same assignment mechanism as CRFE.
It follows that
\[\begin{split}
\sqrt{n}(\hat{\btau}(\bS,\bZ^{(1)},\bZ^{(2)})-\btau) \overset{\cdot}{\sim} \cN(\bZero,\bV_{\btau\btau}).
\end{split}\]

\subsection{Data-adaptive ReO$_\F$ or  ReO$_\FE$ (DA-ReO$_\F$ or DA-ReO$_\FE$)}

Building on S-CRFE, we propose data-adaptive versions of ReO$_\F$ and  ReO$_\FE$.
The key idea is to use one subsample to estimate the covariate importance matrix \(\bB\), and then use the resulting estimator \(\hat{\bB}\) to construct the rerandomization criterion for the other subsample.
Consequently, the treatment assignments in the two subsamples are no longer independent.

Specifically, we first randomly split the sample into two subsets, $\mathcal{I}_1(\bS)$ and $\mathcal{I}_2(\bS)$. 
Within $\mathcal{I}_1(\bS)$, we implement ReFM to obtain the treatment assignment $\bW^{(1)}$. 
The resulting data are used to construct a difference-in-means estimator $\hat{\btau}_1$ of $\btau$ and an estimator $\hat{\bB}$ of $\bB$.
Treating $\hat{\bB}$ as a proxy for $\bB$, we implement ReO$_\F$ or ReO$_\FE$ within $\mathcal{I}_2(\bS)$ to obtain $\bW^{(2)}$, and construct the second difference-in-means estimator $\hat{\btau}_2$ of $\hat{\btau}$.
We refer to the resulting designs as data-adaptive rerandomization for $2^K$ factorial designs, denoted by DA-ReO$_\F$ when ReO$_\F$ is used in $\mathcal{I}_2(\bS)$ and by DA-ReO$_\FE$ when ReO$_\FE$ is used instead. 
The design is illustrated in Figure~\ref{fig:TSReO}.

\begin{figure}
\centering{
\includegraphics[width=0.8\linewidth,height=\textheight]{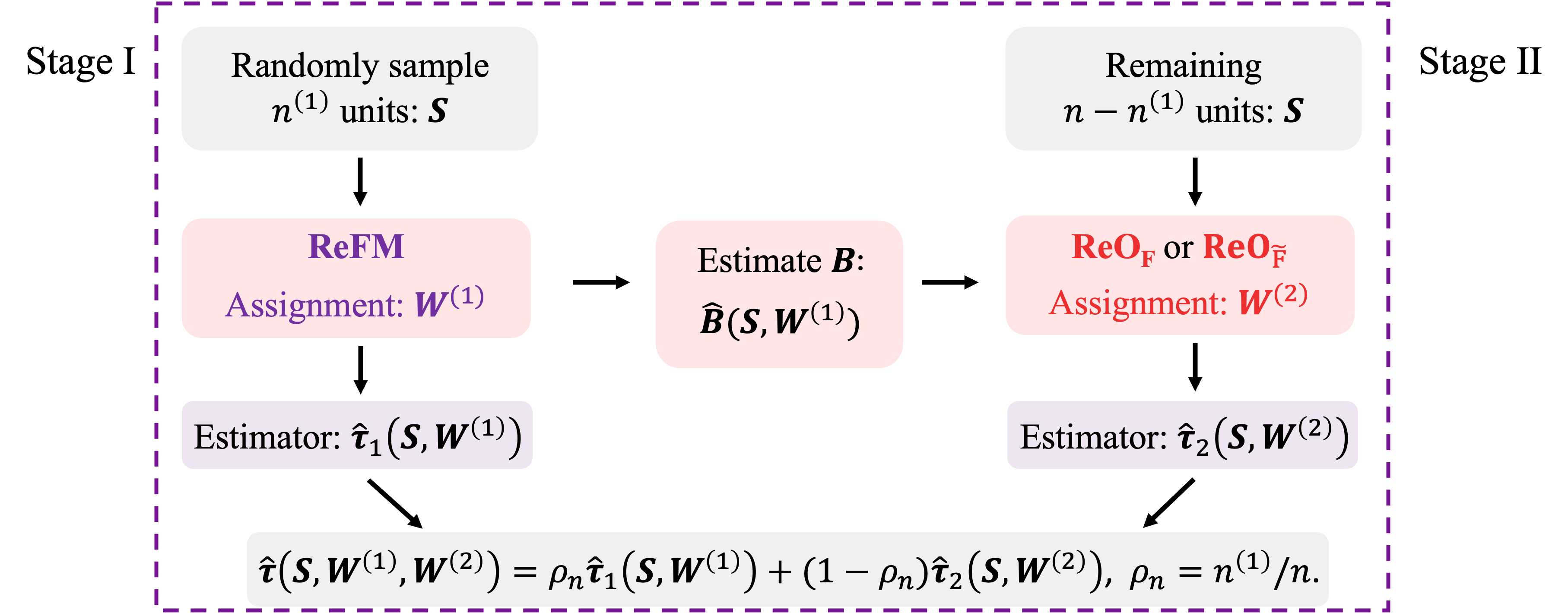}
}
\caption{\label{fig:TSReO}Data-adaptive rerandomization for $2^K$ factorial designs.}
\end{figure}

We now formally describe the procedure. 
Given the sample split \(\bS=\bms\), the assignment \(\bW^{(1)}\) on $\mathcal I_1(\bm s)$ is generated via ReFM, with conditional distribution
\begin{equation}\label{eq:assign1_fac}
(\bW^{(1)} \mid \bS = \bms) \sim \bZ^{(1)} \mid 
\phi_M\big(\sqrt{n^{(1)}}\hat{\btau}_{\bx,1}(\bms,\bZ^{(1)}),
\bV_{\bx\bx,1}(\bms)\big)=1,
\end{equation}
where $\bZ^{(1)}$ denotes the CRFE assignment vector on $\mathcal{I}_1(\bms)$, and $\phi_M$ is defined in Eq.~\eqref{eq:ReFM} with asymptotic acceptance probability $\alpha$. 
The covariance matrix 
$\bV_{\bx\bx,1}(\bms) 
=
\sum_{q=1}^Q r_{q}^{-1}(\bA_q\bA_q')\otimes \bms_{\bx\bx}(\cI_1(\bms)),$
where $\bms_{\bx\bx}(\mathcal{I}_1(\bms))= (n^{(1)}-1)^{-1}\sum_{i=1}^{n} s_i (\bX_i - \bar{\bX}_1(\bms))(\bX_i - \bar{\bX}_1(\bms))'$ is the sample covariance matrix of the covariates in $\cI_1(\bms)$, with $\bar{\bX}_1(\bms) = \frac{1}{n^{(1)}}\sum_{i=1}^{n} s_i \bX_i$.
If $\bms_{\bx\bx}(\mathcal{I}_1(\bms))$ is singular, we define
$\bV_{\bx\bx,1}^{-1}(\bms)=\bZero$, and the assignment mechanism coincides with CRFE.

Given $\bS=\bms$ and $\bW^{(1)}=\bw$, we construct a consistent estimator $\hat{\bB}(\bms,\bw)$ of $\bB$ using the observed outcomes in $\mathcal{I}_1(\bms)$.
Its explicit form is given later.

Conditional on \(\bS=\bms\) and \(\bW^{(1)}=\bw\), the assignment \(\bW^{(2)}\) is generated by applying ReO\(_\F\) or ReO\(_\FE\) with \(\hat{\bB}(\bms,\bw)\) in place of \(\bB\). 
Its conditional distribution is
\[\begin{split} 
(\bW^{(2)} \mid \bS = \bms, \bW^{(1)} = \bw) \sim \bZ^{(2)} \mid 
\phi_{\bomega}\big(\sqrt{n^{(2)}}\hat{\btau}_{\bx,2}(\bms, \bZ^{(2)}), \bV_{\bx\bx,2}(\bms), \hat{\bB}(\bms,\bw)\big) = 1,
\end{split}\] 
where $\bZ^{(2)}$ denotes the CRFE assignment vector on \(\mathcal I_2(\bS)\), and \(\phi_{\bW}\) is defined in Eq.~\eqref{eq:ReOMA} with asymptotic acceptance probability \(\alpha\). 
The covariance matrix
$\bV_{\bx\bx,2}(\bms) = \sum_{q=1}^Q r_q^{-1}(\bA_q\bA_q')\otimes \bms_{\bx\bx}(\cI_2(\bms))$,
where $\bms_{\bx\bx}(\mathcal{I}_2(\bms))=(n^{(2)}-1)^{-1}\sum_{i=1}^{n} (1-s_i) (\bX_i - \bar{\bX}_2(\bms))(\bX_i - \bar{\bX}_2(\bms))'$, with
$\bar{\bX}_2(\bms) = \frac{1}{n^{(2)}}\sum_{i=1}^{n} (1-s_i) \bX_i$.
If \((\hat{\bB}(\bms,\bw))'\bV_{\bx\bx,2}(\bms)\hat{\bB}(\bms,\bw)\) is singular, the acceptance probability is set to one, and the assignment mechanism coincides with CRFE.

Finally, the difference-in-means estimator of $\btau$ is 
\[\begin{split} 
\hat{\btau} = \hat{\btau}(\bS, \bW^{(1)}, \bW^{(2)}) = \rho_n \cdot \hat{\btau}_1(\bS, \bW^{(1)}) + (1-\rho_n) \cdot \hat{\btau}_2(\bS, \bW^{(2)}).
\end{split}\] 
where \(\hat{\btau}_1(\bS,\bW^{(1)})\) and
\(\hat{\btau}_2(\bS,\bW^{(2)})\) are estimators based on
\(\mathcal I_1(\bS)\) and \(\mathcal I_2(\bS)\), respectively:
$\hat{\btau}_1(\bS,\bW^{(1)}) = 
\sum_{q=1}^Q \bA_q \cdot n_{1q}^{-1}\sum_{i:\tilde{W}_i^{(1)}=q} S_i Y_i(q)$,
$\hat{\btau}_2(\bS,\bW^{(2)}) = 
\sum_{q=1}^Q \bA_q \cdot n_{2q}^{-1}\sum_{i:\tilde{W}_i^{(2)}=q} (1-S_i) Y_i(q)$.
Here, the extended assignment vectors $\tilde{\bW}^{(1)}$ and $\tilde{\bW}^{(2)}$ are defined from $(\bS,\bW^{(1)})$ and $(\bS,\bW^{(2)})$, respectively, analogously to $\tilde{\bZ}^{(1)}$ and $\tilde{\bZ}^{(2)}$.
In particular, $\tilde{\bW}^{(1)}$ agrees with $\bW^{(1)}$ on $\cI_1(\bS)$ and is zero elsewhere, while $\tilde{\bW}^{(2)}$ agrees with $\bW^{(2)}$ on $\cI_2(\bS)$ and is zero elsewhere.

\subsection{Construction of a consistent estimator of covariate importance}

We now propose a consistent estimator of $\bB$ based on the data from $\cI_1(\bS)$. 
Given $\bS=\bms$ and $\bW^{(1)}=\bw$, let $\tilde{\bW}^{(1)}=\tilde{\bw}$ denote the corresponding extended assignment vector. 
For each treatment combination $q=1,\ldots,Q$, define the sample means of the covariates and observed outcomes among units in $\cI_1(\bS)$ assigned to treatment combination $q$ as
\[\begin{split}
\Bar{\bX}_{1q}(\bms,\bw) = n_{1q}^{-1}\sum_{i=1}^{n} s_i1\{\tilde{w}_i=q\}\bX_{i},\ 
\Bar{Y}_{1q}(\bms,\bw) = n_{1q}^{-1}\sum_{i=1}^{n} s_i1\{\tilde{w}_i=q\}Y_{i}(q).
\end{split}\]
The corresponding sample variance and covariance matrices are
$s_{qq}(\bms,\bw) = (n_{1q}-1)^{-1}\sum_{i=1}^{n} s_i 1\{\tilde{w}_i=q\}(Y_{i}(q)-\Bar{Y}_{1q}(\bms,\bw))^2$,
$\bms_{\bx,q}(\bms,\bw) = (n_{1q}-1)^{-1}\sum_{i=1}^{n} s_i1\{\tilde{w}_i=q\}(\bX_{i}-\Bar{\bX}_{1q}(\bms,\bw))(Y_{i}(q)-\Bar{Y}_{1q}(\bms,\bw))$,
and 
$\bms_{\bx\bx,q}(\bms,\bw) = (n_{1q}-1)^{-1}\sum_{i=1}^{n} s_i1\{\tilde{w}_i=q\}(\bX_{i}-\Bar{\bX}_{1q}(\bms,\bw))(\bX_{i}-\Bar{\bX}_{1q}(\bms,\bw))'$.
The finite-population projection coefficient of \(Y_i(q)\) on \(\bX_i\) is
$\bbeta_q = \bS_{\bx\bx}^{-1}\bS_{\bx,q}$, $q =1,\ldots,Q$.
By definition, 
$\bB$ can be written as
\[\begin{split}
\bB = \bV_{\bx\bx}^{-1}\bV_{\bx\btau} = \sum_{q=1}^Q\left[\left(\sum_{g=1}^Q \frac{\bA_{g}\bA_g'}{r_g}\right)^{-1}\frac{\bA_q\bA_q'}{r_q}
\right]\otimes \bbeta_q.
\end{split}\]
Motivated by this representation, given \(\bS=\bms\) and \(\bW^{(1)}=\bw\), we estimate \(\bbeta_q\) by $\hat{\bbeta}_q(\bms,\bw) = \bms_{\bx\bx,q}^{-1}(\bms,\bw)\bms_{\bx,q}(\bms,\bw)$, 
$q=1\ldots,Q$.
If $\bms_{\bx\bx,q}(\bms,\bw)$ is singular, we define
$\hat{\bbeta}_q(\bms,\bw)=\bZero$.
We then estimate \(\bB\) by
\begin{equation}\label{eq:prior_fac}
\hat{\bB}(\bms,\bw) = 
\sum_{q=1}^Q\left[\left(\sum_{g=1}^Q \frac{\bA_{g}\bA_g'}{r_g}\right)^{-1}\frac{\bA_q\bA_q'}{r_q}
\right]\otimes \hat{\bbeta}_q(\bms,\bw).
\end{equation}
The following proposition shows that \(\hat{\bB}(\bS,\bW^{(1)})\) consistently estimates \(\bB\).


\begin{proposition}\label{pro:consistent_fac}
Under DA-ReO$_\F$ or DA-ReO$_\FE$, if Assumption~\ref{assump:regularity_fac_full} holds, then 
$\hat{\bB}(\bS,\bW^{(1)})-\bB=o_p(1)$.
\end{proposition}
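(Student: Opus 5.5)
Because $\hat{\bB}(\bS,\bW^{(1)})$ is a fixed linear combination of $\hat{\bbeta}_1,\ldots,\hat{\bbeta}_Q$ (see Eq.~\eqref{eq:prior_fac}), and the coefficient matrices $\big(\sum_g r_g^{-1}\bA_g\bA_g'\big)^{-1} r_q^{-1}\bA_q\bA_q'$ converge by Assumption~\ref{assump:regularity_fac_full}(2), the plan is to show $\hat{\bbeta}_q(\bS,\bW^{(1)})-\bbeta_q=o_p(1)$ for each $q$. By the continuous mapping theorem and the nonsingularity of $\bS_{\bx\bx,\infty}$, it suffices to prove $\bms_{\bx\bx,q}(\bS,\bW^{(1)})-\bS_{\bx\bx}=o_p(1)$ and $\bms_{\bx,q}(\bS,\bW^{(1)})-\bS_{\bx,q}=o_p(1)$. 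Once these hold, the event that $\bms_{\bx\bx,q}$ is singular (where $\hat{\bbeta}_q=\bZero$) has vanishing probability.

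The first step removes the rerandomization. In the first subset the assignment $\bW^{(1)}$ is $\bZ^{(1)}$ conditioned on the ReFM acceptance event. Under S-CRFE, Theorem~\ref{thm:TSCRFE_asymp} gives $\sqrt{n^{(1)}}\hat{\btau}_{\bx,1}$ asymptotically normal, and $\bV_{\bx\bx,1}(\bS)\to\bV_{\bx\bx,\infty}$ in probability (this uses the within-$\cI_1$ covariance consistency below). Hence the acceptance probability converges to $\alpha>0$. For any event $E_n$,
\[
\bbP_{\text{DA}}(E_n)\le \bbP_{\text{S-CRFE}}(E_n)\,/\,\bbP_{\text{S-CRFE}}(\phi_M=1),
\]
so any $o_p(1)$ statement under S-CRFE transfers to DA-ReO$_\F$ or DA-ReO$_\FE$. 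The second-stage assignment $\bW^{(2)}$ plays no role, since $\hat{\bB}$ depends only on $(\bS,\bW^{(1)})$.

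The second step works under S-CRFE, where, as noted after Theorem~\ref{thm:DistTSMCRE}, the units are randomly partitioned into $2Q$ groups. The set $\{i:S_i=1,\tilde{Z}^{(1)}_i=q\}$ is therefore a simple random sample of size $n_{1q}=n^{(1)}r_q$ from the full population, so $\bms_{\bx\bx,q}$ and $\bms_{\bx,q}$ are sample covariances from a simple random sample. Write each as the sample mean of $(\bX_i-\bar\bX)(\bX_i-\bar\bX)'$ (resp.\ $(\bX_i-\bar\bX)(Y_i(q)-\bar Y(q))$) minus the product of the sample means of the centered variables. The sample means of the centered $\bX_i$ and $Y_i(q)$ are $O_p(n_{1q}^{-1/2})$ by their finite-population variances, which converge by Assumption~\ref{assump:regularity_fac_full}(3).

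The main obstacle is the second-moment term: showing that the sample mean of $u_i=(X_{ij}-\bar X_j)(Y_i(q)-\bar Y(q))$ is consistent for its population mean. Its sampling variance is at most
\[
n_{1q}^{-1}\,n^{-1}\sum_i u_i^2 \le n_{1q}^{-1}\,\max_i|X_{ij}-\bar X_j|^2\cdot n^{-1}\sum_i (Y_i(q)-\bar Y(q))^2 .
\]
By Assumption~\ref{assump:regularity_fac_full}(4), $\max_i|X_{ij}-\bar X_j|^2=o(n^{(1)})$, and $n_{1q}\asymp n^{(1)}$ while the average of squares is bounded, so the variance is $o(1)$ and Chebyshev's inequality gives consistency. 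The same argument applies to the covariate–covariate terms, and the factors $(n_{1q}-1)^{-1}$ versus $n_{1q}^{-1}$ and $(n-1)^{-1}$ versus $n^{-1}$ are asymptotically negligible.

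The within-$\cI_1$ covariance $\bms_{\bx\bx}(\cI_1(\bS))$ used in $\bV_{\bx\bx,1}$ is handled identically, which justifies the acceptance-probability limit in the first step. Combining the two steps yields $\hat{\bB}(\bS,\bW^{(1)})-\bB=o_p(1)$.
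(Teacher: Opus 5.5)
\emph{Your second step} (the simple-random-sample variance bounds for $\bms_{\bx\bx,q}$ and $\bms_{\bx,q}$ under S-CRFE, followed by inversion on an event of probability tending to one) is sound. It is essentially what the paper does.

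\emph{The gap is in your first step}, the transfer from S-CRFE to the data-adaptive design. In DA-ReO$_\F$ the split $\bS$ is drawn once, and only $\bZ^{(1)}$ is rerandomized given $\bS=\bms$. Write $p(\bms)=\bbP\{\phi_M(\sqrt{n^{(1)}}\hat{\btau}_{\bx,1}(\bms,\bZ^{(1)}),\bV_{\bx\bx,1}(\bms))=1\}$. Then
\[
\bbP_{\mathrm{DA}}(E_n)=\sum_{\bms}\bbP(\bS=\bms)\,\frac{\bbP(E_n,\ \phi_M=1\mid\bS=\bms)}{p(\bms)}.
\]
This is not $\bbP_{\text{S-CRFE}}(E_n\mid\phi_M=1)$, because the latter reweights splits by $p(\bms)$.

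Your inequality $\bbP_{\mathrm{DA}}(E_n)\le\bbP_{\text{S-CRFE}}(E_n)/\bbP_{\text{S-CRFE}}(\phi_M=1)$ is valid only for that joint conditioning. It fails for the actual design: take $E_n$ to be the accepted assignments on a set of splits whose $p(\bms)$ is well below the marginal acceptance rate. What you do establish, $\bbP_{\text{S-CRFE}}(\phi_M=1)=\bbE\,p(\bS)\to\alpha$, does not stop $p(\bS)$ from being tiny on a non-negligible set of splits.

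\emph{The missing ingredient} is the conditional statement $p(\bS)=\alpha+o_p(1)$. This is the paper's Lemma~\ref{lem:prob1}, which the paper then uses to prove the total-variation equivalence in Lemma~\ref{lem:equiv}. To get it you need the finite-population CLT for the CRFE inside the realized subset $\cI_1(\bS)$, applied conditionally on $\bS$. That requires:
\begin{itemize}
\item $\bms_{\bx\bx}(\cI_1(\bS))-\bS_{\bx\bx}=o_p(1)$;
\item the subset Lindeberg-type bound $\max_{i\in\cI_1(\bS)}\|\bX_i-\bar{\bX}_1(\bS)\|_2^2/n^{(1)}\le 4\max_j\|\bX_j-\bar{\bX}\|_2^2/n^{(1)}\to 0$;
\item a subsequence argument to pass from almost-sure to in-probability convergence.
\end{itemize}
With that in hand your route does close, and more economically than the paper's two-sided total-variation comparison, via the one-sided bound
\[
\bbP_{\mathrm{DA}}(E_n)\le\bbP\{p(\bS)<\alpha/2\}+\frac{2}{\alpha}\,\bbP_{\text{S-CRFE}}(E_n).
\]
As written, however, the transfer step is unjustified.
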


\subsection{Asymptotic properties of the data-adaptive procedure}

We next study the asymptotic properties of the factorial effect estimator under DA-ReO$_\F$ and DA-ReO$_\FE$ with $\bB$ estimated by Eq.~\eqref{eq:prior_fac}.
We first derive the joint asymptotic distribution of $\hat{\btau}_1(\bS,\bW^{(1)})$ and $\hat{\btau}_2(\bS,\bW^{(2)})$, and then obtain the asymptotic distribution of their weighted combination $\hat{\btau}(\bS,\bW^{(1)},\bW^{(2)})$. 
The result is stated in Theorem~\ref{thm:distribution_fac}.

\begin{theorem}\label{thm:distribution_fac}
Suppose that Assumptions~\ref{assump:V_tautau} and~\ref{assump:regularity_fac_full} hold. 
As $n\to\infty$, under DA-ReO$_\F$ with effect weights
$\bomega=\diag\{\omega_1,\ldots,\omega_F\}$, for 
$\hat{\btau}_1=\hat{\btau}_1(\bS,\bW^{(1)})$ and $\hat{\btau}_2 = \hat{\btau}_2(\bS,\bW^{(2)})$,
we have
\[\begin{split}
&(\sqrt{n^{(1)}}(\hat{\btau}_1-\btau),
\sqrt{n^{(2)}}(\hat{\btau}_2-\btau))
\overset{\cdot}{\sim}
((\bV_{\btau\btau}^{||})^{1/2}_{Fp}\cdot \bzeta_{Fp,\alpha}
+\bepsilon_1,\bQ_\bB^{-1}\bLambda^{1/2}\bm\eta \mid \bm\eta'\bomega\bm\eta \leq \xi_{\bomega,\alpha}
+ \bepsilon_2),
\end{split}\]
where 
$\bzeta_{Fp,\alpha} \sim \tilde{\bEta}|\tilde{\bEta}'\tilde{\bEta} \leq \xi_{Fp,\alpha}$, with 
$\tilde{\bm\eta}\sim\mathcal{N}(\bZero,\bI_{Fp})$,
$\bm\eta\sim\mathcal{N}(\bZero,\bI_F)$,
and $\tilde{\bEta}$ and $\bm\eta$ are independent.
Moreover, 
\[\begin{split}
\begin{pmatrix}
\bepsilon_1\\
\bepsilon_2
\end{pmatrix}\sim\cN\left(\begin{pmatrix}
\bZero \\ \bZero
\end{pmatrix},\begin{pmatrix}
\bV_{\btau\btau}^{\perp}+(1-\rho_n)\bS_{\btau\btau} & 
-\sqrt{\rho_n(1-\rho_n)}\cdot \bS_{\btau\btau}\\
-\sqrt{\rho_n(1-\rho_n)}\cdot \bS_{\btau\btau} & 
\bV_{\btau\btau}^{\perp}+\rho_n \bS_{\btau\btau}
\end{pmatrix}\right)
\end{split}\]
with $(\bepsilon_1',\bepsilon_2')'$ independent of $\tilde{\bm\eta}$ and $\bm\eta$.
Under DA-ReO$_\FE$ (i.e., $\omega_1=\cdots=\omega_F$),
\[\begin{split}
&(\sqrt{n^{(1)}}(\hat{\btau}_1-\btau),
\sqrt{n^{(2)}}(\hat{\btau}_2-\btau))
\overset{\cdot}{\sim}
((\bV_{\btau\btau}^{||})^{1/2}_{Fp}\cdot \bzeta_{Fp,\alpha}
+\bepsilon_1,(\bV_{\btau\btau}^{||})^{1/2}\cdot \bm\zeta_{F,\alpha}
+ \bepsilon_2),
\end{split}\]
where $\bm\zeta_{F,\alpha} \sim \bEta \mid \bEta'\bEta \leq \xi_{F,\alpha}$.
\end{theorem}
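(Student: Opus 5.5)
The plan is to start from the S-CRFE Gaussian limit in Theorem~\ref{thm:TSCRFE_asymp} and then impose the two acceptance events one at a time. Write
\[
\bU_n=\big(\sqrt{n^{(1)}}(\hat{\btau}_1-\btau)',\sqrt{n^{(1)}}\hat{\btau}_{\bx,1}',\sqrt{n^{(2)}}(\hat{\btau}_2-\btau)',\sqrt{n^{(2)}}\hat{\btau}_{\bx,2}'\big)'
\]
for the statistic under the unrestricted S-CRFE assignment $(\bS,\bZ^{(1)},\bZ^{(2)})$. The DA design is obtained from S-CRFE by conditioning on
\[
\phi_M\big(\sqrt{n^{(1)}}\hat{\btau}_{\bx,1},\bV_{\bx\bx,1}(\bS)\big)=1
\quad\text{and}\quad
\phi_{\bomega}\big(\sqrt{n^{(2)}}\hat{\btau}_{\bx,2},\bV_{\bx\bx,2}(\bS),\hat{\bB}(\bS,\bZ^{(1)})\big)=1 .
\]
The first acceptance event is retained when estimating $\bB$, since under the DA law $\hat{\bB}$ is computed from $\bW^{(1)}$.

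The one subtlety is that the DA law is not literally S-CRFE conditioned on the joint event. Because $\bW^{(2)}$ is drawn by rejection given $(\bS,\bW^{(1)})$, the conditional acceptance probability $p_n(\bS,\bW^{(1)})$ of the second stage varies with the first-stage data. The DA law of any functional therefore equals the S-CRFE law restricted to the joint event and reweighted by $\alpha/p_n(\bS,\bZ^{(1)})$. So the first key step is to show that $p_n(\bS,\bZ^{(1)})\to\alpha$ in probability, uniformly enough to allow dominated convergence.

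\emph{Step 1: consistency of the plug-in ingredients.}
\begin{itemize}
\item $\bV_{\bx\bx,k}(\bS)\to\bV_{\bx\bx,\infty}$ in probability for $k=1,2$. This follows because the subsample covariance of a simple random sample converges to the population covariance under Assumption~\ref{assump:regularity_fac_full}(3)--(4), using Chebyshev.
\item $\hat{\bB}\to\bB_\infty$ in probability, which is Proposition~\ref{pro:consistent_fac}.
\end{itemize}
Consequently $\hat{\bQ}$, $\hat{\bLambda}$, and the quadratic form matrix $\hat{\bB}\hat{\bQ}'\bomega\hat{\bLambda}^{-1}\hat{\bQ}\hat{\bB}'$ converge to their oracle limits, using Assumption~\ref{assump:V_tautau} for invertibility. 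The singular-case conventions occur with probability tending to zero.

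\emph{Step 2: conditional CLT for the second block.} Condition on $(\bS,\bZ^{(1)})$. The second block is a CRFE on $\cI_2(\bS)$, whose finite-population moments converge in probability to the full-population limits. Applying the finite-population CLT of \citet{li2017general} conditionally along subsequences (almost-sure versions), the conditional law of $\sqrt{n^{(2)}}\hat{\btau}_{\bx,2}$ tends to $\cN(\bZero,\bV_{\bx\bx,\infty})$. Since $\hat{\bB}$ is measurable with respect to the conditioning and converges, the Gaussian mass of the boundary of the acceptance region is zero. By the continuous mapping and Portmanteau arguments used for Theorem~\ref{thm:AsymptoticDistri}, this gives $p_n\to\alpha$ in probability. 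The reweighting factor is then asymptotically constant: it is bounded by $\alpha/p_n$, and the events with small $p_n$ have vanishing probability. Hence the DA limit equals the limit of S-CRFE conditioned on the event $\{\phi_M=1\}\cap\{\phi_{\bomega}(\cdot,\cdot,\bB_\infty)=1\}$.

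\emph{Step 3: the unconditional joint limit.} By Theorem~\ref{thm:TSCRFE_asymp} and Slutsky, $\bU_n\overset{\cdot}{\sim}\cN(\bZero,\bV_\TSCRFE)$, and the indicators with estimated matrices can be replaced by indicators with limiting matrices up to $o_p(1)$. Conditioning a Gaussian vector on a set defined by its two covariate blocks gives the limit. The covariate blocks $\hat{\btau}_{\bx,1}$ and $\hat{\btau}_{\bx,2}$ are uncorrelated with each other and with the other sample's outcome block. So the limiting laws follow from these decompositions:
\begin{itemize}
\item For the first sample: $\sqrt{n^{(1)}}(\hat{\btau}_1-\btau)=\bB'\sqrt{n^{(1)}}\hat{\btau}_{\bx,1}+\bepsilon_1$.
\item For the second sample: $\sqrt{n^{(2)}}(\hat{\btau}_2-\btau)=\bQ_\bB^{-1}\bLambda^{1/2}\bm\eta+\bepsilon_2$, with $\bm\eta=\bLambda^{-1/2}\bQ_\bB\bB'\sqrt{n^{(2)}}\hat{\btau}_{\bx,2}$.
\end{itemize}
Here the residuals $(\bepsilon_1,\bepsilon_2)$ are jointly Gaussian and independent of both covariate blocks, with the displayed covariance obtained from $\bV_\TSCRFE$ after subtracting $\bV_{\btau\btau}^{\parallel}$ from each diagonal block. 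Truncating each covariate block separately yields $\bzeta_{Fp,\alpha}$ for the first sample, as in \eqref{eq:DisReFM}, and the $\bm\eta$-truncation for the second, as in Theorem~\ref{thm:AsymptoticDistri}. The two truncated parts are independent because their blocks are independent and the truncation events factorize.

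The equal-weight case follows by substituting $\bQ_\bB^{-1}\bLambda^{1/2}=(\bV^{\parallel}_{\btau\btau})^{1/2}\bm\Gamma$ with $\bm\Gamma$ orthogonal, and using rotational invariance of $\bm\eta\mid\bm\eta'\bm\eta\le\xi_{F,\alpha}$.

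The main obstacle is Step 2. The dependence created by $\hat{\bB}$ makes the DA law a reweighting of the S-CRFE law rather than a simple conditioning, and justifying uniform convergence of the conditional acceptance probability requires a careful conditional finite-population CLT.
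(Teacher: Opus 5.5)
Your proposal follows essentially the same route as the paper. The paper first shows that the second-stage conditional acceptance probability tends to $\alpha$ (Lemma~\ref{lem:prob2}). It uses this to show that the DA law is asymptotically equivalent in total variation to S-CRFE conditioned on the joint acceptance event (Lemma~\ref{lem:equiv2}). It then conditions the Gaussian limit of Theorem~\ref{thm:TSCRFE_asymp}, using the block structure of $\bV_\TSCRFE$ exactly as in your Step 3.

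One correction to your reweighting identity is needed. Since $\bS$ is never redrawn, $\bW^{(1)}$ is itself obtained by rejection, with an $\bS$-dependent acceptance probability $p_1(\bS)$. The DA law therefore differs from the conditioned S-CRFE law by the factor $\bbP(\text{joint event})/\{p_1(\bS)\,p_n(\bS,\bZ^{(1)})\}$, not $\alpha/p_n$. You consequently also need $p_1(\bS)\to\alpha$ in probability (Lemma~\ref{lem:prob1}). This follows from the same conditional CLT argument, using your Step~1 consistency of $\bV_{\bx\bx,1}(\bS)$.
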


Theorem~\ref{thm:distribution_fac} shows that 
$\hat{\btau}_1(\bS,\bW^{(1)})$ and
$\hat{\btau}_2(\bS,\bW^{(2)})$ are not asymptotically independent. 
Their asymptotic correlation is negative and depends on both the splitting proportion \(\rho_n\) and the finite-population covariance matrix of the individual factorial effects, \(\bS_{\btau\btau}\).
The correlation vanishes when \(\rho_n\in\{0,1\}\), corresponding to a degenerate split, or when the factorial effects are additive, that is, \(\bS_{\btau\btau}=\bZero\).
Moreover, since the correlated components are jointly Gaussian, the asymptotic distribution of the combined estimator
$\hat{\btau} = \hat{\btau}(\bS,\bW^{(1)},\bW^{(2)})$ can be derived explicitly, as stated in the following corollary.

\begin{corollary}\label{cor:distribution2_fac}
Suppose that Assumptions~\ref{assump:V_tautau} and~\ref{assump:regularity_fac_full} hold. 
Under DA-ReO$_\F$,
\[\begin{split}
\sqrt{n}(\hat{\btau} - \btau) \overset{\cdot}{\sim}\sqrt{\rho_n}\big(\bV_{\btau\btau}^{||}\big)^{1/2}_{Fp}\cdot \bm\zeta_{Fp,\alpha} + \sqrt{1-\rho_n}(\bQ_\bB^{-1}\bLambda^{1/2}\bm\eta \mid \bm\eta'\bomega\bm\eta \leq \xi_{\bomega,\alpha}) + \big(\bV_{\btau\btau}^{\perp}\big)^{1/2}\cdot \bvarepsilon,
\end{split}\]
where $\bvarepsilon\sim\mathcal{N}(\bZero,\bI_F)$, $\bzeta_{Fp,\alpha} \sim \tilde{\bEta}|\tilde{\bEta}'\tilde{\bEta} \leq \xi_{Fp,\alpha}$,  $\tilde\bEta \sim \cN(\bZero,\bI_{Fp})$, and $\bm\eta\sim\mathcal{N}(\bZero,\bI_F)$. 
Moreover,
$\tilde{\bm\eta}$, $\bm\eta$, and $\bvarepsilon$ are mutually independent.
Under DA-ReO$_\FE$,
\[\begin{split}
\sqrt{n}(\hat{\btau} - \btau) \overset{\cdot}{\sim} \sqrt{\rho_n}(\bV_{\btau\btau}^{||})_{Fp}^{1/2}\cdot \bm\zeta_{Fp,\alpha} +
\sqrt{1-\rho_n}(\bV_{\btau\btau}^{||})^{1/2}\cdot \bm\zeta_{p,\alpha} +
(\bV_{\btau\btau}^{\perp})^{1/2}\cdot \bvarepsilon,
\end{split}\]
where $\bzeta_{F,\alpha}
\sim
\bm\eta
\mid
\bm\eta'\bm\eta\leq \xi_{F,\alpha}$.
\end{corollary}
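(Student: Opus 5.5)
The corollary follows from Theorem~\ref{thm:distribution_fac} by a continuous-mapping argument applied to the decomposition in Eq.~\eqref{eq:estimator_TSCRFE}. We write
\[
\sqrt{n}(\hat{\btau}-\btau)=\sqrt{\rho_n}\,\sqrt{n^{(1)}}(\hat{\btau}_1-\btau)+\sqrt{1-\rho_n}\,\sqrt{n^{(2)}}(\hat{\btau}_2-\btau).
\]
This identity holds because $\rho_n\sqrt{n}=\sqrt{\rho_n}\sqrt{n^{(1)}}$ and $(1-\rho_n)\sqrt{n}=\sqrt{1-\rho_n}\sqrt{n^{(2)}}$. Since $\rho_n\to\rho$ by Assumption~\ref{assump:regularity_fac_full}, the map $(u,v)\mapsto \sqrt{\rho_n}u+\sqrt{1-\rho_n}v$ converges to a fixed linear map. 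The joint asymptotic representation in Theorem~\ref{thm:distribution_fac} therefore transfers to the combination, via Slutsky's lemma and the continuous mapping theorem. The result is
\[
\sqrt{\rho_n}(\bV_{\btau\btau}^{||})^{1/2}_{Fp}\bzeta_{Fp,\alpha}+\sqrt{1-\rho_n}\big(\bQ_\bB^{-1}\bLambda^{1/2}\bm\eta\mid \bm\eta'\bomega\bm\eta\le\xi_{\bomega,\alpha}\big)+\big(\sqrt{\rho_n}\bepsilon_1+\sqrt{1-\rho_n}\bepsilon_2\big),
\]
with $(\bepsilon_1',\bepsilon_2')'$ independent of $\tilde{\bm\eta}$ and $\bm\eta$. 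The tildes and truncations are unaffected because they are independent of the Gaussian part.

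It remains to identify the Gaussian term. The vector $\sqrt{\rho_n}\bepsilon_1+\sqrt{1-\rho_n}\bepsilon_2$ is a linear combination of jointly normal vectors, so it is normal with mean zero. Using the covariance in Theorem~\ref{thm:distribution_fac}, its covariance is
\[
\rho_n\{\bV_{\btau\btau}^{\perp}+(1-\rho_n)\bS_{\btau\btau}\}+(1-\rho_n)\{\bV_{\btau\btau}^{\perp}+\rho_n\bS_{\btau\btau}\}-2\rho_n(1-\rho_n)\bS_{\btau\btau}=\bV_{\btau\btau}^{\perp}.
\]
So this term equals $(\bV_{\btau\btau}^{\perp})^{1/2}\bvarepsilon$ in distribution, with $\bvarepsilon\sim\cN(\bZero,\bI_F)$. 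This cancellation of the $\bS_{\btau\btau}$ terms is the key step. It is the reason the negative correlation identified in Theorem~\ref{thm:distribution_fac} disappears for the combined estimator, consistent with the S-CRFE benchmark where $\sqrt n(\hat\btau-\btau)\overset{\cdot}{\sim}\cN(\bZero,\bV_{\btau\btau})$. Mutual independence of $\tilde{\bm\eta}$, $\bm\eta$ and $\bvarepsilon$ follows because $\bvarepsilon$ is a measurable function of $(\bepsilon_1,\bepsilon_2)$, which is independent of $(\tilde{\bm\eta},\bm\eta)$, and $\tilde{\bm\eta}$ is independent of $\bm\eta$.

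The DA-ReO$_\FE$ case is identical, substituting the second representation of Theorem~\ref{thm:distribution_fac}. The second-stage truncated term is then $(\bV_{\btau\btau}^{||})^{1/2}\bzeta_{F,\alpha}$; the subscript $p$ in the displayed statement should read $F$. Equivalently, one can specialize the first display: with equal weights, $\bQ_\bB^{-1}\bLambda^{1/2}\bm\eta$ truncated to $\bm\eta'\bm\eta\le\xi_{F,\alpha}$ has the same law as $(\bV_{\btau\btau}^{||})^{1/2}\bzeta_{F,\alpha}$. This holds because $\bQ_\bB^{-1}\bLambda\bQ_\bB^{-1\prime}=\bV_{\btau\btau}^{||}$ by Eq.~\eqref{eq:Lambda_fac}, and $\bzeta_{F,\alpha}$ is spherically symmetric.

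The main point needing care is that the limit objects depend on $n$ through $\rho_n$, $\bV_{\btau\btau}^{\perp}$, etc. The statement "$\overset{\cdot}{\sim}$" should therefore be read in the sense of the paper, as convergence along the limiting values under Assumptions~\ref{assump:V_tautau} and~\ref{assump:regularity_fac_full}. Continuity of the representing laws in these parameters then justifies replacing $\rho_n$ and the covariance matrices by their limits and back.
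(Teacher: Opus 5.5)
Your proof is correct and follows the same route as the paper. Both write $\sqrt{n}(\hat{\btau}-\btau)$ as $\sqrt{\rho_n}$ and $\sqrt{1-\rho_n}$ times the two rescaled subsample errors and push the joint limit of Theorem~\ref{thm:distribution_fac} through this linear map; you additionally make explicit the cancellation $\rho_n(1-\rho_n)\bS_{\btau\btau}+(1-\rho_n)\rho_n\bS_{\btau\btau}-2\rho_n(1-\rho_n)\bS_{\btau\btau}=\bZero$, which the paper leaves implicit, and you correctly flag that $\bm\zeta_{p,\alpha}$ in the DA-ReO$_\FE$ display should be $\bm\zeta_{F,\alpha}$.
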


Comparing Corollary~\ref{cor:distribution2_fac} with Eq.'s~\eqref{eq:DisReFM}, \eqref{eq:DisReOMA}, and \eqref{eq:DisReOMAE}, 
the asymptotic distribution under DA-ReO$_\F$ (or DA-ReO$_\FE$) can be viewed as a weighted combination of those under ReFM and ReO$_\F$ (or ReO$_\FE$), with weights determined by $\rho_n$.
If $\rho_n\to\rho=0$, the asymptotic distribution under DA-ReO$_\F$ (or DA-ReO$_\FE$) reduces to that under ReO$_\F$ (or ReO$_\FE$).

Based on the asymptotic distribution, the following result characterizes the reduction in asymptotic covariance and the corresponding PRIASV for each factorial effect estimator.

\begin{corollary}\label{cor:PRIASV2_fac}
Suppose that Assumptions~\ref{assump:V_tautau} and~\ref{assump:regularity_fac_full} hold. 
Under DA-ReO$_\F$,
the reduction in the asymptotic covariance matrix of \(\sqrt n(\hat{\btau}-\btau)\) is
$\lim_{n\to\infty} \sum_{f=1}^F \left[1-\rho_n v_{Fp,\alpha} - (1-\rho_n)c_f\right]\cdot\bV_{\btau\btau}^{||}[f]$, whereas under DA-ReO$_\FE$ it is
$\lim_{n\to\infty} \left[1-\rho_n v_{Fp,\alpha} - (1-\rho_n)v_{F,\alpha}\right]\cdot\bV_{\btau\btau}^{||}$.
For $f=1,\ldots,F$, the PRIASV of $\hat{\tau}_f$ under DA-ReO$_\F$ is
$\lim_{n\to\infty} 100 \times \sum_{j=1}^f (1-\rho_n v_{Fp,\alpha} - (1-\rho_n) c_j)R^2_f[j]$, 
whereas under DA-ReO$_\FE$ it is $\lim_{n\to\infty} 100 \times (1-\rho_n v_{Fp,\alpha} - (1-\rho_n) v_{F,\alpha})R^2_f$.
Furthermore, if $\omega_1\geq\cdots\geq \omega_F$, then under DA-ReO$_\F$, the PRIASV of $\hat{\tau}_f$ is at least $\lim_{n\to\infty} 100 \times (1-\rho_n v_{Fp,\alpha} - (1-\rho_n) c_f)R^2_f$.
\end{corollary}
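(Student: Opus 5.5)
The plan is to read Corollary~\ref{cor:PRIASV2_fac} directly off the asymptotic representation in Corollary~\ref{cor:distribution2_fac}. We compute the covariance of each independent piece and compare the total with the CRFE covariance $\bV_{\btau\btau}=\bV_{\btau\btau}^{\perp}+\bV_{\btau\btau}^{||}$.

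\emph{Step 1: covariance of the three pieces.} By mutual independence of $\tilde\bEta$, $\bm\eta$ and $\bvarepsilon$, the asymptotic covariance of $\sqrt n(\hat\btau-\btau)$ is the sum of three terms.
\begin{itemize}
\item The term $(\bV_{\btau\btau}^{\perp})^{1/2}\bvarepsilon$ contributes $\bV_{\btau\btau}^{\perp}$.
\item The ReFM term contributes $\rho_n v_{Fp,\alpha}\bV_{\btau\btau}^{||}$, using $\cov(\bzeta_{Fp,\alpha})=v_{Fp,\alpha}\bI_{Fp}$, which holds by spherical symmetry.
\item The oracle term is handled as in Theorem~\ref{thm:PRIASV}. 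Since $\bomega$ is diagonal and $\bm\eta$ has independent coordinates, the truncated vector $\bm\eta\mid\bm\eta'\bomega\bm\eta\le\xi_{\bomega,\alpha}$ is invariant under coordinate sign flips. Hence its covariance is $\diag\{c_1,\ldots,c_F\}$, with $c_f$ as in Eq.~\eqref{eq:cf}.
\end{itemize}
For the oracle term we then need $\bQ_\bB^{-1}\bLambda^{1/2}\be_f\be_f'\bLambda^{1/2}\bQ_\bB^{-1\prime}=\bV_{\btau\btau}^{||}[f]$. This follows from $\bLambda=\bQ_\bB\bV_{\btau\btau}^{||}\bQ_\bB'$, which gives $\bQ_\bB^{-1}=\bV_{\btau\btau}^{||}\bQ_\bB'\bLambda^{-1}$. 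The oracle term therefore contributes $(1-\rho_n)\sum_f c_f\bV_{\btau\btau}^{||}[f]$.

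\emph{Step 2: the covariance reduction.} We write $\bV_{\btau\btau}^{||}=\sum_f\bV_{\btau\btau}^{||}[f]$ and subtract the total from Step 1 from $\bV_{\btau\btau}$. This gives the stated reduction $\sum_f[1-\rho_nv_{Fp,\alpha}-(1-\rho_n)c_f]\bV_{\btau\btau}^{||}[f]$, after taking limits under Assumption~\ref{assump:regularity_fac_full}. For DA-ReO$_\FE$, Theorem~\ref{thm:PRIASV} gives $c_f\equiv v_{F,\alpha}$, and the sum collapses to a multiple of $\bV_{\btau\btau}^{||}$.

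\emph{Step 3: the PRIASV.} We take the $(f,f)$ entry and divide by $V_{\tau_f\tau_f}$. The identity $\be_f'\bV_{\btau\btau}^{||}[j]\be_f/V_{\tau_f\tau_f}=R_f^2[j]$ holds by definition. Lemma~\ref{lem:R2} then kills the terms with $j>f$, giving $100\sum_{j\le f}(1-\rho_nv_{Fp,\alpha}-(1-\rho_n)c_j)R_f^2[j]$. In the equal-weight case, Lemma~\ref{lem:R2} gives $\sum_{j\le f}R_f^2[j]=R_f^2$.

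\emph{Step 4: the lower bound.} By Lemma~\ref{lem:cf}, $c_j\le c_f$ for $j\le f$, so each coefficient satisfies $1-\rho_nv_{Fp,\alpha}-(1-\rho_n)c_j\ge 1-\rho_nv_{Fp,\alpha}-(1-\rho_n)c_f$. This common lower coefficient is nonnegative because $v_{Fp,\alpha},c_f\le1$. Since $R_f^2[j]\ge0$ and $\sum_{j\le f}R_f^2[j]=R_f^2$, the bound follows.

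The main obstacle is rigor about variances of limiting distributions. The quantity $\bbV_a$ is defined as the variance of the limiting law, not the limit of finite-$n$ variances, so no uniform integrability argument is needed. Because $\rho_n$ and the finite-population matrices converge, the limits of the covariance expressions exist and are computed termwise.
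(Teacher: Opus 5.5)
Your proposal is correct and follows essentially the same route as the paper. You read off the asymptotic covariance from Corollary~\ref{cor:distribution2_fac}, identify the oracle piece as $\sum_f c_f\bV_{\btau\btau}^{||}[f]$ as in the proof of Theorem~\ref{thm:PRIASV}, and then invoke Lemmas~\ref{lem:R2} and~\ref{lem:cf} for the PRIASV formulas and the lower bound; your remark that the common coefficient in Step~4 is nonnegative is true but not needed, since $R_f^2[j]\ge 0$ alone gives the inequality.
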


Corollaries~\ref{cor:distribution2_fac} and~\ref{cor:PRIASV2_fac} yield several important implications.
First, when all factorial effects are equally important, DA-ReO$_\FE$ asymptotically dominates ReFM. 
Specifically, the asymptotic covariance matrix of \(\hat{\btau}\) under DA-ReO$_\FE$ is no larger than that under ReFM, and the PRIASV of each \(\hat{\tau}_f\) is no smaller.
Second, under DA-ReO$_\F$ (with DA-ReO$_\FE$ as a special case), the precision gain of each \(\hat{\tau}_f\), measured by PRIASV, admits a lower bound determined by \(\rho_n\) and \(c_f\). 
This lower bound increases as \(\rho_n\) decreases. 
In particular, if \(\rho_n\to0\),  DA-ReO$_\F$ and DA-ReO$_\FE$ reduce to ReO$_\F$ and ReO$_\FE$, respectively. 
Thus, the proposed procedures asymptotically recover the corresponding oracle criteria.
Moreover, as in ReO$_\F$, larger effect weights yield larger lower bounds on the precision gains of the corresponding factorial effect estimators. 
Consequently, precision gains can be allocated across factorial effects according to their relative importance through the choice of \(\omega_1,\ldots,\omega_F\).

\subsection{Conservative covariance estimator and confidence set}\label{sec:CI_fac}
By Corollaries~\ref{cor:distribution2_fac} and~\ref{cor:PRIASV2_fac}, 
the asymptotic covariance matrix of $\sqrt{n}(\hat{\btau}-\btau)$ under  DA-ReO$_\F$ or DA-ReO$_\FE$ depends on 
$\bV_{\btau\btau}^{\perp}$, $\bV_{\btau\btau}^{\parallel}[f]$  ($f=1,\ldots,F$), and $\bV_{\btau\btau}^{\parallel}$.
Let $\bW\in\{1,\ldots,Q\}^n$ denote the overall treatment assignment vector, defined by
$\bW_{\cI_1(\bS)} = \bW^{(1)}$ and $\bW_{\cI_2(\bS)} = \bW^{(2)}$.
For each $q\in\{1,\ldots,Q\}$, define \(s_{qq}\), \(\bms_{q,\bx}\), and \(\bms_{\bx\bx,q}\) as the sample variance of the observed outcomes, the sample covariance between the observed outcomes and covariates, and the sample covariance matrix of the covariates among units assigned to treatment combination \(q\) under \(\bW\):
$s_{qq} = (n_q-1)^{-1}\sum_{i=1}^n 1\{W_i=q\} (Y_i-\hat{\bar{Y}}_q)^2$,
$s_{q,\bx} = (n_q-1)^{-1}\sum_{i=1}^n 1\{W_i=q\} (Y_i-\hat{\bar{Y}}_q)(\bX_i - \hat{\bar{\bX}}_q)$,
and 
$\bms_{\bx\bx,q} = (n_q-1)^{-1}\sum_{i=1}^n 1\{W_i=q\} (\bX_i - \hat{\bar{\bX}}_q)(\bX_i - \hat{\bar{\bX}}_q)'$,
where $\hat{\bar{Y}}_q = n_q^{-1}\sum_{i=1}^n 1\{W_i=q\} Y_i(q)$, and $\hat{\bar{\bX}}_q = n_q^{-1}\sum_{i=1}^n 1\{W_i=q\} \bX_i$.
The following lemma shows that these sample quantities consistently estimate their finite-population counterparts.

\begin{lemma}\label{lem:consistent_fac}
Under DA-ReO$_\F$ (including DA-ReO$_\FE$), if Assumptions~\ref{assump:V_tautau} and~\ref{assump:regularity_fac_full} hold, then, for each $q\in\{1,\ldots,Q\}$,
$\bms_{q,\bx} - \bS_{q,\bx} = o_p(1)$, $s_{qq} - S_{qq} = o_p(1)$, and $\bms_{\bx\bx,q} - \bS_{\bx\bx,q} = o_p(1)$.
\end{lemma}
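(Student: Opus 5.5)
The plan is to prove consistency first under S-CRFE, where the overall assignment is exactly a CRFE, and then transfer the result to DA-ReO$_\F$ by a conditioning argument. The acceptance events have asymptotically positive probability, so an $o_p(1)$ statement survives the conditioning.

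\textbf{Step 1 (the S-CRFE benchmark).} Under S-CRFE, $\bZ(\bS,\bZ^{(1)},\bZ^{(2)})$ has the CRFE law, as computed after Theorem~\ref{thm:TSCRFE_asymp}. Each sample quantity $s_{qq}$, $\bms_{q,\bx}$, $\bms_{\bx\bx,q}$ is the sample (co)variance within arm $q$ of a simple random sample of size $n_q$ drawn without replacement from $\{(Y_i(q),\bX_i)\}_{i=1}^n$. Write, for instance, $\bms_{q,\bx}$ as a sample mean of the products $(Y_i(q)-\bar Y(q))(\bX_i-\bar\bX)$ minus the product of the centered sample means. The centered sample means are $o_p(1)$ by the finite-population Chebyshev bound, since $S_{qq}$ and $\bS_{\bx\bx}$ are bounded. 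For the mean of products, the sampling variance is at most
\[
n_q^{-1}(n-1)^{-1}\sum_{i=1}^n |Y_i(q)-\bar Y(q)|^2\|\bX_i-\bar\bX\|_2^2 .
\]
This is bounded by $n_q^{-1}\max_i\|\bX_i-\bar\bX\|_2^2\, S_{qq}$, which is $o(1)$ by Assumption~\ref{assump:regularity_fac_full}(4), because $n_q\ge n^{(1)}r_q\cdot$const. The same argument applies to $s_{qq}$ and $\bms_{\bx\bx,q}$ (for $s_{qq}$, use the max condition on $Y$ with one factor bounded by the max and the other by the average). This is the standard argument of Lemma A15 in \citet{li2020rerandomization}. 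Hence all three quantities are consistent under S-CRFE. Here $\bS_{\bx\bx,q}$ is read as $\bS_{\bx\bx}$.

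\textbf{Step 2 (transfer to DA-ReO$_\F$).} Let $E_n$ denote the event that a sample quantity deviates from its target by more than $\epsilon$. Under DA-ReO$_\F$, the law of $(\bS,\bW^{(1)},\bW^{(2)})$ is the S-CRFE law conditioned on two acceptance events in sequence. I will bound $\bbP_{\text{DA}}(E_n)$ directly.

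First, $\bW^{(1)}$ is drawn given $\bS$ by conditioning $\bZ^{(1)}$ on $\phi_M=1$. By Theorem~\ref{thm:TSCRFE_asymp}, and because $\bV_{\bx\bx,1}(\bS)$ is consistent for $\bV_{\bx\bx}$, the acceptance probability $\bbP(\phi_M=1\mid\bS)\to\alpha$ in probability.

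Second, given $(\bS,\bW^{(1)})$, $\bW^{(2)}$ is $\bZ^{(2)}$ conditioned on $\phi_{\bomega}(\cdot,\bV_{\bx\bx,2}(\bS),\hat\bB)=1$. This acceptance probability also tends to $\alpha$ in probability. Two facts give this: $\hat\bB\to\bB$ by Proposition~\ref{pro:consistent_fac}, and $\sqrt{n^{(2)}}\hat\btau_{\bx,2}$ is asymptotically $\cN(\bZero,\bV_{\bx\bx})$ conditionally on $\bS$. Continuity of the threshold in $(\bV_{\bx\bx},\bB)$ then applies, using Assumption~\ref{assump:V_tautau} so that $\bLambda$ is invertible.

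On the event $G_n$ where both acceptance probabilities exceed $\alpha/2$, the Radon--Nikodym factor of the DA law with respect to the S-CRFE law is at most $4/\alpha^2$. Therefore
\[
\bbP_{\text{DA}}(E_n)\le (4/\alpha^2)\,\bbP_{\text{S-CRFE}}(E_n)+\bbP_{\text{DA}}(G_n^c),
\]
and both terms are $o(1)$.

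\textbf{Main obstacle.} The hard part is controlling $\bbP_{\text{DA}}(G_n^c)$, that is, showing the second acceptance probability stays bounded away from zero under the DA law rather than under S-CRFE. The difficulty is that $\hat\bB$ itself depends on $\bW^{(1)}$. I would handle this with the same device one level up. The marginal law of $(\bS,\bW^{(1)})$ under DA is the S-CRFE marginal conditioned on $\phi_M=1$. The set $\{(\bS,\bW^{(1)}):\text{second acceptance probability}<\alpha/2\}$ has S-CRFE probability $o(1)$, so its DA probability is at most $\bbP_{\text{S-CRFE}}(\cdot)/\bbP(\phi_M=1\mid\bS)$, which is again $o(1)$ on the event where the first acceptance probability is at least $\alpha/2$. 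The first acceptance probability depends on $\bS$ only, and $\bS$ has the same law under both designs.
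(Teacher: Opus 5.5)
Your proof is correct. Its skeleton matches the paper's: consistency under S-CRFE, then a transfer step using that the acceptance probabilities stay away from zero. The transfer, however, is carried out differently.

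The paper's route:
\begin{enumerate}
\item It introduces the hypothetical one-shot design $\widetilde{\text{DA-ReO}}_\F$, which conditions $(\bS,\bZ^{(1)},\bZ^{(2)})$ jointly on both acceptance events, with $\hat\bB(\bS,\bZ^{(1)})$ plugged in.
\item It bounds the conditional mean squared error of $s_{AB,q}$ by its S-CRFE variance divided by the joint acceptance probability, which tends to $\alpha^2$ by Eq.~\eqref{eq:prob1}.
\item It invokes Lemma~\ref{lem:equiv2}, a two-sided total-variation equivalence between the sequential DA law and this jointly conditioned law.
\end{enumerate}

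Your route writes $p_1(\bS)=\bbP(\phi_M=1\mid\bS)$ and $p_2(\bS,\bW^{(1)})$ for the second-stage acceptance probability. You note that the DA law has density $1\{\phi_M=1\}1\{\phi_\bomega=1\}/\{p_1p_2\}$ with respect to S-CRFE, and bound this by $4/\alpha^2$ on the good set $G_n$. This is shorter and self-contained. It needs neither the hypothetical design nor Eq.~\eqref{eq:prob1}. It also only requires the acceptance probabilities to be bounded away from zero, not to converge to $\alpha$.

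The cost is that your bound is one-sided domination. That suffices for $o_p(1)$ statements but not for distributional limits. The paper needs Lemma~\ref{lem:equiv2} anyway for Theorem~\ref{thm:distribution_fac}, so reusing it there is economical. Your Step 1 computes the S-CRFE variance directly, where the paper cites Lemma B3 of Yang et al.; the two are equivalent.

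Two small tightenings:
\begin{itemize}
\item $p_1(\bS)\to\alpha$ in probability does not follow from the unconditional CLT in Theorem~\ref{thm:TSCRFE_asymp}, which only controls $\bbE\,p_1(\bS)$. You need the conditional finite-population CLT on $\cI_1(\bS)$ given $\bS$, using consistency of $\bms_{\bx\bx}(\cI_1(\bS))$ and the max condition. That is exactly Lemma~\ref{lem:prob1}.
\item For $p_2$, the threshold is constant. What must be continuous at $(\bV_{\bx\bx,\infty},\bB_\infty)$ is the matrix $\bB\bQ_\bB'\bomega\bLambda^{-1}\bQ_\bB\bB'$, and this is where Assumption~\ref{assump:V_tautau} enters.
\end{itemize}
Your ``main obstacle'' paragraph is a valid alternative, but it is redundant once Lemma~\ref{lem:prob2} is available.
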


Lemma~\ref{lem:consistent_fac} yields the following asymptotically conservative estimator of $\bV_{\btau\btau}^{\perp}=\bV_{\btau\btau}-\bV_{\btau\btau}^{||}$:
$$\hat\bV_{\btau\btau}^{\perp} = \sum_{q=1}^Q r_q^{-1}\bA_q \bA_q' (s_{qq} - \bms_{q,\bx} \bms_{\bx\bx,q}^{-1} \bms_{\bx,q}),$$
where $\bms_{\bx\bx,q}^{-1}=\bZero$ if $\bms_{\bx\bx,q}$ is singular.
For the explained component, define $\hat{\bV}_{\btau\bx} = \sum_{q=1}^Q \frac{1}{r_q}(\bA_q \bA_q')\otimes \bms_{q,\bx}$, $\hat\bB = \bV_{\bx\bx}^{-1}\hat{\bV}_{\bx\btau}$, and $\hat{\bV}_{\btau\btau}^{||} = \hat\bB'\bV_{\bx\bx}\hat\bB$.
Let $\bQ_{\hat\bB}$ denote the estimated orthogonal matrix obtained from
$\hat{\bV}_{\btau\btau}^{||}$, and define $\hat{\bLambda} = \diag\{\hat{\lambda}_1,\ldots,\hat{\lambda}_F\} = \bQ_{\hat\bB}\hat{\bV}_{\btau\btau}^{||}\bQ_{\hat\bB}'$.
Then the covariance component corresponding to the $f$th factorial effect can
be estimated by
\begin{equation}\label{eq:V_tautau}
\hat\bV_{\btau\btau}^{||}[f]
= \hat{\bV}_{\btau\btau}^{||}\bQ_{\hat\bB}'\left(\frac{\be_f\be_f'}{\hat\lambda_f}\right)\bQ_{\hat\bB}\hat{\bV}_{\btau\btau}^{||},\ f=1,\ldots,F,
\end{equation}
where $\be_f$ is the $f$th canonical basis vector in $\mathbb{R}^F$.
The above quantities are well defined when $\hat\bB$ has full column rank.

For the estimand $\sqrt{n}\bC\btau$, where $\bC$ has full row rank, the asymptotic covariance matrix under DA-ReO$_\F$ is estimated by
\begin{equation}\label{eq:variance1}
\bC\hat\bV_{\btau\btau}^{\perp}\bC' + 
\sum_{f=1}^F \left(\rho_nv_{Fp,\alpha} + (1-\rho_n)c_f\right)\bC\hat\bV_{\btau\btau}^{||}[f]\bC'.
\end{equation}
If $\hat{\bB}$ does not have full column rank, we replace the second term by
$(\rho_nv_{Fp,\alpha} + (1-\rho_n)\max\{c_1,\ldots,c_F\})\bC\hat\bV_{\btau\btau}^{||}\bC'$.
Under DA-ReO$_\FE$, the corresponding estimator is
\begin{equation}\label{eq:variance2}
\bC\hat\bV_{\btau\btau}^{\perp}\bC' + \left(\rho_nv_{Fp,\alpha} + (1-\rho_n)v_{F,\alpha}\right)
\bC\hat\bV_{\btau\btau}^{||}\bC'.
\end{equation}
For small acceptance probability $\alpha$, the distribution of
$\sqrt{n}\bC(\hat{\btau}-\btau)$ can be approximated by the normal
distribution with covariance $\bC\bV_{\btau\btau}^{\perp}\bC'$.
Hence, we construct the confidence set
\begin{equation}\label{eq:CI2}
\sqrt{n}\bC\hat{\btau}
+
\mathcal{O}(\bC\hat\bV_{\btau\btau}^{\perp}\bC',\hat c_{1-\delta}),
\end{equation}
where $\mathcal{O}(\bM, c) = \{\bmu: \bmu'\bM^{-1} \bmu \leq c\}$.
The critical value $\hat c_{1-\delta}$ is obtained from the estimated
asymptotic distribution of $\sqrt{n}(\hat{\btau}-\btau)$.
Specifically, it is the $(1-\delta)$ quantile of
$(\bC\bm\phi)'
(\bC\hat\bV_{\btau\btau}^{\perp}\bC')^{-1}
(\bC\bm\phi)$,
where $\bm\phi$ follows the estimated asymptotic distribution.
When $\hat{\bB}$ is rank deficient, 
$\big(\hat\bV_{\btau\btau}^{||}\big)^{1/2}\bm\eta$
is used in place of
$(\bQ_{\hat\bB}^{-1}\hat\bLambda^{1/2}\bm\eta
\mid
\bm\eta'\bomega\bm\eta\leq\xi_{\bomega,\alpha})$.
Let $\bS_{\btau\backslash\bx}  = \bS_{\btau\btau} - \bS_{\btau,\bx} \bS_{\bx\bx}^{-1} \bS_{\bx,\btau}$.
The following theorem establishes the validity of the proposed covariance
estimators and confidence sets.

\begin{theorem}\label{thm:CI2}
Suppose that Assumptions~\ref{assump:V_tautau} and~\ref{assump:regularity_fac_full} hold.
For $\bC$ with full row rank, the covariance estimators in
Eq.~\eqref{eq:variance1} and Eq.~\eqref{eq:variance2} are asymptotically
conservative for $\sqrt{n}\bC\btau$, in the sense that their probability
limits dominate the true asymptotic covariance matrices in the Loewner order.
The confidence set in Eq.~\eqref{eq:CI2} has asymptotic coverage probability
at least $1-\delta$.
Moreover, if $\bS_{\btau\backslash\bx}\to\bZero$, the confidence set is
asymptotically exact.
\end{theorem}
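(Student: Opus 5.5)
The proof has three parts: establish the probability limits of the plug-in quantities, compare them with the true asymptotic covariance from Corollary~\ref{cor:distribution2_fac}, and then transfer the covariance domination to coverage through a peakedness argument.

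\textbf{Step 1: probability limits.} By Lemma~\ref{lem:consistent_fac}, $s_{qq}$, $\bms_{q,\bx}$ and $\bms_{\bx\bx,q}$ are consistent for $S_{qq}$, $\bS_{q,\bx}$ and $\bS_{\bx\bx}$; the within-arm covariate covariance converges to $\bS_{\bx\bx}$ because $\bX$ is fixed and arms are random subsets. Hence:
\begin{itemize}
\item $\hat\bV_{\btau\bx}\to\bV_{\btau\bx}$ and $\hat\bB\to\bB$ in probability.
\item $\hat\bV_{\btau\btau}^{||}\to\bV_{\btau\btau}^{||}$, using $\bV_{\btau\btau}^{||}=\bB'\bV_{\bx\bx}\bB$.
\item Under Assumption~\ref{assump:V_tautau}, $\hat\bB$ has full column rank with probability tending to one. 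Since $\bQ_\bB$ and $\bLambda$ are continuous in $\bV_{\btau\btau}^{||}$ on the nonsingular set, $\hat\bV_{\btau\btau}^{||}[f]\to\bV_{\btau\btau}^{||}[f]$ by the continuous mapping theorem. The rank-deficient branch is therefore asymptotically irrelevant.
\end{itemize}

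For the residual part, use $s_{qq}-\bms_{q,\bx}\bms_{\bx\bx,q}^{-1}\bms_{\bx,q}\to S_{qq\backslash\bx}:=S_{qq}-\bS_{q,\bx}\bS_{\bx\bx}^{-1}\bS_{\bx,q}$. This gives
\[
\operatorname{plim}\hat\bV_{\btau\btau}^{\perp}=\sum_q r_q^{-1}\bA_q\bA_q' S_{qq\backslash\bx}.
\]
I then verify the algebraic identity
\[
\bV_{\btau\btau}^{\perp}=\sum_q r_q^{-1}\bA_q\bA_q'S_{qq\backslash\bx}-\bS_{\btau\backslash\bx}.
\]
This is the key computation, in the spirit of \citet{li2020rerandomization}. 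It amounts to showing that $\bV_{\btau\btau}^{||}$ equals $\sum_q r_q^{-1}\bA_q\bA_q'\bS_{q,\bx}\bS_{\bx\bx}^{-1}\bS_{\bx,q}$ minus the covariate-explained part of $\bS_{\btau\btau}$. To do so, write $Y_i(q)$ as its projection on $\bX_i$ plus a residual and apply formula \eqref{eq:V_fac} separately to the projected and residual potential outcomes; the cross terms vanish by orthogonality. Since $\bS_{\btau\backslash\bx}\ge\bZero$, it follows that $\operatorname{plim}\hat\bV_{\btau\btau}^{\perp}-\bV_{\btau\btau}^{\perp}=\bS_{\btau\backslash\bx}\ge\bZero$.

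\textbf{Step 2: covariance conservativeness.} By Corollary~\ref{cor:PRIASV2_fac}, the true asymptotic covariance under DA-ReO$_\F$ is
\[
\bC\bV_{\btau\btau}^{\perp}\bC'+\sum_f(\rho_nv_{Fp,\alpha}+(1-\rho_n)c_f)\bC\bV_{\btau\btau}^{||}[f]\bC'.
\]
The estimator \eqref{eq:variance1} converges to this plus $\bC\bS_{\btau\backslash\bx}\bC'\ge\bZero$, which is Loewner domination. The DA-ReO$_\FE$ case \eqref{eq:variance2} is the same argument with $c_f=v_{F,\alpha}$, using $\sum_f\bV_{\btau\btau}^{||}[f]=\bV_{\btau\btau}^{||}$.

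\textbf{Step 3: coverage.} By Corollary~\ref{cor:distribution2_fac}, the true limit is
\[
\bm\psi=\bm T+(\bV_{\btau\btau}^{\perp})^{1/2}\bvarepsilon,
\]
where $\bm T$ is the sum of the two independent truncated components, each symmetric and the truncated Gaussians are centrally symmetric and unimodal. The estimated limit is
\[
\bm\phi=\hat{\bm T}+(\hat\bV^{\perp})^{1/2}\bvarepsilon,
\]
with $\hat{\bm T}\to\bm T$ in distribution by Step~1 and $\hat\bV^{\perp}\to\bV_{\btau\btau}^{\perp}+\bS_{\btau\backslash\bx}$. Then $\bm\phi$ is asymptotically distributed as $\bm\psi+\bS_{\btau\backslash\bx}^{1/2}\bm e$ with $\bm e$ an independent Gaussian. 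Adding an independent symmetric Gaussian preserves symmetry and makes the distribution less peaked, so $\bC\bm\psi$ is more peaked than $\bC\bm\phi$. This follows from Anderson's lemma applied conditionally on $\bm\psi$, since the Gaussian density is symmetric and log-concave.

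The acceptance region $\mathcal{O}(\bC\hat\bV_{\btau\btau}^{\perp}\bC',\hat c_{1-\delta})$ is a symmetric convex ellipsoid. Because $\hat c_{1-\delta}$ is its $(1-\delta)$ quantile under $\bC\bm\phi$, its limit $\bm\phi$-probability is $1-\delta$, and therefore its limit $\bC\bm\psi$-probability is at least $1-\delta$. Finally, pass from the estimated matrix and quantile to their probability limits using continuity of the limiting distributions; this needs the quantile to be a continuity point, which holds because the laws have densities. When $\bS_{\btau\backslash\bx}\to\bZero$, $\bm\phi$ and $\bm\psi$ share the same limit, so coverage is exactly $1-\delta$.

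\textbf{Main obstacle.} I expect the main obstacle to be the identity in Step~1 linking $\operatorname{plim}\hat\bV^{\perp}_{\btau\btau}$ to $\bV_{\btau\btau}^{\perp}+\bS_{\btau\backslash\bx}$, which must be checked carefully under the factorial contrast structure. A secondary difficulty is justifying the peakedness comparison when the truncated component $\bm T$ is not Gaussian. I plan to handle this by conditioning on $\bm T$ and applying Anderson's lemma only to the Gaussian part, which requires nothing of $\bm T$ beyond independence.
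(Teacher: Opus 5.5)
Your Steps 1 and 2 follow the paper's argument. Your projection/residual decomposition correctly derives $\bV_{\btau\btau}^{\perp}=\sum_q r_q^{-1}\bA_q\bA_q'S_{qq\backslash\bx}-\bS_{\btau\backslash\bx}$, and domination by $\bC\bS_{\btau\backslash\bx}\bC'\ge\bZero$ follows.

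The gap is the peakedness claim in Step 3. Conditioning on $\bm\psi$ and applying Anderson's lemma to the added Gaussian $\bm g=\bS_{\btau\backslash\bx}^{1/2}\bm e$ only gives $\bbP(\bC\bm\psi+\bC\bm g\in\mathcal K\mid\bm\psi)\le\bbP(\bC\bm g\in\mathcal K)$. That compares $\bC\bm\phi$ with the added noise alone, not with $\bC\bm\psi$. Conditioning on $\bm T$ fails too: the shifted sets $\mathcal K-\bC\bm t$ are not symmetric, and for $\bm t$ far from $\mathcal K$ extra Gaussian noise raises the probability. Your claim that nothing beyond independence is needed of $\bm T$ is false. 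Take $F=1$, $\bm T=\pm10$ with probability $1/2$ each, $\bV_{\btau\btau}^{\perp}=0.01$, $\bS_{\btau\backslash\bx}=100$ and $\mathcal K=[-1,1]$. Then $\bbP(\bm\psi\in\mathcal K)\approx0<\bbP(\bm\phi\in\mathcal K)$. ``Adding independent symmetric noise reduces peakedness'' holds only when the summand you keep is unimodal; properties of the noise do not suffice.

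The missing ingredient is that $\bC\bm\psi$ is centrally symmetric and convex unimodal. Given that, condition on $\bm e$ instead and apply Anderson's theorem to $\bC\bm\psi$: $\bbP(\bC\bm\psi+\bm y\in\mathcal K)\le\bbP(\bC\bm\psi\in\mathcal K)$ for every $\bm y$ and every symmetric convex $\mathcal K$. To establish unimodality, use three facts:
\begin{itemize}
\item A standard Gaussian truncated to a symmetric convex set has a log-concave density. Here the sets are $\{\bm\eta'\bomega\bm\eta\le\xi_{\bomega,\alpha}\}$ and $\{\tilde\bEta'\tilde\bEta\le\xi_{Fp,\alpha}\}$.
\item Log-concavity is preserved by linear maps, including the $\bbR^{Fp}\to\bbR^F$ map $(\bV_{\btau\btau}^{||})^{1/2}_{Fp}$, and by independent sums (Pr\'ekopa--Leindler).
\item If $\bC\bm\psi$ is degenerate, work with central convex unimodality in the sense of \citet{dharmadhikari1988unimodality}.
\end{itemize}
This is the route the paper takes. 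It uses log-concavity of the truncated component and Lemma A8 of \citet{li2020rerandomization} to get central convex unimodality. It then uses their Proposition 2 and Lemmas A6 and A22 to conclude $\mathcal{\bm L}\succ\tilde{\mathcal{\bm L}}$. With this substitution, the rest of your Step 3 goes through: consistency of $\hat c_{1-\delta}$, continuity of the limiting law of the quadratic form, and exactness when $\bS_{\btau\backslash\bx}\to\bZero$.
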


\section{Simulations}\label{sec:simu}
This section evaluates the finite-sample performance of the proposed data-adaptive rerandomization procedures, DA-ReO$_\F$ and DA-ReO$_\FE$, using model-generated and real datasets.

\subsection{Simulation studies with model-generated data}

We first evaluate the finite-sample performance of rerandomization methods using model-generated datasets. 
We consider a $2^K$ factorial experiment with $K=3$, yielding $Q=8$ treatment combinations and $F=7$ factorial effects. 
The covariates are generated as
$\bX_i\sim\mathcal{N}(\bZero_p,\bI_p)$, $p=5$.
To allow for nonlinear effects, we additionally include all pairwise interactions of the covariates, denoted by $\tilde{\bX}_i$, with dimension $\tilde p=p(p-1)/2$.

For each treatment combination $q\in\{1,\ldots,Q\}$, the potential outcomes are generated from
\[
Y_i(q)=a_q+\bm b_q'\bX_i+\tilde{\bm b}_q'\tilde{\bX}_i,\ i=1,\ldots,n,
\]
where $a_q \sim \cN(0,\sigma_a^2)$, $\bm b_q \sim \cN(2\cdot\bOne_p,\bI_p)$, and $\tilde{\bm b}_q \sim \cN(\bZero_{\tilde{p}}, \sigma_b^2 \bI_{\tilde{p}})$.
We set $n=800$ and $r_q=1/8$ for all $q$. 
Two outcome models are considered: a linear model with $\sigma_b^2=0$ and a nonlinear model with $\sigma_b^2=1$. 
The corresponding values of $\sigma_a^2$ are set to $20$ and $8$, respectively, so that the factorial effects have approximately the same level of explained variation, with $R_f^2\approx0.6$. 
Here, $R_f^2$ is computed using only the linear covariates $\bX$, which are also the covariates used in the rerandomization criteria.


We assume that the priority structure of factorial effects is known, whereas the covariate importance structure is unknown and may differ across effects. 
Thus, ReFMT$_\C$ and ReFMT$_\CF$ are excluded because they require a prespecified covariate importance structure. 
We consider two priority settings. 
Under equal priority, we compare DA-ReO$_\FE$ with CRFE and ReFM, using the infeasible ReO$_\FE$ as an oracle benchmark. 
Under unequal priority, where main effects receive higher priority than interactions, we compare DA-ReO$_\F$ with ReFMT$_\F$, using ReO$_\F$ as the oracle benchmark.
For DA-ReO$_\F$, we set
$\bW=\diag\{5,5,5,1,1,1,1\}$ with an asymptotic acceptance probability of $0.05$. 
For ReFMT$_\F$, we divide factorial effects into main and interaction effects and impose separate Mahalanobis-distance-based criteria with asymptotic acceptance probabilities $0.1$ and $0.5$, respectively.
For both data-adaptive procedures, we consider learning-sample proportions $\rho_n\in\{0.2,0.3,0.4\}$.

For each simulation setting, we generate 100 independent datasets, with 1,000 accepted assignments under each rerandomization criterion per dataset. 
We report the averages of the percentage reduction in variance (PRIV), empirical coverage probability (CP) of the 95\% confidence intervals, and confidence interval length (Len) for each factorial effect estimator.  
For ReFM and ReFMT, confidence intervals are constructed using the method of \citet{li2020rerandomization}; for ReO$_\FE$ and ReO$_\F$, they are constructed analogously to those proposed for DA-ReO$_\FE$ and DA-ReO$_\F$ in Section~\ref{sec:CI_fac}.

Table~\ref{tab:simulation} reports the average PRIV across simulation settings, with the largest value among feasible competitors highlighted in bold. 
Overall, the proposed data-adaptive procedures outperform the Mahalanobis-distance-based methods. 
Under equal priority, DA-ReO$_\FE$ achieves larger variance reductions than ReFM, whereas under unequal priority, DA-ReO$_\F$ outperforms ReFMT$_\F$ for both main effects and interactions. 
For both procedures, the largest PRIV is generally attained when $\rho_n$ is around $0.3$. 
As expected, the infeasible oracle criteria ReO$_\F$ and ReO$_\FE$ achieve the best overall performance.
Table B1 in the Supplementary Materials reports the average empirical coverage probabilities of the resulting confidence intervals, all of which attain the nominal 95\% level. 
Table B2 reports the average percentage reduction in confidence interval length relative to ReFM or ReFMT$_\F$. 
The proposed methods produce shorter confidence intervals, with greater reductions for smaller values of $\rho_n$.

{\setlength{\tabcolsep}{4pt}
\spacingset{1}
\small
\begin{longtable}[]{@{}ccccccccc@{}}
\caption{Average PRIV achieved by different rerandomization procedures.}\label{tab:simulation}\tabularnewline
\toprule\noalign{}
& & \multicolumn{3}{c}{Main}& \multicolumn{4}{c}{Interaction} \\
\cmidrule(lr){3-5}
\cmidrule(lr){6-9}
Setting & Scheme & 1 & 2 & 3 & 12 & 13 & 23 & 123 \\
\midrule\noalign{}
\endfirsthead
\toprule\noalign{}
& & \multicolumn{3}{c}{Main}& \multicolumn{4}{c}{Interaction} \\
\cmidrule(lr){3-5}
\cmidrule(lr){6-9}
Setting & Scheme & 1 & 2 & 3 & 12 & 13 & 23 & 123 \\
\midrule\noalign{}
\endhead
\bottomrule\noalign{}
\endlastfoot
\multirow{5}{*}{\shortstack{Linear\\
\hspace{\fill}\\
(same\\
\hspace{\fill}\\
importance)}} & ReO$_\FE$ & 46.84 & 46.84 & 47.18 & 46.84 & 46.08 & 47.10 & 47.06 \\ 
& DA-ReO$_\FE$ ($\rho_n=0.2$) & 35.94 & 35.91 & 36.06 & 36.37 & 35.12 & 36.41 & 36.01 \\ 
& DA-ReO$_\FE$ ($\rho_n=0.3$) & \bf 37.25 & \bf 36.90 & \bf 36.97 & \bf 36.68 & \bf 36.58 & \bf 37.56 & \bf 36.85 \\ 
& DA-ReO$_\FE$ ($\rho_n=0.4$) & 36.81 & 35.69 & 36.37 & 36.35 & 34.77 & 36.14 & 36.11 \\ 
& ReFM & 26.28 & 26.34 & 25.85 & 25.77 & 24.68 & 26.05 & 25.60 \\ 
\hdashline
\multirow{5}{*}{\shortstack{Linear\\
\hspace{\fill}\\
(varying\\
\hspace{\fill}\\
importance)}} & ReO$_\F$ & 53.74 & 54.20 & 53.65 & 34.18 & 34.63 & 35.15 & 37.36 \\ 
& DA-ReO$_\F$ ($\rho_n=0.2$) & 41.02 & 40.72 & 40.76 & 28.83 & 27.87 & 28.87 & 29.42 \\ 
& DA-ReO$_\F$ ($\rho_n=0.3$) & \bf 41.25 & \bf 40.90 & \bf 41.17 & \bf 29.51 & \bf 29.85 & 30.23 & 30.87 \\ 
& DA-ReO$_\F$ ($\rho_n=0.4$) & 40.11 & 39.16 & 39.82 & 29.14 & 28.37 & \bf 30.43 & \bf 30.97 \\ 
& ReFMT$_\F$ & 31.41 & 30.70 & 31.07 & 15.80 & 14.78 & 15.64 & 16.73 \\ 
\hdashline
\multirow{5}{*}{\shortstack{Nonlinear\\
\hspace{\fill}\\
(same\\
\hspace{\fill}\\
importance)}} & ReO$_\FE$ & 45.45 & 45.34 & 46.20 & 45.48 & 46.80 & 45.93 & 45.72 \\ 
& DA-ReO$_\FE$ ($\rho_n=0.2$) & 34.45 & 34.07 & 34.90 & 34.40 & 35.15 & 34.07 & 35.42 \\ 
& DA-ReO$_\FE$ ($\rho_n=0.3$) & \bf 35.08 & \bf 35.57 & \bf 35.43 & \bf 35.28 & \bf 36.46 & \bf 35.11 & \bf 35.99 \\ 
& DA-ReO$_\FE$ ($\rho_n=0.4$) & 34.73 & 34.31 & 35.02 & 34.42 & 35.29 & 34.21 & 35.45 \\ 
& ReFM & 25.14 & 24.11 & 25.54 & 24.76 & 26.14 & 24.60 & 25.19 \\ 
\hdashline
\multirow{5}{*}{\shortstack{Nonlinear\\
\hspace{\fill}\\
(varying\\
\hspace{\fill}\\
importance)}} & ReO$_\F$ & 50.92 & 52.59 & 53.61 & 36.11 & 35.61 & 34.00 & 35.93 \\ 
& DA-ReO$_\F$ ($\rho_n=0.2$) & 38.13 & 38.48 & 39.04 & 28.10 & 28.16 & 27.22 & 27.94 \\ 
& DA-ReO$_\F$ ($\rho_n=0.3$) & \bf 39.07 & \bf 39.42 & \bf 39.67 & 29.08 & \bf 30.36 & \bf 28.24 & \bf 29.09 \\ 
& DA-ReO$_\F$ ($\rho_n=0.4$) & 38.52 & 37.98 & 38.53 & \bf 29.22 & 30.07 & 28.17 & 29.07 \\ 
& ReFMT$_\F$ & 30.20 & 29.33 & 29.53 & 15.71 & 16.31 & 15.10 & 15.59 \\ 
\end{longtable}
}

\subsection{Simulation studies based on a real dataset}

We further evaluate the proposed data-adaptive rerandomization procedures using the Student Achievement and Retention Project (STAR) data \citep{angrist2009incentives}. 
The STAR study was a $2^2$ completely randomized factorial experiment (CRFE) at a Canadian university to evaluate two interventions: the Student Support Program (SSP), which provided academic support services, and the Student Fellowship Program (SFP), which offered scholarships to students who achieved a specified first-year GPA. 
The experiment included four treatment combinations: neither SSP nor SFP, SSP only, SFP only, and both SSP and SFP, with 1,006, 250, 250, and 150 participants, respectively.
The outcome of interest is first-year grade point average (GPA). 
Following \citet{li2018asymptotic}, we use five covariates: high school GPA, gender, age, whether the student lived at home, and whether the student rarely postponed studying for tests. 
After excluding students with missing covariate values, the final sample sizes are 856, 216, 208, and 118 for the four treatment combinations, respectively.

Since only one potential outcome is observed for each student, the missing potential outcomes are imputed. 
For each treatment combination, we fit a linear regression of the observed outcomes on all covariates and their two-way interaction terms, and use the fitted model to impute the missing potential outcomes. 
The imputed potential outcomes are then truncated to the interval $[0,4]$ to ensure feasible GPA values. 
Although the imputation model is linear, the inclusion of covariate interaction terms induces nonlinear relationships with respect to the original covariates.
There are two main effects and one interaction effect in this experiment. 
For the resulting finite population, the squared multiple correlations between the factorial effect estimators and the covariate mean differences are
$(R_1^2,R_2^2,R_3^2)=(0.432,\,0.427,\,0.432)$.

We consider the same two priority settings as in the model-generated-data simulations: equal priority for all factorial effects and higher priority for main effects. 
Accordingly, we compare DA-ReO$_\FE$ with ReFM under equal priority and DA-ReO$_\F$ with ReFMT$_\F$ under main-effect priority. 
The acceptance probabilities, effect weights, and learning-sample proportions are specified as in the model-generated-data simulations, with the infeasible oracle procedures 
ReO$_\FE$ and ReO$_\F$ included for reference.
Table~\ref{tab:realdata} reports the average PRIV, empirical coverage probability (CP), and confidence interval length (Len).
Consistent with the simulation results, the proposed data-adaptive procedures yield greater variance reductions than their Mahalanobis-distance-based counterparts under both priority settings, with $\rho_n=0.3$ generally performing best. 
All methods achieve over 95\% coverage, while the proposed procedures yield shorter confidence intervals.

{\setlength{\tabcolsep}{4pt}
\spacingset{1.2}
\small
\begin{longtable}[!t]{@{}ccccccccccc@{}}
\caption{Average PRIV, empirical coverage probability (CP, \%), and confidence interval length (Len) in the STAR data.}\label{tab:realdata}\tabularnewline
\toprule\noalign{}
& & \multicolumn{6}{c}{Main}& \multicolumn{3}{c}{Interaction} \\
\cmidrule(lr){3-8}
\cmidrule(lr){9-11}
Setting & Scheme & \multicolumn{3}{c}{1} & \multicolumn{3}{c}{2} & \multicolumn{3}{c}{12} \\
\cmidrule(lr){3-5}
\cmidrule(lr){6-8}
\cmidrule(lr){9-11}
& & PRIV & CP & Len & PRIV & CP & Len & PRIV & CP & Len \\ 
\midrule\noalign{}
\endfirsthead
\toprule\noalign{}
& & \multicolumn{6}{c}{Main}& \multicolumn{3}{c}{Interaction} \\
\cmidrule(lr){3-8}
\cmidrule(lr){9-11}
Setting & Scheme & \multicolumn{3}{c}{1} & \multicolumn{3}{c}{2} & \multicolumn{3}{c}{12} \\
\cmidrule(lr){3-5}
\cmidrule(lr){6-8}
\cmidrule(lr){9-11}
& & PRIV & CP & Len & PRIV & CP & Len & PRIV & CP & Len \\ 
\midrule\noalign{}
\endhead
\bottomrule\noalign{}
\endlastfoot
\multirow{8}{*}{\shortstack{
same\\
\hspace{\fill}\\
importance}} &  ReO$_\FE$ & 39.98 & 96.23 & 0.133 & 39.39 & 96.05 & 0.133 & 39.84 & 96.26 & 0.133 \\ 
& \makecell{DA-ReO$_\FE$\\[-4pt] ($\rho_n=0.2$)} & 30.54 & 96.62 & 0.136 & 30.33 & 96.51 & 0.136 & 30.46 & 96.66 & 0.136 \\ 
& \makecell{DA-ReO$_\FE$\\[-4pt]($\rho_n=0.3$)} & \textbf{32.23} & 96.81 & 0.138 & \textbf{31.63} & 96.71 & 0.138 & \textbf{32.85} & 96.83 & 0.138 \\ 
& \makecell{DA-ReO$_\FE$\\ [-4pt]($\rho_n=0.4$)} & 31.81 & 96.98 & 0.139 & 31.40 & 96.87 & 0.139 & 31.89 & 97.01 & 0.139 \\ 
& ReFM & 25.53 & 96.24 & 0.148 & 25.54 & 96.01 & 0.148 & 25.24 & 96.14 & 0.148 \\
\cmidrule(lr){1-11}
\multirow{8}{*}{\shortstack{
varying\\
\hspace{\fill}\\
importance}} &  ReO$_\F$ & 41.46 & 96.27 & 0.132 &41.76 & 96.48 & 0.132 &37.87 & 96.25 & 0.136 \\ 
& \makecell{DA-ReO$_\F$\\ [-4pt]($\rho_n=0.2$)} & 31.39 & 96.76 & 0.136 & 31.06 & 96.91 & 0.135 & 28.68 & 96.58 & 0.139 \\ 
& \makecell{DA-ReO$_\F$\\ [-4pt]($\rho_n=0.3$)} & \textbf{32.79} & 96.95 & 0.137 & \textbf{32.50} & 97.13 & 0.137 & \textbf{30.68} & 96.72 & 0.140 \\ 
& \makecell{DA-ReO$_\F$\\ [-4pt]($\rho_n=0.4$)} & 32.54 & 97.13 & 0.139 & 31.88 & 97.30 & 0.139 & 30.40 & 96.86 & 0.141\\ 
& ReFMT$_\F$ & 26.30 & 96.22 & 0.147 & 26.51 & 96.12 & 0.147 & 22.61 & 96.11 & 0.150 \\ 
\end{longtable}
}

\section{Summary and discussion}\label{sec:summary}

In this paper, we propose a data-adaptive rerandomization procedure for $2^K$ factorial experiments that incorporates both the relative priority of factorial effects and the covariate importance structure for estimating them. 
We first derive an oracle rerandomization criterion and then develop a data-adaptive procedure to approximate it. 
Finite-population asymptotic theory and numerical studies show that the proposed procedure improves upon existing Mahalanobis-distance-based methods.
Our procedure extends the two-stage framework of \citet{liu2025bayesian} from treatment--control experiments to $2^K$ factorial designs. 
Unlike their method, which requires two independent datasets with identical asymptotic properties, ours requires only a single dataset while preserving valid finite-population asymptotic inference.

Several directions merit further investigation. 
First, the current procedure uses a subset of the data to estimate the covariate importance matrix $\bB$. 
Inspired by the ReB criterion for treatment--control experiments \citep{liu2025bayesian}, a richer characterization of the information from the learning subset, such as incorporating the covariance structure of $\bB$, may lead to further gains in estimation precision. 
Second, combining the proposed procedure with regression adjustment \citep{li2020rerandomization} may further improve its finite-sample performance. 
Finally, the optimal choice of the learning-sample proportion remains an important theoretical problem and warrants further study using more refined asymptotic techniques.

\section{Disclosure statement}\label{disclosure-statement}

The authors have no conflicts of interest to declare.

\section{Data Availability Statement}\label{data-availability-statement}

Deidentified data is available at \url{https://www.scidb.cn/en/anonymous/RTdCUjNp}.

\phantomsection\label{supplementary-material}
\bigskip

\begin{center}

{\large\bf SUPPLEMENTARY MATERIAL}

\end{center}

\begin{description}
\item[Supplementary Material for ``Data-Adaptive  Rerandomization for $2^K$ Factorial Designs'':]
This file includes all the technical proofs and additional simulation results. (.pdf file)
\end{description}

\bibliography{bibliography.bib}

\newpage
\appendix

\setcounter{equation}{0}
\renewcommand{\theequation}{A.\arabic{equation}}
\setcounter{theorem}{0}
\renewcommand{\thetheorem}{A\arabic{theorem}}
\setcounter{lemma}{0}
\renewcommand{\thelemma}{A\arabic{lemma}}

\section{Proof}
\subsection{Proof of Theorem \ref{thm:ReOMA}}
\begin{proof}
Under Assumption \ref{assump:regularity_fac}, for $(\bC',\bD')' \sim \mathcal{N}(\bm 0, \bm V)$, similar to Proposition A1 in \cite{li2018asymptotic}, we have
\begin{equation}\label{eq:Asymptotics4Phi}
\left.\begin{pmatrix}
\sqrt{n}(\hat{\btau}-\btau)  \\
\sqrt{n}\hat\btau_\bx
\end{pmatrix}\right|\sqrt{n}\hat\btau_\bx \in \mathcal{B}_{\phi} \overset{\cdot}{\sim} 
\left.\begin{pmatrix}
\bC  \\
\bD 
\end{pmatrix}\right|\bD \in \mathcal{B}_{\phi},\ \forall \phi \in \Phi_\alpha. 
\end{equation}
It is easy to check that the criterion $\phi_\bomega$ satisfies Assumption \ref{assump:criterion}, and thus, $\phi_\bomega \in \Phi_\alpha$.
Moreover, the target in Eq.~\eqref{eq:new_target} is equivalent to 
\[\begin{split}
&\arg\min_{\phi \in \Phi_\alpha} \sum_{f=1}^F \omega_f \cdot \frac{\bbV_a(\tilde{\tau}_{\bx,f} \mid \phi=1)}{\bbV_a(\tilde{\tau}_{\bx,f})}
= \arg\min_{\phi \in \Phi_\alpha} \sum_{f=1}^F \frac{\omega_f}{\lim_n \lambda_f} \cdot \bbV_a(\tilde{\tau}_{\bx,f} \mid \phi=1)\\
=& \arg\min_{\phi \in \Phi_\alpha}  \lim_n \sum_{f=1}^F \frac{\omega_f}{\lambda_f} \cdot \bbV(\be_f'\bQ_\bB \bB' \bD \mid \bD \in \mathcal{B}_{\phi})
\\
=& \arg\min_{\phi \in \Phi_\alpha} 
\bbE\left(\bD'\bB\bQ_\bB'\bLambda^{-1}\bomega\bQ_\bB\bB'\bD \mid \bD \in \mathcal{B}_{\phi}\right),
\end{split}\]
where $\be_f$ is the $f$th canonical basis vector in $\mathbb{R}^F$.
According to Lemma A1 in \cite{liu2025bayesian}, the solution to the above optimization problem is 
\[\begin{split}
\phi_\bomega(\bD,\bV_{\bx\bx},\bB) = 1\left\{\bD'\bB\bQ_\bB'\bLambda^{-1}\bomega\bQ_\bB\bB'\bD \leq \xi_{\bomega,\alpha}\right\}.
\end{split}\]
Therefore, $\phi_\bomega$ is the solution of the target in Eq.~\eqref{eq:new_target}.
Above all, we complete the proof.
\end{proof}

\subsection{Proof of Theorem \ref{thm:AsymptoticDistri}}
\begin{proof}
Under Assumption \ref{assump:regularity_fac} and Assumption \ref{assump:V_tautau}, $\bQ_\bB$ is well-defined.
By Eq.~\eqref{eq:Asymptotics4Phi},
\[\begin{split}
\sqrt{n}(\hat{\btau}-\btau) \mid \phi_{\bomega}=1 &\overset{\cdot}{\sim} (\bC- \bB'\bm D)+ \bB'\bm D\mid \bm D \in \mathcal{B}_{\phi_{\bomega}}\\&\sim (\bC- \bB'\bm D) + \bB'\bD\mid \bD'\bB\bQ_\bB'\bomega\bLambda^{-1}\bQ_\bB\bB'\bD \leq \xi_{\bomega,\alpha}\\
&\sim (\bC- \bB'\bm D) + \bQ_\bB^{-1}\bQ_\bB\bB'\bD\mid \bD'\bB\bQ_\bB'\bomega\bLambda^{-1}\bQ_\bB\bB'\bD \leq \xi_{\bomega,\alpha},
\end{split}\]
where $(\bC',\bm D')' \sim \cN(\bZero,\bV)$, and $(\bC- \bB'\bm D)\sim \cN(\bZero,\bV_{\btau\btau}^{\perp})$
is independent of $\bD$.
Let $\bm\eta = \bLambda^{-1/2}\bQ_\bB \bB'\bD$, then $\bm\eta \sim \cN(\bZero,\bI_F)$, and thus,
\[\begin{split}
\sqrt{n}(\hat{\btau}-\btau) \mid \phi_{\bomega}=1 &\overset{\cdot}{\sim} 
(\bV_{\btau\btau}^{\perp})^{1/2}\bvarepsilon + \bQ_\bB^{-1}\bLambda^{1/2}\bm\eta \mid \bm\eta'\bomega \bm\eta \leq \xi_{\bomega,\alpha},
\end{split}\]
where $\bvarepsilon\sim \cN(\bZero,\bI_F)$ is independent of $\bm\eta$.

Similarly, 
\[\begin{split}
\sqrt{n}(\hat{\btau}-\btau) \mid \tilde\phi_{\bomega}=1 &\overset{\cdot}{\sim} (\bC- \bB'\bm D) + \bB'\bD\mid \bD'\bB\left(\bV_{\btau\btau}^{||}\right)^{-1}\bB'\bD \leq \xi_{F,\alpha}\\
&\sim (\bC- \bB'\bm D) + \left(\bV_{\btau\btau}^{||}\right)^{1/2}\left(\bV_{\btau\btau}^{||}\right)^{-1/2}\bB'\bD\mid \bD'\bB\left(\bV_{\btau\btau}^{||}\right)^{-1}\bB'\bD \leq \xi_{F,\alpha},
\end{split}\]
where $(\bC',\bm D')' \sim \cN(\bZero,\bV)$, $(\bC- \bB'\bm D)\sim \cN(\bZero,\bV_{\btau\btau}^{\perp})$
is independent of $\bD$, and 
$\bB'\bD \sim \cN(\bZero,\bV_{\btau\btau}^{||})$.
Let $\bm\eta = \left(\bV_{\btau\btau}^{||}\right)^{-1/2} \bB'\bD$, then $\bm\eta \sim \cN(\bZero,\bI_F)$, and thus,
\[\begin{split}
\sqrt{n}(\hat{\btau}-\btau) \mid \tilde\phi_{\bomega}=1 &\overset{\cdot}{\sim} 
(\bV_{\btau\btau}^{\perp})^{1/2}\bvarepsilon + \left(\bV_{\btau\btau}^{||}\right)^{1/2}\bm\eta \mid \bm\eta'\bm\eta \leq \xi_{F,\alpha}
\sim (\bV_{\btau\btau}^{\perp})^{1/2}\bvarepsilon + \left(\bV_{\btau\btau}^{||}\right)^{1/2}\bzeta_{F,\alpha}.
\end{split}\]
\end{proof}

\subsection{Proof of Lemma \ref{lem:R2}}
\begin{proof}
By definition, for $f = 1,\ldots,F$,
\[\begin{split}
\bB_f'\hat{\btau}_\bx &= \be_f'\bB'\hat{\btau}_\bx
= \be_f'\bQ_\bB^{-1}\tilde{\btau}_{\bx},
\end{split}\]
where $\be_f$ is the $F$-dimensional vector with its $f$th entry equal to one and all other entries equal to zero.
Since $\bI = \sum_{i=1}^F \be_i\be_i'$, then 
\[\begin{split}
\bB_f'\hat{\btau}_\bx
= \be_f'\bQ_\bB^{-1}\sum_{i=1}^F \be_i\be_i'\tilde{\btau}_{\bx}=\sum_{i=1}^F \be_f'\bQ_\bB^{-1}\be_i\be_i'\tilde{\btau}_{\bx}=\sum_{i=1}^F \be_f'\bQ_\bB^{-1}\be_i\tilde{\tau}_{\bx,i}.
\end{split}\]
Since $\bQ_\bB$ is a lower triangular matrix, then 
$\bQ_\bB^{-1}$ is also a lower triangular matrix, and thus, $\be_f'\bQ_\bB^{-1}\be_i=0$ for $i>f$.
Hence, 
\[\begin{split}
\bB_f'\hat{\btau}_\bx
=\sum_{i=1}^f \be_f'\bQ_\bB^{-1}\be_i\tilde{\tau}_{\bx,i}.
\end{split}\]
Therefore,
$\cov(\hat{\tau}_f,\tilde{\tau}_{\bx,j}) = \cov(\bB_f'\hat{\btau}_\bx,\tilde{\tau}_{\bx,j}) = 0$ for $j > f$.
Thus,
$R^2_f[j]=0$ for $j > f$.
Moreover,
\[\begin{split}
\bbV(\bB_f'(\sqrt{n}\hat{\btau}_\bx)) &= 
\sum_{i=1}^f (\be_f'\bQ_\bB^{-1}\be_i)^2\cdot\bbV(\sqrt{n}\tilde{\tau}_{\bx,i}) = \sum_{i=1}^f (\be_f'\bQ_\bB^{-1}\be_i)^2\cdot \lambda_i.
\end{split}\]
Since for $1\leq j\leq f$, 
\[\begin{split}
\cov(\bB_f'(\sqrt{n}\hat{\btau}_\bx),\sqrt{n}\tilde{\tau}_{\bx,j}) &= \be_f'\bQ_\bB^{-1}\be_j\cdot \lambda_j,
\end{split}\] 
we have 
\[\begin{split}
\bbV(\sqrt{n}\bB_f'\hat{\btau}_\bx) &= \sum_{j=1}^f (\be_f'\bQ_\bB^{-1}\be_j)^2\cdot \lambda_j = \sum_{j=1}^f\frac{(\cov(\sqrt{n}\bB_f'\hat{\btau}_\bx,\sqrt{n}\tilde{\tau}_{\bx,j}))^2}{\lambda_j}\\ &= \sum_{j=1}^f\frac{(\cov(\sqrt{n}\bB_f'\hat{\btau}_\bx,\sqrt{n}\tilde{\tau}_{\bx,j}))^2}{\bbV(\sqrt{n}\tilde{\tau}_{\bx,j})}.
\end{split}\]
Therefore, $
\sum_{j=1}^f \frac{(\cov(\bB_f'(\sqrt{n}\hat{\btau}_\bx),\sqrt{n}\tilde{\tau}_{\bx,j}))^2}{\bbV(\bB_f'(\sqrt{n}\hat{\btau}_\bx))\bbV(\sqrt{n}\tilde{\tau}_{\bx,j})}= 1$.
By definition, we have 
\[\begin{split}
\sum_{j=1}^f R_{f}^2[j] 
&= \sum_{j=1}^f \frac{(\cov(\bB_f'\hat{\btau}_\bx,\tilde{\tau}_{\bx,j}))^2}{
\bbV(\bB_f'\hat{\btau}_\bx)\bbV(\tilde{\tau}_{\bx,j})}\cdot R^2_f
= R^2_f.
\end{split}\]
\end{proof}

\subsection{Proof of Lemma \ref{lem:cf}}
\begin{proof}
First, to show that $c_1\leq\ldots\leq c_F$ for $\omega_1\geq\ldots\geq \omega_F\geq 0$, it suffices to show that for $\omega_1,\ldots,\omega_F \geq 0$, if $\omega_i \geq \omega_k$, then $$\bbE\left[\eta_i^2 \left| \sum_{j=1}^F \omega_j\eta_j^2 \leq a\right.\right] \leq \bbE\left[\eta_k^2 \left| \sum_{j=1}^F \omega_j\eta_j^2 \leq a\right.\right],$$
where $\eta_1,\ldots,\eta_F$ are i.i.d. standard normal random variables.
By the law of iterated expectation, 
\[\begin{split}
&\bbE\left[\eta_i^2 \left| \sum_{j=1}^F \omega_j\eta_j^2 \leq a\right.\right]\\ 
=& \bbE\left[\bbE\left[\eta_i^2 \left| \left.\sum_{j=1}^F \omega_j\eta_j^2 \leq a, \sum_{j\not= i,k} \omega_j\eta_j^2\right.\right]\right|\sum_{j=1}^F \omega_j\eta_j^2 \leq a\right]\\
=&\int\bbE\left[\eta_i^2 \left| \sum_{j=1}^F \omega_j\eta_j^2 \leq a, \sum_{j\not= i,k} \omega_j\eta_j^2 = m\right.\right]
f_{\sum_{j\not= i,k} \omega_j\eta_j^2 \mid \sum_{j=1}^F \omega_j\eta_j^2 \leq a}(m)
dm
\\
=&\int\bbE\left[\eta_i^2 \left| \omega_i\eta_i^2 + \omega_k\eta_k^2  \leq a-m, \sum_{j\not= i,k} \omega_j\eta_j^2 = m\right.\right]
f_{\sum_{j\not= i,k} \omega_j\eta_j^2 \mid \sum_{j=1}^F \omega_j\eta_j^2 \leq a}(m) dm
\\
=&\int\bbE\left[\eta_i^2 \left| \omega_i\eta_i^2 + \omega_k\eta_k^2  \leq a-m\right.\right]
f_{\sum_{j\not= i,k} \omega_j\eta_j^2 \mid \sum_{j=1}^F \omega_j\eta_j^2 \leq a}(m) dm.
\end{split}\]
It suffices to show that for any constant $c \geq 0$, 
$$\bbE\left[\eta_i^2 \left| \omega_i\eta_i^2 + \omega_k\eta_k^2 \leq c\right.\right] \leq \bbE\left[\eta_k^2 \left| \omega_i\eta_i^2 + \omega_k\eta_k^2 \leq c\right.\right].$$
Since 
\[\begin{split}
\bbE\left[\eta_i^2 \mid \omega_i\eta_i^2 + \omega_k\eta_k^2 \leq c\right] = \int_0^\infty 1-\bbP(\eta_i^2 \leq t \mid \omega_i\eta_i^2 + \omega_k\eta_k^2 \leq c) dt,
\end{split}\]
and 
\[\begin{split}
\bbE\left[\eta_k^2 \mid \omega_i\eta_i^2 + \omega_k\eta_k^2 \leq c\right]&= \bbE\left[\eta_i^2 \mid \omega_k\eta_i^2 + \omega_i\eta_k^2 \leq c\right]
\\&= \int_0^\infty 1-\bbP(\eta_i^2 \leq t \mid \omega_k\eta_i^2 + \omega_i\eta_k^2 \leq c) dt,
\end{split}\]
it suffices to show that for any $t,c\geq 0$, 
\[\begin{split}
\bbP(\eta_i^2 \leq t \mid \omega_i\eta_i^2 + \omega_k\eta_k^2 \leq c) \geq \bbP(\eta_i^2 \leq t \mid \omega_k\eta_i^2 + \omega_i\eta_k^2 \leq c).
\end{split}\]
Since
\[\begin{split}
\bbP(\omega_i\eta_i^2 + \omega_k\eta_k^2 \leq c) = \bbP(\omega_k\eta_i^2 + \omega_i\eta_k^2 \leq c),
\end{split}\]
it suffices to show that for any $t,c\geq 0$, 
\[\begin{split}
\bbP(\eta_i^2 \leq t, \omega_i\eta_i^2 + \omega_k\eta_k^2 \leq c) \geq \bbP(\eta_i^2 \leq t, \omega_k\eta_i^2 + \omega_i\eta_k^2 \leq c).
\end{split}\]
Since
\[\begin{split}
\bbP(\eta_i^2 \leq t,\eta_k^2 \leq t, \omega_i\eta_i^2 + \omega_k\eta_k^2 \leq c) = \bbP(\eta_i^2 \leq t,\eta_k^2 \leq t, \omega_k\eta_i^2 + \omega_i\eta_k^2 \leq c),
\end{split}\]
it suffices to show that for any $t,c\geq 0$, 
\begin{equation}\label{eq:comparisonEq}
\bbP(\eta_i^2 \leq t,\eta_k^2 > t, \omega_i\eta_i^2 + \omega_k\eta_k^2 \leq c) \geq \bbP(\eta_i^2 \leq t,\eta_k^2 > t, \omega_k\eta_i^2 + \omega_i\eta_k^2 \leq c).
\end{equation}
When $\eta_i^2 \leq t,\eta_k^2 > t$, we have 
\[\begin{split}
(\omega_i\eta_i^2 + \omega_k\eta_k^2)-(\omega_k\eta_i^2 + \omega_i\eta_k^2) = (\omega_i-\omega_k)(\eta_i^2-\eta_k^2) \leq 0,
\end{split}\]
and thus, 
\[\begin{split}
\{\eta_i^2 \leq t,\eta_k^2 > t, \omega_k\eta_i^2 + \omega_i\eta_k^2 \leq c\} \subseteq \{\eta_i^2 \leq t,\eta_k^2 > t, \omega_i\eta_i^2 + \omega_k\eta_k^2 \leq c\}.
\end{split}\]
Hence, Eq.~\eqref{eq:comparisonEq} holds and we complete the proof.
Therefore, for $\omega_1\geq \ldots \geq \omega_F$, we have $c_1 \leq \ldots\leq c_F$.

Next, we show that $c_f \leq 1$ for $f=1,\ldots,F$.
If $W_f = 0$, then $c_f = 1$.
Otherwise,
by definition,
\[\begin{split}
c_f &= \bbE\left[\eta_f^2 \mid \sum_{i=1}^F \omega_i \cdot \eta_i^2 \leq \xi_{\bomega,\alpha}\right]\\
&=  \bbE\left[\left.\bbE\left[\eta_f^2 \mid \sum_{i=1}^F \omega_i \cdot \eta_i^2 \leq \xi_{\bomega,\alpha},\sum_{i \neq f} \omega_i \cdot \eta_i^2\right]\right|\sum_{i=1}^F \omega_i \cdot \eta_i^2 \leq \xi_{\bomega,\alpha}\right]\\
&= \bbE\left[\left.\bbE\left[\eta_f^2 \mid \eta_f^2 \leq \frac{\xi_{\bomega,\alpha} - \sum_{i \neq f} \omega_i \cdot \eta_i^2}{\omega_f},\sum_{i \neq f} \omega_i \cdot \eta_i^2\right]\right|\sum_{i=1}^F \omega_i \cdot \eta_i^2 \leq \xi_{\bomega,\alpha}\right].
\end{split}\]
Since $\bbE[\eta_f^2 \mid \eta_f^2 \leq u] \leq 1$ for any $u\geq 0$,
then we have $c_f \leq 1$.

According to the proof of Theorem 3 in  \cite{lu2023design}, for $\alpha \rightarrow 0$, when $w_1,\ldots,w_F > 0$,
\[\begin{split}
c_f = p_F \omega_f^{-1} \left(\prod_{j=1}^F \omega_j\right)^{1/F}\alpha^{2/F}(1+o(1)),
\end{split}\]
where $p_F$ is a constant dependent on $F$ only. 
When $\omega_1=\ldots=\omega_F=1$, we have 
$c_f = v_{F,\alpha} = p_F\alpha^{2/F}(1+o(1))$ for all $f=1,\ldots,F$.
Hence, 
\[\begin{split}
c_f = \omega_f^{-1} \left(\prod_{j=1}^F \omega_j\right)^{1/F} v_{F,\alpha}(1+o(1)).
\end{split}\]
\end{proof}

\subsection{Proof of Theorem \ref{thm:PRIASV}}
\begin{proof}
Theorem \ref{thm:AsymptoticDistri} implies that the asymptotic covariance of $\sqrt{n}(\hat{\btau}-\btau)$ under $\phi_{\bomega}$ is 
\[\begin{split}
\lim_{n\to\infty} \left(\bV_{\btau\btau}^{\perp} +  \bQ_\bB^{-1}\bC\bLambda(\bQ_\bB^{-1})'\right),
\end{split}\]
where $\bC=\text{diag}\{c_1,\ldots,c_F\}$.
Since $\bQ_\bB\bB'\bV_{\bx\bx}\bB\bQ_\bB' = \bLambda$, we have $\bB'\bV_{\bx\bx}\bB = \bQ_\bB^{-1}\bLambda(\bQ_\bB^{-1})'$. Hence, 
\[\begin{split}
\bQ_\bB^{-1}\bC\bLambda(\bQ_\bB^{-1})' &= (\bB'\bV_{\bx\bx}\bB)(\bB'\bV_{\bx\bx}\bB)^{-1}\bQ_\bB^{-1}\bC\bLambda(\bQ_\bB^{-1})'(\bB'\bV_{\bx\bx}\bB)^{-1}(\bB'\bV_{\bx\bx}\bB)\\&= (\bB'\bV_{\bx\bx}\bB)\bQ_\bB'\bLambda^{-1}\bC\bQ_\bB(\bB'\bV_{\bx\bx}\bB).
\end{split}\]
Denote $(\bB'\bV_{\bx\bx}\bB)\bQ_\bB'$ as $(\bm m_1,\ldots,\bm m_F)$.
Since $(\bB'\bV_{\bx\bx}\bB)\bQ_\bB' = \cov(\bB'(\sqrt{n}\hat{\btau}_\bx),\bQ_\bB\bB'(\sqrt{n}\hat{\btau}_\bx)) = \cov(\bB'(\sqrt{n}\hat{\btau}_\bx),\sqrt{n}\Tilde{\btau}_\bx)$, then $\bm m_f = \cov(\bB'(\sqrt{n}\hat{\btau}_\bx),\sqrt{n}\tilde{\tau}_{\bx,f})=\cov(\sqrt{n}(\hat{\btau}-\btau),\sqrt{n}\tilde{\tau}_{\bx,f})$. Then 
\[\begin{split}
\bQ_\bB^{-1}\bC\bLambda(\bQ_\bB^{-1})' &= \sum_{f=1}^F \frac{c_f}{\lambda_f} \cdot\bm m_f \bm m_f'\\ &= \sum_{f=1}^F c_f \cdot \cov(\sqrt{n}(\hat{\btau}-\btau),\sqrt{n}\tilde{\tau}_{\bx,f})(\bbV(\sqrt{n}\tilde{\tau}_{\bx,f}))^{-1}\cov(\sqrt{n}\tilde{\tau}_{\bx,f},\sqrt{n}(\hat{\btau}-\btau)),
\end{split}\]
where $\cov(\sqrt{n}(\hat{\btau}-\btau),\sqrt{n}\tilde{\tau}_{\bx,f})(\bbV(\sqrt{n}\tilde{\tau}_{\bx,f}))^{-1}\cov(\sqrt{n}\tilde{\tau}_{\bx,f},\sqrt{n}(\hat{\btau}-\btau))$ is the covariance of $\sqrt{n}(\hat{\btau}-\btau)$ explained by $\sqrt{n}\tilde{\tau}_{\bx,f}$, which equals to $\bV_{\btau\btau}^{||}[f]$. Then we have 
\[\begin{split}
\bQ_\bB^{-1}\bC\bLambda(\bQ_\bB^{-1})' &= \sum_{f=1}^F c_f \cdot\bV_{\btau\btau}^{||}[f].
\end{split}\]
Since the asymptotic covariance of $\sqrt{n}(\hat{\btau}-\btau)$ under CRFE is $\lim_{n\to\infty}\bV_{\btau\btau}$, the reduction is 
\[\begin{split}
\lim_{n\to\infty} \left(\bV_{\btau\btau}^{||} - \sum_{f=1}^F c_f \cdot\bV_{\btau\btau}^{||}[f]\right)
&= \lim_{n\to\infty} \left(\sum_{f=1}^F (1-c_f) \cdot\bV_{\btau\btau}^{||}[f]\right).
\end{split}\]

%
Since
\[\begin{split}
&\be_f'\left[\sum_{j=1}^F (1-c_j) \cdot\bV_{\btau\btau}^{||}[j]\right]\be_f\\
=&\sum_{j=1}^F (1-c_j) \cdot \be_f'\cov(\sqrt{n}(\hat{\btau}-\btau),\sqrt{n}\tilde{\tau}_{\bx,j})(\bbV(\sqrt{n}\tilde{\tau}_{\bx,j}))^{-1}\cov(\sqrt{n}\tilde{\tau}_{\bx,j},\sqrt{n}(\hat{\btau}-\btau))\be_f\\ 
=& \sum_{j=1}^F (1-c_j) \cdot \cov(\sqrt{n}(\hat{\tau}_f-\tau_f),\sqrt{n}\tilde{\tau}_{\bx,j})(\bbV(\sqrt{n}\tilde{\tau}_{\bx,j}))^{-1}\cov(\sqrt{n}\tilde{\tau}_{\bx,j},\sqrt{n}(\hat{\tau}_f-\tau_f))\\=& \sum_{j=1}^F (1-c_j) \cdot R^2_f[j]\cdot \bbV(\sqrt{n}(\hat{\tau}_f-\tau_f)),
\end{split}\]
the reduction in asymptotic sampling variance of $\sqrt{n}(\hat{\tau}_f-\tau_f)$ is 
\[\begin{split}
\lim_{n\to\infty} \sum_{j=1}^F (1-c_j) \cdot R^2_f[j]\cdot \bbV_a(\sqrt{n}(\hat{\tau}_f-\tau_f)).
\end{split}\]
Hence, the PRIASV of $\sqrt{n}(\hat{\tau}_f-\tau_f)$ under $\phi_{\bomega}$ is 
\[\begin{split}
100 \times \left[1-
\frac{\bbV_a(\sqrt{n}(\hat{\tau}_f-\tau_f)\mid \phi_\bomega = 1)}{\bbV_a(\sqrt{n}(\hat{\tau}_f-\tau_f))}\right]
= \lim_{n\to\infty} 100 \times \sum_{j=1}^F (1-c_j) \cdot R^2_f[j].
\end{split}\]
By Lemma \ref{lem:R2}, $R^2_f[j] = 0$ for $j > f$, the PRIASV is 
$\lim_{n\to\infty} 100 \times \sum_{j=1}^f (1-c_j) \cdot R^2_f[j]$.

When $\omega_1=\ldots=\omega_F$, we have $c_1=\ldots=c_F = v_{F,\alpha}$, and the results for $\tilde{\phi}_{\bomega}$ is trivial.
\end{proof}

\subsection{Proof of Corollary  \ref{cor:LowerBound}}
\begin{proof}
Lemma \ref{lem:cf} shows that for $\omega_1\geq \ldots \geq \omega_f$, we have 
$c_1\leq \ldots \leq c_F$, and thus, 
\[\begin{split}
\sum_{j=1}^f (1-c_j)R^2_f[j]\geq (1-c_f) \sum_{j=1}^f R^2_f[j] = (1-c_f) R^2_f,
\end{split}\]
where the last equation comes from Lemma \ref{lem:R2}.
\end{proof}


\subsection{Proof of Theorem \ref{thm:peakedness}}
\begin{proof}
We first prove the following two lemmas.
\begin{lemma}\label{lem:peakedlemma}
Let $\bEta \sim \mathcal{N}(\bZero,\bI_{m_1})$ and $\bxi \sim \mathcal{N}(\bZero,\bI_{m_2})$ be independent, then $(\bEta\mid \bEta'\bEta\leq \xi_{m_1,\alpha}) \succ (\bEta\mid \bEta'\bEta+\bxi'\bxi\leq \xi_{m_1+m_2,\alpha})$.
\end{lemma}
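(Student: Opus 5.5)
The claim is that $\bEta\mid \bEta'\bEta\le \xi_{m_1,\alpha}$ is more peaked than the first $m_1$ coordinates of $(\bEta,\bxi)$ conditioned on $\bEta'\bEta+\bxi'\bxi\le \xi_{m_1+m_2,\alpha}$. We fix a symmetric convex set $\mathcal{K}\subset\bbR^{m_1}$ and compare the probabilities that each random vector lies in $\mathcal{K}$. The plan is to condition on the radius.

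Both distributions are spherically symmetric in $\bbR^{m_1}$. The first is uniform in direction, with $\bEta'\bEta$ distributed as $\chi^2_{m_1}$ truncated to $[0,\xi_{m_1,\alpha}]$. For the second, the direction of $\bEta$ is independent of $(\|\bEta\|,\bxi)$, so conditioning on the ball leaves it uniform. Hence both are scale mixtures of the uniform distribution on the unit sphere $\mathbb{S}^{m_1-1}$. Write $h(r)=\bbP(r\bm U\in\mathcal{K})$ with $\bm U$ uniform on the sphere. Since $\mathcal{K}$ is convex and symmetric, it contains $\bZero$, so $r\bm u\in\mathcal{K}$ implies $r'\bm u\in\mathcal{K}$ for all $r'\le r$. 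Therefore $h$ is nonincreasing in $r$. It then suffices to show that the radius $R_1=\|\bEta\|$ under the first law is stochastically smaller than the radius $R_2$ under the second. Indeed, $\bbE h(R_1)\ge \bbE h(R_2)$ holds whenever $R_1\preceq_{\rm st}R_2$ and $h$ is nonincreasing.

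The main step, and the expected obstacle, is the stochastic ordering $T_1\preceq_{\rm st}T_2$, where $T_1=\chi^2_{m_1}\mid \chi^2_{m_1}\le a$ with $a=\xi_{m_1,\alpha}$, and $T_2=A\mid A+B\le b$ with $A\sim\chi^2_{m_1}$ and $B\sim\chi^2_{m_2}$ independent, $b=\xi_{m_1+m_2,\alpha}$. Note $\bbP(A\le a)=\bbP(A+B\le b)=\alpha$. I would compare the densities directly. The density of $T_1$ is $g(t)1\{t\le a\}/\alpha$, where $g$ is the $\chi^2_{m_1}$ density. The density of $T_2$ is $g(t)G_B(b-t)1\{t\le b\}/\alpha$, where $G_B$ is the $\chi^2_{m_2}$ distribution function. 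The ratio of the $T_2$ density to the $T_1$ density is $G_B(b-t)$ on $[0,a]$, which is nonincreasing in $t$. Beyond $a$, the $T_1$ density is zero while the $T_2$ density is positive up to $b$.

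So the ratio is nondecreasing overall in $t$: decreasing on $[0,a]$ would seem the wrong direction. I therefore instead use the CDF crossing argument. Let $D(t)=\bbP(T_1\le t)-\bbP(T_2\le t)$. On $[0,a]$, its derivative is $g(t)(1-G_B(b-t))/\alpha\ge0$, so $D$ is increasing from $D(0)=0$. On $[a,\infty)$, $D$ is nonincreasing, since the $T_1$ CDF equals $1$ there while the $T_2$ CDF keeps increasing, and $D(\infty)=0$. Hence $D\ge0$ everywhere, which is exactly $T_1\preceq_{\rm st}T_2$.

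Combining the stochastic ordering of radii with the monotonicity of $h$ gives $\bbP\bigl((\bEta\mid\bEta'\bEta\le\xi_{m_1,\alpha})\in\mathcal{K}\bigr)\ge \bbP\bigl((\bEta\mid\bEta'\bEta+\bxi'\bxi\le\xi_{m_1+m_2,\alpha})\in\mathcal{K}\bigr)$ for every symmetric convex $\mathcal{K}$. This is the claimed peakedness. One technical point needs checking: the conditional direction must remain uniform and independent of the radius in the second law. This follows from the rotational invariance of $\bEta$, since the conditioning event depends on $\bEta$ only through $\|\bEta\|$.
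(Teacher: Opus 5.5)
Your proof is correct, and its skeleton matches the paper's. Both reduce the claim to a nonincreasing function of the radius ($h$ here, $G$ in the paper). This uses two facts: a symmetric convex set is star-shaped about $\bZero$, and the direction of $\bEta$ stays uniform and independent of $\|\bEta\|$ under either conditioning. You justify the second fact explicitly; the paper uses it tacitly.

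The routes diverge at the radius comparison. The paper avoids densities altogether. With $E_1=\{\bEta'\bEta\le a_1\}$ and $E_2=\{\bEta'\bEta+\bxi'\bxi\le a_2\}$, it splits both conditional expectations over $E_1\cap E_2$. It observes that $\bbP(E_1)=\bbP(E_2)=\alpha$ forces identical weights. It then compares the remainders on $E_1\setminus E_2$ (radius $\le\sqrt{a_1}$) and $E_2\setminus E_1$ (radius $>\sqrt{a_1}$) against the single value $G(\sqrt{a_1})$. You instead write down the two radius densities and prove $T_1\preceq_{\mathrm{st}}T_2$ by comparing CDFs.

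The content is the same: the paper's exchange argument applied to $G(r)=1\{r\le t\}$ is exactly your CDF inequality. The paper's version is shorter and formula-free. Yours states the stochastic ordering of the radii as an explicit, reusable fact.

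Two clean-ups:
\begin{itemize}
\item Delete the remark that the density ratio is ``nondecreasing overall''; it is false. The ratio equals $G_B(b-t)$, which is nonincreasing on $[0,a]$, and it is infinite on $(a,b)$. So there is no likelihood-ratio ordering. Your final argument does not rely on this remark.
\item Your CDF step needs only $G_B\le 1$, not its monotonicity. That bound already gives $F_{T_2}\le F_{T_1}$ on $[0,a]$, while $F_{T_1}=1$ on $[a,\infty)$.
\end{itemize}
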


\begin{lemma}\label{lem:ReOisMorePeaked}
Let $\bV_1$ be an $m\times m$ positive semi-definite matrix, $\bvarepsilon \sim \mathcal{N}(\bm 0, \bI_m)$ be a standard Gaussian random vector, $\bzeta_{m,\alpha} \sim \bD_1|\bD_1'\bD_1 \leq \xi_{m,\alpha}$, $\bzeta_{k,\alpha} \sim \bD_2|\bD_2'\bD_2 \leq \xi_{k,\alpha}$, where $k \geq m$, $\bD_1 \sim \mathcal{N}(\bm 0,\bI_m)$, and $\bD_2 \sim \mathcal{N}(\bm 0,\bI_k)$.
Then $(\bV_1)_m^{1/2}\bzeta_{m,\alpha} \succ (\bV_1)_k^{1/2}\bzeta_{k,\alpha}$.
\end{lemma}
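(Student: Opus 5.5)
The plan is to reduce the comparison to Lemma~\ref{lem:peakedlemma}. I would do this by writing both square-root factors in terms of the symmetric root $\bV_1^{1/2}$, using the rotational invariance of the truncated Gaussians, and then using the fact that peakedness survives linear maps.

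\emph{Step 1 (normalizing the factors).} Let $r=\mathrm{rank}(\bV_1)$ and write $\bM=(\bV_1)_k^{1/2}\in\bbR^{m\times k}$, so that $\bM\bM'=\bV_1$.
\begin{itemize}
\item Take a compact SVD $\bM=\bm P\bD\bm R'$, where $\bm P\in\bbR^{m\times r}$ and $\bm R\in\bbR^{k\times r}$ have orthonormal columns and $\bD$ is diagonal positive. Then $\bV_1^{1/2}=\bm P\bD\bm P'$.
\item Pick $\bm P_\perp\in\bbR^{m\times(m-r)}$ completing $\bm P$ to an orthogonal basis of $\bbR^m$.
\item Pick $\bm R_\perp\in\bbR^{k\times(m-r)}$ with orthonormal columns orthogonal to $\bm R$. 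This is possible because $k\ge m$ gives $k-r\ge m-r$.
\item Set $\bU=\bm P\bm R'+\bm P_\perp\bm R_\perp'$. This $\bU\in\bbR^{m\times k}$ satisfies $\bU\bU'=\bI_m$ and $\bV_1^{1/2}\bU=\bm P\bD\bm R'=\bM$.
\end{itemize}
The same construction with $k=m$ gives $(\bV_1)_m^{1/2}=\bV_1^{1/2}\bm O$ for an orthogonal $\bm O$.

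\emph{Step 2 (rotational invariance).} Both $\bzeta_{m,\alpha}$ and $\bzeta_{k,\alpha}$ are spherically symmetric, being standard Gaussians conditioned on a ball.
\begin{itemize}
\item From this, $\bm O\bzeta_{m,\alpha}\sim\bzeta_{m,\alpha}$.
\item Complete $\bU$ to an orthogonal $k\times k$ matrix $\tilde{\bm O}=(\bU',\bU_\perp')'$. Then $\tilde{\bm O}\bzeta_{k,\alpha}\sim\bzeta_{k,\alpha}$, so $\bU\bzeta_{k,\alpha}$ has the law of the first $m$ coordinates of $\bzeta_{k,\alpha}$.
\item Splitting $\bD_2=(\bEta',\bxi')'$ with $\bEta\in\bbR^m$ and $\bxi\in\bbR^{k-m}$, this law is exactly $(\bEta\mid\bEta'\bEta+\bxi'\bxi\le\xi_{k,\alpha})$.
\end{itemize}
Hence $(\bV_1)_m^{1/2}\bzeta_{m,\alpha}\sim\bV_1^{1/2}\bzeta_{m,\alpha}$ and $(\bV_1)_k^{1/2}\bzeta_{k,\alpha}\sim\bV_1^{1/2}(\bEta\mid\bEta'\bEta+\bxi'\bxi\le\xi_{k,\alpha})$.

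\emph{Step 3 (transfer of peakedness).} Lemma~\ref{lem:peakedlemma}, applied with $m_1=m$ and $m_2=k-m$, gives $\bzeta_{m,\alpha}\succ(\bEta\mid\bEta'\bEta+\bxi'\bxi\le\xi_{k,\alpha})$. When $k=m$ the two laws coincide and the claim is trivial.

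Peakedness is preserved under any fixed linear map $\bm L$. If $\mathcal K$ is symmetric and convex, then $\bm L^{-1}(\mathcal K)=\{\bu:\bm L\bu\in\mathcal K\}$ is also symmetric and convex. So $\bbP(\bm L\bm\phi\in\mathcal K)=\bbP(\bm\phi\in\bm L^{-1}(\mathcal K))\ge\bbP(\bm\psi\in\bm L^{-1}(\mathcal K))=\bbP(\bm L\bm\psi\in\mathcal K)$. Symmetry of the mapped vectors is immediate. Applying this with $\bm L=\bV_1^{1/2}$ yields the claim.

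The main obstacle is Step 1 when $\bV_1$ is singular. A factor $(\bV_1)_k^{1/2}$ is then not simply $\bV_1^{1/2}$ times a matrix with orthonormal rows, and the explicit partial-isometry completion above, which uses $k\ge m$, is what is needed to make the reduction work.
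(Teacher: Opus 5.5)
Your proof is correct and follows essentially the paper's route: use rotational invariance to write both vectors as the same linear image of $\bzeta_{m,\alpha}$ and of the first $m$ coordinates of $\bzeta_{k,\alpha}$, apply Lemma~\ref{lem:peakedlemma}, and push peakedness through a fixed linear map. The only difference is in the invariance step. You construct the partial isometry $\bU$ explicitly, which handles singular $\bV_1$ transparently. The paper instead eigendecomposes $\bV_1=\bGamma\bDelta^2\bGamma'$ and cites Lemma A9 of \citet{li2020rerandomization}, namely that the law of $\bA\bzeta$ depends only on $\bA\bA'$, to replace the factor by $(\bDelta,\bZero)$.
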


\begin{proof}[Proof of Lemma \ref{lem:peakedlemma}]
Let $a_1 = \xi_{m_1,\alpha}$, and $a_2 = \xi_{m_1+m_2,\alpha}$.
For any convex symmetric set $\mathcal{K} \subset \mathbb{R}^{m_1}$, $\bvarepsilon \sim \mathcal{N}(\bm 0,\bI_{m_1})$, let $||\bvarepsilon||_2 = (\bvarepsilon'\bvarepsilon)^{1/2}$, and $G(r) = \bbP(\bvarepsilon\in \mathcal{K}\mid ||\bvarepsilon||_2 = r)$. Let $\bvarphi$ be a random vector uniformly distributed on $m_1$ dimensional unit sphere, then $G(r)$ can be simplified as 
\[\begin{split}
    G(r) =\bbP\left(||\bvarepsilon||_2\cdot \frac{\bvarepsilon}{||\bvarepsilon||_2} \in \mathcal{K}\mid ||\bvarepsilon||_2 = r\right) = \bbP(r\bvarphi \in \mathcal{K}) = \bbP(\bvarphi \in r^{-1}\mathcal{K}),
\end{split}\]
where $r^{-1}\mathcal{K} = \{r^{-1}\bx:\bx \in \mathcal{K}\}$. According to the proof of Lemma A13 in \cite{li2020rerandomization},  $G(r)$ is nonincreasing in $r \in [0,\infty)$. Hence, 
\[\begin{split}
    \bbP(\bEta \in \mathcal{K}\mid \bEta'\bEta\leq a_1) &= \bbE\left\{\bbP(\bEta \in \mathcal{K}\mid \bEta'\bEta\leq a_1,||\bEta||_2)\mid \bEta'\bEta\leq a_1\right\}\\
    &= \bbE\left\{G(||\bEta||_2)\mid \bEta'\bEta\leq a_1\right\},\\
    \bbP(\bEta \in \mathcal{K}\mid \bEta'\bEta + \bxi'\bxi\leq a_2) &= \bbE\left\{\bbP(\bEta \in \mathcal{K}\mid \bEta'\bEta+ \bxi'\bxi\leq a_2,||\bEta||_2)\mid \bEta'\bEta+ \bxi'\bxi\leq a_2\right\}\\
    &= \bbE\left\{G(||\bEta||_2)\mid \bEta'\bEta+ \bxi'\bxi\leq a_2\right\}.
\end{split}\]
By the law of total expectation, 
\[\begin{split}
\bbE\left\{G(||\bEta||_2)\mid \bEta'\bEta\leq a_1\right\} &= \bbE\left\{G(||\bEta||_2)\mid \bEta'\bEta\leq a_1, \bEta'\bEta+ \bxi'\bxi\leq a_2\right\}\bbP(\bEta'\bEta+ \bxi'\bxi\leq a_2\mid \bEta'\bEta\leq a_1)\\ &+ \bbE\left\{G(||\bEta||_2)\mid \bEta'\bEta\leq a_1, \bEta'\bEta+ \bxi'\bxi> a_2\right\}\bbP(\bEta'\bEta+ \bxi'\bxi> a_2\mid \bEta'\bEta\leq a_1),\\
\bbE\left\{G(||\bEta||_2)\mid \bEta'\bEta+ \bxi'\bxi\leq a_2\right\} &= \bbE\left\{G(||\bEta||_2)\mid \bEta'\bEta\leq a_1, \bEta'\bEta+ \bxi'\bxi\leq a_2\right\}\bbP(\bEta'\bEta\leq a_1\mid \bEta'\bEta+ \bxi'\bxi\leq a_2)\\ &+ \bbE\left\{G(||\bEta||_2)\mid \bEta'\bEta> a_1, \bEta'\bEta+ \bxi'\bxi \leq a_2\right\}\bbP(\bEta'\bEta> a_1\mid \bEta'\bEta+ \bxi'\bxi\leq a_2).
\end{split}\]
Since $\bbP(\bEta'\bEta\leq a_1)=\bbP(\bEta'\bEta+ \bxi'\bxi \leq a_2) = \alpha$, we have 
\[\begin{split}
    \bbP(\bEta'\bEta+ \bxi'\bxi\leq a_2\mid \bEta'\bEta\leq a_1) &= \frac{\bbP(\bEta'\bEta+ \bxi'\bxi\leq a_2, \bEta'\bEta\leq a_1)}{\bbP(\bEta'\bEta\leq a_1)}\\
    &= \frac{\bbP(\bEta'\bEta+ \bxi'\bxi\leq a_2, \bEta'\bEta\leq a_1)}{\bbP(\bEta'\bEta+ \bxi'\bxi \leq a_2)}\\
    &= \bbP(\bEta'\bEta\leq a_1\mid \bEta'\bEta+ \bxi'\bxi\leq a_2),
\end{split}\]
and
\[\begin{split}
\bbP(\bEta'\bEta+ \bxi'\bxi> a_2\mid \bEta'\bEta\leq a_1) 
&= \bbP(\bEta'\bEta> a_1\mid \bEta'\bEta+ \bxi'\bxi\leq a_2).
\end{split}\]
Since $G(r)$ is nonincreasing in $r$, we have 
\[\begin{split}
    E\left\{G(||\bEta||_2)\mid \bEta'\bEta\leq a_1, \bEta'\bEta+ \bxi'\bxi> a_2\right\} \geq G(\sqrt{a_1}) \geq E\left\{G(||\bEta||_2)\mid \bEta'\bEta> a_1, \bEta'\bEta+ \bxi'\bxi \leq a_2\right\}.
\end{split}\]
Therefore, 
\[\begin{split}
    E\left\{G(||\bEta||_2)\mid \bEta'\bEta\leq a_1\right\} \geq E\left\{G(||\bEta||_2)\mid \bEta'\bEta+ \bxi'\bxi\leq a_2\right\},
\end{split}\]
and thus, 
$P(\bEta \in \mathcal{K}\mid \bEta'\bEta\leq a_1) \geq P(\bEta \in \mathcal{K}\mid \bEta'\bEta+ \bxi'\bxi \leq a_2)$. By definition of peakedness, we have $\bEta|\bEta'\bEta\leq a_1 \succ \bEta|\bEta'\bEta+\bxi'\bxi\leq a_2$.
\end{proof}

\begin{proof}[Proof of Lemma \ref{lem:ReOisMorePeaked}]
Let $\bm\phi_1 =  (\bV_1)_m^{1/2}\bzeta_{m,\alpha}$, and $\bm\phi_2 =  (\bV_1)_k^{1/2}\bzeta_{k,\alpha}$.
We first introduce Lemma A9 in \cite{li2020rerandomization}: Let $\bzeta_{m,\alpha} \sim \bD|\bD'\bD \leq \xi_{m,\alpha}$, where $\bD \sim \mathcal{N}(\bm 0,\bI_m)$. If two matrices $\bA$ and $\bB$ in $\mathbb{R}^{L\times m}$ satisfy $\bA\bA' = \bB\bB'$, then $\bA\bzeta_{m,\alpha} \sim \bB\bzeta_{m,\alpha}$.

Let the eigen-decomposition of $\bV_1$ be $\bGamma\bDelta^2\bGamma'$, where $\bGamma \in \mathbb{R}^{m\times m}$ is an orthogonal matrix, and $\bDelta^2 = \diag\{\rho_1^2,\ldots,\rho_m^2\}$. Then from Lemma A9 in \cite{li2020rerandomization}, 
\[\begin{split}
\bGamma'\bm\phi_2 &\sim  \bGamma'(\bV_1)_k^{1/2}\bzeta_{k,\alpha}\sim \left(\bGamma'\bV_1\bGamma\right)_k^{1/2}\bzeta_{k,\alpha}\\&\sim \left(\bGamma'\bGamma\bDelta^2\bGamma'\bGamma\right)_k^{1/2}\bzeta_{k,\alpha}\sim\left(\bDelta^2\right)_k^{1/2}\bzeta_{k,\alpha}\\
&\sim \left(\bDelta,\bZero_{m\times (k-m)}\right)\bzeta_{k,\alpha}.
\end{split}\]
Similarly, 
\[\begin{split}
\bGamma'\bm\phi_1 &\sim  \bGamma'(\bV_1)_m^{1/2}\bzeta_{m,\alpha}\sim \left(\bGamma'\bV_1\bGamma\right)_m^{1/2}\bzeta_{m,\alpha}\\
&\sim \left(\bGamma'\bGamma\bDelta^2\bGamma'\bGamma\right)_m^{1/2}\bzeta_{m,\alpha}\\
&\sim \bDelta\bzeta_{m,\alpha}.
\end{split}\]
Let $\bEta \sim \mathcal{N}(\bZero,\bI_m)$ and $\bxi \sim \mathcal{N}(\bZero,\bI_{k-m})$ be independent. Then 
\[\begin{split}
\bGamma'\bm\phi_2 
&\sim \left.\left(\bDelta,\bZero_{m\times (k-m)}\right)\binom{\bEta}{\bxi}\right|\bEta'\bEta + \bxi'\bxi \leq \xi_{k,\alpha}\\
&\sim \bDelta\bEta \mid \bEta'\bEta + \bxi'\bxi \leq \xi_{k,\alpha},
\end{split}\]
and 
\[\begin{split}
\bGamma'\bm\phi_1 
&\sim \bDelta\bEta \mid \bEta'\bEta \leq \xi_{m,\alpha}.
\end{split}\]
Lemma \ref{lem:peakedlemma} implies that $\bEta\mid\bEta'\bEta\leq \xi_{m,\alpha} \succ \bEta\mid\bEta'\bEta+\bxi'\bxi\leq \xi_{k,\alpha}$ for $k\geq m$, and Lemma 7.2 in \cite{dharmadhikari1988unimodality} implies that $(\bDelta\bEta\mid \bEta'\bEta\leq \xi_{m,\alpha}) \succ (\bDelta\bEta\mid\bEta'\bEta+\bxi'\bxi\leq \xi_{k,\alpha})$. 
Hence, $\bGamma'\bm\phi_1 \succ \bGamma'\bm\phi_2$, and $\bm\phi_1 = \bGamma\bGamma'\bm\phi_1 \succ \bGamma\bGamma'\bm\phi_2 = \bm\phi_2.$
\end{proof}

\begin{proof}[Proof of Theorem \ref{thm:peakedness}]
Based on the above two lemmas, we prove the theorem.
Let $m = F$, $k = Fp$, and $\bV_1 = \bV_{\btau\btau}^{||}$, 
Lemma \ref{lem:ReOisMorePeaked} 
implies 
\[\begin{split}
\big(\bV_{\btau\btau}^{||}\big)^{1/2}\bzeta_{F,\alpha} \succ \big(\bV_{\btau\btau}^{||}\big)_{Fp}^{1/2}\bzeta_{Fp,\alpha}.
\end{split}\]
Moreover,
as $\bvarepsilon$ is central symmetric unimodal,  Theorem 7.5 in \cite{dharmadhikari1988unimodality} implies that 
\[\begin{split}
\big(\bV_{\btau\btau}^{\perp}\big)^{1/2}\bvarepsilon + \big(\bV_{\btau\btau}^{||}\big)^{1/2}\bzeta_{F,\alpha} \succ\big(\bV_{\btau\btau}^{\perp}\big)^{1/2}\bvarepsilon + \big(\bV_{\btau\btau}^{||}\big)_{Fp}^{1/2}\bzeta_{Fp,\alpha}.
\end{split}\]
Recall that the asymptotic distribution of $\sqrt{n}(\hat{\btau}-\btau)$ under ReFM is 
$\big(\bV_{\btau\btau}^{\perp}\big)^{1/2}\bvarepsilon + \big(\bV_{\btau\btau}^{||}\big)_{Fp}^{1/2}\bzeta_{Fp,\alpha}$,
we complete the proof.
\end{proof}
\end{proof}

\subsection{Proof of Theorem \ref{thm:comparison}}

\begin{proof}
Suppose ReFMT$_\C$ splits the $p$-dimensional covariates into $T$ tiers of different importance, with the dimension of covariates being $p_t$ for $1\leq t \leq T$. 
Then the asymptotic distribution of $\sqrt{n}(\hat{\btau}-\btau)$ under ReFMT$_\C$ is 
\[\begin{split}
\sqrt{n}(\hat{\btau}-\btau) \overset{\cdot}{\sim}
(\bV_{\btau\btau}^{\perp})^{1/2}\bvarepsilon + \sum_{t=1}^T (\bV_t)^{1/2}_{Fp_t}\bzeta_{Fp_t,\alpha_t},
\end{split}\]
where $\sum_{t=1}^T \bV_t = \bV_{\btau\btau}^{||}$,
$p_t \geq 1$, and $\alpha = \prod_{t=1}^T \alpha_t$. 
Since $\alpha_t \leq 1$, then we have $\alpha_t \geq \alpha$, and hence $v_{Fp_t,\alpha_t} \geq v_{F,\alpha}$, for $1\leq t\leq T$. 
Then the asymptotic covariance of $\sqrt{n}(\hat{\btau}-\btau)$
\[
\lim_{n\to\infty}\left(\bV_{\btau\btau}^{\perp} + \sum_{t=1}^T v_{Fp_t,\alpha_t}\bV_t\right)
\]
satisfies
\[\begin{split}
\bV_{\btau\btau}^{\perp} + \sum_{t=1}^T v_{Fp_t,\alpha_t}\bV_t \geq \bV_{\btau\btau}^{\perp} + \sum_{t=1}^T v_{F,\alpha}\bV_t = \bV_{\btau\btau}^{\perp} + v_{F,\alpha}\bV_{\btau\btau}^{||},
\end{split}\]
i.e., the asymptotic covariance of $\sqrt{n}(\hat{\btau}-\btau)$ under ReO$_\FE$ is no larger than that under ReFMT$_\C$.
\end{proof}

\subsection{Proof of Theorem \ref{thm:DistTSMCRE}}
\begin{proof}
The procedure of S-CRFE is characterized by the joint distribution of $\bS$, $\bZ^{(1)}$, and $\bZ^{(2)}$.
Equivalently, the $n$ units can be regarded as being assigned into $2Q$ groups.
Let the treatment group indicator of the population be $\bq = \bq(\bS,\bZ) = \bq(\bS,\bZ^{(1)},\bZ^{(2)}) = (q_1,\ldots,q_n)'$, 
where for each $i=1,\ldots,n$ and $q=1,\ldots,Q$,
\[\begin{split}
q_i = \begin{cases}
q, & \text{if } S_i = 1 \text{ and } {Z}_i = q;\\
q+Q, & \text{if } S_i = 0 \text{ and } {Z}_i = q.
\end{cases}
\end{split}\]
The number of units assigned to each group is given by $\tilde{n}_q = n_{1q}$ and $\tilde{n}_{q+Q} = n_{2q}$ for $q=1,\ldots,Q$.
Then the estimators can be rearranged as
\[\begin{split}
\hat{\btau}_1(\bS,\bZ^{(1)}) &= \sum_{q=1}^Q \bA_q \cdot \tilde{n}_{q}^{-1}\sum_{i:q_i=q} Y_i(q),\quad
\hat{\btau}_2(\bS,\bZ^{(2)}) = \sum_{q=1}^Q \bA_q \cdot \tilde{n}_{q+Q}^{-1}\sum_{i:q_i=q+Q} Y_i(q),\\
\hat{\btau}_{\bx,1}(\bS,\bZ^{(1)}) &= \sum_{q=1}^Q \bA_q \otimes \left(\tilde{n}_{q}^{-1}\sum_{i:q_i=q} \bX_i\right),\quad
\hat{\btau}_{\bx,2}(\bS,\bZ^{(2)}) = \sum_{q=1}^Q \bA_q \otimes \left(\tilde{n}_{q+Q}^{-1}\sum_{i:q_i=q+Q} \bX_i\right).
\end{split}\]
For $i=1,\ldots,n$, define the pseudo potential outcomes under each group as
\[\begin{split}
\bU_i(q)=\bU_i(q+Q) = \begin{pmatrix}
Y_i(q)\\ \bX_i
\end{pmatrix},\ q=1,\ldots,Q.
\end{split}\]
Let $\bA_q = \bA_{q+Q}$ for $q=1,\ldots,Q$.
Then we have 
\[\begin{split}
&\begin{pmatrix}
\hat{\btau}_1(\bS,\bZ^{(1)})\\ 
\hat{\btau}_{\bx,1}(\bS,\bZ^{(1)})
\\
\hat{\btau}_2(\bS,\bZ^{(2)})\\ 
\hat{\btau}_{\bx,2}(\bS,\bZ^{(2)})
\end{pmatrix}\\ 
=& \begin{pmatrix}
\sum_{q=1}^Q \begin{pmatrix}
\bA_q & \bZero_{F \times p}\\ 
\bZero_{Fp \times 1} & \bA_q \otimes \bI_p
\end{pmatrix} \cdot 
\tilde{n}_{q}^{-1}\sum_{i:q_i=q} \bU_i(q)\\
\sum_{q=1+Q}^{2Q} \begin{pmatrix}
\bA_q & \bZero_{F \times p}\\ 
\bZero_{Fp \times 1} & \bA_q \otimes \bI_p
\end{pmatrix} \cdot 
\tilde{n}_{q}^{-1}\sum_{i:q_i=q} \bU_i(q)
\end{pmatrix}\\
=& \sum_{q=1}^{Q}
\begin{pmatrix}
\bA_q & \bZero_{F \times p}\\ 
\bZero_{Fp \times 1} & \bA_q \otimes \bI_p\\
\bZero_{F \times 1} & \bZero_{F \times p}\\ 
\bZero_{Fp \times 1} & \bZero_{Fp \times p}
\end{pmatrix}\cdot 
\tilde{n}_{q}^{-1}\sum_{i:q_i=q} \bU_i(q)+
\sum_{q=1+Q}^{2Q}
\begin{pmatrix}
\bZero_{F \times 1} & \bZero_{F \times p}\\ 
\bZero_{Fp \times 1} & \bZero_{Fp \times p}\\
\bA_q & \bZero_{F \times p}\\ 
\bZero_{Fp \times 1} & \bA_q \otimes \bI_p
\end{pmatrix}\cdot 
\tilde{n}_{q}^{-1}\sum_{i:q_i=q} \bU_i(q)\\
=& \sum_{q=1}^{2Q} \tilde\bA_q \hat{\bar{\bU}}(q)
= \hat{\btau}(\tilde\bA),
\end{split}\]
where $\hat{\bar{\bU}}(q) = \tilde{n}_q^{-1}\sum_{i:q_i=q} \bU_i(q)$, and 
for $q=1,\ldots,Q$,
\[\begin{split}
\tilde\bA_q = \begin{pmatrix}
\bA_q & \bZero_{F \times p}\\ 
\bZero_{Fp \times 1} & \bA_q \otimes \bI_p\\
\bZero_{F \times 1} & \bZero_{F \times p}\\ 
\bZero_{Fp \times 1} & \bZero_{Fp \times p}
\end{pmatrix},\ 
\tilde\bA_{q+Q} = 
\begin{pmatrix}
\bZero_{F \times 1} & \bZero_{F \times p}\\ 
\bZero_{Fp \times 1} & \bZero_{Fp \times p}\\
\bA_q & \bZero_{F \times p}\\ 
\bZero_{Fp \times 1} & \bA_q \otimes \bI_p
\end{pmatrix}.
\end{split}\]
Theorem 3 in \cite{li2017general} shows that under the S-CRFE, 
the mean vector of $\hat{\btau}(\tilde\bA)$ is 
\[\begin{split}
\btau(\tilde\bA) = \sum_{q=1}^{2Q} \tilde\bA_q \bar{\bU}(q) = 
\begin{pmatrix}
\btau \\
\bZero\\
\btau \\
\bZero
\end{pmatrix}.
\end{split}\]
By definition,
\[\begin{split}
\btau_i(\tilde\bA)&
= \sum_{q=1}^{2Q} 
\tilde{\bA}_q \bU_i(q)
= \begin{pmatrix}
\btau_i\\
\bZero\\
\btau_i\\
\bZero
\end{pmatrix}.
\end{split}\]
For $q = 1,\ldots,Q$,
the finite population covariance of $\bU_i(q)$ under group $q$ is 
\[\begin{split}
\bS^2_q = \frac{1}{n-1}\sum_{i=1}^n (\bU_i(q) - \bar{\bU}(q))(\bU_i(q) - \bar{\bU}(q))'
= \begin{pmatrix}
S_{qq} & \bS_{q,\bx}\\
\bS_{\bx,q} & \bS_{\bx\bx}
\end{pmatrix},
\end{split}\]
and $\bS^2_{q+Q} = \bS^2_q$.
The finite population covariance of the individual causal effects is 
\[\begin{split}
\bS^2_{\btau(\tilde\bA)} &= 
\frac{1}{n-1}\sum_{i=1}^n (\btau_i(\tilde\bA)-\btau(\tilde\bA))(\btau_i(\tilde\bA)-\btau(\tilde\bA))'\\
&= \begin{pmatrix}
\bS_{\btau\btau} & \bZero_{F\times Fp}
& \bS_{\btau\btau} & \bZero_{F\times Fp}\\
\bZero_{Fp\times F} & \bZero_{Fp\times Fp} & 
\bZero_{Fp\times F} & \bZero_{Fp\times Fp}\\
\bS_{\btau\btau} & \bZero_{F\times Fp}
& \bS_{\btau\btau} & \bZero_{F\times Fp}\\
\bZero_{Fp\times F} & \bZero_{Fp\times Fp}
& \bZero_{Fp\times F} & \bZero_{Fp\times Fp}
\end{pmatrix}.
\end{split}\]
Theorem 3 in \cite{li2017general} shows that under the S-CRFE, 
the covariance matrix of $\hat{\btau}(\tilde\bA)$ is
\begin{equation}\label{eq:Vtilde}
\cov\left(\hat{\btau}(\tilde\bA)\right) = 
\sum_{q=1}^{2Q} \tilde{n}_q^{-1} \tilde\bA_q \bS^2_q \tilde\bA_q' - n^{-1}\bS^2_{\btau(\tilde\bA)}
= \tilde\bV = 
\begin{pmatrix}
\tilde\bV_{11} & \tilde\bV_{12} & \tilde\bV_{13} & \tilde\bV_{14}\\
\tilde\bV_{21} & \tilde\bV_{22} & \tilde\bV_{23} & \tilde\bV_{24}\\
\tilde\bV_{31} & \tilde\bV_{32} & \tilde\bV_{33} & \tilde\bV_{34}\\
\tilde\bV_{41} & \tilde\bV_{42} & \tilde\bV_{43} & \tilde\bV_{44}
\end{pmatrix},
\end{equation}
where 
\[\begin{split}
\tilde\bV_{11} &= \sum_{q=1}^Q \frac{\bA_q\bA_q'S_{qq}}{\tilde{n}_q}-\frac{\bS_{\btau\btau}}{n},\  
\tilde\bV_{12} = \sum_{q=1}^Q \frac{(\bA_q\bA_q')\otimes \bS_{q,\bx}}{\tilde{n}_q},\\
\tilde\bV_{21} &= \sum_{q=1}^Q \frac{(\bA_q\bA_q')\otimes \bS_{\bx,q}}{\tilde{n}_q},\
\tilde\bV_{22} = \sum_{q=1}^Q \frac{(\bA_q\bA_q')\otimes \bS_{\bx\bx}}{\tilde{n}_q},\\
\tilde\bV_{13} &= \tilde\bV_{31}= -\frac{\bS_{\btau\btau}}{n},\
\tilde\bV_{14} =\tilde\bV_{32}= \bZero,\\
\tilde\bV_{23} &= \tilde\bV_{41} = \bZero,\
\tilde\bV_{24} = \tilde\bV_{42} = \bZero,\\
\tilde\bV_{33} &= \sum_{q=1}^Q \frac{\bA_q\bA_q'S_{qq}}{\tilde{n}_{q+Q}}-\frac{\bS_{\btau\btau}}{n},\
\tilde\bV_{34} = \sum_{q=1}^Q \frac{(\bA_q\bA_q')\otimes \bS_{q,\bx}}{\tilde{n}_{q+Q}},\\
\tilde\bV_{43} &= \sum_{q=1}^Q \frac{(\bA_q\bA_q')\otimes \bS_{\bx,q}}{\tilde{n}_{q+Q}},\
\tilde\bV_{44} = \sum_{q=1}^Q \frac{(\bA_q\bA_q')\otimes \bS_{\bx\bx}}{\tilde{n}_{q+Q}}.
\end{split}\]
Then the corresponding covariance matrix of $$\big(\sqrt{n^{(1)}}(\hat{\btau}_1(\bS,\bZ^{(1)})-\btau)',
\sqrt{n^{(1)}}\hat{\btau}_{\bx,1}'(\bS,\bZ^{(1)}),
\sqrt{n^{(2)}}(\hat{\btau}_2(\bS,\bZ^{(2)})-\btau)',
\sqrt{n^{(2)}}\hat{\btau}_{\bx,2}'(\bS,\bZ^{(2)})\big)'$$ is trivial.
\end{proof}

\subsection{Proof of Theorem \ref{thm:TSCRFE_asymp}}
\begin{proof}
Given the $\tilde{\bA}_q$, $\bU_i(q)$ for $i=1,\ldots,n$ and $q=1,\ldots,2Q$ defined in the proof of Theorem \ref{thm:DistTSMCRE},
for $1\leq k\leq 2F(p+1)$, 
define the maximum squared distance of the $k$th coordinate of the $\tilde\bA_q \bU_i(q)$’s from their population mean, the finite population variance of the $k$th coordinate of the $\tilde\bA_q \bU_i(q)$’s, and the finite population variance of the $k$th coordinate of the $\btau_i(\tilde\bA)$'s as 
\[\begin{split}
m_q(k) &= \max_{1\leq i\leq n} \left[\tilde\bA_q\bU_i(q)-\tilde\bA_q\bar{\bU}(q)\right]^2_{(k)},\\
v_q(k) &= \frac{1}{n-1} \sum_{i=1}^n \left[\tilde\bA_q\bU_i(q)-\tilde\bA_q\bar{\bU}(q)\right]^2_{(k)},\\
v_{\btau}(k) &= \frac{1}{n-1} \sum_{i=1}^n \left[\btau_i(\tilde\bA)-\btau(\tilde\bA)\right]^2_{(k)},
\end{split}\]
respectively, where $[\bu]_{(k)}$ denotes the $k$th element of vector $\bu$.
Then, for $\tilde\bV$ defined in Eq.~\eqref{eq:Vtilde}, 
\[\begin{split}
\sum_{q=1}^{2Q} \tilde{n}_q^{-1} v_q(k) - n^{-1} v_{\btau}(k) = \tilde\bV_{kk}
= \begin{cases}
\bV_{\TSCRFE,kk}/n^{(1)}, & 1\leq k\leq F(p+1)\\
\bV_{\TSCRFE,kk}/n^{(2)}, & F(p+1)+1\leq k\leq 2F(p+1)
\end{cases},
\end{split}\]
where $\tilde\bV_{kk}$ and ${\bV}_{\TSCRFE,kk}$ are the 
$k$-th diagonal element of $\tilde\bV$ and ${\bV}_\TSCRFE$, respectively.
For any $1 \leq q \leq 2Q$ and $1 \leq k \leq 2F(p+1)$, define
\[\begin{split}
G(k,q) = \frac{1}{\tilde{n}_q^2} \frac{m_q(k)}{\sum_{r=1}^{2Q} \tilde{n}_r^{-1} v_r(k) - n^{-1} v_{\btau}(k)}
= \frac{m_q(k)}{\tilde{n}_q^2 \tilde\bV_{kk}}.
\end{split}\]
Under Assumption \ref{assump:regularity_fac_full}, $\bV_\TSCRFE$ must have a limiting value $\bV_{\TSCRFE,\infty}$ as $n \rightarrow \infty$.\\
\hspace{\fill}\\
(1)
If all the diagonal elements of $\bV_{\TSCRFE,\infty}$ are positive, then there exists some $a$ and $N_a > 0$ s.t.
for $n > N_a$, we have $\bV_{\TSCRFE,kk} \geq a$ for all $k$, and for $q=1,\ldots,Q$,
\[\begin{split}
G(k,q) &= \frac{m_q(k)}{\tilde{n}_q^2 \tilde\bV_{kk}} 
= \begin{cases}
\frac{m_q(k)}{n^{(1)}r_q^2 {\bV}_{\TSCRFE,kk}}, & 1\leq k\leq F(p+1)\\
0, & F(p+1)+1\leq k\leq 2F(p+1)
\end{cases},\\
G(k,q+Q) &= \frac{m_{q+Q}(k)}{\tilde{n}_{q+Q}^2 \tilde\bV_{kk}} 
= \begin{cases}
0, & 1\leq k\leq F(p+1)\\
\frac{m_{q+Q}(k)}{n^{(2)}r_q^2 {\bV}_{\TSCRFE,kk}}, & F(p+1)+1\leq k\leq 2F(p+1)
\end{cases}.
\end{split}\]
Under Assumption \ref{assump:regularity_fac_full} (1) and (4), we have 
\[\begin{split}
\max_{1\leq i\leq n} |Y_i(q)-\bar{Y}(q)|^2/n^{(2)} &= \frac{\rho_n}{1-\rho_n}\cdot \max_{1\leq i\leq n} |Y_i(q)-\bar{Y}(q)|^2/n^{(1)}
\rightarrow 0,\\
\max_{1\leq i\leq n} ||\bX_i-\bar{\bX}||_2^2/n^{(2)} &= \frac{\rho_n}{1-\rho_n}\cdot \max_{1\leq i\leq n} ||\bX_i-\bar{\bX}||_2^2/n^{(1)} \rightarrow 0.
\end{split}\]
Therefore, under Assumption \ref{assump:regularity_fac_full},
$G(k,q)$ converge to 0 for all $1\leq k\leq 2F(p+1)$ and $1\leq q \leq 2Q$,
and the correlation matrix of 
$\big(\sqrt{n^{(1)}}(\hat{\btau}_1(\bS,\bZ^{(1)})-\btau)',
\sqrt{n^{(1)}}\hat{\btau}_{\bx,1}'(\bS,\bZ^{(1)}),
\sqrt{n^{(2)}}(\hat{\btau}_2(\bS,\bZ^{(2)})-\btau)',
\sqrt{n^{(2)}}\hat{\btau}_{\bx,2}'(\bS,\bZ^{(2)})\big)'$ must converge to
$\diag(\bV_{\TSCRFE,\infty})^{-1/2}\bV_{\TSCRFE,\infty}\diag(\bV_{\TSCRFE,\infty})^{-1/2}$,
where $\diag(\bv)$ denotes a diagonal matrix whose diagonal elements are the same as that of the square matrix $\bv$. 
Therefore, Theorem 4 in \cite{li2017general} implies that 
\[
\begin{split}
&\diag(\bV_\TSCRFE)^{-1/2}
\big(\sqrt{n^{(1)}}(\hat{\btau}_1(\bS,\bZ^{(1)})-\btau)',
\sqrt{n^{(1)}}\hat{\btau}_{\bx,1}'(\bS,\bZ^{(1)}),\\
&\quad \quad \quad \quad \quad \quad \quad \quad
\sqrt{n^{(2)}}(\hat{\btau}_2(\bS,\bZ^{(2)})-\btau)',
\sqrt{n^{(2)}}\hat{\btau}_{\bx,2}'(\bS,\bZ^{(2)})\big)'\\
\xrightarrow{d} \ &\cN\!\left(\bZero,\diag(\bV_{\TSCRFE,\infty})^{-1/2}\bV_{\TSCRFE,\infty}\diag(\bV_{\TSCRFE,\infty})^{-1/2}\right).
\end{split}
\]
By Slutsky's theorem, we have
\[\begin{split}
&\big(\sqrt{n^{(1)}}(\hat{\btau}_1(\bS,\bZ^{(1)})-\btau)',
\sqrt{n^{(1)}}\hat{\btau}_{\bx,1}'(\bS,\bZ^{(1)}),
\sqrt{n^{(2)}}(\hat{\btau}_2(\bS,\bZ^{(2)})-\btau)',
\sqrt{n^{(2)}}\hat{\btau}_{\bx,2}'(\bS,\bZ^{(2)})\big)'\\
\xrightarrow{d} &\ \ \ \cN(\bZero,\bV_{\TSCRFE,\infty}).
\end{split}\]

\hspace{\fill}\\
(2) If some of the diagonal elements of $\bV_{\TSCRFE,\infty}$ are zero,
let $$\bm\eta = \big(\sqrt{n^{(1)}}(\hat{\btau}_1(\bS,\bZ^{(1)})-\btau)',
\sqrt{n^{(1)}}\hat{\btau}_{\bx,1}'(\bS,\bZ^{(1)}),
\sqrt{n^{(2)}}(\hat{\btau}_2(\bS,\bZ^{(2)})-\btau)',
\sqrt{n^{(2)}}\hat{\btau}_{\bx,2}'(\bS,\bZ^{(2)})\big)',$$
and $\cD$  be the set of indices corresponding to positive diagonal elements of $\bV_{\TSCRFE,\infty}$,
i.e., $\cD = \{k: \bV_{\TSCRFE,\infty,kk} > 0\}$.
Let $\bm\eta_{\cD}$ and $\bm\eta_{\cD^c}$ be subvectors of $\bm\eta$ with indices in $\cD$ and $\cD^c$, respectively.
Let $\bV_{\TSCRFE,\infty,\cD\cD}$ and $\bV_{\TSCRFE,\infty,\cD^c\cD} = \bV_{\TSCRFE,\infty,\cD\cD^c}'$ be submatrices of $\bV_{\TSCRFE,\infty}$ with indices in $\cD \times \cD$ and $\cD^c \times \cD$, respectively.
By the Cauchy–Schwarz inequality, for any $k$ and $j$,
\[\begin{split}
|\bV_{\TSCRFE,kj}| = |\cov(\eta_k,\eta_j)| \leq \sqrt{\bV_{\TSCRFE,kk}\cdot \bV_{\TSCRFE,jj}}.
\end{split}\] 
Hence, $\bV_{\TSCRFE,\infty,kj} = 0$ for $k \in \cD^c$ and any $j$.
Therefore, any elements in submatrices $\bV_{\TSCRFE,\infty,\cD^c\cD^c}$ and 
$\bV_{\TSCRFE,\infty,\cD^c\cD} = \bV_{\TSCRFE,\infty,\cD\cD^c}'$ are zero.
Apply the proof in (1) implies that 
\[\begin{split}
\bm\eta_{\cD} \xrightarrow{d}\cN(\bZero,\bV_{\TSCRFE,\infty,\cD\cD}).
\end{split}\]
For $k \in \cD^c$,
$\bbV(\eta_k) = \bV_{\TSCRFE,kk} \rightarrow \bV_{\TSCRFE,\infty,kk} = 0$. 
Chebyshev's inequality implies that for any $a > 0$,
\[\begin{split}
\bbP(|\eta_k| \geq a) \leq \frac{\bV_{\TSCRFE,kk}}{a^2} \rightarrow 0.
\end{split}\]
Hence, $\eta_k = o_p(1)$.
Therefore, $\bm\eta_{\cD^c} \xrightarrow{p} \bZero$, and thus,
\[\begin{split}
\begin{pmatrix}
    \bm\eta_{\cD} \\ \bm\eta_{\cD^c}
\end{pmatrix} \xrightarrow{d} \cN\left(\bZero,\begin{pmatrix}
    \bV_{\TSCRFE,\infty,\cD\cD} & \bZero\\
    \bZero & \bZero
\end{pmatrix}\right)
\sim \cN\left(\bZero,\begin{pmatrix}
    \bV_{\TSCRFE,\infty,\cD\cD} & \bV_{\TSCRFE,\infty,\cD\cD^c}\\
    \bV_{\TSCRFE,\infty,\cD^c\cD} & \bV_{\TSCRFE,\infty,\cD^c\cD^c}
\end{pmatrix}\right).
\end{split}\]
Equivalently, $\bm\eta \xrightarrow{d} \cN(\bZero,\bV_{\TSCRFE,\infty})$.

\hspace{\fill}\\
Above all, we finished the proof.
\end{proof}

\subsection{Proof of Proposition \ref{pro:consistent_fac}}
\begin{proof}
Before proving Proposition \ref{pro:consistent_fac}, we first prove the following lemmas.
\begin{lemma}\label{lem:joint}
For random vectors $\bX_n,\bY_n$ and $n\rightarrow\infty$,
if $\bX_n \overset{d}{\rightarrow} \bX$, and $\bY_n \overset{p}{\rightarrow} \bc$, where $\bc$ is a constant vector, then $(\bX_n,\bY_n) \overset{d}{\rightarrow} (\bX,\bc)$.
\end{lemma}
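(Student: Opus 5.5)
The statement to prove is Lemma~\ref{lem:joint}: a version of Slutsky's lemma for joint convergence. I would argue through characteristic functions, since joint convergence in distribution is equivalent to pointwise convergence of joint characteristic functions (L\'evy's continuity theorem).

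Fix arbitrary vectors $\bm s,\bm t$ of the right dimensions. I write
\[
\bbE e^{i(\bm s'\bX_n+\bm t'\bY_n)}-e^{i\bm t'\bc}\,\bbE e^{i\bm s'\bX}
= \bbE\big[e^{i\bm s'\bX_n}(e^{i\bm t'\bY_n}-e^{i\bm t'\bc})\big]
+ e^{i\bm t'\bc}\big(\bbE e^{i\bm s'\bX_n}-\bbE e^{i\bm s'\bX}\big).
\]
The second term tends to zero because $\bX_n\overset{d}{\rightarrow}\bX$, so the marginal characteristic functions converge.

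For the first term, I use that $|e^{i\bm s'\bX_n}|=1$. This bounds the term by $\bbE|e^{i\bm t'\bY_n}-e^{i\bm t'\bc}|$. The integrand is at most $2$, and by continuity of $\bm y\mapsto e^{i\bm t'\bm y}$ together with $\bY_n\overset{p}{\rightarrow}\bc$, it converges to zero in probability. Bounded convergence in probability then gives convergence of the expectation to zero. Concretely, for any $\epsilon>0$, choose $\delta$ with $\|\bm y-\bc\|<\delta \Rightarrow |e^{i\bm t'\bm y}-e^{i\bm t'\bc}|<\epsilon$. The expectation is then at most $\epsilon+2\bbP(\|\bY_n-\bc\|\ge\delta)$, which goes to $\epsilon$.

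Combining the two terms, the joint characteristic function converges to $e^{i\bm t'\bc}\bbE e^{i\bm s'\bX}$. This is the characteristic function of $(\bX,\bc)$, because a degenerate $\bc$ is independent of $\bX$. L\'evy's theorem then gives $(\bX_n,\bY_n)\overset{d}{\rightarrow}(\bX,\bc)$.

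There is no real obstacle here. The only care needed is the uniform bound on the first term, which is handled by the $\epsilon$--$\delta$ split above.
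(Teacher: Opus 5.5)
Your proof is correct, but it takes a different route from the paper's. The paper's proof is two lines. For fixed vectors $\bm a,\bm b$, it has $\bm a'\bX_n\overset{d}{\rightarrow}\bm a'\bX$ and $\bm b'\bY_n\overset{p}{\rightarrow}\bm b'\bc$. The scalar Slutsky theorem then gives $\bm a'\bX_n+\bm b'\bY_n\overset{d}{\rightarrow}\bm a'\bX+\bm b'\bc$, and the Cram\'er--Wold device finishes.

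You instead bound the joint characteristic function directly, controlling the $\bY_n$-part by $\epsilon+2\bbP(\|\bY_n-\bc\|\ge\delta)$, and conclude with L\'evy's continuity theorem. The two routes are closely related. Cram\'er--Wold is itself proved by noting that the joint characteristic function at $(\bm s,\bm t)$ is the characteristic function of $\bm s'\bX_n+\bm t'\bY_n$ at $1$, so your argument in effect inlines both Cram\'er--Wold and the needed case of Slutsky.

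The paper's version is shorter because it cites named results. Yours is self-contained and avoids a mild circularity concern. In some standard treatments, Slutsky's lemma is itself derived from exactly this joint-convergence statement plus the continuous mapping theorem. The paper's citation is therefore sound only if Slutsky is taken in a form proved independently, e.g.\ via a converging-together argument. Your proof needs nothing beyond L\'evy's theorem and an elementary bound.
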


\begin{lemma}\label{lem:prob1}
Under Assumption \ref{assump:regularity_fac_full} and as $n\rightarrow \infty$, for random splitting $\bS$ and completely random assignment $\bZ^{(1)}$, we have 
\[\begin{split}
&\bbP(\phi_M(\sqrt{n^{(1)}}\hat{\btau}_{\bx,1}(\bS,\bZ^{(1)}),\bV_{\bx\bx,1}(\bS))=1 \mid \bS) 
= \bbP(\phi_M(\bD_{1,\infty},\bV_{\bx\bx,\infty}) = 1) + o_p(1),\\
\end{split}\]
where $\bD_{1,\infty}\sim \cN(\bZero,\bV_{\bx\bx,\infty})$, and $\bV_{\bx\bx,\infty} = \sum_{q=1}^Q \frac{(\bA_q\bA_q')\otimes \bS_{\bx\bx,\infty}}{r_q}.$
\end{lemma}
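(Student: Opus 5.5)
Conditional on $\bS=\bms$, the assignment $\bZ^{(1)}$ is a CRFE on the fixed finite population $\{\bX_i: i\in\cI_1(\bms)\}$ with group sizes $n_{1q}=n^{(1)}r_q$. By Theorem 3 of \citet{li2017general} applied to this subpopulation, $\sqrt{n^{(1)}}\hat{\btau}_{\bx,1}(\bms,\bZ^{(1)})$ has mean zero and covariance exactly $\bV_{\bx\bx,1}(\bms)$. There is no individual-effect correction term here, because covariates are unaffected by treatment. The plan therefore has three parts. First, show that the subpopulation quantities entering the finite-population CLT converge in probability over the random split $\bS$. Second, apply the CLT conditionally along almost surely convergent subsequences. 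Third, transfer the result to the acceptance probability through continuity of the $\chi^2_{Fp}$ distribution function.

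\textbf{Step 1 (regularity of the random subpopulation).} I will show that $\bms_{\bx\bx}(\cI_1(\bS))\xrightarrow{p}\bS_{\bx\bx,\infty}$, and hence $\bV_{\bx\bx,1}(\bS)\xrightarrow{p}\bV_{\bx\bx,\infty}$. Since $\bS$ is a simple random sample of size $n^{(1)}$ drawn without replacement, this reduces to two facts.
\begin{itemize}
\item $\bar{\bX}_1(\bS)-\bar{\bX}=o_p(1)$. Its covariance is $(1-\rho_n)\bS_{\bx\bx}/n^{(1)}\to\bZero$.
\item The sample second moment $(n^{(1)})^{-1}\sum_i S_i(\bX_i-\bar{\bX})(\bX_i-\bar{\bX})'$ is unbiased for $\bS_{\bx\bx}$ up to a factor $(n-1)/n$. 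Each entry has variance at most
\[
\frac{1}{n^{(1)}}\,\max_i\|\bX_i-\bar{\bX}\|_2^2\cdot \frac{1}{n-1}\sum_i\|\bX_i-\bar{\bX}\|_2^2 ,
\]
which tends to $0$ by Assumption~\ref{assump:regularity_fac_full}(3)--(4).
\end{itemize}
Chebyshev's inequality then gives consistency. The Lindeberg-type condition for the subpopulation holds deterministically. Indeed,
\[
\max_{i\in\cI_1}\|\bX_i-\bar{\bX}_1(\bS)\|_2\le 2\max_i\|\bX_i-\bar{\bX}\|_2 ,
\]
so $\max_{i\in\cI_1}\|\bX_i-\bar{\bX}_1(\bS)\|_2^2/n^{(1)}\to0$ by Assumption~\ref{assump:regularity_fac_full}(4). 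I expect this step to be the main obstacle: the split is random, so the subpopulation covariance is not a fixed sequence, and the variance bound above must be verified carefully.

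\textbf{Step 2 (conditional CLT via subsequences).} Fix any subsequence. By Step 1 it has a further subsequence along which $\bms_{\bx\bx}(\cI_1(\bS))\to\bS_{\bx\bx,\infty}$ almost surely. Fix a realization of this subsequence on that probability-one event.
\begin{itemize}
\item The sequence of fixed subpopulations $\cI_1(\bms)$ satisfies the conditions of Assumption~\ref{assump:regularity_fac}: proportions $r_q$, a covariate covariance converging to the nonsingular limit $\bS_{\bx\bx,\infty}$, and the max condition.
\item Hence Theorem 5 of \citet{li2017general} gives $\sqrt{n^{(1)}}\hat{\btau}_{\bx,1}\mid\bS=\bms \xrightarrow{d}\cN(\bZero,\bV_{\bx\bx,\infty})$.
\item Since $\bV_{\bx\bx,1}(\bms)\to\bV_{\bx\bx,\infty}$, which is nonsingular, $\bV_{\bx\bx,1}(\bms)$ is nonsingular for large $n$, so the singular-case convention is eventually irrelevant.
\item By Slutsky's theorem and the continuous mapping theorem, the quadratic form $(\sqrt{n^{(1)}}\hat{\btau}_{\bx,1})'\bV_{\bx\bx,1}^{-1}(\bms)(\sqrt{n^{(1)}}\hat{\btau}_{\bx,1})$ converges in distribution to $\bD_{1,\infty}'\bV_{\bx\bx,\infty}^{-1}\bD_{1,\infty}\sim\chi^2_{Fp}$.
\end{itemize}

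\textbf{Step 3 (conclusion).} The $\chi^2_{Fp}$ distribution function is continuous at $\xi_{Fp,\alpha}$. Therefore, along the sub-subsequence, the conditional acceptance probability converges almost surely to $\bbP(\phi_M(\bD_{1,\infty},\bV_{\bx\bx,\infty})=1)=\alpha$. Since every subsequence has a further subsequence with almost-sure convergence, the conditional probability converges in probability. This is exactly the displayed claim, with the remainder being $o_p(1)$.
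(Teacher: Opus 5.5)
Your proposal is correct and follows essentially the same route as the paper. Both arguments use a subsequence argument (every subsequence has an almost surely convergent further subsequence) built on $\bV_{\bx\bx,1}(\bS)\xrightarrow{p}\bV_{\bx\bx,\infty}$, verify the max condition through $\|\bX_i-\bar{\bX}_1(\bS)\|_2\le 2\max_j\|\bX_j-\bar{\bX}\|_2$, and then apply the finite-population CLT conditionally on $\bS$ followed by continuous mapping. The only difference is that you prove the consistency of $\bms_{\bx\bx}(\cI_1(\bS))$ directly, via the variance bound for sampling without replacement and Chebyshev's inequality, whereas the paper simply cites Lemma B4 of \citet{yang2023rejective}.
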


\begin{lemma}\label{lem:equiv}
Under Assumption \ref{assump:regularity_fac_full}, let $\cD=\{0,1\}^n \times \{1,\ldots,Q\}^{n^{(1)}}$. 
As $n \rightarrow \infty$, 
for the random splitting indicator $\bS$, 
treatment assignment $\bZ^{(1)}$ in S-CRFE, and 
treatment assignment $\bW^{(1)}$ in DA-ReO$_\F$ or DA-ReO$_\FE$,
we have 
\[\begin{split}
\sup_{\cA \subset \cD} \big|\bbP((\bS,\bW^{(1)}) \in \cA)-
\bbP((\bS,\bZ^{(1)}) \in \cA \mid 
\phi_M(\sqrt{n^{(1)}}\hat{\btau}_{\bx,1}(\bS,\bZ^{(1)}),\bV_{\bx\bx,1}(\bS))=1)\big|
\rightarrow 0.
\end{split}\]
\end{lemma}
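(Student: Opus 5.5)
The statement to prove is Lemma~\ref{lem:equiv}: the joint law of $(\bS,\bW^{(1)})$ is asymptotically the law of $(\bS,\bZ^{(1)})$ conditioned on ReFM acceptance in the first subset. The idea is to compare the two laws conditionally on $\bS$, and to control the discrepancy with the conditional acceptance probability from Lemma~\ref{lem:prob1}.

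Write $p_n(\bms)=\bbP(\phi_M(\sqrt{n^{(1)}}\hat{\btau}_{\bx,1}(\bms,\bZ^{(1)}),\bV_{\bx\bx,1}(\bms))=1\mid \bS=\bms)$, and let $\phi_M(\bms,\bz)$ denote the acceptance indicator.

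By construction, $\bS$ is uniform over splits and, given $\bS=\bms$, $\bW^{(1)}$ is distributed as $\bZ^{(1)}$ conditioned on acceptance. Hence
\[
\bbP((\bS,\bW^{(1)})\in\cA)=\sum_{\bms}\bbP(\bS=\bms)\,\frac{\bbP((\bms,\bZ^{(1)})\in\cA,\ \phi_M=1\mid\bS=\bms)}{p_n(\bms)} .
\]
The target conditional law instead gives
\[
\sum_{\bms}\bbP(\bS=\bms)\,\frac{\bbP((\bms,\bZ^{(1)})\in\cA,\ \phi_M=1\mid\bS=\bms)}{\bbE\{p_n(\bS)\}} .
\]
There is one edge case. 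When $\bms_{\bx\bx}(\cI_1(\bms))$ is singular, the mechanism is CRFE and $p_n(\bms)=1$, so the formula still holds. Since $\bS_{\bx\bx}$ is nonsingular in the limit, this event has vanishing probability.

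Subtract the two displays. For every $\cA$, the difference is bounded by
\[
\bbE\!\left[\bbP(\phi_M=1\mid\bS)\,\left|\frac{1}{p_n(\bS)}-\frac{1}{\bbE p_n(\bS)}\right|\right]
=\bbE\!\left|1-\frac{p_n(\bS)}{\bbE p_n(\bS)}\right| .
\]
This bound is uniform in $\cA$, which is exactly what the supremum in the statement requires.

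Next we show the bound tends to zero. By Lemma~\ref{lem:prob1}, $p_n(\bS)=\alpha+o_p(1)$, because $\bbP(\phi_M(\bD_{1,\infty},\bV_{\bx\bx,\infty})=1)=\alpha$ by the choice of the $\chi^2_{Fp}$ quantile. Since $p_n\in[0,1]$, bounded convergence gives $\bbE p_n(\bS)\to\alpha>0$. Then $p_n(\bS)/\bbE p_n(\bS)\to 1$ in probability, and it is bounded by $1/\bbE p_n(\bS)$, which is eventually at most $2/\alpha$. Dominated convergence therefore sends the bound to $0$.

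The main obstacle is the uniform ratio control: $p_n$ could be near zero on some splits. This is handled because the ratio sits inside an expectation weighted by $p_n$ itself, which cancels the $1/p_n$ factor. The only substantive input is then Lemma~\ref{lem:prob1}, which I am assuming as already established (via Theorem~\ref{thm:TSCRFE_asymp}, Lemma~\ref{lem:joint}, and consistency of $\bV_{\bx\bx,1}(\bS)$).
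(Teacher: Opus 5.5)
Your argument is correct and follows essentially the paper's route. You condition on $\bS$, observe that the law of $(\bS,\bW^{(1)})$ and the conditioned S-CRFE law differ only through the normalizer ($p_n(\bms)$ versus $\bbP(\phi_M=1)=\bbE\,p_n(\bS)$), and invoke Lemma~\ref{lem:prob1}. Your exact bound $\bbE|1-p_n(\bS)/\bbE\,p_n(\bS)|$, uniform in $\cA$, is a tidier replacement for the paper's good set $\cS_{a,1}$ and its ratio bounds $[(\alpha-a)/(\alpha+a),(\alpha+a)/(\alpha-a)]$, because the weight $p_n(\bS)$ cancels the $1/p_n(\bS)$ factor; it also gives $\bbP(\phi_M=1)\to\alpha$ from Lemma~\ref{lem:prob1} by bounded convergence rather than from a separate appeal to Theorem~\ref{thm:TSCRFE_asymp}.
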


\begin{lemma}\label{lem:inverse_lemma}
Suppose matrix $\bA_n \rightarrow \bA_{\infty}$, where $\bA_{\infty}$ is nonsingular, $\bB_n = o_p(1)$, then $(\bA_n + \bB_n)^{-1} = \bA_n^{-1} + o_p(1)$,
where $(\bA_n + \bB_n)^{-1}$ (or $\bA_n^{-1}$) is $\bZero$ when $\bA_n + \bB_n$ (or $\bA_n$) is singular.
\end{lemma}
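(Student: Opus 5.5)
The last statement is Lemma~\ref{lem:inverse_lemma}. My plan is to reduce it to continuity of matrix inversion at the nonsingular limit $\bA_\infty$, handling the convention ``inverse $=\bZero$ when singular'' through a high-probability event.

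\emph{Step 1: the deterministic matrix $\bA_n$.} Since $\bA_n\to\bA_\infty$ and $\bA_\infty$ is nonsingular, there is $\delta>0$ with $\sigma_{\min}(\bA_\infty)=2\delta$. By continuity of singular values, $\sigma_{\min}(\bA_n)\ge \tfrac{3}{2}\delta$ for all large $n$. Hence $\bA_n$ is eventually nonsingular, $\|\bA_n^{-1}\|\le 2/(3\delta)$, and the convention is irrelevant for $\bA_n^{-1}$ for large $n$. Moreover $\bA_n^{-1}\to\bA_\infty^{-1}$.

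\emph{Step 2: the random matrix $\bA_n+\bB_n$.} Define the event
\[
E_n=\{\|\bB_n\|\le \delta/2\}.
\]
Since $\bB_n=o_p(1)$, we have $\bbP(E_n)\to1$. On $E_n$, Weyl's inequality gives $\sigma_{\min}(\bA_n+\bB_n)\ge\delta$, so $\bA_n+\bB_n$ is nonsingular and its inverse is the genuine inverse. On $E_n$ we then use the resolvent identity
\[
(\bA_n+\bB_n)^{-1}-\bA_n^{-1}=-(\bA_n+\bB_n)^{-1}\bB_n\bA_n^{-1},
\]
which yields the bound
\[
\|(\bA_n+\bB_n)^{-1}-\bA_n^{-1}\|\le \delta^{-1}\cdot\frac{2}{3\delta}\,\|\bB_n\|.
\]

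\emph{Step 3: conclusion.} For any $\epsilon>0$,
\[
\bbP\bigl(\|(\bA_n+\bB_n)^{-1}-\bA_n^{-1}\|>\epsilon\bigr)\le \bbP(E_n^c)+\bbP\bigl(\|\bB_n\|>3\delta^2\epsilon/2\bigr)\to0.
\]
This is exactly the claim.

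The main (mild) obstacle is the singular-case convention: the inverse map is discontinuous near singular matrices, so the continuous mapping theorem cannot be applied directly. Restricting to $E_n$, where both matrices are uniformly well-conditioned, handles this. An equivalent route is to note that $\bA_n+\bB_n\xrightarrow{p}\bA_\infty$, apply the continuous mapping theorem at $\bA_\infty$ (the map $\bM\mapsto\bM^{-1}$, set to $\bZero$ when singular, is continuous at $\bA_\infty$) to get $(\bA_n+\bB_n)^{-1}\xrightarrow{p}\bA_\infty^{-1}$, and combine this with $\bA_n^{-1}\to\bA_\infty^{-1}$.
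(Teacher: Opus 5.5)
Your proof is correct and follows essentially the same structure as the paper's. Both arguments first show that \(\bA_n\) is eventually invertible with uniformly bounded inverse. Both then work on a high-probability event where \(\bB_n\) is small: there \(\bA_n+\bB_n\) is invertible and the difference of inverses is bounded linearly in \(\|\bB_n\|\), while the singular-case convention is absorbed into the vanishing complement event.

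The only difference is the tool. You use Weyl's singular-value inequality and the resolvent identity where the paper uses Neumann-series expansions of \(\bA_\infty^{-1}\bA_n\) and \(\bI+\bA_n^{-1}\bB_n\). Your closing continuous-mapping remark is also valid, and shorter, because the extended inverse map is continuous at the nonsingular point \(\bA_\infty\).
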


\begin{proof}[Proof of Lemma \ref{lem:joint}]
For any constant vectors $\bm a$ and $\bm b$, since $\bm a'\bX_n \xrightarrow{d} \bm a' \bX$, and $\bm b'\bY_n \xrightarrow{p} \bm b' \bc$, it follows from Slutsky’s theorem that $\bm a'\bX_n + \bm b'\bY_n \xrightarrow{d} \bm a' \bX + \bm b' \bc$.
Therefore, by the Cram\'{e}r--Wold theorem, we have $(\bX_n,\bY_n) \overset{d}{\rightarrow} (\bX,\bc)$.
\end{proof}

\begin{proof}[Proof of Lemma \ref{lem:prob1}]
Let 
\[\begin{split}
\psi_{n,1}(\bS) &\equiv \bbP(\phi_M(\sqrt{n^{(1)}}\hat{\btau}_{\bx,1}(\bS,\bZ^{(1)}),\bV_{\bx\bx,1}(\bS))=1 \mid \bS).
\end{split}\]
By Theorem 2.3.2 in \cite{durrett2019probability}, it suffices to show that for any subsequence $n(m) \rightarrow \infty$,
there exists a further subsequence $n(m_k) \rightarrow \infty$ s.t.
\begin{align}
\label{eq:res1}
&\psi_{n(m_k),1}(\bS) - \bbP(\phi_M(\bD_{1,\infty},\bV_{\bx\bx,\infty}) = 1) \xrightarrow{a.s.} 0.
\end{align}
By definition, 
\[\begin{split}
\bV_{\bx\bx,1}(\bS) =\sum_{q=1}^Q \frac{(\bA_q\bA_q')\otimes \bms_{\bx\bx}(\cI_1(\bS))}{r_{q}},
\end{split}\]
and according to Lemma B4 in \cite{yang2023rejective}, under Assumption \ref{assump:regularity_fac_full}, we have 
$\bms_{\bx\bx}(\cI_1(\bS)) - \bS_{\bx\bx} = o_p(1)$.
Therefore, $\bV_{\bx\bx,1}(\bS) - \bV_{\bx\bx,\infty} = o_p(1)$.
Then Theorem 2.3.2 in \cite{durrett2019probability} implies that for any subsequence $n(m) \rightarrow \infty$,
there exists a further subsequence $n(m_k) \rightarrow \infty$ s.t. $\bV_{\bx\bx,1}(\bS) \xrightarrow{a.s.} \bV_{\bx\bx,\infty}$.
Moreover, for
$\bar{\bX}_1(\bS) = (n^{(1)})^{-1}\sum_{i=1}^n S_i \bX_i$, 
we have 
\[\begin{split}
||\bX_i - \bar{\bX}_1(\bS)||_2 &\leq ||\bX_i - \bar{\bX}||_2 + ||\bar{\bX} - \bar{\bX}_1(\bS)||_2
\leq 2\max_{1\leq j\leq n} ||\bX_j - \bar{\bX}||_2.
\end{split}\]
Under Assumption \ref{assump:regularity_fac_full}, we have 
\[\begin{split}
&\frac{1}{n^{(1)}}\max_{i:S_i=1}||\bX_i - \bar{\bX}_1(\bS)||_2^2 \leq 
\frac{4}{n^{(1)}}\max_{1\leq j\leq n} ||\bX_j - \bar{\bX}||_2^2 = o(1).
\end{split}\]
According to the finite population CLT for CRFE, for the subsequence $n(m_k)\rightarrow\infty$,
when the sequence of $\bms_{n(m_k)}$ satisfies that 
$\bV_{\bx\bx,1}(\bms_{n(m_k)}) - \bV_{\bx\bx,\infty} = o(1)$,
we have 
\[\begin{split}
(\sqrt{n^{(1)}}\hat{\btau}_{\bx,1}(\bS,\bZ^{(1)})\mid \bS = \bms_{n(m_k)}) \xrightarrow{d} 
\bD_{1,\infty} \sim \cN(\bZero,\bV_{\bx\bx,\infty}).
\end{split}\]
Lemma \ref{lem:joint} and the continuous mapping theorem implies that 
\[\begin{split}
(\phi_M(\sqrt{n^{(1)}}\hat{\btau}_{\bx,1}(\bS,\bZ^{(1)}),\bV_{\bx\bx,1}(\bS))\mid \bS = \bms_{n(m_k)})
\xrightarrow{d} 
\phi_M(\bD_{1,\infty},\bV_{\bx\bx,\infty}),
\end{split}\]
and thus, 
\[\begin{split}
\psi_{n(m_k),1}(\bms_{n(m_k)}) &= \bbP(\phi_M(\sqrt{n^{(1)}}\hat{\btau}_{\bx,1}(\bS,\bZ^{(1)}),\bV_{\bx\bx,1}(\bS) = 1\mid \bS = \bms_{n(m_k)})\\ &\rightarrow 
\bbP(\phi_M(\bD_{1,\infty},\bV_{\bx\bx,\infty})=1).
\end{split}\]
Since $\bV_{\bx\bx,1}(\bS) \rightarrow \bV_{\bx\bx,\infty}$ a.s. for $n({m_k})\rightarrow\infty$, we have Eq.~\eqref{eq:res1} holds.
Then we complete the proof.
\end{proof}

\begin{proof}[Proof of Lemma \ref{lem:equiv}]
By Theorem \ref{thm:TSCRFE_asymp}, under S-CRFE and Assumption \ref{assump:regularity_fac_full},
\[\begin{split}
&\big(\sqrt{n^{(1)}}(\hat{\btau}_1(\bS,\bZ^{(1)})-\btau)',
\sqrt{n^{(1)}}\hat{\btau}_{\bx,1}'(\bS,\bZ^{(1)}),
\sqrt{n^{(2)}}(\hat{\btau}_2(\bS,\bZ^{(2)})-\btau)',
\sqrt{n^{(2)}}\hat{\btau}_{\bx,2}'(\bS,\bZ^{(2)})\big)'\\
&
\overset{\cdot}{\sim}
\cN(\bZero, \bV_{\TSCRFE}),
\end{split}\]
Suppose $(\bC_1',\bD_1',\bC_2',\bD_2')'
\sim \cN(\bZero,\bV_\TSCRFE)$.
According to Lemma B4 in \cite{yang2023rejective}, under S-CRFE, we have 
$\bV_{\bx\bx,1}(\bS) \xrightarrow{p} \bV_{\bx\bx,\infty}$.
Then Lemma \ref{lem:joint} and the continuous mapping theorem imply that 
\[\begin{split}
&\phi_M(\sqrt{n^{(1)}}\hat{\btau}_{\bx,1}(\bS,\bZ^{(1)}),\bV_{\bx\bx,1}(\bS))
\xrightarrow{d} 
\phi_M(\bD_{1,\infty},\bV_{\bx\bx,\infty}),
\end{split}\]
where $\bD_{1,\infty}\sim \cN\left(\bZero,\bV_{\bx\bx,\infty}\right)$.
Therefore,
\[\begin{split}
\lim_{n\rightarrow\infty}
\bbP(\phi_M(\sqrt{n^{(1)}}\hat{\btau}_{\bx,1}(\bS,\bZ^{(1)}),\bV_{\bx\bx,1}(\bS))=1) = \bbP(\phi_M(\bD_{1,\infty},\bV_{\bx\bx,\infty})=1) = \alpha.
\end{split}\]
Then for any $0 < a < \alpha$, there exists some $N_a$ s.t. 
for any $n > N_a$, we have 
\begin{equation}\label{eq:probb21}
\bbP(\phi_M(\sqrt{n^{(1)}}\hat{\btau}_{\bx,1}(\bS,\bZ^{(1)}),\bV_{\bx\bx,1}(\bS))=1)
\in [\alpha-a,\alpha+a].
\end{equation}
Moreover,
Lemma \ref{lem:prob1} implies that 
\[\begin{split}
&\bbP(\phi_M(\sqrt{n^{(1)}}\hat{\btau}_{\bx,1}(\bS,\bZ^{(1)}),\bV_{\bx\bx,1}(\bS))=1 \mid \bS) 
= \bbP(\phi_M(\bD_{1,\infty},\bV_{\bx\bx,\infty}) = 1) + o_p(1) = \alpha + o_p(1).
\end{split}\]
Therefore, 
for any $a > 0$, define 
\[\begin{split}
\cS_{a,1} &= \left\{\bms:
\left|\bbP(\phi_M(\sqrt{n^{(1)}}\hat{\btau}_{\bx,1}(\bms,\bZ^{(1)}),\bV_{\bx\bx,1}(\bms))=1)-\bbP(\phi_M(\bD_{1,\infty},\bV_{\bx\bx,\infty}) = 1)\right| \leq a\right\},
\end{split}\]
there exists $N_a$ s.t. for $n > N_a$,
$\bbP(\bS \not\in \cS_{a,1}) \leq a.$
Moreover, for $\bms \in \cS_{a,1}$ and $a < \alpha$,
\begin{equation}\label{eq:probb22}
0 < \bbP(\phi_M(\sqrt{n^{(1)}}\hat{\btau}_{\bx,1}(\bms,\bZ^{(1)}),\bV_{\bx\bx,1}(\bms))=1) \in [\alpha-a,\alpha+a].
\end{equation}

By definition, for $n > N_a$, and any $\cA \subset \cD$,
\[\begin{split}
\bbP((\bS,\bW^{(1)}) \in \cA)
&\leq 
\bbP((\bS,\bW^{(1)}) \in \cA,
\bS \in \cS_{a,1}) + \bbP(\bS \not\in \cS_{a,1})\\
&\leq \sum_{\bms \in \cS_{a,1}} 
\bbP((\bS,\bW^{(1)}) \in \cA,
\bS=\bms) + a,
\end{split}\]
and 
\[\begin{split}
\bbP((\bS,\bW^{(1)}) \in \cA)
&\geq  \bbP((\bS,\bW^{(1)}) \in \cA,\bS \in \cS_{a,1})
= \sum_{\bms \in \cS_{a,1}}
\bbP((\bS,\bW^{(1)}) \in \cA,
\bS=\bms).
\end{split}\]
Since 
\[\begin{split}
&\bbP((\bS,\bW^{(1)}) \in \cA,
\bS=\bms)
= \bbP((\bS,\bW^{(1)}) \in \cA\mid 
\bS=\bms)\bbP(\bS=\bms)\\
=& \bbP((\bms,\bW^{(1)}) \in \cA \mid \bS=\bms)\bbP(\bS=\bms)\\
=& \bbP((\bms,\bZ^{(1)}) \in \cA \mid \phi_M(\sqrt{n^{(1)}}\hat{\btau}_{\bx,1}(\bms,\bZ^{(1)}),\bV_{\bx\bx,1}(\bms))=1)\bbP(\bS=\bms),
\end{split}\]
then we have 
\[\begin{split}
&\sum_{\bms \in \cS_{a,1}}
\bbP((\bS,\bW^{(1)}) \in \cA,
\bS=\bms)\\
=& \sum_{\bms \in \cS_{a,1}} 
\bbP((\bms,\bZ^{(1)}) \in \cA \mid \phi_M(\sqrt{n^{(1)}}\hat{\btau}_{\bx,1}(\bms,\bZ^{(1)}),\bV_{\bx\bx,1}(\bms))=1)\bbP(\bS=\bms).
\end{split}\]
By definition, for $\bms \in \cS_{a,1}$,
\[\begin{split}
& \bbP((\bS,\bZ^{(1)}) \in \cA,
\bS = \bms\mid \phi_M(\sqrt{n^{(1)}}\hat{\btau}_{\bx,1}(\bS,\bZ^{(1)}),\bV_{\bx\bx,1}(\bS))=1)\\
=& \bbP((\bS,\bZ^{(1)}) \in \cA
\mid \bS = \bms, 
\phi_M(\sqrt{n^{(1)}}\hat{\btau}_{\bx,1}(\bS,\bZ^{(1)}),\bV_{\bx\bx,1}(\bS))=1)\cdot \\
& \bbP(\bS = \bms\mid 
\phi_M(\sqrt{n^{(1)}}\hat{\btau}_{\bx,1}(\bS,\bZ^{(1)}),\bV_{\bx\bx,1}(\bS))=1)\\
=& \bbP((\bms,\bZ^{(1)}) \in \cA \mid \phi_M(\sqrt{n^{(1)}}\hat{\btau}_{\bx,1}(\bms,\bZ^{(1)}),\bV_{\bx\bx,1}(\bms))=1)\cdot\\& \bbP(\bS = \bms\mid 
\phi_M(\sqrt{n^{(1)}}\hat{\btau}_{\bx,1}(\bS,\bZ^{(1)}),\bV_{\bx\bx,1}(\bS))=1).
\end{split}\]
Thus,
\[\begin{split}
&\sum_{\bms \in \cS_{a,1}}
\bbP((\bS,\bW^{(1)}) \in \cA,
\bS=\bms)\\
=& \sum_{\bms \in \cS_{a,1}} 
\bbP((\bms,\bZ^{(1)}) \in \cA \mid \phi_M(\sqrt{n^{(1)}}\hat{\btau}_{\bx,1}(\bms,\bZ^{(1)}),\bV_{\bx\bx,1}(\bms))=1)\bbP(\bS=\bms)\\
=& \sum_{\bms \in \cS_{a,1}} 
\frac{\bbP((\bS,\bZ^{(1)}) \in \cA,
\bS = \bms\mid \phi_M(\sqrt{n^{(1)}}\hat{\btau}_{\bx,1}(\bS,\bZ^{(1)}),\bV_{\bx\bx,1}(\bS))=1)\bbP(\bS=\bms)}{\bbP(\bS = \bms\mid 
\phi_M(\sqrt{n^{(1)}}\hat{\btau}_{\bx,1}(\bS,\bZ^{(1)}),\bV_{\bx\bx,1}(\bS))=1)}.
\end{split}\]
Since 
\[\begin{split}
&\bbP(\bS = \bms\mid 
\phi_M(\sqrt{n^{(1)}}\hat{\btau}_{\bx,1}(\bS,\bZ^{(1)}),\bV_{\bx\bx,1}(\bS))=1)\\
=& \frac{\bbP(\bS = \bms, 
\phi_M(\sqrt{n^{(1)}}\hat{\btau}_{\bx,1}(\bS,\bZ^{(1)}),\bV_{\bx\bx,1}(\bS))=1)}{\bbP(\phi_M(\sqrt{n^{(1)}}\hat{\btau}_{\bx,1}(\bS,\bZ^{(1)}),\bV_{\bx\bx,1}(\bS))=1)} \\
=& \frac{\bbP(\bS = \bms)\bbP(\phi_M(\sqrt{n^{(1)}}\hat{\btau}_{\bx,1}(\bms,\bZ^{(1)}),\bV_{\bx\bx,1}(\bms))=1)}{\bbP(\phi_M(\sqrt{n^{(1)}}\hat{\btau}_{\bx,1}(\bS,\bZ^{(1)}),\bV_{\bx\bx,1}(\bS))=1)},
\end{split}\]
we have 
\[\begin{split}
&\sum_{\bms \in \cS_{a,1}}
\bbP((\bS,\bW^{(1)}) \in \cA,
\bS=\bms)\\
=& \sum_{\bms \in \cS_{a,1}} 
\frac{\bbP((\bS,\bZ^{(1)}) \in \cA,
\bS = \bms\mid \phi_M(\sqrt{n^{(1)}}\hat{\btau}_{\bx,1}(\bS,\bZ^{(1)}),\bV_{\bx\bx,1}(\bS))=1)\bbP(\bS=\bms)}{\bbP(\bS = \bms\mid 
\phi_M(\sqrt{n^{(1)}}\hat{\btau}_{\bx,1}(\bS,\bZ^{(1)}),\bV_{\bx\bx,1}(\bS))=1)}\\
=& \sum_{\bms \in \cS_{a,1}} 
\frac{\bbP((\bS,\bZ^{(1)}) \in \cA,
\bS = \bms\mid \phi_M(\sqrt{n^{(1)}}\hat{\btau}_{\bx,1}(\bS,\bZ^{(1)}),\bV_{\bx\bx,1}(\bS))=1)}{\bbP(\phi_M(\sqrt{n^{(1)}}\hat{\btau}_{\bx,1}(\bms,\bZ^{(1)}),\bV_{\bx\bx,1}(\bms))=1)}\cdot \\
&\quad\quad\quad\quad
\bbP(\phi_M(\sqrt{n^{(1)}}\hat{\btau}_{\bx,1}(\bS,\bZ^{(1)}),\bV_{\bx\bx,1}(\bS))=1)
\end{split}\]
For $n > N_a$, by Eq.'s~\eqref{eq:probb21} and  \eqref{eq:probb22},
\[\begin{split}
\frac{\bbP(\phi_M(\sqrt{n^{(1)}}\hat{\btau}_{\bx,1}(\bS,\bZ^{(1)}),\bV_{\bx\bx,1}(\bS))=1)}{\bbP(\phi_M(\sqrt{n^{(1)}}\hat{\btau}_{\bx,1}(\bms,\bZ^{(1)}),\bV_{\bx\bx,1}(\bms))=1)}
\in \left[\frac{
\alpha-a
}{\alpha+a},\frac{
\alpha+a
}{\alpha-a}\right].
\end{split}\]
Since 
\[\begin{split}
&\bbP((\bS,\bZ^{(1)}) \in \cA, \bS \not\in \cS_{a,1}\mid \phi_M(\sqrt{n^{(1)}}\hat{\btau}_{\bx,1}(\bS,\bZ^{(1)}),\bV_{\bx\bx,1}(\bS))=1)\\
\leq & \frac{\bbP(\bS \not\in \cS_{a,1})}{\bbP(\phi_M(\sqrt{n^{(1)}}\hat{\btau}_{\bx,1}(\bS,\bZ^{(1)}),\bV_{\bx\bx,1}(\bS))=1)}
\leq \frac{a}{\alpha-a},
\end{split}\]
we have 
\[\begin{split}
&\bbP((\bS,\bZ^{(1)}) \in \cA, \bS \in \cS_{a,1}\mid \phi_M(\sqrt{n^{(1)}}\hat{\btau}_{\bx,1}(\bS,\bZ^{(1)}),\bV_{\bx\bx,1}(\bS))=1)\\
\geq & \bbP((\bS,\bZ^{(1)}) \in \cA\mid \phi_M(\sqrt{n^{(1)}}\hat{\btau}_{\bx,1}(\bS,\bZ^{(1)}),\bV_{\bx\bx,1}(\bS))=1)
- \frac{a}{\alpha-a}.
\end{split}\]
Hence, 
\[\begin{split}
&\sum_{\bms \in \cS_{a,1}}
\bbP((\bS,\bW^{(1)}) \in \cA,
\bS=\bms)\\
\leq & \frac{
\alpha+a
}{\alpha-a}\cdot 
\sum_{\bms \in \cS_{a,1}}\bbP((\bS,\bZ^{(1)}) \in \cA,
\bS = \bms\mid \phi_M(\sqrt{n^{(1)}}\hat{\btau}_{\bx,1}(\bS,\bZ^{(1)}),\bV_{\bx\bx,1}(\bS))=1)\\
\leq & \frac{
\alpha+a
}{\alpha-a}\cdot 
\bbP((\bS,\bZ^{(1)}) \in \cA\mid \phi_M(\sqrt{n^{(1)}}\hat{\btau}_{\bx,1}(\bS,\bZ^{(1)}),\bV_{\bx\bx,1}(\bS))=1),
\end{split}\]
and 
\[\begin{split}
&\sum_{\bms \in \cS_{a,1}}
\bbP((\bS,\bW^{(1)}) \in \cA,
\bS=\bms)\\
\geq & \frac{
\alpha-a
}{\alpha+a}\cdot 
\sum_{\bms \in \cS_{a,1}}\bbP((\bS,\bZ^{(1)}) \in \cA,
\bS = \bms\mid \phi_M(\sqrt{n^{(1)}}\hat{\btau}_{\bx,1}(\bS,\bZ^{(1)}),\bV_{\bx\bx,1}(\bS))=1)\\
\geq & \frac{
\alpha-a
}{\alpha+a}\cdot 
\bbP((\bS,\bZ^{(1)}) \in \cA\mid \phi_M(\sqrt{n^{(1)}}\hat{\btau}_{\bx,1}(\bS,\bZ^{(1)}),\bV_{\bx\bx,1}(\bS))=1)
- \frac{a}{\alpha+a}.
\end{split}\]
Therefore, for $n > N_a$,
\[\begin{split}
\bbP((\bS,\bW^{(1)}) \in \cA)
&\leq \sum_{\bms \in \cS_{a,1}} 
\bbP((\bS,\bW^{(1)}) \in \cA,
\bS=\bms) + a\\
&\leq \frac{
\alpha+a
}{\alpha-a}\cdot 
\bbP((\bS,\bZ^{(1)}) \in \cA\mid \phi_M(\sqrt{n^{(1)}}\hat{\btau}_{\bx,1}(\bS,\bZ^{(1)}),\bV_{\bx\bx,1}(\bS))=1) + a,
\end{split}\]
and 
\[\begin{split}
\bbP((\bS,\bW^{(1)}) \in \cA)
&\geq  \sum_{\bms \in \cS_{a,1}}
\bbP((\bS,\bW^{(1)}) \in \cA,
\bS=\bms)\\
&\geq \frac{
\alpha-a
}{\alpha+a}\cdot 
\bbP((\bS,\bZ^{(1)}) \in \cA\mid \phi_M(\sqrt{n^{(1)}}\hat{\btau}_{\bx,1}(\bS,\bZ^{(1)}),\bV_{\bx\bx,1}(\bS))=1)
- \frac{a}{\alpha+a}.
\end{split}\]
Then for any $0 < a < \alpha$ and any $\cA \subset \cD$, there exists $N_a > 0$ s.t. for any $n > N_a$, we have 
\[\begin{split}
&\left|\bbP((\bS,\bW^{(1)}) \in \cA) - \bbP((\bS,\bZ^{(1)}) \in \cA\mid \phi_M(\sqrt{n^{(1)}}\hat{\btau}_{\bx,1}(\bS,\bZ^{(1)}),\bV_{\bx\bx,1}(\bS))=1)\right| \\
\leq &\max\left\{\frac{3a}{\alpha+a},\frac{2a}{\alpha-a}+a\right\}.
\end{split}\]
Let $a \to 0$, we have
\[\begin{split}
\sup_{\cA \subset \cD}\left|\bbP((\bS,\bW^{(1)}) \in \cA) - \bbP((\bS,\bZ^{(1)}) \in \cA\mid \phi_M(\sqrt{n^{(1)}}\hat{\btau}_{\bx,1}(\bS,\bZ^{(1)}),\bV_{\bx\bx,1}(\bS))=1)\right| \rightarrow 0.
\end{split}\]
\end{proof}

\begin{proof}[Proof of Lemma \ref{lem:inverse_lemma}]
Since the limit of $\bA_n$, denoted as $\bA_{\infty}$, is nonsingular,
we have 
\[\begin{split}
||\bI-\bA_{\infty}^{-1}\bA_{n}||_2 
= ||\bA_{\infty}^{-1}(\bA_{\infty}-\bA_n)||_2
\leq ||\bA_{\infty}^{-1}||_2 ||\bA_{\infty}-\bA_n||_2 \rightarrow 0,
\end{split}\]
where $||\cdot ||_2$ is the spectral norm.
Hence, there exists some $N_0$ s.t. for $n > N_0$,
$||\bI-\bA_{\infty}^{-1}\bA_{n}||_2 \leq 1/2 < 1$.
According to the Von Neumann Lemma, 
when $||\bI-\bA_{\infty}^{-1}\bA_{n}||_2 < 1$,
$\bA_{\infty}^{-1}\bA_{n}$ is nonsingular, and 
\[\begin{split}
(\bA_{\infty}^{-1}\bA_{n})^{-1} = \sum_{k=0}^\infty (\bI-\bA_{\infty}^{-1}\bA_{n})^k.
\end{split}\]
Denote $(\bA_{\infty}^{-1}\bA_{n})^{-1} = \bC$,
then 
\[\begin{split}
    (\bA_{\infty}^{-1}\bA_{n})\bC = \bC(\bA_{\infty}^{-1}\bA_{n}) = \bI.
\end{split}\]
Hence, $\bA_{\infty}(\bA_{\infty}^{-1}\bA_{n})\bC = \bA_n\bC = \bA_{\infty}$,
i.e., $\bA_n\bC\bA_{\infty}^{-1} = \bI$.
Therefore, 
$\bA_n$ is nonsingular, and
\[\begin{split}
\bA_n^{-1} =\bC\bA_{\infty}^{-1} = \sum_{k=0}^\infty (\bI-\bA_{\infty}^{-1}\bA_n)^k\bA_{\infty}^{-1}.
\end{split}\]
Therefore, for $n > N_0$, $\bA_n$ is nonsingular, and 
\[\begin{split}
||\bA_n^{-1} - \bA_{\infty}^{-1}||_2 &\leq 
\sum_{k=1}^\infty ||\bI-\bA_{\infty}^{-1}\bA_n||_2^k||\bA_{\infty}^{-1}||_2
= \frac{||\bI-\bA_{\infty}^{-1}\bA_n||_2||\bA_{\infty}^{-1}||_2}{1-||\bI-\bA_{\infty}^{-1}\bA_n||_2}
\leq ||\bA_{\infty}^{-1}||_2.
\end{split}\]
Hence, $||\bA_n^{-1}||_2 \leq ||\bA_n^{-1} - \bA_{\infty}^{-1}||_2 + ||\bA_{\infty}^{-1}||_2
\leq 2||\bA_{\infty}^{-1}||_2$.

\hspace{\fill}\\
By definition, $\bA_n + \bB_n = \bA_n(\bI+\bA_n^{-1}\bB_n)$.
According to the Von Neumann Lemma, when $||\bA_n^{-1}\bB_n||_2 < 1$, then
$\bI+\bA_n^{-1}\bB_n$ is nonsingular, and
\[\begin{split}
(\bI + \bA_n^{-1}\bB_n)^{-1} = \sum_{k=0}^\infty (-\bA_n^{-1}\bB_n)^k.
\end{split}\]
Hence, for $n > N_0$ and when $||\bA_n^{-1}\bB_n||_2 < 1$,
$\bA_n + \bB_n$ is nonsingular, and 
\[\begin{split}
(\bA_n + \bB_n)^{-1} - \bA_n^{-1} &= (\bI + \bA_n^{-1}\bB_n)^{-1}\bA_n^{-1} - \bA_n^{-1}
= \sum_{k=1}^\infty (-\bA_n^{-1}\bB_n)^k\bA_n^{-1}.
\end{split}\]
Then for $n > N_0$ and when $||\bA_n^{-1}\bB_n||_2 < 1$,
\[\begin{split}
||(\bA_n + \bB_n)^{-1} - \bA_n^{-1}||_2 &\leq \sum_{k=1}^\infty ||\bA_n^{-1}\bB_n||_2^k \cdot ||\bA_n^{-1}||_2
= \frac{||\bA_n^{-1}\bB_n||_2||\bA_n^{-1}||_2}{1-||\bA_n^{-1}\bB_n||_2}
\\&
\leq \frac{2||\bA_\infty^{-1}||_2||\bA_n^{-1}\bB_n||_2}{1-||\bA_n^{-1}\bB_n||_2}.
\end{split}\]
Since $\bA_n^{-1}\bB_n = o_p(1)$, 
then $||\bA_n^{-1}\bB_n||_2 = o_p(1)$, i.e.,
for any $\delta,a > 0$, 
there exists some $N_{\delta,a}$ s.t. for $n > N_{\delta,a}$,
\[\begin{split}
\bbP\left(||\bA_n^{-1}\bB_n||_2 \geq \frac{\delta}{\delta + 2||\bA_\infty^{-1}||_2}\right) \leq a.
\end{split}\]
Hence, for any $\delta > 0$,
when $n > \{N_0,N_{\delta,a}\}$,
\[\begin{split}
1-a &\leq \bbP\left(||\bA_n^{-1}\bB_n||_2 < \frac{\delta}{\delta + 2||\bA_\infty^{-1}||_2}\right)\\&
\leq \bbP\left((\bA_n + \bB_n)\text{ nonsingular, and }||(\bA_n + \bB_n)^{-1} - \bA_n^{-1}||_2 < \delta \right).
\end{split}\]
Therefore, for $n > \max\{N_{\delta,a},N_0\}$,
\[\begin{split}
\bbP\left((\bA_n + \bB_n)\text{ nonsingular, and }
||(\bA_n + \bB_n)^{-1} - \bA_n^{-1}||_2 \geq \delta \right)
+ \bbP((\bA_n + \bB_n) \text{ singular}) \leq a.
\end{split}\]
When $\bA_n + \bB_n$ is singular,  $(\bA_n + \bB_n)^{-1}$ is defined to be $\bZero$.
Then when $n>\max\{N_0,N_{\delta,a}\}$,
\[\begin{split}
&\bbP\left(
||(\bA_n + \bB_n)^{-1} - \bA_n^{-1}||_2 \geq \delta \right)\\
=& \bbP\left((\bA_n + \bB_n)\text{ nonsingular, and }
||(\bA_n + \bB_n)^{-1} - \bA_n^{-1}||_2 \geq \delta \right)
\\&+
\bbP\left((\bA_n + \bB_n)\text{ singular, and }
||(\bA_n + \bB_n)^{-1} - \bA_n^{-1}||_2 \geq \delta \right)\\
\leq& \bbP\left((\bA_n + \bB_n)\text{ nonsingular, and }
||(\bA_n + \bB_n)^{-1} - \bA_n^{-1}||_2 \geq \delta \right)
+ \bbP((\bA_n + \bB_n) \text{ singular})\\
\leq& a,
\end{split}\]
i.e., $||(\bA_n + \bB_n)^{-1} - \bA_n^{-1}||_2 = o_p(1)$.
Therefore, $(\bA_n + \bB_n)^{-1} = \bA_n^{-1} + o_p(1)$.
\end{proof}

\begin{proof}[Proof of Proposition \ref{pro:consistent_fac}]
    It suffices to show that for $q=1,\ldots,Q$,
$\hat{\bbeta}_q(\bS,\bW^{(1)})-\bbeta_q = o_p(1)$, where 
$\bbeta_q = \bS_{\bx\bx}^{-1}\bS_{\bx,q}$.
Then it suffices to show that 
\[\begin{split}
\bms_{\bx\bx,q}(\bS,\bW^{(1)}) - \bS_{\bx\bx} = o_p(1),\      
\bms_{\bx,q}(\bS,\bW^{(1)}) - \bS_{\bx,q} = o_p(1).
\end{split}\]
According to \cite{cochran1977sampling}, under S-CRFE, since units receiving treatment arm $q$ in the first stage 
is a simple random sample of size $n_{1q}$, then 
$\bms_{\bx\bx,q}(\bS,\bZ^{(1)})$ and $\bms_{\bx,q}(\bS,\bZ^{(1)})$ are unbiased estimators of $\bS_{\bx\bx}$ and $\bS_{\bx,q}$, respectively.
Similar to the proof of Lemma B16 in \cite{yang2023rejective}, under Assumption \ref{assump:regularity_fac_full} and S-CRFE,
for any $(A_i, B_i)$ equal to $(Y_i(q), Y_i(q))$, $(Y_i(q), X_{ik})$, 
or $(X_{ik},X_{il})$, we have
\[\begin{split}
\bbE(s_{AB}(\bS,\bZ^{(1)})) = S_{AB},\ 
\bbV(s_{AB}(\bS,\bZ^{(1)})) = o(1).
\end{split}\]
By definition, 
\[\begin{split}
&\bbE[(s_{AB}(\bS,\bZ^{(1)})-S_{AB})^2 \mid \phi_M(\sqrt{n^{(1)}}\hat{\btau}_{\bx,1}(\bS,\bZ^{(1)}),\bV_{\bx\bx,1}(\bS))=1]\\
\leq & \frac{\bbE[(s_{AB}(\bS,\bZ^{(1)})-S_{AB})^2]}{\bbP(\phi_M(\sqrt{n^{(1)}}\hat{\btau}_{\bx,1}(\bS,\bZ^{(1)}),\bV_{\bx\bx,1}(\bS))=1)}\\
=& \frac{\bbV(s_{AB}(\bS,\bZ^{(1)}))}{\bbP(\phi_M(\sqrt{n^{(1)}}\hat{\btau}_{\bx,1}(\bS,\bZ^{(1)}),\bV_{\bx\bx,1}(\bS))=1)}
\rightarrow 0,
\end{split}\]
i.e., $(s_{AB}(\bS,\bZ^{(1)}) \mid \phi_M(\sqrt{n^{(1)}}\hat{\btau}_{\bx,1}(\bS,\bZ^{(1)}),\bV_{\bx\bx,1}(\bS))=1)-S_{AB} = o_p(1)$.
Lemma \ref{lem:equiv} shows that 
\[\begin{split}
\sup_\cA \big|\bbP((\bS,\bW^{(1)}) \in \cA)-
\bbP((\bS,\bZ^{(1)}) \in \cA \mid 
\phi_M(\sqrt{n^{(1)}}\hat{\btau}_{\bx,1}(\bS,\bZ^{(1)}),\bV_{\bx\bx,1}(\bS))=1)\big|
\rightarrow 0.
\end{split}\]
Therefore, for any $c$,
\[\begin{split}
\sup_c \big|&\bbP(|s_{AB,q}(\bS,\bW^{(1)})-S_{AB}| \geq c)
-\\& \bbP(|s_{AB,q}(\bS,\bZ^{(1)})-S_{AB}| \geq c \mid 
\phi_M(\sqrt{n^{(1)}}\hat{\btau}_{\bx,1}(\bS,\bZ^{(1)}),\bV_{\bx\bx,1}(\bS))=1)\big|
\rightarrow 0.
\end{split}\]
Hence, $s_{AB}(\bS,\bW^{(1)})-S_{AB} = o_p(1)$.
Therefore, 
under DA-ReO$_\F$ or DA-ReO$_\FE$, we have
\[\begin{split}
\bms_{\bx\bx,q}(\bS,\bW^{(1)}) - \bS_{\bx\bx} = o_p(1),\      
\bms_{\bx,q}(\bS,\bW^{(1)}) - \bS_{\bx,q} = o_p(1) \text{ for }q = 1,\ldots,Q.
\end{split}\]
Since $\bS_{\bx\bx} \rightarrow \bS_{\bx\bx,\infty}$ is nonsingular, then Lemma \ref{lem:inverse_lemma} implies that 
\[\begin{split}
(\bms_{\bx\bx,q}(\bS,\bW^{(1)}))^{-1} - \bS_{\bx\bx}^{-1} = o_p(1).
\end{split}\]
Hence, 
$\hat{\bbeta}_q(\bS,\bW^{(1)})-\bbeta_q = o_p(1)$ for $q = 1,\ldots,Q$, and we complete the proof.
\end{proof}
\end{proof}

\subsection{Proof of Theorem \ref{thm:distribution_fac}}

\begin{proof}
Before proving Theorem \ref{thm:distribution_fac}, we first prove the following lemmas.
\begin{lemma}\label{lem:prob2}
Under Assumption \ref{assump:V_tautau} and Assumption \ref{assump:regularity_fac_full}, as $n\rightarrow \infty$, for random splitting $\bS$, completely random assignment $\bZ^{(2)}$, and rerandomization assignment $\bW^{(1)}$ defined in Eq.~\eqref{eq:assign1_fac}, we have 
\[\begin{split}
&\bbP(\phi_\bomega(\sqrt{n^{(2)}}\hat{\btau}_{\bx,2}(\bS,\bZ^{(2)}),\bV_{\bx\bx,2}(\bS),\hat{\bB}(\bS,\bW^{(1)})) = 1 \mid \bS,\bW^{(1)})\\
=& 
\bbP(\phi_\bomega(\bD_{2,\infty},\bV_{\bx\bx,\infty},\bB_\infty) = 1) + o_p(1),
\end{split}\]
where $\bD_{2,\infty}\sim \cN(\bZero,\bV_{\bx\bx,\infty})$, $\bV_{\bx\bx,\infty} = \sum_{q=1}^Q \frac{(\bA_q\bA_q')\otimes \bS_{\bx\bx,\infty}}{r_q}$, and 
$\bB_\infty$ is the limit of $\bB$.
\end{lemma}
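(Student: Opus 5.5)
We will prove Lemma~\ref{lem:prob2} by the same subsequence argument used for Lemma~\ref{lem:prob1}, now conditioning on the pair $(\bS,\bW^{(1)})$ instead of only on $\bS$. Write
\[
\psi_{n,2}(\bS,\bW^{(1)})=\bbP\big(\phi_\bomega(\sqrt{n^{(2)}}\hat{\btau}_{\bx,2}(\bS,\bZ^{(2)}),\bV_{\bx\bx,2}(\bS),\hat{\bB}(\bS,\bW^{(1)}))=1\mid \bS,\bW^{(1)}\big).
\]
By Theorem 2.3.2 in \cite{durrett2019probability}, it suffices to show the following: every subsequence $n(m)$ admits a further subsequence $n(m_k)$ along which $\psi_{n,2}-\bbP(\phi_\bomega(\bD_{2,\infty},\bV_{\bx\bx,\infty},\bB_\infty)=1)\to 0$ almost surely.

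\textbf{Step 1: almost-sure convergence of the plug-in inputs.} Lemma B4 of \cite{yang2023rejective} gives $\bms_{\bx\bx}(\cI_2(\bS))-\bS_{\bx\bx}=o_p(1)$, so $\bV_{\bx\bx,2}(\bS)\to\bV_{\bx\bx,\infty}$ in probability. Proposition~\ref{pro:consistent_fac} gives $\hat{\bB}(\bS,\bW^{(1)})\to\bB_\infty$ in probability. Passing to a common further subsequence, both limits hold almost surely. In addition, $\bB_\infty'\bV_{\bx\bx,\infty}\bB_\infty=\bV^{\parallel}_{\btau\btau,\infty}$ is nonsingular by Assumption~\ref{assump:V_tautau}. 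Hence, along the subsequence and for all large $n$, $\hat{\bB}'\bV_{\bx\bx,2}\hat{\bB}$ is nonsingular almost surely, so the CRFE fallback branch is not triggered. Moreover, $\bQ_{\hat\bB}$ and $\hat\bLambda$ are continuous functions of $\hat\bB'\bV_{\bx\bx,2}\hat\bB$ near a nonsingular point, because they are built from Schur complements of leading principal submatrices. Therefore the quadratic-form matrix $\hat\bB\bQ_{\hat\bB}'\bomega\hat\bLambda^{-1}\bQ_{\hat\bB}\hat\bB'$ converges almost surely to its population limit, which we denote $\bM_\infty$.

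\textbf{Step 2: conditional CLT on the second subset.} Fix a realization $(\bms,\bw)$ in the probability-one set from Step 1. The key observation is that, given $\bS=\bms$ and $\bW^{(1)}=\bw$, $\bZ^{(2)}$ is still a CRFE on $\cI_2(\bms)$ and is independent of $\bW^{(1)}$. This is because $\bW^{(1)}$ only involves units in $\cI_1(\bms)$, and the criterion uses $\hat\bB(\bms,\bw)$ only as a fixed matrix. As in Lemma~\ref{lem:prob1}, the Lindeberg-type condition $\max_i\|\bX_i-\bar\bX_2(\bms)\|_2^2/n^{(2)}\to0$ follows from Assumption~\ref{assump:regularity_fac_full}(4) together with $n^{(2)}\ge n^{(1)}$ eventually. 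The finite-population CLT for CRFE then gives $(\sqrt{n^{(2)}}\hat{\btau}_{\bx,2}\mid \bS=\bms)\to_d \bD_{2,\infty}$.

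\textbf{Step 3: passing to the limit of the acceptance probability.} By Lemma~\ref{lem:joint}, the vector $(\sqrt{n^{(2)}}\hat\btau_{\bx,2},\bV_{\bx\bx,2},\hat\bB)$ converges jointly in distribution to $(\bD_{2,\infty},\bV_{\bx\bx,\infty},\bB_\infty)$. The criterion is $\phi_\bomega=1\{\bD'\bM\bD\le\xi_{\bomega,\alpha}\}$. Its discontinuity set at the limit is $\{\bD'\bM_\infty\bD=\xi_{\bomega,\alpha}\}$, which has probability zero under a Gaussian $\bD$ because $\xi_{\bomega,\alpha}>0$ and $\bM_\infty\ne\bZero$. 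The continuous mapping theorem therefore yields $\psi_{n(m_k),2}(\bms,\bw)\to\bbP(\phi_\bomega(\bD_{2,\infty},\bV_{\bx\bx,\infty},\bB_\infty)=1)$, which completes the argument.

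I expect the main obstacle to be Step 1. There one must justify continuity of the map $\bB\mapsto(\bQ_\bB,\bLambda)$ and rule out the singular fallback branch uniformly along the subsequence. A related subtlety is that Proposition~\ref{pro:consistent_fac} is stated under the joint law of $(\bS,\bW^{(1)})$, so the subsequence extraction has to be carried out on that joint law, not conditionally on $\bS$.
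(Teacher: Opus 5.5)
Your proposal is correct and follows essentially the same route as the paper's proof: the subsequence argument via Theorem~2.3.2 of Durrett, almost-sure convergence of $\bV_{\bx\bx,2}(\bS)$ and $\hat\bB(\bS,\bW^{(1)})$ along a further subsequence, the conditional finite-population CLT on $\cI_2$, and Lemma~\ref{lem:joint} with the continuous mapping theorem; your Step~1 also usefully makes explicit the continuity of $(\bQ_{\hat\bB},\hat\bLambda)$ and the non-triggering of the singular fallback, which the paper leaves implicit. One small correction: $n^{(2)}\ge n^{(1)}$ need not hold eventually, since Assumption~\ref{assump:regularity_fac_full} allows $\rho\in(1/2,1)$, so the Lindeberg-type condition should instead be justified by the boundedness of $n^{(1)}/n^{(2)}=\rho_n/(1-\rho_n)$ when $\rho<1$, exactly as in the proof of Theorem~\ref{thm:TSCRFE_asymp}.
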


\begin{lemma}\label{lem:equiv2}
Under Assumption \ref{assump:V_tautau} and Assumption \ref{assump:regularity_fac_full}, let $\cD=\{0,1\}^n \times \{1,\ldots,Q\}^{n^{(1)}} \times \{1,\ldots,Q\}^{n^{(2)}}$. As $n \rightarrow \infty$, 
for random splitting $\bS$, 
treatment assignments $\bZ^{(1)},\bZ^{(2)}$ in S-CRFE, and 
treatment assignments $\bW^{(1)},\bW^{(2)}$ in DA-ReO$_\F$ (or DA-ReO$_\FE$),
we have 
\[\begin{split}
&\sup_{\cA \subset \cD} |\bbP((\bS,\bW^{(1)},\bW^{(2)}) \in \cA) - 
\bbP((\bS,\bZ^{(1)},\bZ^{(2)}) \in \cA \mid \phi_M(\sqrt{n^{(1)}}\hat{\btau}_{\bx,1}(\bS,\bZ^{(1)}),\bV_{\bx\bx,1}(\bS))=1, \\
&\ \ \ \ \ \ \ \ \ \ \ \ \phi_\bomega(\sqrt{n^{(2)}}\hat{\btau}_{\bx,2}(\bS,\bZ^{(2)}),\bV_{\bx\bx,2}(\bS),
\hat{\bB}(\bS,\bZ^{(1)}))=1)|
\rightarrow 0.
\end{split}\]
\end{lemma}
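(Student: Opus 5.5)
We follow the template of the proof of Lemma~\ref{lem:equiv}, now with two nested acceptance events. Write $E_1=\{\phi_M(\sqrt{n^{(1)}}\hat{\btau}_{\bx,1}(\bS,\bZ^{(1)}),\bV_{\bx\bx,1}(\bS))=1\}$ and $E_2=\{\phi_\bomega(\sqrt{n^{(2)}}\hat{\btau}_{\bx,2}(\bS,\bZ^{(2)}),\bV_{\bx\bx,2}(\bS),\hat{\bB}(\bS,\bZ^{(1)}))=1\}$. Define the conditional acceptance probabilities
\[
\psi_1(\bms)=\bbP(E_1\mid \bS=\bms), \qquad \psi_2(\bms,\bw)=\bbP(E_2\mid \bS=\bms,\bZ^{(1)}=\bw).
\]
Since $\bZ^{(2)}$ is a CRFE on $\cI_2(\bms)$ independent of $\bZ^{(1)}$ given $\bS$, $\psi_2(\bms,\bw)$ is exactly the acceptance probability of the second-stage rerandomization.

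By construction, the DA design has
\[
\bbP(\bS=\bms,\bW^{(1)}=\bw_1,\bW^{(2)}=\bw_2)=\bbP(\bS=\bms)\,\frac{\bbP(\bZ^{(1)}=\bw_1,E_1\mid \bms)}{\psi_1(\bms)}\cdot\frac{\bbP(\bZ^{(2)}=\bw_2,E_2\mid\bms,\bw_1)}{\psi_2(\bms,\bw_1)}.
\]
The S-CRFE law conditioned on $E_1\cap E_2$ has the same numerator divided by the single constant $\bbP(E_1\cap E_2)$. The two laws therefore differ only by the likelihood ratio $\bbP(E_1\cap E_2)/\{\psi_1(\bms)\psi_2(\bms,\bw_1)\}$. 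The whole proof reduces to showing this ratio is uniformly close to $1$ outside a set of vanishing probability.

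First, Lemma~\ref{lem:prob1} gives $\psi_1(\bS)=\alpha_1+o_p(1)$, where $\alpha_1=\bbP(\phi_M(\bD_{1,\infty},\bV_{\bx\bx,\infty})=1)=\alpha$.

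Second, Lemma~\ref{lem:prob2} gives $\psi_2(\bS,\bW^{(1)})=\alpha_2+o_p(1)$ with $\alpha_2=\alpha$. This statement is under the $\bW^{(1)}$ law, but we need $\psi_2(\bS,\bZ^{(1)})\to\alpha_2$ in probability under $\bZ^{(1)}$ conditioned on $E_1$. These coincide by Lemma~\ref{lem:equiv}, since $\psi_2$ is a function of $(\bS,\bZ^{(1)})$.

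Third, we compute $\bbP(E_1\cap E_2)=\bbE[1_{E_1}\psi_2(\bS,\bZ^{(1)})]$. Because $\psi_2\in[0,1]$ and converges in probability to $\alpha_2$ on $E_1$, dominated convergence yields $\bbP(E_1\cap E_2)\to\alpha_1\alpha_2>0$.

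Now fix $0<a<\alpha\min(1,\alpha)$ and define the good set
\[
\cS_a=\{(\bms,\bw_1):|\psi_1(\bms)-\alpha_1|\le a,\ |\psi_2(\bms,\bw_1)-\alpha_2|\le a\}.
\]
For large $n$, its complement has probability at most $a$ both under the DA law and under the S-CRFE law conditioned on $E_1\cap E_2$. For the DA law this follows from Lemma~\ref{lem:prob1} and Lemma~\ref{lem:prob2}. For the conditioned S-CRFE law it follows because conditioning on an event of probability near $\alpha^2$ inflates probabilities by at most a factor $(\alpha^2-a)^{-1}$, so the bound becomes $a/(\alpha^2-a)$.

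On $\cS_a$ the likelihood ratio lies in
\[
\left[\frac{\alpha_1\alpha_2-a}{(\alpha_1+a)(\alpha_2+a)},\ \frac{\alpha_1\alpha_2+a}{(\alpha_1-a)(\alpha_2-a)}\right].
\]
Sandwiching $\bbP(\cdot\in\cA)$ exactly as in the proof of Lemma~\ref{lem:equiv} gives $\sup_\cA|\cdot|\le C(\alpha)\,a$ for large $n$. Letting $a\to0$ completes the proof.

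The main obstacle is the second step. Lemma~\ref{lem:prob2} is stated under the $\bW^{(1)}$ law, whereas $\bbP(E_1\cap E_2)$ and the conditioned S-CRFE law require control of $\psi_2$ under $\bZ^{(1)}\mid E_1$. The total-variation equivalence of Lemma~\ref{lem:equiv} is exactly what bridges the two. The argument also needs care that the singular-matrix conventions for $\hat{\bB}$ only affect events of vanishing probability; this is guaranteed by Proposition~\ref{pro:consistent_fac} and Assumption~\ref{assump:V_tautau}.
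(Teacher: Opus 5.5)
Your proposal is correct and follows essentially the paper's argument: the same conditional acceptance probabilities, the same good set $\cS_a$, the same likelihood ratio $\bbP(E_1\cap E_2)/\{\psi_1(\bms)\psi_2(\bms,\bw_1)\}$, and the same sandwich bounds. The only difference is how you get $\bbP(E_1\cap E_2)\to\alpha^2$ (via Lemma~\ref{lem:prob2} transferred to $\bZ^{(1)}\mid E_1$ through Lemma~\ref{lem:equiv} and the identity $\bbP(E_1\cap E_2)=\bbE[1_{E_1}\psi_2(\bS,\bZ^{(1)})]$, instead of the paper's joint S-CRFE CLT with asymptotically independent $\bD_1,\bD_2$); with your route, bound the complement of $\cS_a$ under the conditioned law through $E_1$ (inflation factor $\bbP(E_1)/\bbP(E_1\cap E_2)\approx\alpha^{-1}$), because you never established the unconditional statement $\psi_2(\bS,\bZ^{(1)})=\alpha+o_p(1)$ that your $a/(\alpha^2-a)$ bound, like the paper's, implicitly uses.
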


\begin{proof}[Proof of Lemma \ref{lem:prob2}]
Let 
\[\begin{split}
\psi_{n,2}(\bS,\bW^{(1)}) &\equiv \bbP(\phi_\bomega(\sqrt{n^{(2)}}\hat{\btau}_{\bx,2}(\bS,\bZ^{(2)}),\bV_{\bx\bx,2}(\bS),\hat{\bB}(\bS,\bW^{(1)})) = 1 \mid \bS,\bW^{(1)}).
\end{split}\]
By Theorem 2.3.2 in \cite{durrett2019probability}, it suffices to show that for any subsequence $n(m) \rightarrow \infty$,
there exists a further subsequence $n(m_k) \rightarrow \infty$ s.t.
\begin{align}
\label{eq:res2}
&\psi_{n(m_k),2}(\bS,\bW^{(1)}) - \bbP(\phi_\bomega(\bD_{2,\infty},\bV_{\bx\bx,\infty},\bB_\infty) = 1) \xrightarrow{a.s.} 0.
\end{align}
By definition, 
\[\begin{split} 
\bV_{\bx\bx,2}(\bS) =\sum_{q=1}^Q \frac{(\bA_q\bA_q')\otimes \bms_{\bx\bx}(\cI_2(\bS))}{r_{q}},
\end{split}\]
and according to Lemma B4 in \cite{yang2023rejective}, under Assumption \ref{assump:regularity_fac_full}, we have 
$\bms_{\bx\bx}(\cI_2(\bS)) - \bS_{\bx\bx} = o_p(1).$
Therefore, $\bV_{\bx\bx,2}(\bS) - \bV_{\bx\bx,\infty} = o_p(1).$
By Proposition \ref{pro:consistent_fac}, $\hat{\bB}(\bS,\bW^{(1)}) - \bB = o_p(1)$.
Then Theorem 2.3.2 in \cite{durrett2019probability} implies that for any subsequence $n(m) \rightarrow \infty$,
there exists a further subsequence $n(m_k) \rightarrow \infty$ s.t. $\bV_{\bx\bx,2}(\bS) \xrightarrow{a.s.} \bV_{\bx\bx,\infty}$ and $\hat
\bB(\bS,\bW^{(1)}) \xrightarrow{a.s.} \bB_\infty$.
Moreover, for
$\bar{\bX}_2(\bS) = (n^{(2)})^{-1}\sum_{i=1}^n (1-S_i) \bX_i$, 
we have 
\[\begin{split}
||\bX_i - \bar{\bX}_2(\bS)||_2 &\leq ||\bX_i - \bar{\bX}||_2 + ||\bar{\bX} - \bar{\bX}_2(\bS)||_2
\leq 2\max_{1\leq j\leq n} ||\bX_j - \bar{\bX}||_2.
\end{split}\]
Under Assumption \ref{assump:regularity_fac_full}, we have 
\[\begin{split}
&\frac{1}{n^{(2)}}\max_{i:S_i=0}||\bX_i - \bar{\bX}_2(\bS)||_2^2 \leq 
\frac{4}{n^{(2)}}\max_{1\leq j\leq n} ||\bX_j - \bar{\bX}||_2^2 = o(1).
\end{split}\]
According to the finite population CLT for CRFE, for the subsequence $n(m_k)\rightarrow\infty$,
when the sequence of $(\bms_{n(m_k)},\bw_{n(m_k)})$ satisfies that 
$\bV_{\bx\bx,2}(\bms_{n(m_k)}) - \bV_{\bx\bx,\infty} = o(1)$, and 
$\hat\bB(\bms_{n(m_k)},\bw_{n(m_k)}) - \bB_\infty = o(1)$,
we have 
\[\begin{split}
(\sqrt{n^{(2)}}\hat{\btau}_{\bx,2}(\bS,\bZ^{(2)})\mid \bS = \bms_{n(m_k)}) \xrightarrow{d} 
\bD_{2,\infty} \sim \cN(\bZero,\bV_{\bx\bx,\infty}).
\end{split}\]
Lemma \ref{lem:joint} and the continuous mapping theorem implies that 
\[\begin{split}
(\phi_\bomega(\sqrt{n^{(2)}}\hat{\btau}_{\bx,2}(\bS,\bZ^{(2)}),\bV_{\bx\bx,2}(\bS),\hat{\bB}(\bS,\bW^{(1)}))\mid \bS = \bms_{n(m_k)},\bW^{(1)}=\bw_{n(m_k)})
\xrightarrow{d} 
\phi_\bomega(\bD_{2,\infty},\bV_{\bx\bx,\infty},\bB_\infty),
\end{split}\]
and thus, 
\[\begin{split}
&\psi_{n(m_k),2}(\bms_{n(m_k)},\bw_{n(m_k)})\\ =& \bbP(\phi_\bomega(\sqrt{n^{(2)}}\hat{\btau}_{\bx,2}(\bS,\bZ^{(2)}),\bV_{\bx\bx,2}(\bS),\hat{\bB}(\bS,\bW^{(1)})) = 1 \mid \bS=\bms_{n(m_k)},\bW^{(1)}=\bw_{n(m_k)}) \\ \rightarrow &
\bbP(\phi_\bomega(\bD_{2,\infty},\bV_{\bx\bx,\infty},\bB_\infty)=1).
\end{split}\]
Since $\bV_{\bx\bx,2}(\bS) \rightarrow \bV_{\bx\bx,\infty}$ a.s. for $n({m_k})\rightarrow\infty$, we have Eq.~\eqref{eq:res2} holds.
Then we complete the proof.
\end{proof}

\begin{proof}[Proof of Lemma \ref{lem:equiv2}]
By Theorem \ref{thm:TSCRFE_asymp}, under S-CRFE and Assumption \ref{assump:regularity_fac_full},
\[\begin{split}
&\big(\sqrt{n^{(1)}}(\hat{\btau}_1(\bS,\bZ^{(1)})-\btau)',
\sqrt{n^{(1)}}\hat{\btau}_{\bx,1}'(\bS,\bZ^{(1)}),
\sqrt{n^{(2)}}(\hat{\btau}_2(\bS,\bZ^{(2)})-\btau)',
\sqrt{n^{(2)}}\hat{\btau}_{\bx,2}'(\bS,\bZ^{(2)})\big)'\\
&
\overset{\cdot}{\sim}
\cN(\bZero, \bV_{\TSCRFE}).
\end{split}\]
Suppose $(\bC_1',\bD_1',\bC_2',\bD_2')'
\sim \cN(\bZero,\bV_\TSCRFE)$.
According to Lemma B4 in \cite{yang2023rejective}, under S-CRFE, we have 
$\bV_{\bx\bx,1}(\bS) \xrightarrow{p} \bV_{\bx\bx,\infty}$, and $\bV_{\bx\bx,2}(\bS) \xrightarrow{p} \bV_{\bx\bx,\infty}$.
Hence, Lemma \ref{lem:joint} and continuous mapping theorem imply that 
\[\begin{split}
&(\phi_M(\sqrt{n^{(1)}}\hat{\btau}_{\bx,1}(\bS,\bZ^{(1)}),\bV_{\bx\bx,1}(\bS)),
\phi_\bomega(\sqrt{n^{(2)}}\hat{\btau}_{\bx,2}(\bS,\bZ^{(2)}),\bV_{\bx\bx,2}(\bS),
\hat{\bB}(\bS,\bZ^{(1)}))) \\
\xrightarrow{d}&
(\phi_M(\bD_{1,\infty},\bV_{\bx\bx,\infty}),
\phi_\bomega(\bD_{2,\infty},\bV_{\bx\bx,\infty},\bB_\infty)),
\end{split}\]
where $(\bD_{1,\infty}',\bD_{2,\infty}')' \sim \cN\left(\bZero,\begin{pmatrix}
\bV_{\bx\bx,\infty} & \bZero\\
\bZero & \bV_{\bx\bx,\infty}
\end{pmatrix}\right)$.
Therefore,
\begin{align}
&\notag
\lim_{n\rightarrow\infty}
\bbP(\phi_M(\sqrt{n^{(1)}}\hat{\btau}_{\bx,1}(\bS,\bZ^{(1)}),\bV_{\bx\bx,1}(\bS))=1,\\
&\notag\quad\quad\quad
\phi_\bomega(\sqrt{n^{(2)}}\hat{\btau}_{\bx,2}(\bS,\bZ^{(2)}),\bV_{\bx\bx,2}(\bS),
\hat{\bB}(\bS,\bZ^{(1)}))=1)\\
= &\label{eq:prob1}\bbP(\phi_M(\bD_{1,\infty},\bV_{\bx\bx,\infty})=1)\bbP(
\phi_\bomega(\bD_{2,\infty},\bV_{\bx\bx,\infty},\bB_\infty)=1)=\alpha^2.
\end{align}
Then for any $0 < a < \alpha^2$, there exists some $N_a$ s.t. 
for any $n > N_a$, we have 
\begin{equation}\label{eq:probb0}
\begin{split}
&\bbP(\phi_M(\sqrt{n^{(1)}}\hat{\btau}_{\bx,1}(\bS,\bZ^{(1)}),\bV_{\bx\bx,1}(\bS))=1,\\
&\quad\phi_\bomega(\sqrt{n^{(2)}}\hat{\btau}_{\bx,2}(\bS,\bZ^{(2)}),\bV_{\bx\bx,2}(\bS),
\hat{\bB}(\bS,\bZ^{(1)}))=1)
\in [\alpha^2-a,\alpha^2+a].
\end{split}
\end{equation}
Moreover,
Lemma \ref{lem:prob1} and Lemma \ref{lem:prob2} imply that 
\[\begin{split}
&\bbP(\phi_M(\sqrt{n^{(1)}}\hat{\btau}_{\bx,1}(\bS,\bZ^{(1)}),\bV_{\bx\bx,1}(\bS))=1 \mid \bS) 
= \bbP(\phi_M(\bD_{1,\infty},\bV_{\bx\bx,\infty}) = 1) + o_p(1),\\
&\bbP(\phi_\bomega(\sqrt{n^{(2)}}\hat{\btau}_{\bx,2}(\bS,\bZ^{(2)}),\bV_{\bx\bx,2}(\bS),\hat{\bB}(\bS,\bW^{(1)})) = 1 \mid \bS,\bW^{(1)})\\ =& 
\bbP(\phi_\bomega(\bD_{2,\infty},\bV_{\bx\bx,\infty},\bB_\infty) = 1) + o_p(1),
\end{split}\]
where $\bbP(\phi_M(\bD_{1,\infty},\bV_{\bx\bx,\infty}) = 1) = \alpha$, $\bbP(\phi_\bomega(\bD_{2,\infty},\bV_{\bx\bx,\infty},\bB_\infty) = 1) = \alpha$.
Similarly, we have 
\[\begin{split}
&\bbP(\phi_\bomega(\sqrt{n^{(2)}}\hat{\btau}_{\bx,2}(\bS,\bZ^{(2)}),\bV_{\bx\bx,2}(\bS),\hat{\bB}(\bS,\bZ^{(1)})) = 1 \mid \bS,\bZ^{(1)})\\ =& 
\bbP(\phi_\bomega(\bD_{2,\infty},\bV_{\bx\bx,\infty},\bB_\infty) = 1) + o_p(1).
\end{split}\]
Therefore, for any $a > 0$, define 
\[\begin{split}
\cS_a &= \left\{\bms:
\left|\bbP(\phi_M(\sqrt{n^{(1)}}\hat{\btau}_{\bx,1}(\bms,\bZ^{(1)}),\bV_{\bx\bx,1}(\bms))=1)-\alpha\right| \leq a\right\}\times \{1,\ldots,Q\}^{n^{(1)}} \cap \\
&\left\{(\bms,\bw): \left|\bbP(\phi_\bomega(\sqrt{n^{(2)}}\hat{\btau}_{\bx,2}(\bms,\bZ^{(2)}),\bV_{\bx\bx,2}(\bms),\hat{\bB}(\bms,\bw))=1)-\alpha\right| \leq a\right\}
\end{split}\]
there exists $N_a$ s.t. for $n > N_a$,
\[\begin{split}
\bbP((\bS,\bZ^{(1)}) \not\in \cS_a) \leq a,\text{ and }
\bbP((\bS,\bW^{(1)}) \not\in \cS_a) \leq a.
\end{split}\]
Moreover, for $(\bms,\bw) \in \cS_a$ and $a < \alpha^2$, then 
\begin{align}
\label{eq:probb1}
0 < \bbP(\phi_M(\sqrt{n^{(1)}}\hat{\btau}_{\bx,1}(\bms,\bZ^{(1)}),\bV_{\bx\bx,1}(\bms))=1) \in [\alpha-a,\alpha+a],\\
\label{eq:probb2}
0 < \bbP(\phi_\bomega(\sqrt{n^{(2)}}\hat{\btau}_{\bx,2}(\bms,\bZ^{(2)}),\bV_{\bx\bx,2}(\bms),\hat{\bB}(\bms,\bw))=1) \in [\alpha-a,\alpha+a],
\end{align}

By definition, for $n > N_a$ and any $\cA \subset \cD$,
\[\begin{split}
\bbP((\bS,\bW^{(1)},\bW^{(2)}) \in \cA)
&\leq 
\bbP((\bS,\bW^{(1)},\bW^{(2)}) \in \cA,
(\bS,\bW^{(1)}) \in \cS_a) + \bbP((\bS,\bW^{(1)}) \not\in \cS_a)\\
&\leq \sum_{(\bms,\bw) \in \cS_a} 
\bbP((\bS,\bW^{(1)},\bW^{(2)}) \in \cA,
\bS=\bms,\bW^{(1)}=\bw) + a,
\end{split}\]
and 
\[\begin{split}
\bbP((\bS,\bW^{(1)},\bW^{(2)}) \in \cA)
&\geq  \bbP((\bS,\bW^{(1)},\bW^{(2)}) \in \cA,(\bS,\bW^{(1)}) \in \cS_a)
\\
&= \sum_{(\bms,\bw) \in \cS_a}
\bbP((\bS,\bW^{(1)},\bW^{(2)}) \in \cA,
\bS=\bms,\bW^{(1)} = \bw).
\end{split}\]
For any $\bms$, define $\cA_1(\bms) = \{\bw:\phi_M(\sqrt{n^{(1)}}\hat{\btau}_{\bx,1}(\bms,\bw),\bV_{\bx\bx,1}(\bms))=1\}$.
For $(\bms,\bw) \in \cS_a$,
if $\bw \not\in \cA_1(\bms)$,
then 
\[\begin{split}
\bbP(\bS = \bms,\bW^{(1)}=\bw) &= 
\bbP(\bS = \bms)\bbP(\bW^{(1)}=\bw\mid \bS = \bms)\\
&= \bbP(\bS = \bms)\bbP(\bZ^{(1)}=\bw\mid  \phi_M(\sqrt{n^{(1)}}\hat{\btau}_{\bx,1}(\bms,\bZ^{(1)}),\bV_{\bx\bx,1}(\bms))=1)
= 0,
\end{split}\]
and thus, $\bbP((\bS,\bW^{(1)},\bW^{(2)}) \in \cA,
\bS=\bms,\bW^{(1)} = \bw) = 0$.
Hence, for $(\bms,\bw) \in \cS_a$,
\[\begin{split}
&\bbP((\bS,\bW^{(1)},\bW^{(2)}) \in \cA,
\bS=\bms,\bW^{(1)}=\bw)\\
=& 1\{\bw \in \cA_1(\bms)\}\cdot
\bbP((\bS,\bW^{(1)},\bW^{(2)}) \in \cA,
\bS=\bms,\bW^{(1)}=\bw)\\
=& 1\{\bw \in \cA_1(\bms)\}\cdot
\bbP((\bS,\bW^{(1)},\bW^{(2)}) \in \cA\mid
\bS=\bms,\bW^{(1)}=\bw)\cdot \bbP(\bS=\bms,\bW^{(1)}=\bw),
\end{split}\]
where for $(\bms,\bw) \in \cS_a$ and $\bw \in \cA_1(\bms)$,
\[\begin{split}
&\bbP((\bS,\bW^{(1)},\bW^{(2)}) \in \cA
\mid 
\bS=\bms,\bW^{(1)}=\bw)
\\
=&\frac{
\bbP((\bms,\bw,\bW^{(2)}) \in \cA,
\bS=\bms,\bW^{(1)}=\bw)
}{\bbP(\bS=\bms,\bW^{(1)}=\bw)}
\\
=&
\bbP((\bms,\bw,\bW^{(2)}) \in \cA\mid
\bS=\bms,\bW^{(1)}=\bw)
\\
=& 
\bbP((\bms,\bw,\bZ^{(2)}) \in \cA
\mid 
\phi_\bomega(\sqrt{n^{(2)}}\hat{\btau}_{\bx,2}(\bms,\bZ^{(2)}),\bV_{xx,2}(\bms),\hat{\bB}(\bms,\bw))=1)\\
& \bbP((\bS,\bZ^{(1)},\bZ^{(2)}) \in \cA
\mid \bS = \bms, \bZ^{(1)} = \bw,\\
& \ \ \ \ \ \ \ \ \phi_M(\sqrt{n^{(1)}}\hat{\btau}_{\bx,1}(\bS,\bZ^{(1)}),\bV_{\bx\bx,1}(\bS))=1,\\
& \ \ \ \ \ \ \ \ 
\phi_\bomega(\sqrt{n^{(2)}}\hat{\btau}_{\bx,2}(\bS,\bZ^{(2)}),\bV_{\bx\bx,2}(\bS),\hat{\bB}(\bS,\bZ^{(1)}))=1).
\end{split}\]
Then for $(\bms,\bw) \in \cS_a$,
\[\begin{split}
    &\bbP((\bS,\bW^{(1)},\bW^{(2)}) \in \cA,
\bS=\bms,\bW^{(1)}=\bw)\\
=&1\{\bw \in \cA_1(\bms)\}\cdot\bbP((\bS,\bW^{(1)},\bW^{(2)}) \in \cA\mid
\bS=\bms,\bW^{(1)}=\bw)\cdot \bbP(\bS=\bms,\bW^{(1)}=\bw)\\
=& 1\{\bw \in \cA_1(\bms)\}\cdot 
\bbP(\bS = \bms, \bW^{(1)} = \bw)\cdot\bbP((\bS,\bZ^{(1)},\bZ^{(2)}) \in \cA
\mid \bS = \bms, \bZ^{(1)} = \bw,\\
& \ \ \ \ \ \ \ \ \phi_M(\sqrt{n^{(1)}}\hat{\btau}_{\bx,1}(\bS,\bZ^{(1)}),\bV_{\bx\bx,1}(\bS))=1,\\
& \ \ \ \ \ \ \ \ 
\phi_\bomega(\sqrt{n^{(2)}}\hat{\btau}_{\bx,2}(\bS,\bZ^{(2)}),\bV_{\bx\bx,2}(\bS),\hat{\bB}(\bS,\bZ^{(1)}))=1).
\end{split}\]
Thus,
\[\begin{split}
&\sum_{(\bms,\bw) \in \cS_a}
\bbP((\bS,\bW^{(1)},\bW^{(2)}) \in \cA,
\bS=\bms,\bW^{(1)} = \bw)\\
=& \sum_{(\bms,\bw) \in \cS_a}
1\{\bw \in \cA_1(\bms)\}\cdot 
\bbP(\bS = \bms, \bW^{(1)} = \bw)\cdot\bbP((\bS,\bZ^{(1)},\bZ^{(2)}) \in \cA
\mid \bS = \bms, \bZ^{(1)} = \bw,\\
& \ \ \ \ \ \ \ \ \phi_M(\sqrt{n^{(1)}}\hat{\btau}_{\bx,1}(\bS,\bZ^{(1)}),\bV_{\bx\bx,1}(\bS))=1,
\phi_\bomega(\sqrt{n^{(2)}}\hat{\btau}_{\bx,2}(\bS,\bZ^{(2)}),\bV_{\bx\bx,2}(\bS),\hat{\bB}(\bS,\bZ^{(1)}))=1).
\end{split}\]
For $(\bms,\bw) \in \cS_a$, since 
\[\begin{split}
&\bbP(\bS = \bms, \bZ^{(1)} = \bw\mid 
\phi_M(\sqrt{n^{(1)}}\hat{\btau}_{\bx,1}(\bS,\bZ^{(1)}),\bV_{\bx\bx,1}(\bS))=1,\\
&\quad\quad
\phi_\bomega(\sqrt{n^{(2)}}\hat{\btau}_{\bx,2}(\bS,\bZ^{(2)}),\bV_{\bx\bx,2}(\bS),\hat{\bB}(\bS,\bZ^{(1)}))=1)\\
=& \bbP(\bS = \bms, \bZ^{(1)} = \bw, 
\phi_M(\sqrt{n^{(1)}}\hat{\btau}_{\bx,1}(\bS,\bZ^{(1)}),\bV_{\bx\bx,1}(\bS))=1)\cdot \\
&\bbP(
\phi_\bomega(\sqrt{n^{(2)}}\hat{\btau}_{\bx,2}(\bms,\bZ^{(2)}),\bV_{\bx\bx,2}(\bms),\hat{\bB}(\bms,\bw))=1)/\\
&
\bbP(\phi_M(\sqrt{n^{(1)}}\hat{\btau}_{\bx,1}(\bS,\bZ^{(1)}),\bV_{\bx\bx,1}(\bS))=1,
\phi_\bomega(\sqrt{n^{(2)}}\hat{\btau}_{\bx,2}(\bS,\bZ^{(2)}),\bV_{\bx\bx,2}(\bS),\hat{\bB}(\bS,\bZ^{(1)}))=1),
\end{split}\]
where
\[\begin{split}
&\bbP(\bS = \bms, \bZ^{(1)} = \bw, 
\phi_M(\sqrt{n^{(1)}}\hat{\btau}_{\bx,1}(\bS,\bZ^{(1)}),\bV_{\bx\bx,1}(\bS))=1)\\
=& \bbP(\bZ^{(1)} = \bw\mid \bS = \bms, 
\phi_M(\sqrt{n^{(1)}}\hat{\btau}_{\bx,1}(\bS,\bZ^{(1)}),\bV_{\bx\bx,1}(\bS))=1)\cdot \\
&
\bbP(\bS = \bms, 
\phi_M(\sqrt{n^{(1)}}\hat{\btau}_{\bx,1}(\bS,\bZ^{(1)}),\bV_{\bx\bx,1}(\bS))=1)\\
=& \bbP(\bZ^{(1)} = \bw\mid 
\phi_M(\sqrt{n^{(1)}}\hat{\btau}_{\bx,1}(\bms,\bZ^{(1)}),\bV_{\bx\bx,1}(\bms))=1)
\bbP(\bS = \bms, 
\phi_M(\sqrt{n^{(1)}}\hat{\btau}_{\bx,1}(\bms,\bZ^{(1)}),\bV_{\bx\bx,1}(\bms))=1)\\
=& \bbP(\bW^{(1)} = \bw\mid 
\bS = \bms)
\bbP(\bS = \bms) \bbP(
\phi_M(\sqrt{n^{(1)}}\hat{\btau}_{\bx,1}(\bms,\bZ^{(1)}),\bV_{\bx\bx,1}(\bms))=1)\\
=& \bbP(\bS = \bms,\bW^{(1)} = \bw)
\bbP(
\phi_M(\sqrt{n^{(1)}}\hat{\btau}_{\bx,1}(\bms,\bZ^{(1)}),\bV_{\bx\bx,1}(\bms))=1),
\end{split}\]
we have 
\[\begin{split}
&\bbP(\bS = \bms,\bW^{(1)} = \bw)\\ =& 
\frac{\bbP(\bS = \bms, \bZ^{(1)} = \bw, 
\phi_M(\sqrt{n^{(1)}}\hat{\btau}_{\bx,1}(\bms,\bZ^{(1)}),\bV_{\bx\bx,1}(\bms))=1)}{\bbP(
\phi_M(\sqrt{n^{(1)}}\hat{\btau}_{\bx,1}(\bms,\bZ^{(1)}),\bV_{\bx\bx,1}(\bms))=1)}\\
=&\frac{\bbP(\phi_M(\sqrt{n^{(1)}}\hat{\btau}_{\bx,1}(\bS,\bZ^{(1)}),\bV_{\bx\bx,1}(\bS))=1,
\phi_\bomega(\sqrt{n^{(2)}}\hat{\btau}_{\bx,2}(\bS,\bZ^{(2)}),\bV_{\bx\bx,2}(\bS),\hat{\bB}(\bS,\bZ^{(1)}))=1)}{\bbP(
\phi_M(\sqrt{n^{(1)}}\hat{\btau}_{\bx,1}(\bms,\bZ^{(1)}),\bV_{\bx\bx,1}(\bms))=1)
\bbP(
\phi_\bomega(\sqrt{n^{(2)}}\hat{\btau}_{\bx,2}(\bms,\bZ^{(2)}),\bV_{\bx\bx,2}(\bms),\hat{\bB}(\bms,\bw))=1)}\cdot \\
& \bbP(\bS = \bms, \bZ^{(1)} = \bw\mid 
\phi_M(\sqrt{n^{(1)}}\hat{\btau}_{\bx,1}(\bS,\bZ^{(1)}),\bV_{\bx\bx,1}(\bS))=1,\\
&\quad\quad
\phi_\bomega(\sqrt{n^{(2)}}\hat{\btau}_{\bx,2}(\bS,\bZ^{(2)}),\bV_{\bx\bx,2}(\bS),\hat{\bB}(\bS,\bZ^{(1)}))=1).
\end{split}\]
Thus,
\[\begin{split}
&\sum_{(\bms,\bw) \in \cS_a}
\bbP((\bS,\bW^{(1)},\bW^{(2)}) \in \cA,
\bS=\bms,\bW^{(1)} = \bw)\\
=&\sum_{(\bms,\bw) \in \cS_a} 
1\{\bw \in \cA_1(\bms)\}\cdot 
\bbP((\bS,\bZ^{(1)},\bZ^{(2)}) \in \cA
\mid \bS = \bms, \bZ^{(1)} = \bw,\\
& \ \ \ \ \ \ \ \ \phi_M(\sqrt{n^{(1)}}\hat{\btau}_{\bx,1}(\bS,\bZ^{(1)}),\bV_{\bx\bx,1}(\bS))=1,
\phi_\bomega(\sqrt{n^{(2)}}\hat{\btau}_{\bx,2}(\bS,\bZ^{(2)}),\bV_{\bx\bx,2}(\bS),\hat{\bB}(\bS,\bZ^{(1)}))=1)\cdot \\
& \bbP(\bS = \bms, \bZ^{(1)} = \bw\mid 
\phi_M(\sqrt{n^{(1)}}\hat{\btau}_{\bx,1}(\bS,\bZ^{(1)}),\bV_{\bx\bx,1}(\bS))=1,\\
&\ \ \ \ \ \ \ \ 
\phi_\bomega(\sqrt{n^{(2)}}\hat{\btau}_{\bx,2}(\bS,\bZ^{(2)}),\bV_{\bx\bx,2}(\bS),\hat{\bB}(\bS,\bZ^{(1)}))=1)\cdot \\
&\frac{\bbP(\phi_M(\sqrt{n^{(1)}}\hat{\btau}_{\bx,1}(\bS,\bZ^{(1)}),\bV_{\bx\bx,1}(\bS))=1,
\phi_\bomega(\sqrt{n^{(2)}}\hat{\btau}_{\bx,2}(\bS,\bZ^{(2)}),\bV_{\bx\bx,2}(\bS),\hat{\bB}(\bS,\bZ^{(1)}))=1)}{\bbP(
\phi_M(\sqrt{n^{(1)}}\hat{\btau}_{\bx,1}(\bms,\bZ^{(1)}),\bV_{\bx\bx,1}(\bms))=1)
\bbP(
\phi_\bomega(\sqrt{n^{(2)}}\hat{\btau}_{\bx,2}(\bms,\bZ^{(2)}),\bV_{\bx\bx,2}(\bms),\hat{\bB}(\bms,\bw))=1)}\\
=& \sum_{(\bms,\bw) \in \cS_a}  
1\{\bw \in \cA_1(\bms)\}\cdot 
\bbP((\bS,\bZ^{(1)},\bZ^{(2)}) \in \cA, \bS = \bms, \bZ^{(1)} = \bw \mid\\
& \ \ \ \ \ \ \ \ \phi_M(\sqrt{n^{(1)}}\hat{\btau}_{\bx,1}(\bS,\bZ^{(1)}),\bV_{\bx\bx,1}(\bS))=1,
\phi_\bomega(\sqrt{n^{(2)}}\hat{\btau}_{\bx,2}(\bS,\bZ^{(2)}),\bV_{\bx\bx,2}(\bS),\hat{\bB}(\bS,\bZ^{(1)}))=1)\cdot \\
&\frac{\bbP(\phi_M(\sqrt{n^{(1)}}\hat{\btau}_{\bx,1}(\bS,\bZ^{(1)}),\bV_{\bx\bx,1}(\bS))=1,
\phi_\bomega(\sqrt{n^{(2)}}\hat{\btau}_{\bx,2}(\bS,\bZ^{(2)}),\bV_{\bx\bx,2}(\bS),\hat{\bB}(\bS,\bZ^{(1)}))=1)}{\bbP(
\phi_M(\sqrt{n^{(1)}}\hat{\btau}_{\bx,1}(\bms,\bZ^{(1)}),\bV_{\bx\bx,1}(\bms))=1)
\bbP(
\phi_\bomega(\sqrt{n^{(2)}}\hat{\btau}_{\bx,2}(\bms,\bZ^{(2)}),\bV_{\bx\bx,2}(\bms),\hat{\bB}(\bms,\bw))=1)}.
\end{split}\]
Moreover, since for $(\bms,\bw) \in \cS_a$ and $\bw \not\in \cA_1(\bms)$,
\[\begin{split}
&\bbP((\bS,\bZ^{(1)},\bZ^{(2)}) \in \cA, \bS = \bms, \bZ^{(1)} = \bw \mid\\
& \ \ \ \ \ \ \ \ \phi_M(\sqrt{n^{(1)}}\hat{\btau}_{\bx,1}(\bS,\bZ^{(1)}),\bV_{\bx\bx,1}(\bS))=1,
\phi_\bomega(\sqrt{n^{(2)}}\hat{\btau}_{\bx,2}(\bS,\bZ^{(2)}),\bV_{\bx\bx,2}(\bS),\hat{\bB}(\bS,\bZ^{(1)}))=1)
= 0,
\end{split}\]
we have 
\[\begin{split}
&\sum_{(\bms,\bw) \in \cS_a}
\bbP((\bS,\bW^{(1)},\bW^{(2)}) \in \cA,
\bS=\bms,\bW^{(1)} = \bw)\\
=& \sum_{(\bms,\bw) \in \cS_a}   
\bbP((\bS,\bZ^{(1)},\bZ^{(2)}) \in \cA, \bS = \bms, \bZ^{(1)} = \bw \mid\\
& \ \ \ \ \ \ \ \ \phi_M(\sqrt{n^{(1)}}\hat{\btau}_{\bx,1}(\bS,\bZ^{(1)}),\bV_{\bx\bx,1}(\bS))=1,
\phi_\bomega(\sqrt{n^{(2)}}\hat{\btau}_{\bx,2}(\bS,\bZ^{(2)}),\bV_{\bx\bx,2}(\bS),\hat{\bB}(\bS,\bZ^{(1)}))=1)\cdot \\
&\frac{\bbP(\phi_M(\sqrt{n^{(1)}}\hat{\btau}_{\bx,1}(\bS,\bZ^{(1)}),\bV_{\bx\bx,1}(\bS))=1,
\phi_\bomega(\sqrt{n^{(2)}}\hat{\btau}_{\bx,2}(\bS,\bZ^{(2)}),\bV_{\bx\bx,2}(\bS),\hat{\bB}(\bS,\bZ^{(1)}))=1)}{\bbP(
\phi_M(\sqrt{n^{(1)}}\hat{\btau}_{\bx,1}(\bms,\bZ^{(1)}),\bV_{\bx\bx,1}(\bms))=1)
\bbP(
\phi_\bomega(\sqrt{n^{(2)}}\hat{\btau}_{\bx,2}(\bms,\bZ^{(2)}),\bV_{\bx\bx,2}(\bms),\hat{\bB}(\bms,\bw))=1)}.
\end{split}\]
For $(\bms,\bw) \in \cS_a$ and 
$n>N_a$,
we have 
\[\begin{split}
&\frac{\bbP(\phi_M(\sqrt{n^{(1)}}\hat{\btau}_{\bx,1}(\bS,\bZ^{(1)}),\bV_{\bx\bx,1}(\bS))=1,
\phi_\bomega(\sqrt{n^{(2)}}\hat{\btau}_{\bx,2}(\bS,\bZ^{(2)}),\bV_{\bx\bx,2}(\bS),\hat{\bB}(\bS,\bZ^{(1)}))=1)}{\bbP(
\phi_M(\sqrt{n^{(1)}}\hat{\btau}_{\bx,1}(\bms,\bZ^{(1)}),\bV_{\bx\bx,1}(\bms))=1)
\bbP(
\phi_\bomega(\sqrt{n^{(2)}}\hat{\btau}_{\bx,2}(\bms,\bZ^{(2)}),\bV_{\bx\bx,2}(\bms),\hat{\bB}(\bms,\bw))=1)}\\
\in & \left[\frac{
\alpha^2-a
}{(\alpha+a)^2},\frac{
\alpha^2+a
}{(\alpha-a)^2}\right].
\end{split}\]
Hence, 
\[\begin{split}
&\sum_{(\bms,\bw) \in \cS_a}
\bbP((\bS,\bW^{(1)},\bW^{(2)}) \in \cA,
\bS=\bms,\bW^{(1)} = \bw)\\
\leq & \frac{
\alpha^2+a
}{(\alpha-a)^2}\cdot 
\sum_{(\bms,\bw) \in \cS_a}
\bbP((\bS,\bZ^{(1)},\bZ^{(2)}) \in \cA, \bS = \bms, \bZ^{(1)} = \bw \mid\\
& \ \ \ \ \ \ \ \ \phi_M(\sqrt{n^{(1)}}\hat{\btau}_{\bx,1}(\bS,\bZ^{(1)}),\bV_{\bx\bx,1}(\bS))=1,
\phi_\bomega(\sqrt{n^{(2)}}\hat{\btau}_{\bx,2}(\bS,\bZ^{(2)}),\bV_{\bx\bx,2}(\bS),\hat{\bB}(\bS,\bZ^{(1)}))=1)\\ 
\leq & \frac{
\alpha^2+a
}{(\alpha-a)^2}\cdot 
\bbP((\bS,\bZ^{(1)},\bZ^{(2)}) \in \cA\mid\phi_M(\sqrt{n^{(1)}}\hat{\btau}_{\bx,1}(\bS,\bZ^{(1)}),\bV_{\bx\bx,1}(\bS))=1,\\
& \ \ \ \ \ \ \ \ \phi_\bomega(\sqrt{n^{(2)}}\hat{\btau}_{\bx,2}(\bS,\bZ^{(2)}),\bV_{\bx\bx,2}(\bS),\hat{\bB}(\bS,\bZ^{(1)}))=1),
\end{split}\]
and 
\[\begin{split}
&\sum_{(\bms,\bw) \in \cS_a}
\bbP((\bS,\bW^{(1)},\bW^{(2)}) \in \cA,
\bS=\bms,\bW^{(1)} = \bw)\\
\geq & \frac{
\alpha^2-a
}{(\alpha+a)^2}\cdot 
\sum_{(\bms,\bw) \in \cS_a}\bbP((\bS,\bZ^{(1)},\bZ^{(2)}) \in \cA, \bS = \bms,
\bZ^{(1)} =\bw\mid\\
& \ \ \ \ \ \ \ \ \phi_M(\sqrt{n^{(1)}}\hat{\btau}_{\bx,1}(\bS,\bZ^{(1)}),\bV_{\bx\bx,1}(\bS))=1,
\phi_\bomega(\sqrt{n^{(2)}}\hat{\btau}_{\bx,2}(\bS,\bZ^{(2)}),\bV_{\bx\bx,2}(\bS),\hat{\bB}(\bS,\bZ^{(1)}))=1).
\end{split}\]
Since for $n>N_a$,
\[\begin{split}
&\sum_{(\bms,\bw) \not\in \cS_a}\bbP((\bS,\bZ^{(1)},\bZ^{(2)}) \in \cA,
\bS = \bms,\bZ^{(1)} =\bw\mid\\
& \ \ \ \ \ \ \ \ \phi_M(\sqrt{n^{(1)}}\hat{\btau}_{\bx,1}(\bS,\bZ^{(1)}),\bV_{\bx\bx,1}(\bS))=1,
\phi_\bomega(\sqrt{n^{(2)}}\hat{\btau}_{\bx,2}(\bS,\bZ^{(2)}),\bV_{\bx\bx,2}(\bS),\hat{\bB}(\bS,\bZ^{(1)}))=1)\\
\leq & \frac{\bbP((\bS,\bZ^{(1)}) \not\in \cS_a)}{\bbP(\phi_M(\sqrt{n^{(1)}}\hat{\btau}_{\bx,1}(\bS,\bZ^{(1)}),\bV_{\bx\bx,1}(\bS))=1,
\phi_\bomega(\sqrt{n^{(2)}}\hat{\btau}_{\bx,2}(\bS,\bZ^{(2)}),\bV_{\bx\bx,2}(\bS),\hat{\bB}(\bS,\bZ^{(1)}))=1)}\\
\leq & \frac{a}{\alpha^2-a},
\end{split}\]
we have 
\[\begin{split}
&\sum_{(\bms,\bw) \in \cS_a}\bbP((\bS,\bZ^{(1)},\bZ^{(2)}) \in \cA, \bS = \bms,
\bZ^{(1)} =\bw\mid\\
& \ \ \ \ \ \ \ \ \phi_M(\sqrt{n^{(1)}}\hat{\btau}_{\bx,1}(\bS,\bZ^{(1)}),\bV_{\bx\bx,1}(\bS))=1,
\phi_\bomega(\sqrt{n^{(2)}}\hat{\btau}_{\bx,2}(\bS,\bZ^{(2)}),\bV_{\bx\bx,2}(\bS),\hat{\bB}(\bS,\bZ^{(1)}))=1)\\
\geq & \bbP((\bS,\bZ^{(1)},\bZ^{(2)}) \in \cA\mid\phi_M(\sqrt{n^{(1)}}\hat{\btau}_{\bx,1}(\bS,\bZ^{(1)}),\bV_{\bx\bx,1}(\bS))=1,\\
& \ \ \ \ \ \ \ \ 
\phi_\bomega(\sqrt{n^{(2)}}\hat{\btau}_{\bx,2}(\bS,\bZ^{(2)}),\bV_{\bx\bx,2}(\bS),\hat{\bB}(\bS,\bZ^{(1)}))=1)
-\frac{a}{\alpha^2-a}.
\end{split}\]
Therefore, 
\[\begin{split}
&\sum_{(\bms,\bw) \in \cS_a}
\bbP((\bS,\bW^{(1)},\bW^{(2)}) \in \cA,
\bS=\bms,\bW^{(1)} = \bw)\\
\geq & \frac{
\alpha^2-a
}{(\alpha+a)^2}
\cdot 
\bbP((\bS,\bZ^{(1)},\bZ^{(2)}) \in \cA\mid\phi_M(\sqrt{n^{(1)}}\hat{\btau}_{\bx,1}(\bS,\bZ^{(1)}),\bV_{\bx\bx,1}(\bS))=1,\\
& \ \ \ \ \ \ \ \ 
\phi_\bomega(\sqrt{n^{(2)}}\hat{\btau}_{\bx,2}(\bS,\bZ^{(2)}),\bV_{\bx\bx,2}(\bS),\hat{\bB}(\bS,\bZ^{(1)}))=1)
-\frac{
\alpha^2-a
}{(\alpha+a)^2}
\cdot \frac{a}{\alpha^2-a}\\
=& \frac{
\alpha^2-a
}{(\alpha+a)^2}
\cdot 
\bbP((\bS,\bZ^{(1)},\bZ^{(2)}) \in \cA\mid\phi_M(\sqrt{n^{(1)}}\hat{\btau}_{\bx,1}(\bS,\bZ^{(1)}),\bV_{\bx\bx,1}(\bS))=1,\\
& \ \ \ \ \ \ \ \ 
\phi_\bomega(\sqrt{n^{(2)}}\hat{\btau}_{\bx,2}(\bS,\bZ^{(2)}),\bV_{\bx\bx,2}(\bS),\hat{\bB}(\bS,\bZ^{(1)}))=1)
-\frac{a}{(\alpha+a)^2}.
\end{split}\]
Above all, for $n > N_a$,
we have 
\[\begin{split}
&\bbP((\bS,\bW^{(1)},\bW^{(2)}) \in \cA)\\
\leq & \sum_{(\bms,\bw) \in \cS_a}
\bbP((\bS,\bW^{(1)},\bW^{(2)}) \in \cA,
\bS=\bms,\bW^{(1)} = \bw) + a\\
\leq & \frac{
\alpha^2+a
}{(\alpha-a)^2}
\cdot \bbP((\bS,\bZ^{(1)},\bZ^{(2)}) \in \cA\mid\phi_M(\sqrt{n^{(1)}}\hat{\btau}_{\bx,1}(\bS,\bZ^{(1)}),\bV_{\bx\bx,1}(\bS))=1,\\
& \ \ \ \ \ \ \ \ 
\phi_\bomega(\sqrt{n^{(2)}}\hat{\btau}_{\bx,2}(\bS,\bZ^{(2)}),\bV_{\bx\bx,2}(\bS),\hat{\bB}(\bS,\bZ^{(1)}))=1)
+a,
\end{split}\]
and 
\[\begin{split}
&\bbP((\bS,\bW^{(1)},\bW^{(2)}) \in \cA)\\
\geq & \sum_{(\bms,\bw) \in \cS_a}
\bbP((\bS,\bW^{(1)},\bW^{(2)}) \in \cA,
\bS=\bms,\bW^{(1)} = \bw)\\
\geq & \frac{
\alpha^2-a
}{(\alpha+a)^2}
\cdot 
\bbP((\bS,\bZ^{(1)},\bZ^{(2)}) \in \cA\mid\phi_M(\sqrt{n^{(1)}}\hat{\btau}_{\bx,1}(\bS,\bZ^{(1)}),\bV_{\bx\bx,1}(\bS))=1,\\
& \ \ \ \ \ \ \ \ 
\phi_\bomega(\sqrt{n^{(2)}}\hat{\btau}_{\bx,2}(\bS,\bZ^{(2)}),\bV_{\bx\bx,2}(\bS),\hat{\bB}(\bS,\bZ^{(1)}))=1)
-\frac{a}{(\alpha+a)^2}.
\end{split}\]
Therefore, for any $\cA \subset \cD$,
for $n > N_a$,
\[\begin{split}
&|\bbP((\bS,\bW^{(1)},\bW^{(2)}) \in \cA) - \bbP((\bS,\bZ^{(1)},\bZ^{(2)}) \in \cA \mid 
\phi_M(\sqrt{n^{(1)}}\hat{\btau}_{\bx,1}(\bS,\bZ^{(1)}),\bV_{\bx\bx,1}(\bS))=1, \\
&\ \ \ \ \ \ \ 
\phi_\bomega(\sqrt{n^{(2)}}\hat{\btau}_{\bx,2}(\bS,\bZ^{(2)}),\bV_{\bx\bx,2}(\bS),\hat{\bB}(\bS,\bZ^{(1)}))=1)|\\
\leq &\max\left\{a + \frac{(2\alpha+1)a - a^2}{
    (\alpha-a)(\alpha-a)
},\frac{(2\alpha+2)a + a^2}{
    (\alpha+a)^2
}
\right\}.
\end{split}\]
Above all, we complete the proof.
\end{proof}

\begin{proof}[Proof of Theorem \ref{thm:distribution_fac}]
According to Lemma \ref{lem:equiv2}, 
\[\begin{split}
\sup_{\cA \subset \bbR^{2F}}\Big|&\bbP\big((\sqrt{n^{(1)}}(\hat{\btau}_1(\bS,\bW^{(1)})-\btau)',
\sqrt{n^{(2)}}(\hat{\btau}_2(\bS,\bW^{(2)})-\btau)')' \in \cA\big)\\
&-\bbP\big((\sqrt{n^{(1)}}(\hat{\btau}_1(\bS,\bZ^{(1)})-\btau)',
\sqrt{n^{(2)}}(\hat{\btau}_2(\bS,\bZ^{(2)})-\btau)')' \in \cA\mid \\
&\quad\quad\quad
\phi_M(\sqrt{n^{(1)}}\hat{\btau}_{\bx,1}(\bS,\bZ^{(1)}),\bV_{\bx\bx,1}(\bS))=1,\\
&\quad\quad\quad
\phi_\bomega(\sqrt{n^{(2)}}\hat{\btau}_{\bx,2}(\bS,\bZ^{(2)}),\bV_{\bx\bx,2}(\bS),\hat{\bB}(\bS,\bZ^{(1)}))=1\big)\Big|\rightarrow 0.
\end{split}\]
Hence, 
the asymptotic distribution of $(\sqrt{n^{(1)}}(\hat{\btau}_1(\bS,\bW^{(1)})-\btau),
\sqrt{n^{(2)}}(\hat{\btau}_2(\bS,\bW^{(2)})-\btau))$ is the same as that of 
\[\begin{split}
&(\sqrt{n^{(1)}}(\hat{\btau}_1(\bS,\bZ^{(1)})-\btau),
\sqrt{n^{(2)}}(\hat{\btau}_2(\bS,\bZ^{(2)})-\btau))
\mid \\
& \ \ \ \ \ \ \ \ \phi_M(\sqrt{n^{(1)}}\hat{\btau}_{\bx,1}(\bS,\bZ^{(1)}),\bV_{\bx\bx,1}(\bS))=1,
\phi_\bomega(\sqrt{n^{(2)}}\hat{\btau}_{\bx,2}(\bS,\bZ^{(2)}),\bV_{\bx\bx,2}(\bS),\hat{\bB}(\bS,\bZ^{(1)}))=1.
\end{split}\]
By Lemma \ref{lem:joint},  
\[\begin{split}
&(\sqrt{n^{(1)}}(\hat{\btau}_1(\bS,\bZ^{(1)})-\btau),\sqrt{n^{(1)}}\hat{\btau}_{\bx,1}(\bS,\bZ^{(1)}),\bV_{\bx\bx,1}(\bS),\\
&\ \ \ \ 
\sqrt{n^{(2)}}(\hat{\btau}_2(\bS,\bZ^{(2)})-\btau),\sqrt{n^{(2)}}\hat{\btau}_{\bx,2}(\bS,\bZ^{(2)}),\bV_{\bx\bx,2}(\bS),
\hat{\bB}(\bS,\bZ^{(1)})) \\
\overset{\cdot}{\sim}&(\bC_1,\bD_1,\bV_{\bx\bx},\bC_2,\bD_2,\bV_{\bx\bx},\bB),
\end{split}\]
where $(\bC_1',\bD_1',\bC_2',\bD_2')' \sim \cN(\bZero,\bV_{\TSCRFE})$,
we have 
\[\begin{split}
&(\sqrt{n^{(1)}}(\hat{\btau}_1(\bS,\bZ^{(1)})-\btau),
\sqrt{n^{(2)}}(\hat{\btau}_2(\bS,\bZ^{(2)})-\btau))
\mid \\
& \ \ \ \ \ \ \ \ \phi_M(\sqrt{n^{(1)}}\hat{\btau}_{\bx,1}(\bS,\bZ^{(1)}),\bV_{\bx\bx,1}(\bS))=1,
\phi_\bomega(\sqrt{n^{(2)}}\hat{\btau}_{\bx,2}(\bS,\bZ^{(2)}),\bV_{\bx\bx,2}(\bS),\hat{\bB}(\bS,\bZ^{(1)}))=1\\
\overset{\cdot}{\sim}&
(\bC_1,\bC_2)\mid \phi_M(\bD_1,\bV_{\bx\bx})=1,
\phi_\bomega(\bD_2,\bV_{\bx\bx},\bB)=1\\
\sim & ((\bC_1 \mid \phi_M(\bD_1,\bV_{\bx\bx})=1),
(\bC_2 \mid \phi_\bomega(\bD_2,\bV_{\bx\bx},\bB)=1)).
\end{split}\]
The definition of $\bV_\TSCRFE$ implies that 
\[\begin{split}
\cov(\bC_1, \bD_2) = \bZero_{F \times Fp},\ 
\cov(\bC_2, \bD_1) = \bZero_{F \times Fp},
\ \cov(\bD_1, \bD_2) = \bZero_{Fp \times Fp},
\end{split}\]
i.e., $\bC_1$ is independent of $\bD_2$, $\bC_2$ is independent of $\bD_1$, and $\bD_1$ is independent of $\bD_2$.
Hence, 
\[\begin{split}
&\left.\Big(\sqrt{n^{(1)}}(\hat{\btau}_1(\bS,\bZ^{(1)})-\btau),
\sqrt{n^{(2)}}(\hat{\btau}_2(\bS,\bZ^{(2)})-\btau)\Big)
\right| \\
& \quad\quad
\phi_M(\sqrt{n^{(1)}}\hat{\btau}_{\bx,1}(\bS,\bZ^{(1)}),\bV_{\bx\bx,1}(\bS))=1,\\
& \quad\quad
\phi_\bomega(\sqrt{n^{(2)}}\hat{\btau}_{\bx,2}(\bS,\bZ^{(2)}),\bV_{\bx\bx,2}(\bS),\hat{\bB}(\bS,\bZ^{(1)}))=1\\
\overset{\cdot}{\sim} & \big((\bC_1 \mid \phi_M(\bD_1,\bV_{\bx\bx})=1),
(\bC_2 \mid \phi_\bomega(\bD_2,\bV_{\bx\bx},\bB)=1)\big).
\end{split}\]
Moreover, recall that $\bB = \bV_{\bx\bx}^{-1}\bV_{\bx\btau}$,
then
\[\begin{split}
\bC_1 = \bB' \bD_1 + \bepsilon_1,
\ \bC_2 = \bB' \bD_2 + \bepsilon_2,
\end{split}\]
where $\bepsilon_1 \sim \cN(\bZero,\bV_{\btau\btau}+(1-\rho_n)\bS_{\btau\btau}-\bV_{\btau \bx}\bV_{\bx\bx}^{-1}\bV_{\bx\btau})$,  
$\bepsilon_2 \sim \cN(\bZero,\bV_{\btau\btau}+\rho_n\bS_{\btau\btau}-\bV_{\btau\bx}\bV_{\bx\bx}^{-1}\bV_{\bx\btau})$.
By definition,
$\bV_{\btau \bx}\bV_{\bx\bx}^{-1}\bV_{\bx\btau} 
= \bV_{\btau\btau}^{||}$.
Moreover, since $\cov(\bD_1, \bepsilon_1) = \bZero$,
and $\cov(\bD_2, \bepsilon_2) = \bZero$,
then $\bepsilon_1$ is independent of $\bD_1$, and $\bepsilon_2$ is independent of $\bD_2$.
Since $(\bC_1,\bD_1)$ is independent of $\bD_2$, then 
$\bepsilon_1$ is independent of $\bD_2$.
Similarly, $\bepsilon_2$ is independent of $\bD_1$.
Therefore,
\[\begin{split}
(\bC_1 \mid \phi_M(\bD_1,\bV_{\bx\bx})=1)
&\sim (\bB' \bD_1 \mid \bD_1'\bV_{\bx\bx}^{-1}\bD_1 \leq \xi_{Fp,\alpha}) + \bepsilon_1\\
&\sim 
(\bV_{\btau\btau}^{||})^{1/2}_{Fp}\cdot \bm\zeta_{Fp,\alpha}
+\bepsilon_1,
\end{split}\]
where $\bzeta_{Fp,\alpha} \sim \tilde{\bEta}|\tilde{\bEta}'\tilde{\bEta} \leq \xi_{Fp,\alpha}$ with $\tilde\bEta \sim \cN(\bZero,\bI_{Fp})$, and $\tilde\bEta$ is independent of $\bepsilon_1$.
Similarly, 
\[\begin{split}
(\bC_2 \mid \phi_\bomega(\bD_2,\bV_{\bx\bx},\bB)=1)
&\sim (\bB' \bD_2 \mid 
\bD_2'\bB\bQ_\bB'\bLambda^{-1}\bomega\bQ_\bB\bB'\bD_2 \leq 
\xi_{\bomega,\alpha}) + \bepsilon_2\\
&\sim \bQ_\bB^{-1}\bLambda^{1/2}\bm\eta \mid \bm\eta'\bomega\bm\eta \leq \xi_{\bomega,\alpha}
+ \bepsilon_2,
\end{split}\]
where $\bEta \sim \cN(\bZero,\bI_F)$ is independent of $\bepsilon_2$.
Since $\bD_1$ and $\bD_2$ are independent, then $\tilde\bEta$ and $\bEta$ are independent.
When $\bm\omega = c\bI$, we have 
\[\begin{split}
(\bC_2 \mid \phi_\bomega(\bD_2,\bV_{\bx\bx},\bB)=1)
&\sim (\bB' \bD_2 \mid 
\bD_2'\bB\bQ_\bB'\bLambda^{-1}\bomega\bQ_\bB\bB'\bD_2 \leq 
\xi_{\bomega,\alpha}) + \bepsilon_2\\
&\sim (\bV_{\btau\btau}^{||})^{1/2}\cdot \bm\zeta_{F,\alpha}
+ \bepsilon_2,
\end{split}\]
where $\bzeta_{F,\alpha} \sim {\bEta}|{\bEta}'{\bEta} \leq \xi_{F,\alpha}$.
Moreover, since
\[\begin{split}
\cov(\bC_1,\bC_2)
= \cov(\bB' \bD_1 + \bepsilon_1, \bB' \bD_2 + \bepsilon_2)
= \cov(\bepsilon_1, \bepsilon_2)
= -\sqrt{\rho_n(1-\rho_n)}\cdot \bS_{\btau\btau},
\end{split}\]
we have 
\[\begin{split}
\begin{pmatrix}
\bepsilon_1\\
\bepsilon_2
\end{pmatrix}\sim\cN\left(\begin{pmatrix}
\bZero \\ \bZero
\end{pmatrix},\begin{pmatrix}
\bV_{\btau\btau}+(1-\rho_n)\bS_{\btau\btau}-\bV_{\btau\btau}^{||} & 
-\sqrt{\rho_n(1-\rho_n)}\cdot \bS_{\btau\btau}\\
-\sqrt{\rho_n(1-\rho_n)}\cdot \bS_{\btau\btau} & 
\bV_{\btau\btau}+\rho_n\bS_{\btau\btau}-\bV_{\btau\btau}^{||}
\end{pmatrix}\right).
\end{split}\]
Hence,
\[\begin{split}
\begin{pmatrix}
\bepsilon_1\\
\bepsilon_2
\end{pmatrix}\sim\cN\left(\begin{pmatrix}
\bZero \\ \bZero
\end{pmatrix},\begin{pmatrix}
\bV_{\btau\btau}^{\perp}+(1-\rho_n)\bS_{\btau\btau} & 
-\sqrt{\rho_n(1-\rho_n)}\cdot \bS_{\btau\btau}\\
-\sqrt{\rho_n(1-\rho_n)}\cdot \bS_{\btau\btau} & 
\bV_{\btau\btau}^{\perp}+\rho_n \bS_{\btau\btau}
\end{pmatrix}\right).
\end{split}\]
We complete the proof.
\end{proof}
\end{proof}

\subsection{Proof of Corollary \ref{cor:distribution2_fac}}

\begin{proof}
By definition,
\[\begin{split}
&\sqrt{n}\big(\hat{\btau}(\bS,\bW^{(1)},\bW^{(2)})-\btau\big)\\
= &\sqrt{\rho_n}\cdot \sqrt{n^{(1)}}\big(\hat{\btau}_1(\bS,\bW^{(1)})-\btau\big)
+ \sqrt{1-\rho_n}\cdot 
\sqrt{n^{(2)}}\big(\hat{\btau}_2(\bS,\bW^{(2)})-\btau\big).
\end{split}\]
According to Theorem \ref{thm:distribution_fac},
the asymptotic distribution of $\sqrt{n}(\hat{\btau}(\bS,\bW^{(1)},\bW^{(2)})-\btau)$ under DA-ReO$_\F$ is the same as
\[\begin{split}
&\sqrt{\rho_n} \cdot\left[(\bV_{\btau\btau}^{||})_{Fp}^{1/2}\cdot \bm\zeta_{Fp,\alpha} +
\bvarepsilon_1\right]
+ \sqrt{1-\rho_n} \cdot\left[(\bQ_\bB^{-1}\bLambda^{1/2}\bm\eta \mid \bm\eta'\bomega\bm\eta \leq \xi_{\bomega,\alpha}) +
\bvarepsilon_2\right]\\
\sim & \sqrt{\rho_n}(\bV_{\btau\btau}^{||})_{Fp}^{1/2}\cdot \bm\zeta_{Fp,\alpha} +
\sqrt{1-\rho_n}(\bQ_\bB^{-1}\bLambda^{1/2}\bm\eta \mid \bm\eta'\bomega\bm\eta \leq \xi_{\bomega,\alpha})
+
(\bV_{\btau\btau}^{\perp})^{1/2}\cdot \bvarepsilon.
\end{split}\]
For DA-ReO$_\FE$, the asymptotic distribution is 
\[\begin{split}
&\sqrt{\rho_n} \cdot\left[(\bV_{\btau\btau}^{||})_{Fp}^{1/2}\cdot \bm\zeta_{Fp,\alpha} +
\bvarepsilon_1\right]
+ \sqrt{1-\rho_n} \cdot\left[(\bV_{\btau\btau}^{||})^{1/2}\cdot \bm\zeta_{p,\alpha} +
\bvarepsilon_2\right]\\
\sim & \sqrt{\rho_n}(\bV_{\btau\btau}^{||})^{1/2}_{Fp}\cdot \bm\zeta_{Fp,\alpha} +
\sqrt{1-\rho_n}(\bV_{\btau\btau}^{||})^{1/2}\cdot \bm\zeta_{p,\alpha} +
(\bV_{\btau\btau}^{\perp})^{1/2}\cdot \bvarepsilon.
\end{split}\]
\end{proof}

\subsection{Proof of Corollary \ref{cor:PRIASV2_fac}}
Corollary~\ref{cor:distribution2_fac} and the proof of  Theorem~\ref{thm:PRIASV} show that, under DA-ReO$_\F$, the asymptotic sampling covariance matrix of $\sqrt{n}(\hat{\btau}-\btau)$ is given by the limit of the following expression:
\[\begin{split}
&\bV_{\btau\btau}^{\perp} + \rho_n \bV_{\btau\btau}^{||}\cdot v_{Fp,\alpha} + (1-\rho_n)\sum_{f=1}^F c_f \bV_{\btau\btau}^{||}[f]
\\=& \bV_{\btau\btau}^{\perp} + \sum_{f=1}^F \left[\rho_n v_{Fp,\alpha} + (1-\rho_n)c_f\right]
\bV_{\btau\btau}^{||}[f].
\end{split}\]
Therefore, the reduction compared with CRFE is 
\[\begin{split}
\lim_{n\to\infty} \sum_{f=1}^F \left[1-\rho_n v_{Fp,\alpha} - (1-\rho_n)c_f\right]
\bV_{\btau\btau}^{||}[f].
\end{split}\]
Under DA-ReO$_\FE$, $c_f = v_{F,\alpha}$ for all $f = 1,\ldots,F$.
Hence, the reduction in the asymptotic covariance is 
\[\begin{split}
\lim_{n\to\infty}\left[1-\rho_n v_{Fp,\alpha} - (1-\rho_n)v_{F,\alpha}\right]
\bV_{\btau\btau}^{||}.
\end{split}\]

Under DA-ReO$_\F$,
the reduction in asymptotic sampling variance of $\sqrt{n}(\hat{\tau}_f-\tau_f)$ is given by the limit of the following expression:
\[\begin{split}
&\be_f'\left(\sum_{j=1}^F \left[1-\rho_n v_{Fp,\alpha} - (1-\rho_n)c_j\right]
\bV_{\btau\btau}^{||}[j]\right)\be_f\\
=& \sum_{j=1}^F \left[1-\rho_n v_{Fp,\alpha} - (1-\rho_n)c_j\right]
\be_f'\bV_{\btau\btau}^{||}[j]\be_f\\
=& \sum_{j=1}^F \left[1-\rho_n v_{Fp,\alpha} - (1-\rho_n)c_j\right]
R^2_f[j]\cdot \bbV(\sqrt{n}(\hat{\tau}_f-\tau_f)).
\end{split}\]
By Lemma \ref{lem:R2}, $R^2_f[j] = 0$ for $j > f$, the PRIASV of $\sqrt{n}(\hat{\tau}_f-\tau_f)$ is 
\[\begin{split}
&\lim_{n\to\infty}100\times\sum_{j=1}^F \left[1-\rho_n v_{Fp,\alpha} - (1-\rho_n)c_j\right]
R^2_f[j] \\=& \lim_{n\to\infty}100\times\sum_{j=1}^f \left[1-\rho_n v_{Fp,\alpha} - (1-\rho_n)c_j\right]
R^2_f[j].
\end{split}\]
Under DA-ReO$_\FE$, the PRIASV is 
\[
\begin{split}
&\lim_{n\to\infty}100\times \sum_{j=1}^f \left[1-\rho_n v_{Fp,\alpha}-(1-\rho_n)v_{F,\alpha}\right]R_f^2[j] \\
=&\lim_{n\to\infty}100\times \left[1-\rho_n v_{Fp,\alpha}-(1-\rho_n)v_{F,\alpha}\right]R_f^2.
\end{split}
\]
Moreover, for $\omega_1\geq \ldots \geq \omega_f$,
we have $c_1\leq \ldots \leq c_F$, and thus, the lower bound of the PRIASV under DA-ReO$_\F$ is 
\[
\begin{split}
&\lim_{n\to\infty}100\times \sum_{j=1}^f \left[1-\rho_n v_{Fp,\alpha}-(1-\rho_n)c_f\right]R_f^2[j] \\
=&\lim_{n\to\infty}100\times \left[1-\rho_n v_{Fp,\alpha}-(1-\rho_n)c_f\right]R_f^2.
\end{split}
\]
We complete the proof.

\subsection{Proof of Lemma \ref{lem:consistent_fac}}

\begin{proof}
Under S-CRFE, the distribution of $\bZ(\bS,\bZ^{(1)},\bZ^{(2)})$ under S-CRFE is identical to that of $\bZ$ under CRFE, where exactly $n_q$ units are assigned to treatment group $q$ for $q=1,\ldots,Q$.

Let $A$ and $B$ be any finite population quantities that can be the  potential outcome under any treatment arm, or any coordinate of covariates $\bX$, then for any finite population $\{(A_i,B_i):i=1,\ldots,n\}$, let the finite population averages be $\bar{A} = n^{-1}\sum_{i=1}^n A_i$ and $\bar{B} = n^{-1}\sum_{i=1}^n B_i$, let the sample average in the treatment group $q$ be $\bar{A}_q(\bZ) = \bar{A}_q(\bS,\bZ^{(1)},\bZ^{(2)}) = n_q^{-1}\sum_{i:Z_i=q} A_i$ and $\bar{B}_q(\bZ) = \bar{B}_q(\bS,\bZ^{(1)},\bZ^{(2)}) = n_q^{-1}\sum_{i:Z_i=q} B_i$.
Define the sample covariance under treatment $q$ as $s_{AB,q}(\bZ) = s_{AB,q}(\bS,\bZ^{(1)},\bZ^{(2)}) = (n_q-1)^{-1}\sum_{i:Z_i=q} (A_i - \bar{A}_q(\bZ))(B_i - \bar{B}_q(\bZ))$.
Then Lemma B3 in \cite{yang2023rejective} implies that under S-CRFE,
\[\begin{split}
\bbV(s_{AB,q}(\bS,\bZ^{(1)},\bZ^{(2)})) &\leq \frac{4 n_q}{(n_q-1)^2}\max_{1\leq j\leq n}(A_j - \bar{A})^2 \cdot \frac{1}{n-1}\sum_{i=1}^n (B_i-\bar{B})^2.
\end{split}\]
Under Assumption \ref{assump:regularity_fac_full}, we have 
$\bbV(s_{AB,q}(\bS,\bZ^{(1)},\bZ^{(2)}))\rightarrow 0$.

Define procedure $\widetilde{\text{DA-ReO}}_\F$ (or $\widetilde{\text{DA-ReO}}_\FE$) as a single-stage procedure, under which sampling and treatment assignment vectors $(\bS,\bZ^{(1)},\bZ^{(2)})$ are acceptable if and only if $\phi_M(\sqrt{n^{(1)}}\hat{\btau}_{\bx,1}(\bS,\bZ^{(1)}),\bV_{\bx\bx,1}(\bS))=1$ and $\phi_\bomega(\sqrt{n^{(2)}}\hat{\btau}_{\bx,2}(\bS,\bZ^{(2)}),\bV_{\bx\bx,2}(\bS),
\hat{\bB}(\bS,\bZ^{(1)}))=1$ hold simultaneously.
This is a hypothetical procedure as it assumes that $\hat{\bB}(\bS,\bZ^{(1)})$ is available before the first-stage experiment is actually conducted.

By \citet{cochran1977sampling}, under S-CRFE, $s_{AB,q}(\bZ)$ is unbiased for $S_{AB}$.
Under Assumption \ref{assump:regularity_fac_full},
Eq.~\eqref{eq:prob1} implies that for $q=1,\ldots,Q$,
\[\begin{split}
&\bbE[(s_{AB,q}(\bS,\bZ^{(1)},\bZ^{(2)})-S_{AB})^2 \mid \widetilde{\text{DA-ReO}}_\F] \\
= & \bbE[(s_{AB,q}(\bS,\bZ^{(1)},\bZ^{(2)})-S_{AB})^2 \mid \phi_M(\sqrt{n^{(1)}}\hat{\btau}_{\bx,1}(\bS,\bZ^{(1)}),\bV_{\bx\bx,1}(\bS))=1,\\
&\ \ \ \ \ \ \ \phi_\bomega(\sqrt{n^{(2)}}\hat{\btau}_{\bx,2}(\bS,\bZ^{(2)}),\bV_{\bx\bx,2}(\bS),
\hat{\bB}(\bS,\bZ^{(1)}))=1]
\\
\leq &
\frac{\bbV\{s_{AB,q}(\bS,\bZ^{(1)},\bZ^{(2)})\}}{\bbP(\phi_M(\sqrt{n^{(1)}}\hat{\btau}_{\bx,1}(\bS,\bZ^{(1)}),\bV_{\bx\bx,1}(\bS))=1,
\phi_\bomega(\sqrt{n^{(2)}}\hat{\btau}_{\bx,2}(\bS,\bZ^{(2)}),\bV_{\bx\bx,2}(\bS),
\hat{\bB}(\bS,\bZ^{(1)}))=1)}\\
\rightarrow& 0,
\end{split}\]
i.e., $s_{AB,q}(\bS,\bZ^{(1)},\bZ^{(2)})-S_{AB} = o_p(1)$ under $\widetilde{\text{DA-ReO}}_\F$.
According to Lemma \ref{lem:equiv2},
for any $c>0$,
\[\begin{split}
&|\bbP(|s_{AB,q}(\bS,\bW^{(1)},\bW^{(2)})-S_{AB}| \geq c)-\bbP(|s_{AB,q}(\bS,\bZ^{(1)},\bZ^{(2)})-S_{AB}| \geq c\mid \widetilde{\text{DA-ReO}}_\F)|\\
=& |\bbP(|s_{AB,q}(\bS,\bW^{(1)},\bW^{(2)})-S_{AB}| \geq c)-\bbP(|s_{AB,q}(\bS,\bZ^{(1)},\bZ^{(2)})-S_{AB}| \geq c\mid \\& \ \ \ \ \ \ \ \ \ 
\phi_M(\sqrt{n^{(1)}}\hat{\btau}_{\bx,1}(\bS,\bZ^{(1)}),\bV_{\bx\bx,1}(\bS))=1,
\phi_\bomega(\sqrt{n^{(2)}}\hat{\btau}_{\bx,2}(\bS,\bZ^{(2)}),\bV_{\bx\bx,2}(\bS),
\hat{\bB}(\bS,\bZ^{(1)}))=1)|\\
\rightarrow & 0.
\end{split}\]
Hence, $s_{AB,q}(\bW) = s_{AB,q}(\bS,\bW^{(1)},\bW^{(2)}) =S_{AB} + o_p(1)$ under DA-ReO$_\F$ (or DA-ReO$_\FE$).
Therefore, for $q=1,\ldots,Q$ and under DA-ReO$_\F$ (or DA-ReO$_\FE$), we have 
\[
\begin{split}
s_{qq}-S_{qq}=o_p(1),\quad
\bms_{q,\bx}-\bS_{q,\bx}=o_p(1),\quad
\bms_{\bx\bx,q}-\bS_{\bx\bx}=o_p(1).
\end{split}
\]
We complete the proof.
\end{proof}

\subsection{Proof of Theorem \ref{thm:CI2}}
\begin{proof}
By the proof of Theorem~1 in \citet{li2020rerandomization}, 
$\bV_{\btau\btau}^{||} 
=\bV_{\btau\bx}\bV_{\bx\bx}^{-1}\bV_{\bx\btau} = \sum_{q=1}^Q \frac{1}{r_q}\bA_q\bA_q' \cdot
\bS_{q,\bx} \bS_{\bx\bx}^{-1} \bS_{\bx,q}-\bS_{\btau,\bx} \bS_{\bx\bx}^{-1} \bS_{\bx,\btau}$,
where $\bS_{\btau,\bx} = \sum_{q=1}^Q \bA_q \bS_{q,\bx}$.
Since $\bS_{\btau\btau}
\geq
\bS_{\btau,\bx}\bS_{\bx\bx}^{-1}\bS_{\bx,\btau}$, we have 
$\bV_{\btau\btau} = \sum_{q=1}^Q r_q^{-1}\bA_q \bA_q' S_{qq} - \bS_{\btau\btau} \leq \sum_{q=1}^Q r_q^{-1}\bA_q \bA_q' S_{qq} - \bS_{\btau,\bx} \bS_{\bx\bx}^{-1} \bS_{\bx,\btau}.$
Therefore, Lemma~\ref{lem:consistent_fac} yields the following asymptotically conservative estimator of $\bV_{\btau\btau}^{\perp}=\bV_{\btau\btau}-\bV_{\btau\btau}^{||}$:
$$\hat\bV_{\btau\btau}^{\perp} = \sum_{q=1}^Q r_q^{-1}\bA_q \bA_q' (s_{qq} - \bms_{q,\bx} \bms_{\bx\bx,q}^{-1} \bms_{\bx,q}).$$
If $\bms_{\bx\bx,q}$ is singular, we define $\bms_{\bx\bx,q}^{-1}=\bZero$.
By Lemma \ref{lem:consistent_fac}, $\hat{\bV}_{\btau\bx}-\bV_{\btau\bx}=o_p(1)$, and thus,
$\hat{\bB}-\bB = o_p(1)$, 
$\hat{\bV}_{\btau\btau}^{||}-{\bV}_{\btau\btau}^{||} = o_p(1)$, and 
$\hat{\bV}_{\btau\btau}^\perp-{\bV}_{\btau\btau}^\perp - \bS_{\btau\backslash\bx} = o_p(1)$, where $\bS_{\btau\backslash\bx}=\bS_{\btau\btau}-\bS_{\btau,\bx} \bS_{\bx\bx}^{-1} \bS_{\bx,\btau}$.
By Lemma \ref{lem:inverse_lemma}, we have 
$\bQ_{\hat{\bB}} - \bQ_{{\bB}} = o_p(1)$, $\hat{\bLambda} - \bLambda = o_p(1)$, and for $f=1,\ldots,F$, 
$\hat{\bV}_{\btau\btau}^{||}[f] - \bV_{\btau\btau}^{||}[f] = o_p(1)$. 
Therefore, the covariance estimator Eq.~\eqref{eq:variance1} is consistent for 
\[\begin{split}
\bC\bS_{\btau\backslash\bx}\bC' + \bC\bV_{\btau\btau}^{\perp}\bC' + 
\sum_{f=1}^F \left(\rho_nv_{Fp,\alpha} + (1-\rho_n)c_f\right)\bC\bV_{\btau\btau}^{||}[f]\bC',
\end{split}\]
which is
larger than or equal to $ \bC\bV_{\btau\btau}^{\perp}\bC' + 
\sum_{f=1}^F \left(\rho_nv_{Fp,\alpha} + (1-\rho_n)c_f\right)\bC\bV_{\btau\btau}^{||}[f]\bC'$, whose limit is the true asymptotic sampling variance of $\sqrt{n}\bC\hat{\btau}$ under DA-ReO$_\F$.
Similarly, the covariance estimator Eq.~\eqref{eq:variance2} consistently estimates
\[\begin{split}
\bC\bS_{\btau\backslash\bx}\bC' + \bC\bV_{\btau\btau}^{\perp}\bC' + 
\left(\rho_nv_{Fp,\alpha} + (1-\rho_n)v_{F,\alpha}\right)\bC\bV_{\btau\btau}^{||}\bC',
\end{split}\]
which is
larger than or equal to 
$\bC\bV_{\btau\btau}^{\perp}\bC' + 
\left(\rho_nv_{Fp,\alpha} + (1-\rho_n)v_{F,\alpha}\right)\bC\bV_{\btau\btau}^{||}\bC'$, whose limit is the true asymptotic sampling variance of $\sqrt{n}\bC\hat{\btau}$ under DA-ReO$_\FE$.

Let $\tilde{\bV}_{\btau\btau}^{\perp} = {\bV}_{\btau\btau}^{\perp} + \bS_{\btau\backslash\bx}$.
To prove the asymptotic conservativeness of confidence sets, we define random variables 
\[\begin{split}
\mathcal{\bm L} &\sim \sqrt{\rho_n}(\bV_{\btau\btau}^{||})^{1/2}_{Fp}\cdot \bm\zeta_{Fp,\alpha} + \sqrt{1-\rho_n}(\bQ_{\bB}^{-1}\bLambda^{1/2}\bm\eta \mid \bm\eta'\bomega\bm\eta \leq \xi_{\bomega,\alpha}) + (\bV_{\btau\btau}^{\perp})^{1/2}\cdot \bvarepsilon,\\
\tilde{\mathcal{\bm L}} &\sim \sqrt{\rho_n}(\bV_{\btau\btau}^{||})^{1/2}_{Fp}\cdot \bm\zeta_{Fp,\alpha} + \sqrt{1-\rho_n}(\bQ_\bB^{-1}\bLambda^{1/2}\bm\eta \mid \bm\eta'\bomega\bm\eta \leq \xi_{\bomega,\alpha}) + (\tilde\bV_{\btau\btau}^{\perp})^{1/2}\cdot \bvarepsilon,\\
\hat{\mathcal{\bm L}} &\sim \sqrt{\rho_n}(\hat\bV_{\btau\btau}^{||})^{1/2}_{Fp}\cdot \bm\zeta_{Fp,\alpha} + \sqrt{1-\rho_n}(\bQ_{\hat\bB}^{-1}\hat\bLambda^{1/2}\bm\eta \mid \bm\eta'\bomega\bm\eta \leq \xi_{\bomega,\alpha}) + (\hat\bV_{\btau\btau}^{\perp})^{1/2}\cdot \bvarepsilon,\\
\end{split}\]
where $\bvarepsilon \sim \cN(0,\bI_F)$, $\bzeta_{Fp,\alpha} \sim \tilde{\bEta}|\tilde{\bEta}'\tilde{\bEta} \leq \xi_{Fp,\alpha}$, $\tilde\bEta \sim \cN(\bZero,\bI_{Fp})$, and $\bEta\sim \cN(\bZero,\bI_{F})$.
Moreover, $\bvarepsilon$, $\tilde\bEta$, and $\bEta$
are mutually independent.
It is straightforward to verify that the density of $\bEta \mid \bEta'\bomega\bEta \le \xi_{\bomega,\alpha}$ is log-concave. Therefore, by Lemma A8 of \citet{li2020rerandomization}, $\bEta \mid \bEta'\bomega\bEta \le \xi_{\bomega,\alpha}$ is central convex unimodal.
Proposition 2, Lemma A6 and Lemma A22 in \cite{li2020rerandomization} imply that $\mathcal{\bm L}$, $\tilde{\mathcal{\bm L}}$ and $\hat{\mathcal{\bm L}}$ are central convex unimodal, and ${\mathcal{\bm L}} \succ \tilde{\mathcal{\bm L}}$.
Under Assumption \ref{assump:regularity_fac_full}, by Slutsky's theorem, $\hat{\mathcal{\bm L}} \overset{\cdot}{\sim} \tilde{\mathcal{\bm L}}$, and 
\[\begin{split}
\big(\bC \hat\bV_{\btau\btau}^{\perp}\bC'\big)^{-1/2}\bC\hat{\mathcal{\bm L}}
\overset{\cdot}{\sim}
\big(\bC \tilde\bV_{\btau\btau}^{\perp}\bC'\big)^{-1/2}\bC\tilde{\mathcal{\bm L}}.
\end{split}\]
From the continuous mapping theorem,
\[\begin{split}
(\bC\hat{\mathcal{\bm L}})'
\big(\bC \hat\bV_{\btau\btau}^{\perp}\bC'\big)^{-1}\bC\hat{\mathcal{\bm L}}
\overset{\cdot}{\sim}
(\bC\tilde{\mathcal{\bm L}})'
\big(\bC \tilde\bV_{\btau\btau}^{\perp}\bC'\big)^{-1}\bC\tilde{\mathcal{\bm L}}.
\end{split}\]
Thus, the $1-\delta$ quantile of $(\bC\hat{\mathcal{\bm L}})'
\big(\bC \hat\bV_{\btau\btau}^{\perp}\bC'\big)^{-1}\bC\hat{\mathcal{\bm L}}$, $\hat{c}_{1-\delta}$, is consistent for the $1-\delta$ quantile of $(\bC\tilde{\mathcal{\bm L}})'
\big(\bC \tilde\bV_{\btau\btau}^{\perp}\bC'\big)^{-1}\bC\tilde{\mathcal{\bm L}}$, $\tilde{c}_{1-\delta}$.
Since ${\mathcal{\bm L}} \succ \tilde{\mathcal{\bm L}}$, 
and the set of form 
$\{\bmu:(\bC\bmu)'
\big(\bC \tilde\bV_{\btau\btau}^{\perp}\bC'\big)^{-1}\bC\bmu \leq c\}$ is symmetric convex, then the $1-\delta$ quantile of $(\bC{\mathcal{\bm L}})'
\big(\bC \tilde\bV_{\btau\btau}^{\perp}\bC'\big)^{-1}\bC{\mathcal{\bm L}}$, ${c}_{1-\delta}$, is smaller than or equal to $\tilde{c}_{1-\delta}$.
Above all, $\hat{c}_{1-\delta}$ is consistent for $\tilde{c}_{1-\delta} \geq c_{1-\delta}$.
Therefore, the $1-\delta$ confidence set for $\sqrt{n}\bC{\btau}$ is asymptotically conservative.
When $\bS_{\btau\backslash\bx}= o(1)$, we have  $\bV^\perp_{\btau\btau}- \tilde\bV^\perp_{\btau\btau} = o(1)$, which implies ${\mathcal{\bm L}} \overset{\cdot}{\sim} \tilde{\mathcal{\bm L}}$. Thus, $\tilde{c}_{1-\delta} - c_{1-\delta} = o(1)$, and the $1-\delta$ confidence set for $\sqrt{n}\bC\btau$ becomes asymptotically exact.
\end{proof}

\newpage 
\section{Additional simulation results}\label{sec:simulations}
\setcounter{figure}{0}
\renewcommand{\thefigure}{B\arabic{figure}}
\setcounter{table}{0}
\renewcommand{\thetable}{B\arabic{table}}

{
\begin{longtable}[]{@{}ccccccccc@{}}
\caption{Average empirical coverage probability of the resulting confidence intervals (\%).}\label{tab:simulation_CP}
\tabularnewline
\toprule\noalign{}
& & \multicolumn{3}{c}{Main}& \multicolumn{4}{c}{Interaction} \\
\cmidrule(lr){3-5}
\cmidrule(lr){6-9}
Setting & Scheme & 1 & 2 & 3 & 12 & 13 & 23 & 123 \\
\midrule\noalign{}
\endfirsthead
\toprule\noalign{}
& & \multicolumn{3}{c}{Main}& \multicolumn{4}{c}{Interaction} \\
\cmidrule(lr){3-5}
\cmidrule(lr){6-9}
Setting & Scheme & 1 & 2 & 3 & 12 & 13 & 23 & 123 \\
\midrule\noalign{}
\endhead
\bottomrule\noalign{}
\endlastfoot
\multirow{5}{*}{\shortstack{Linear\\
\hspace{\fill}\\
(same\\
\hspace{\fill}\\
importance)}} &  ReO$_\FE$ &  95.97 & 95.86 & 95.91 & 95.92 & 95.96 & 95.89 & 95.90 \\ 
& DA-ReO$_\FE$ ($\rho_n=0.2$) & 96.70 & 96.65 & 96.63 & 96.62 & 96.70 & 96.63 & 96.63 \\ 
& DA-ReO$_\FE$ ($\rho_n=0.3$) & 96.99 & 96.92 & 96.97 & 96.96 & 96.99 & 96.94 & 96.92 \\ 
& DA-ReO$_\FE$ ($\rho_n=0.4$) & 97.26 & 97.25 & 97.29 & 97.22 & 97.26 & 97.23 & 97.24 \\ 
& ReFM & 95.83 & 95.91 & 95.90 & 95.92 & 95.81 & 95.94 & 95.80 \\ 
\cmidrule(lr){1-9}
\multirow{5}{*}{\shortstack{Linear\\
\hspace{\fill}\\
(varying\\
\hspace{\fill}\\
importance)}} &  ReO$_\F$ & 96.01 & 95.94 & 95.86 & 95.71 & 95.67 & 95.95 & 95.71 \\ 
& DA-ReO$_\F$ ($\rho_n=0.2$) & 96.96 & 97.02 & 96.84 & 96.04 & 96.03 & 96.30 & 96.16 \\ 
& DA-ReO$_\F$ ($\rho_n=0.3$) & 97.36 & 97.39 & 97.25 & 96.18 & 96.18 & 96.49 & 96.34 \\ 
& DA-ReO$_\F$ ($\rho_n=0.4$) & 97.71 & 97.72 & 97.61 & 96.32 & 96.31 & 96.63 & 96.51 \\ 
& ReFMT$_\F$ & 96.01 & 95.98 & 95.93 & 95.79 & 95.69 & 95.88 & 95.79 \\ 
\cmidrule(lr){1-9}
\multirow{5}{*}{\shortstack{Nonlinear\\
\hspace{\fill}\\
(same\\
\hspace{\fill}\\
importance)}} &  ReO$_\FE$ & 95.61 & 95.70 & 95.82 & 95.80 & 95.98 & 95.74 & 95.73 \\ 
& DA-ReO$_\FE$ ($\rho_n=0.2$) & 96.41 & 96.44 & 96.52 & 96.56 & 96.70 & 96.47 & 96.49 \\ 
& DA-ReO$_\FE$ ($\rho_n=0.3$) & 96.75 & 96.77 & 96.83 & 96.86 & 97.00 & 96.81 & 96.81 \\ 
& DA-ReO$_\FE$ ($\rho_n=0.4$) & 97.06 & 97.03 & 97.14 & 97.15 & 97.25 & 97.09 & 97.09 \\ 
& ReFM & 95.75 & 95.66 & 95.79 & 95.80 & 95.95 & 95.75 & 95.80 \\ 
\cmidrule(lr){1-9}
\multirow{5}{*}{\shortstack{Nonlinear\\
\hspace{\fill}\\
(varying\\
\hspace{\fill}\\
importance)}} &  ReO$_\F$ & 95.49 & 95.65 & 95.83 & 95.91 & 95.89 & 95.65 & 95.67 \\ 
& DA-ReO$_\F$ ($\rho_n=0.2$) & 96.63 & 96.81 & 96.94 & 96.31 & 96.23 & 96.03 & 95.95 \\ 
& DA-ReO$_\F$ ($\rho_n=0.3$) & 97.00 & 97.23 & 97.38 & 96.49 & 96.34 & 96.18 & 96.12 \\ 
& DA-ReO$_\F$ ($\rho_n=0.4$) & 97.33 & 97.59 & 97.75 & 96.65 & 96.46 & 96.34 & 96.24 \\ 
& ReFMT$_\F$ & 95.89 & 95.80 & 95.88 & 95.80 & 95.93 & 95.78 & 95.74 \\ 
\end{longtable}
}

{
\begin{longtable}[]{@{}ccccccccc@{}}
\caption{Average percentage
reduction in confidence interval length relative to ReFM or ReFMT$_\F$ (\%).}\label{tab:simulation_Len}
\tabularnewline
\toprule\noalign{}
& & \multicolumn{3}{c}{Main}& \multicolumn{4}{c}{Interaction} \\
\cmidrule(lr){3-5}
\cmidrule(lr){6-9}
Setting & Scheme & 1 & 2 & 3 & 12 & 13 & 23 & 123 \\
\midrule\noalign{}
\endfirsthead
\toprule\noalign{}
& & \multicolumn{3}{c}{Main}& \multicolumn{4}{c}{Interaction} \\
\cmidrule(lr){3-5}
\cmidrule(lr){6-9}
Setting & Scheme & 1 & 2 & 3 & 12 & 13 & 23 & 123 \\
\midrule\noalign{}
\endhead
\bottomrule\noalign{}
\endlastfoot
\multirow{4}{*}{\shortstack{Linear\\
\hspace{\fill}\\
(same\\
\hspace{\fill}\\
importance)}} &  ReO$_\FE$ & 15.19 & 15.36 & 15.26 & 15.33 & 15.10 & 15.21 & 15.35 \\ 
& DA-ReO$_\FE$ ($\rho_n=0.2$) & 11.85 & 11.97 & 11.95 & 12.00 & 11.81 & 11.89 & 11.99 \\ 
& DA-ReO$_\FE$ ($\rho_n=0.3$) & 10.24 & 10.39 & 10.36 & 10.36 & 10.23 & 10.30 & 10.39 \\ 
& DA-ReO$_\FE$ ($\rho_n=0.4$) & 8.65 & 8.79 & 8.78 & 8.80 & 8.66 & 8.73 & 8.83 \\ 
\cmidrule(lr){1-9}
\multirow{4}{*}{\shortstack{Linear\\
\hspace{\fill}\\
(varying\\
\hspace{\fill}\\
importance)}} &  ReO$_\F$ & 18.38 & 18.87 & 18.29 & 12.71 & 12.51 & 12.76 & 13.55 \\ 
& DA-ReO$_\F$ ($\rho_n=0.2$) & 13.58 & 14.04 & 13.46 & 11.23 & 10.97 & 11.25 & 11.77 \\ 
& DA-ReO$_\F$ ($\rho_n=0.3$) & 11.27 & 11.76 & 11.20 & 10.53 & 10.24 & 10.47 & 10.96 \\ 
& DA-ReO$_\F$ ($\rho_n=0.4$) & 9.04 & 9.50 & 8.98 & 9.77 & 9.55 & 9.81 & 10.14 \\ 
\cmidrule(lr){1-9}
\multirow{4}{*}{\shortstack{Nonlinear\\
\hspace{\fill}\\
(same\\
\hspace{\fill}\\
importance)}} &  ReO$_\FE$ & 15.00 & 15.15 & 15.25 & 15.00 & 15.32 & 15.25 & 15.24 \\ 
& DA-ReO$_\FE$ ($\rho_n=0.2$) & 11.76 & 11.85 & 11.94 & 11.71 & 11.97 & 11.93 & 11.96 \\ 
& DA-ReO$_\FE$ ($\rho_n=0.3$) & 10.15 & 10.27 & 10.38 & 10.11 & 10.39 & 10.35 & 10.33 \\ 
& DA-ReO$_\FE$ ($\rho_n=0.4$) & 8.60 & 8.74 & 8.78 & 8.58 & 8.79 & 8.77 & 8.78 \\ 
\cmidrule(lr){1-9}
\multirow{4}{*}{\shortstack{Nonlinear\\
\hspace{\fill}\\
(varying\\
\hspace{\fill}\\
importance)}} &  ReO$_\F$ & 18.01 & 18.34 & 18.76 & 12.98 & 12.98 & 12.89 & 12.63 \\ 
& DA-ReO$_\F$ ($\rho_n=0.2$) & 13.31 & 13.56 & 14.02 & 11.38 & 11.40 & 11.29 & 11.11 \\ 
& DA-ReO$_\F$ ($\rho_n=0.3$) & 11.08 & 11.27 & 11.74 & 10.53 & 10.65 & 10.56 & 10.31 \\ 
& DA-ReO$_\F$ ($\rho_n=0.4$) & 8.87 & 9.08 & 9.53 & 9.74 & 9.82 & 9.84 & 9.63 \\ 
\end{longtable}
}

\end{document}